\def\f12{\frac 1 2}
\def\f12{\frac 1 2}
\newtheorem{definition}{Definition}[section]
\newtheorem{remark}{Remark}[section]
\newtheorem{lemma}{Lemma}
\newtheorem{theorem}{Theorem}[section]
\newtheorem*{theorem*}{Theorem}
\newtheorem*{corollary*}{Corollary}
\newtheorem{convention}{Convention}[section]
\newtheorem{corollary}{Corollary}[section]
\title[Boundedness of waves on $C^1$ Stationary Axisymmetric 
 Perturbations of Kerr]{Boundedness for the wave equation on $C^1$ Stationary Axisymmetric Perturbations of Kerr}
\date\today
    \author{Yakov Shlapentokh-Rothman}
    \address{\small  Department of Mathematics, University of Toronto,  40 St.~George~Street, Toronto, ON, Canada} 
	\address{\small Department of Mathematical and Computational Sciences, 
		University of Toronto Mississauga, 3359 Mississauga Road, Mississauga, ON, Canada}
        \email{yakovsr@math.toronto.edu}
    \author{Mihai Tohaneanu}
    \address{\small  Department of Mathematics, University of Kentucky,  719 Patterson Office Tower, Lexington, KY, USA}
    \email{mihai.tohaneanu@uky.edu}
\begin{document}

\maketitle

\begin{abstract}
     On the full range of sub-extremal Kerr exterior spacetimes we give a new proof of energy boundedness for high-frequency projections of solutions to the wave equation onto trapped frequencies. A key feature of the new estimate is that it circumvents the use of an integrated local energy decay (ILED) statement. As an illustration of the robustness of the estimate, we use it to establish  energy boundedness for solutions to the wave equation on stationary and axisymmetric metrics which are merely $C^1$ close to a sub-extremal Kerr spacetime. We show explicitly that such perturbed metrics may possess stably trapped null geodesics, and thus one does not expect ILED statements to hold.
\end{abstract}
\tableofcontents
\section{Introduction}
In view of its role as a proxy for the more complicated equations of linearized gravity, the wave equation
\begin{equation}\label{waveEqnKerr}
\Box_{g_K}\psi = F
\end{equation}
on subextremal Kerr spacetimes $\left(\mathcal{M}_K,g_K\right)$ has been an object of intense study in the last 25 years. Following earlier works in the non-rotating~\cite{KayWald,redshiftSchw,bluesoffer,bluesterbenz} and slowly rotating cases~\cite{slowlybound,claylecturenotes,TatTohILED,AndBlueKerr,TohStric}, uniform boundedness of energy and integrated local energy decay for solutions to~\eqref{waveEqnKerr} in the full sub-extremal range was established in~\cite{realmodestab,waveKerrlargea}.\footnote{Though it will not directly concern us here, we note that the precise pointwise leading order asymptotics for the wave equation, the so-called ``Price's law,'' has also been established in~\cite{hintzprice,aagprice}.} In~\eqref{schematicIELDwithbound} below we state the key estimate for~\eqref{waveEqnKerr} that is proven in~\cite{waveKerrlargea}.

Let $\Sigma_0$ denote an asymptotically flat space-like hypersurface intersecting the event horizon transversally  and $n_{\Sigma_0}$ denote the future-pointing unit normal to $\Sigma_0$. Then let $\left\{\Sigma_{\tau}\right\}_0^{\infty}$ denote the set of time-translations of $\Sigma_0$. After choosing an appropriately normalized set of time-independent vector fields which span the tangent space of $\mathcal{M}_K$,  we choose a corresponding set of multi-indices $\alpha$, and let $\psi^{(\alpha)}$ denote the application of the associated subset of the vector fields applied to $\psi$.  Let $(M,a)$ denote the mass and specific angular momentum of a given Kerr black hole, and then set $r_+ \doteq M + \sqrt{M^2-a^2}$. For a suitable bounded interval $I \subset (r_+,\infty)$, we let $\zeta(r)$ be a function which is $0$ for $r \in I$ and identically $1$ for $r\not\in I$. 

Then, for every $k \geq 0$ and $\delta > 0$, solutions $\psi$ to~\eqref{waveEqnKerr} with $F = 0$ satisfy
\begin{equation}\label{schematicIELDwithbound}\begin{split}
&\sup_{\tau > 0}\sum_{1 \leq \left|\alpha\right| \leq k+1}\int_{\Sigma_{\tau}}\left|\psi^{(\alpha)}\right|^2 + \int_0^{\infty}\int_{\Sigma_{\tau}}\left[\zeta\left(r\right)\sum_{1 \leq |\alpha| \leq { k+1}}\left|\psi^{(\alpha)}\right|^2 + r^{-2}\sum_{0 \leq |\alpha| \leq k}\left|\psi^{(\alpha)}\right|^2\right]r^{-1-\delta}\, d\tau 
\\ &\qquad \lesssim_{k,\delta} \sum_{1 \leq \left|\alpha\right| \leq k+1}\int_{\Sigma_0}\left|\psi^{(\alpha)}\right|^2.
\end{split}\end{equation}

One unusual feature of the estimate~\eqref{schematicIELDwithbound} is that the upper bound for the energy flux through $\Sigma_{\tau}$ is required to be proved simultaneously with an integrated local energy decay statement. The lack of an ``easy'' proof for boundedness of energy is related to the presence of superradiance (which we quickly review below in Section~\ref{superduper}). This means that the techniques that go into the proof of~\eqref{schematicIELDwithbound} cannot be applied to establish boundedness of energy on spacetimes where there exists both superradiance and stable trapping. (See Section~\ref{stabletrapsec} for further discussion and motivation for considering such classes of spacetimes.) Though not physically relevant, \emph{per se}, a particular test case one can consider is to study the wave equation on spacetimes which are stationary and axisymmetric and close to a sub-extremal Kerr spacetime only in $C^1$. As we show in Appendix~\ref{stabletrapapp} these metrics may possess stably trapped null geodesics, and thus one cannot expect to prove an integrated energy decay statement. \emph{One of our main results is exactly to show that the uniform boundedness of energy for solutions to the wave equation continues to hold on such spacetimes.}

Before we state our key theorems below, we wish to particularly note the previous work~\cite{slowlybound} which showed that the boundedness of energy for solutions to the wave equation continues to hold for the wave equation on spacetimes which are $C^1$ stationary and axisymmetric perturbations of a \emph{slowly rotating Kerr spacetime} with $|a| << M$. While the proofs given in~\cite{slowlybound} use the $|a| << M$ assumption in an essential way, the key insights of the work will nevertheless play a fundamental role for us.

The rest of the introduction is structured as follows. In Section~\ref{themainresultssection} below we state our two main theorems. In Sections~\ref{superduper} and~\ref{stabletrapsec} we will review the phenomena of superradiance and stable trapping  and their relevance to the behavior of solutions to the wave equation. In Section~\ref{quasiintro} we will discuss potential applications of our new linear estimates to the study of small data solutions to quasilinear waves on sub-extremal Kerr black holes. In Section~\ref{secfurther} we will discuss various future directions for the techniques developed here. Finally in Section~\ref{sketchit} we provide a quick sketch of the proofs of our results.
    
\subsection{Statement of main results}\label{themainresultssection}
In this section we will state our main results. Our first result will involve an operator $\mathcal{P}_{HT}$ which is precisely defined in Definition~\ref{phtwithtehextend} below, which serves to project the solution $\psi$ onto the high frequencies which experience trapping. For the reader who wishes to jump directly to the main results below before reading Section~\ref{Kerrspace}, we note that $\mathcal{P}_{HT}$ is a pseudo-differential operator in $(t^*,\phi^*)$ that is classical with respect to the Fourier variable associated to $t^*$, but discrete with respect to the Fourier variable associated to $\phi^*$. Following~\cite{blackboxlargea}, we will call such operators $(t^*, \phi^*)$ pseudo-differential operators. 

Associated to the definition of the operator $\mathcal{P}_{HT}$ are two free small parameters $\delta_1 > 0$ and $\delta_2 > 0$ which determine the largeness of the frequencies in the support of $\mathcal{P}_{HT}$ and how close to superradiant frequencies we allow the support of $\mathcal{P}_{HT}$ to go. The statements of our theorems will also refer to various vector fields and hypersurfaces which are defined later in Section~\ref{Kerrspace}. However, we will go ahead and note that $E_k\left[f\right]$ will denote an energy flux along a suitable spacelike hypersurface of up to $k$ derivatives of $f$. (Since the energy flux involves already a derivative, we thus see that $E_k\left[f\right]$ will involve a total of $k+1$ derivatives of $f$.)

The following convention is convenient.
\begin{convention}
We introduce the convention that whenever a result is phrased in terms of an underlying sub-extremal Kerr spacetime, all implicit constants may depend on the choice of $a$ and $M$.
\end{convention}

The following notation will be useful for stating our main theorems. We start with a cut-off function which we will use for time-averages.
\begin{definition}
Let $\chi(t)$ be a bump function which is identically $1$ for $|t| \leq 1$ vanishes for $|t| \geq 2$. For any $\tau \in \mathbb{R}$, we define $\chi_{\tau}\left(t\right) \doteq \chi\left(t-\tau\right)$. 
\end{definition}

Corresponding to the bump function $\chi(t)$ and any $\tau > 0$, will be another bump function $\tilde{\chi}_{\tau}(t)$ defined by
\[\tilde{\chi}_{\tau}\left(t\right) \doteq \sqrt{\int_{t-\tau}^t\chi^2(s)\, ds}\]
More generally, we also define, for $A < \tau$, $\tilde{\chi}_{\tau,A}(t)$ by
\begin{equation}\label{tildechitauA}
\tilde{\chi}_{\tau,A}(t) \doteq  \sqrt{\int_{t-\tau}^{t-A}\chi^2(s)\, ds}.
\end{equation}
We note that $\tilde{\chi}_{\tau,A}(t)$ is supported in $[A-2,\tau+2]$, is identically equal to  $\sqrt{\int_{-\infty}^{\infty}\chi^2(t)}\, dt$ when $t \in [A+2,\tau-2]$, and satisfies that $\left\vert\left\vert \tilde{\chi}_{t,A}\right\vert\right\vert_{C^N_t} \lesssim_N 1$. Finally we note that $\tilde{\chi}_{\tau,A}(t)$ is monotonically increasing in $\tau$.

Next, we define a separate cut-off function which is used to separate the near horizon region from the rest of the spacetime.
\begin{definition}Let $\xi(r) \geq 0$ be a cut-off function which is identically $0$ for $r \leq r_+ + \delta_0$, is identically $1$ for $r \geq r_+ + 2\delta_0$, and satisfies $|\xi'| \lesssim \delta_0{^{-1}}$. Here $\delta_0$ is a small constant fixed later in the paper.
\end{definition}

Our first main result is an estimate for solutions to the inhomogeneous wave equation on any rotating sub-extremal Kerr spacetime whose proof does not rely on the existence of any integrated local energy decay estimate.  We refer the reader to Section 2 for the definitions of the vector fields $K$, $T$, $Y$, and $N$. 
\begin{theorem}\label{mainLinEstTheo}Let $M > 0$ and $a \in (-M,M)$. Then, for every $\epsilon > 0$, $k \geq 0$, and choice of small constants $\delta_1 > 0$ and $\delta_2 > 0$, we may choose the parameter $A_{\rm high}$ in Definition~\ref{hightrappedpart} sufficiently large  and a time independent vector field $N$ which  equals $T$ for $r$ large, so that all solutions $\psi$ to the wave equation
\[\Box_{g_K}\psi = F,\]
with $F$ smooth and compactly supported on $J^{+}(\Sigma_1)$ and $\left(n_{\Sigma_0}\psi,\psi\right)|_{\Sigma_0}$ smooth and compactly supported, satisfy
\begin{equation}\label{thekeyEstimate}\begin{split}
 \sup_{t > 0}&\int_{-\infty}^{\infty}\chi^2_{t} (s)E_k\left[\mathcal{P}_{HT}\psi\right]\left(s\right)\, ds  \lesssim_k 
\\ & A_{\rm high}^{-1}\sup_{t > 0}\int_{-\infty}^{\infty}\chi^2_{t}(s)E_k\left[\psi\right](s)\, ds + \sum_{j=0}^k\sup_{t > 0 }\left|\int_{\mathcal{M}_K}\tilde{\chi}^2_{t}{\rm Re}\left(\left( T^j \mathcal{P}_{HT}F\right)\overline{K\left(T^j \mathcal{P}_{HT}\psi\right)}\right)\right|
\\  & +\sum_{j+l=1}^k\sup_{t > 0 }\left|\int_{\mathcal{M}_K}\tilde{\chi}_{t}^2{\rm Re}\left(\left( T^jY^l \mathcal{P}_{HT} F\right)\overline{N\left(T^jY^l\mathcal{P}_{HT}\psi\right)}\right)\right|+ E_k\left[\psi\right](0)
\\ &+\sum_{j=0}^k \sup_{t>0}\left|\int_{\mathcal{M}_K}{\rho \sqrt{\frac{\Delta}{\Pi}}}\chi_{t}^2\xi{\rm Re}\left(\left( T^j \mathcal{P}_{HT}F\right)\overline{\left(T^j\mathcal{P}_{HT}\psi\right)}\right)\right|
\\ &+\sup_{t > 0}\int_{-\infty}^{\infty}\chi_{t}^2(s)E_{k-2}\left[\mathcal{P}_{HT}F\right]\left(s\right)\, ds+\sup_{t > 0}\int_{\mathcal{M}_K}\chi_{t}^2\left|\mathcal{P}_{HT}F\right|^2,
\end{split}
\end{equation}
where the final line on the right hand side is absent if $k = 0$, and, if $k =1$, the term involving $E_{k-2}$ is absent.
\end{theorem}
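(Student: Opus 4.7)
My plan is to apply the microlocal projector $\mathcal{P}_{HT}$ to the equation so that $\tilde\psi \doteq \mathcal{P}_{HT}\psi$ satisfies
\[\Box_{g_K}\tilde\psi = \mathcal{P}_{HT}F + [\Box_{g_K},\mathcal{P}_{HT}]\psi,\]
and then to run a physical-space multiplier estimate for $\tilde\psi$ using the $K$-energy current, weighted in time by $\chi_\tau^2$. The reason such an approach has any chance of working without an ILED statement is the microlocal support of $\mathcal{P}_{HT}$: since it is bounded away by $\delta_2$ from the superradiant frequencies $\omega = m\omega_+$, the (otherwise problematic) Killing $K$-current should be coercive on $\tilde\psi$, yielding a positive definite flux across $\Sigma_\tau$; and since its support lies at $|\omega|\gtrsim A_{\rm high}$, the commutator $[\Box_{g_K},\mathcal{P}_{HT}]$ can be made small relative to the $K$-derivative against which it will be paired.

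Concretely, I would multiply the equation for $\tilde\psi$ by $\chi_\tau^2\,\overline{K\tilde\psi}$ and integrate by parts over $\mathcal{M}_K$. The boundary contribution gives the time-averaged $K$-energy flux through $\Sigma_\tau$, which by the $\delta_2$-bound on the microlocal support is equivalent to $\int \chi_\tau^2 E_0[\mathcal{P}_{HT}\psi]\,ds$ once the derivatives falling on $\chi_\tau^2$ are absorbed into the bulk; the initial-data contribution produces the $E_k[\psi](0)$ term on the RHS. Because $K$ fails to dominate the transversal derivative on $\mathcal{H}^+$, I would replace $K$ by the redshift-enhanced vector field $N = K + cY$ in a neighborhood of the horizon localized by $1-\xi$, whose divergence identity produces both the $N$-paired source term on the RHS and, after integrating the redshift bulk by parts, the zeroth-order horizon coupling $\rho\sqrt{\frac{\Delta}{\Pi}}\,\chi_\tau^2\xi\,\mathrm{Re}(\mathcal{P}_{HT}F \cdot \overline{\mathcal{P}_{HT}\psi})$ that appears in~\eqref{thekeyEstimate}. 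Higher-order estimates $(k\geq 1)$ arise by commuting with $T^j Y^l$: $T$ is Killing and creates no bulk error, so produces the $T^j$'s on the RHS; $Y$ is not Killing, but $[\Box_{g_K},Y]$ is supported in a neighborhood of $\mathcal{H}^+$ where the $N$-redshift positivity absorbs it, giving the $T^jY^l$-paired line on the RHS, and leaving the residual subordinate terms in $E_{k-2}[\mathcal{P}_{HT}F]$ and $\|\chi_\tau \mathcal{P}_{HT}F\|_{L^2}^2$.

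The heart of the argument is the extraction of the $A_{\rm high}^{-1}$ factor from the commutator. Using the $(t^*,\phi^*)$-pseudodifferential calculus of~\cite{blackboxlargea}, $[\Box_{g_K},\mathcal{P}_{HT}]$ is a $(t^*,\phi^*)$-pseudodifferential operator of order one lower than $\Box_{g_K}\mathcal{P}_{HT}$, with symbol supported in the compact $r$-interval where the radial cut-offs defining $\mathcal{P}_{HT}$ vary. When paired against $K\tilde\psi$, the total order matches the $k$-energy of $\psi$, but the symbol carries an additional factor of $\langle\omega\rangle^{-1}$ restricted to $|\omega|\gtrsim A_{\rm high}$, yielding an $O(A_{\rm high}^{-1})$ bound in $L^2$-operator norm and hence the first term on the RHS. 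The main obstacle is precisely this commutator estimate: one must verify that the pseudodifferential calculus is sharp enough on the curved background through the ergoregion to extract the $A_{\rm high}^{-1}$ gain, that the non-trivial $r$-dependence of $\mathcal{P}_{HT}$ does not generate terms that fall outside the classes listed on the RHS, and that all the resulting estimates remain uniform as $|a|\uparrow M$, since the coefficients of $\Box_{g_K}$ degenerate in that limit and the sharp microlocal bookkeeping of the projector must survive it.
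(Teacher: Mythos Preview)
Your overall strategy---apply $\mathcal{P}_{HT}$, run a $K$-energy estimate, supplement with redshift near the horizon, commute with $T^jY^l$---matches the paper's. But your identification of the source of the $A_{\rm high}^{-1}$ term, and your mechanism for the coercivity of the $K$-flux, are both wrong.

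First, the commutator $[\Box_{g_K},\mathcal{P}_{HT}]$ vanishes identically. The symbol $a_{HT}(\omega,m)$ in Definition~\ref{hightrappedpart} depends only on $(\omega,m)$; there is no $r$-dependence and there are no ``radial cut-offs defining $\mathcal{P}_{HT}$''. Thus $\mathcal{P}_{HT}$ is a pure $(t^*,\phi^*)$-Fourier multiplier, and since the coefficients of $\Box_{g_K}$ are independent of $(t^*,\phi^*)$, the commutator is zero. (This is precisely why the remark after Theorem~\ref{actuallywehatweneedformain} says Theorem~\ref{mainLinEstTheo} follows immediately with $\mathscr{F}=\mathcal{P}_{HT}F$.) Your ``heart of the argument'' paragraph therefore addresses a term that does not exist.

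Second, the $K$-flux along $\Sigma_s$ is \emph{not} coercive on the support of $\mathcal{P}_{HT}$ in the way you suggest. Lemma~\ref{whatexactlydoesjklooklike} shows
\[
\mathbf{J}^K_\mu n^\mu_{\Sigma_s}=\rho^{-1}\sqrt{\tfrac{\Pi}{\Delta}}\,\mathrm{Re}\bigl((T\Psi+\tfrac{a}{2Mr_+}\Phi\Psi)\,\overline{(T\Psi+\tfrac{2Mar}{\Pi}\Phi\Psi)}\bigr)+\tfrac{1}{2}\rho\sqrt{\tfrac{\Delta}{\Pi}}\,\partial^\gamma\Psi\,\partial_\gamma\Psi,
\]
and the last term is the full indefinite Lorentzian gradient-squared. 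The paper removes it by adding a Lagrangian correction current (Lemma~\ref{wowwhatanestimate}); this is also the origin of the zeroth-order $\rho\sqrt{\Delta/\Pi}\,\chi_t^2\xi\,\mathrm{Re}(\mathcal{P}_{HT}F\cdot\overline{\mathcal{P}_{HT}\psi})$ term on the right-hand side, not a redshift integration by parts as you write. What remains is a quadratic form in $(T\Psi,\Phi\Psi)$ whose positivity is a \emph{frequency-side} statement: on the support of $a_{HT}$ one has $(\omega-\tfrac{a}{2Mr_+}m)(\omega-\tfrac{2Mar}{\Pi}m)\geq c(\omega^2+m^2)$ (Lemmas~\ref{somemono} and~\ref{itsformallynoactuallypositive}). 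Converting that into a statement about the \emph{time-localized} integral $\int\chi_t^2(\cdot)\,ds$ requires an approximate Plancherel theorem (Lemma~\ref{localizethatplancherelyay}), and it is the commutators $[\chi_t,\tilde{\mathcal{O}}]$ generated there---together with the control of the lower-order Lagrangian remainder via Lemma~\ref{nothingbyT}, and the passage from $E_k[\mathcal{P}_{HT}\psi]$ back to $E_k[\psi]$ via Lemma~\ref{totheoriginal}---that produce the $A_{\rm high}^{-1}$ term in~\eqref{thekeyEstimate}. This localized-Plancherel machinery is the actual technical core, and it is absent from your proposal.
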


\begin{remark}By standard density arguments, the regularity and support assumptions on $\psi$ and $F$ in Theorem~\ref{mainLinEstTheo} may be weakened as long as the corresponding terms on the right hand side of~\eqref{thekeyEstimate} remain finite.
\end{remark}

\begin{remark}One could remove the term involving $A_{\rm high}^{-1}\sup_{\tau > 0}E_k\left[\psi\right](t)$ on the right hand side of~\eqref{thekeyEstimate} at the expense of adding a term $E_k\left[\mathcal{P}_{HT}\psi\right](0)$ to the right had side, and thus the estimate would appear more self-contained with respect to $\mathcal{P}_{HT}\psi$. However, we have chosen for our main estimate to be of the form~\eqref{thekeyEstimate} as it is more suitable for our later applications in view of the fact that we do not want to assume an a priori bound for $E_k\left[\mathcal{P}_{HT}\psi\right](0)$.
\end{remark}
\begin{remark}Using finite in time energy estimates, one can easily replace the term
\begin{equation}\label{thetimeaveragedenergy}
\sup_{t > 0}\int_{-\infty}^{\infty}\chi_{t} (s)E_k\left[\mathcal{P}_{HT}\psi\right]\left(s\right)\, ds
\end{equation}
on the left hand side of~\eqref{thekeyEstimate} with $ \sup_{t>0}E_k\left[\mathcal{P}_{HT}\psi\right]\left(t\right)$, but working with the time averaged energy~\eqref{thetimeaveragedenergy} is in fact slightly more flexible for applications.
\end{remark}
\begin{remark}\label{Nabsent}If $F$ is supported in $\{r \geq r_+ + \tilde{\delta}\}$ for some $\tilde{\delta}$ then, if we let the implied constants now depend on $\tilde{\delta}$, we may drop the term of the right hand side of~\eqref{thekeyEstimate} which involves the vector field $N$. 
\end{remark}
\begin{remark}We will actually prove a more general result, see Theorem~\ref{actuallywehatweneedformain} in Section~\ref{proofthemainlineestsec}. The more general result will allow us to obtain estimates for solutions to $\Box_g\psi = 0$ where $g\in \mathscr{A}_{\epsilon,r_{\rm low},R_{\rm high}}$ (see Definition~\ref{themetricclass} below). 
\end{remark}
\begin{convention}When $a = 0$, the above theorem (even without the projection $\mathcal{P}_{HT}$!) already follows from~\cite{slowlybound,claylecturenotes}. Thus we will assume from now on that $a \neq 0$.
\end{convention}

We now turn to discussing an application of Theorem~\ref{mainLinEstTheo} to the study of solutions to the wave equation on a certain class of perturbations of Kerr spacetimes where integrated energy decay does \emph{not} generally hold. We first define the class of spacetimes.
\begin{definition}\label{themetricclass}Let $g_{\rm ref}$ be any choice of Riemannian metric on $\mathcal{M}_K$ that satisfies $\mathcal{L}_Tg_{\rm ref} = 0$. Choose $r_+ < r_{\rm low} < R_{\rm high} < \infty$ with $|r_+-r_{\rm low}| \ll 1$ and $R_{\rm high} \gg 1$. Then we let $\mathscr{A}_{\epsilon,r_{\rm low},R_{\rm high}}$ denote the set of Lorentzian metrics $g$ on $\mathcal{M}_K$ which satisfy the following
\begin{enumerate}
    \item $\mathcal{L}_Tg = \mathcal{L}_{\Phi}g = 0$.
    \item $r \not\in [r_{\rm low},R_{\rm high}] \Rightarrow g = g_K$.
    \item $\left\vert\left\vert g - g_K\right\vert\right\vert_{C^1_{g_{\rm ref}}} \leq \epsilon$.
\end{enumerate}
\end{definition}

\begin{convention}\label{episalwayssmall}We introduce the convention that, unless said otherwise, $\epsilon$ may be assumed sufficiently small depending on any other fixed constants or parameters we introduce.
\end{convention}

In Appendix~\ref{stabletrapapp} we show that for every $\epsilon > 0$ there exists metrics $g \in \mathscr{A}_{\epsilon,r_{\rm low},R_{\rm high}}$ so that $g$ has stably trapped null geodesics. Our main result concerning this class of metrics is the following:
\begin{theorem}\label{theoc1pert}Let $k \geq 1$ be an integer, $\epsilon > 0$ be suitably small, possibly depending on $k$, and let $g \in \mathscr{A}_{\epsilon,r_{\rm low},R_{\rm high}}$. Let $\psi : J^+\left(\Sigma_0\right) \to \mathbb{C}$ solve
\[\Box_g\psi = 0,\]
and satisfy that $\left(n_{\Sigma_0}\psi,\psi\right)|_{\Sigma_0}$ is smooth and compactly supported. Then, for every integer $k \geq 1$, we have 
\begin{equation}\label{enboundc1pert}
\sup_{t \geq 0}E_k\left[\psi\right](t) \lesssim_k E_k\left[\psi\right](0).
\end{equation}
\end{theorem}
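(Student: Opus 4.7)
The plan is to combine the general form of Theorem~\ref{mainLinEstTheo} alluded to in the remark (Theorem~\ref{actuallywehatweneedformain}, valid for any $g \in \mathscr{A}_{\epsilon,r_{\rm low},R_{\rm high}}$) with an integrated local energy decay (ILED) estimate for the non-trapped frequencies, then close by an energy identity using the redshift vector field $N$. Since stable trapping is possible for $g \in \mathscr{A}_{\epsilon,r_{\rm low},R_{\rm high}}$ one cannot hope for an ILED for the full solution; the whole point is that the frequencies that cause stable trapping are precisely those retained by $\mathcal{P}_{HT}$, while the complementary piece $(I - \mathcal{P}_{HT})\psi$ lives away from trapping and therefore does admit an ILED on the perturbed metric, by a small-perturbation argument built on top of the Kerr ILED of~\cite{realmodestab,waveKerrlargea}.

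Concretely, I would decompose $\psi = \mathcal{P}_{HT}\psi + (I-\mathcal{P}_{HT})\psi$ and proceed in three parallel steps. \emph{Step 1 (trapped piece):} apply Theorem~\ref{actuallywehatweneedformain} to $\psi$ on $g$, with $F = 0$, to bound the time-averaged energy of $\mathcal{P}_{HT}\psi$ by $A_{\rm high}^{-1}$ times the time-averaged energy of $\psi$ plus harmless initial-data and boundary terms (the bulk spacetime error terms on the right-hand side of~\eqref{thekeyEstimate} vanish since $F = 0$). \emph{Step 2 (non-trapped piece):} derive a Morawetz/ILED estimate for $(I - \mathcal{P}_{HT})\psi$. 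Because $\mathcal{P}_{HT}$ is a $(t^*, \phi^*)$ pseudo-differential operator depending only on frequencies dual to the Killing coordinates, it commutes with $T$ and $\Phi$; the commutator $[\mathcal{P}_{HT}, \Box_g]$ therefore arises only from the non-Killing part $g - g_K$, and is of size $O(\epsilon)$ in operator norm on energy spaces. Consequently $(I - \mathcal{P}_{HT})\psi$ solves $\Box_g\big((I - \mathcal{P}_{HT})\psi\big) = O(\epsilon)\psi$, to which the frequency-localized positive bulk currents of the Kerr ILED apply, perturbed by an $O(\epsilon)$ error absorbable into the left-hand side. \emph{Step 3 (energy identity):} integrate the $N$-current for $\psi$; the bulk error splits as a sum of a term integrated against the ILED seminorm (controlled via Step~2 in terms of $T$-energy plus $O(\epsilon)\sup_t E_k[\psi]$) and a term supported where trapping occurs (controlled via Step~1 in terms of $A_{\rm high}^{-1}\sup_t E_k[\psi]$). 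Handling higher derivatives $k \geq 1$ proceeds by commuting with $T$, $\Phi$, and the redshift-like commutator of~\cite{slowlybound}, whose use is essential in order to avoid losing derivatives through the $C^1$-only perturbation term $(\Box_g - \Box_{g_K})\psi$.

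Combining all three steps yields
\[
\sup_{t \geq 0} E_k[\psi](t) \;\lesssim\; E_k[\psi](0) \;+\; \big(A_{\rm high}^{-1} + C\epsilon\big)\sup_{t \geq 0} E_k[\psi](t),
\]
and fixing $A_{\rm high}$ sufficiently large and $\epsilon$ sufficiently small (consistent with Convention~\ref{episalwayssmall}) allows the second term to be absorbed, giving~\eqref{enboundc1pert}. A standard approximation argument (finite-in-time energy estimates together with the argument applied on $[0,T]$ and then letting $T \to \infty$) upgrades this into a genuine a priori bound.

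The main obstacle will be the bookkeeping in Step~2 and the way Steps~1 and~2 are coupled. The estimate~\eqref{thekeyEstimate} has $\sup_t E_k[\psi]$ (not $\sup_t E_k[\mathcal{P}_{HT}\psi]$) on its right-hand side, while the ILED for $(I - \mathcal{P}_{HT})\psi$ requires as input the full $T$-energy of $\psi$; thus the two estimates are not decoupled, and one must assemble them into a single scalar inequality in which the two small parameters $A_{\rm high}^{-1}$ and $\epsilon$, each controlling a different error, can simultaneously be used to absorb the $\sup_t E_k[\psi]$ term. A secondary technical point, which is where the techniques of~\cite{slowlybound} are essential, is that because $g \in \mathscr{A}_{\epsilon,r_{\rm low},R_{\rm high}}$ is only $C^1$-close to $g_K$, the commutators with higher-order vector fields lose a derivative unless one uses the redshift commutation that does not require regularity of $g$ beyond $C^1$ at the horizon.
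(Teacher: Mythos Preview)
Your overall architecture is in the right spirit, but there are two genuine gaps that prevent the argument from closing as written.

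\textbf{First, the Fourier transform cannot be taken a priori.} You apply $\mathcal{P}_{HT}$ and $I-\mathcal{P}_{HT}$ directly to $\psi$, but these are defined via the $(t^*,\phi^*)$ Fourier transform, which requires $\psi \in \mathbb{H}_K$. Since we do not yet know boundedness---indeed that is what we are trying to prove---$\psi$ could grow exponentially and $\mathcal{F}[\psi]$ need not exist. Your closing sentence about ``a standard approximation argument'' does not address this: restricting to $[0,T]$ and cutting off destroys the equation and creates commutator errors of unit size (not $O(\epsilon)$). The paper resolves this by working not with $\Box_g$ but with an interpolating operator $\Box_\tau$ that agrees with $\Box_g$ on $[0,\tau]$ and with $\Box_{g_K}$ outside $[-1,\tau+1]$; the corresponding solution $\psi_\tau$ is then known to lie in $\mathbb{H}_K$ (using Kerr boundedness in the future region $t^* > \tau$), the Fourier decomposition becomes legitimate, and one proves estimates uniform in $\tau$. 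The time-dependence of $\Box_\tau$ introduces additional commutator errors localised near $t^* \in \{-1,0\}\cup\{\tau,\tau+1\}$, which must be tracked.

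\textbf{Second, your Step~2 loses a derivative and cannot close.} You write $\Box_g\bigl((I-\mathcal{P}_{HT})\psi\bigr) = O(\epsilon)\psi$, but the right-hand side is really $O(\epsilon)$ times \emph{second} derivatives of $\psi$. When you feed this into a Kerr ILED for the non-trapped piece, the inhomogeneity appears in spacetime $L^2$, producing a term $\epsilon^2 \int_0^\infty E_{1}[\psi](s)\,ds$ on the right-hand side---a full spacetime integral of the top-order energy, not $\epsilon \sup_t E_k[\psi]$. This cannot be absorbed. The paper avoids this by refining your two-piece decomposition into three pieces: a bounded-frequency piece $\mathcal{P}_{\rm bound}\psi$ (where $\omega^2+m^2 \lesssim A_{\rm high}^2$), high superradiant pieces $\mathcal{P}_{n,\rm high}\psi$, and the high trapped piece $\mathcal{P}_{HT}\psi$. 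For bounded frequencies the derivative loss is recovered by elliptic estimates, trading $r$- and $\theta$-derivatives for powers of $A_{\rm high}$; for the high superradiant pieces the ILED currents from~\cite{blackboxlargea} are used and the top-order perturbation term is handled by integrating by parts (the identity ${\rm Re}(\partial_i\partial_j f \cdot \partial_k f)$ as a sum of total derivatives), not by a crude $L^2$ bound. Only with this finer decomposition does every error term come with either a small factor $\epsilon$, $A_{\rm high}^{-1}$, or a free parameter $\delta$, times $\sup_{t^*}\int_{t^*-1/2}^{t^*+1/2} E_k[\psi_\tau]$.

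A smaller point: in your Step~1 you say the bulk terms on the right of~\eqref{thekeyEstimate} vanish because $F=0$, but in Theorem~\ref{actuallywehatweneedformain} the relevant inhomogeneity is $\mathscr{F} = \mathcal{P}_{HT}F + [\Box_\tau,\mathcal{P}_{HT}]\psi_\tau$, and the commutator term does not vanish; it is precisely the $O(\epsilon)$ perturbation term that must be estimated (the paper does this in Lemma~\ref{finalthingfromnonsupersuper}).
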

\begin{remark}By standard density arguments, the regularity and support assumptions on $\psi$ in Theorem~\ref{theoc1pert} may be weakened as long as the corresponding norm $E_k\left[\psi\right](0)$ is finite.
\end{remark}
\begin{remark}Theorem~\ref{theoc1pert} would still hold, with essentially the same proof, if one broadened the definition of $\mathscr{A}_{\epsilon,r_{\rm low},R_{\rm high}}$ to allow metric perturbations along the event horizon and for ones which fall off polynomially at infinity. We have restricted to the simpler class of Definition~\ref{themetricclass} to keep the exposition more streamlined and because this class is already rich enough to create stable trapping.
\end{remark}

\subsection{Superradiance}\label{superduper}
Many of the difficulties involved in studying the wave equation on Kerr spacetimes are due to the presence of the ergoregion and the associated phenomenon of superradiance. The ergoregion is the region of spacetime where the stationary Killing vector field $T$ is spacelike, and superradiance is the resulting fact that the event horizon energy flux associated to $T$ may be negative. As a consequence of these geometric facts, it is not possible to obtain any sort of energy boundedness result directly from the use of conservation laws. 

We refer the reader to the introduction of~\cite{waveKerrlargea} for a more thorough discussion of superradiance on the Kerr spacetime. Here we will just make two additional points about superradiance. The first is that in view of the fact that various natural equations acquire exponential superradiant instabilities on rotating Kerr spacetimes (see~\cite{KGblowup,shortrangeblowup}), the difficulties posed by superradiance should not be considered purely ``technical.'' The second point is a fundamental insight which originates in~\cite{slowlybound} in the $|a| \ll M$ setting, but also plays a fundamental role for the full sub-extremal range $|a| < M$ in~\cite{waveKerrlargea} and the present work. Namely, by using Fourier analysis in $(t^*,\phi^*)$, any solution $\psi$ to the wave equation may be split in two pieces $\psi = \psi_{\flat} + \psi_{\sharp}$ where $\psi_{\flat}$ is immune to trapping in that it is possible to establish an integrated local energy decay statement for $\psi_{\flat}$ with no regularity loss (see the discussion of trapping below in Section~\ref{stabletrapsec}) and $\psi_{\sharp}$ is immune to superradiance in that its formal flux along the event horizon is positive (see Section~\ref{sketchit} below for more details).

\subsection{Stable/unstable trapping}\label{stabletrapsec}
We say that a null geodesic on a black hole spacetime is trapped if as the affine parameter $s\to \infty$ it stays within the time-translations of a bounded spacelike region, neither escaping to infinity nor falling into the black hole. As is well-known, the proof of integrated local energy decay statements such as~\eqref{schematicIELDwithbound} requires, one way or another, that one understands quantitatively that all trapped null geodesics become non-trapped upon suitable small perturbations (see, for example, the discussions of trapping in the introduction to~\cite{waveKerrlargea}). However, even though one can still establish an integrated energy decay statement if the trapping is sufficiently unstable, the resulting estimate must lose regularity~\cite{jangauss}. This is ultimately the reason for the presence of the function $\zeta(r)$ in~\eqref{schematicIELDwithbound}.

When one has stably trapped null geodesics, then one expects at most logarithmic type decay for solutions to the wave equation; in particular, the integrated energy decay statements cannot be expected to hold in this context. The first results in mathematical relativity concerning stable-trapping and the associated slow decay are contained in the works~\cite{gannotAds,holzsmulquasi} which studied solutions to the wave equation on Kerr--AdS spacetimes. Here the stable trapping is associated with the fact that null infinity is timelike; in particular these works showed that integrated local energy decay cannot hold on the backgrounds. 

More relevant for us, however, is the work~\cite{stabletrapring} which showed that the stable trapping phenomenon is ubiquitous for higher dimensional asymptotically flat black holes. More specifically, the work~\cite{stabletrapring} showed that stable trapping occurs for a wide class of five dimensional black ring spacetimes and used this fact to establish the following dichotomy: For any of the black ring spacetimes considered in~\cite{stabletrapring}, either uniform boundedness of energy fails \emph{or} solutions to the wave equation generally decay logarithmically. 

\emph{All black ring spacetimes studied in~\cite{stabletrapring} possess an ergoregion, superradiance, and stable trapping, and it currently remains an open problem to determine whether or not energy boundedness holds for any of these spacetimes.} We hope that the techniques developed in this paper will be useful for addressing this question.

\subsection{Quasilinear problems in the full sub-extremal range $|a| < M$}\label{quasiintro}
We now discuss the relevance of our new linear estimate Theorem~\ref{actuallywehatweneedformain} to the study of quasilinear wave equations on sub-extremal Kerr backgrounds. In order to apply the results from~\cite{waveKerrlargea} to a nonlinear problem, one of course needs a version of the estimate~\eqref{schematicIELDwithbound} which allows for $F \neq 0$. One such estimate which may be derived from the main results of~\cite{waveKerrlargea}  (see Remark 3.2.1 and Appendix D of~\cite{blackboxsmalla}) is the following:
\begin{equation}\label{schematicIELDwithboundinhomog}\begin{split}
&\sup_{\tau > 0}\sum_{1 \leq \left|\alpha\right| \leq k+1}\int_{\Sigma_{\tau}}\left|\psi^{(\alpha)}\right|^2 + \int_0^{\infty}\int_{\Sigma_{\tau}}\left[\zeta\left(r\right)\sum_{1 \leq |\alpha| \leq k+1}\left|\psi^{(\alpha)}\right|^2 + \sum_{0 \leq |\alpha| \leq k} r^{-2}\left|\psi^{(\alpha)}\right|^2\right]r^{-1-\delta}\, d\tau \lesssim_{k,\delta}
\\ &\qquad \sqrt{\int_0^{\infty}\int_{\Sigma_{\tau} \cap \{r \leq R\}}\sum_{0\leq \left|\alpha\right| \leq k}\left|F^{(\alpha)}\right|^2\, d\tau}\sqrt{\int_0^{\infty}\int_{\Sigma_{\tau} \cap \{r \leq R\}}\sum_{\left|\alpha\right| \leq k+1}\left|\psi^{(\alpha)}\right|^2\, d\tau} 
\\ &\qquad  +\int_0^{\infty}\int_{\Sigma_{\tau}\cap \{r \geq R\}}\sum_{0\leq \left|\alpha\right| \leq k}\left|F^{(\alpha)}\left(w_1 \partial_r\psi^{(\alpha)} + w_2 \psi^{(\alpha)} + \partial_t\psi^{(\alpha)}\right)\right|\, d\tau 
\\ &\qquad + \sum_{1 \leq \left|\alpha\right| \leq k+1}\int_{\Sigma_0}\left|\psi^{(\alpha)}\right|^2+ \int_0^{\infty}\int_{\Sigma_{\tau}}\sum_{0\leq \left|\alpha\right| \leq k-1}\left|F^{(\alpha)}\right|^2\, d\tau,
\end{split}\end{equation}
for suitable weights $w_1$, $w_2$, where the final term on the right hand side is absent when $k = 0$.

In the second line of~\eqref{schematicIELDwithboundinhomog} we have a term proportional to the spacetime $L^2$-norm of $F^{(\alpha)}$. This turns out to limit the usefulness of the estimate~\eqref{schematicIELDwithboundinhomog} when establishing ``top order'' estimates for quasilinear wave equations. An essentially equivalent way to phrase the issue is that if one attempts to apply the estimate~\eqref{schematicIELDwithboundinhomog} directly to a solution $\psi$ of $\Box_g\psi = 0$, where $g$ is a perturbation of $g_K$, by writing $\Box_{g_K}\psi = \Box_{g_K}\psi - \Box_g\psi$, then the resulting estimate cannot close on its own because the right hand side will involve $k+2$ derivatives of $\psi$ while the left hand side involves only $k+1$ derivatives.\footnote{For semilinear problems the estimate~\eqref{schematicIELDwithboundinhomog} still seems to have the issue that there is a spacetime norm on the right hand side involving $k+1$ derivatives of $\psi$ while, because of the degenerate due to trapping, the spacetime norm on the left hand side only involves $k$ derivatives of $\psi$. One way to surmount this issue is to start with estimate~\eqref{schematicIELDwithbound} and add in the inhomogeneity via Duhamel's principle. In the resulting estimate, one puts the inhomogeneity $F^{(\alpha)}$ in $L^1_t$. See, for example,~\cite{LukNLM}. However, this trick does not suffice for a quasilinear problem.} For more details see the discussion in the introductions of~\cite{blackboxsmalla,blackboxlargea}.

 This difficulty with quasilinear problems  has been recently resolved in~\cite{blackboxsmalla,blackboxlargea} for the full sub-extremal range $|a| < M$. The key is a new linear estimate to be applied at top order which involves partitioning the solution into various pieces using Fourier analysis in $(t^*,\phi^*)$, applying certain physical space currents, and finally controlling certain remaining terms by exploiting the fact that the wave operator is elliptic in certain regions of phase space. For the specific piece $\mathcal{P}_{HT}\psi$ of the solution, one could, in principle, use a variant of Theorem~\ref{actuallywehatweneedformain} to establish a top order estimate. (Note that when we prove Theorem~\ref{actuallywehatweneedformain} we have to resolve an analogous top order estimate difficulty.) In the specific context of~\cite{blackboxsmalla,blackboxlargea} this would not lead to an improved result, but it may be useful to keep this flexibility in mind for other problems.

 We note also the work~\cite{szeftelma} which has provided new linear microlocal estimates for solutions to the wave equation which, when combined with the results of~\cite{waveKerrlargea}, resolve the derivative loss problem at top order. 

 Our fundamental interest in this paper are results which hold in the full subextremal regime $|a| < M$. However, before we move to the next section it is worth noting that in the case $a = 0$ or $|a| \ll M$ the specific problem we have mentioned above with the top order estimate is simpler to handle, and there has been tremendous progress both on small data stability for quasilinear wave equations and on the corresponding black hole stability problem, see~\cite{blackboxsmalla,hansmihaiquasischw,hansmihailocalenergKerr,hansmihaiweaknullkerr,fedmaxborn,schwstabdhrt,polstabschw,klainszefsmallastab,giorklainszeftwaveest}.

\subsection{Further directions}\label{secfurther}We have already mentioned in Section~\ref{stabletrapsec} open problems concerning higher dimensional asymptotically flat black holes that could possibly be addressed with the techniques developed in this paper. In this section we will discuss some other potential directions for extending these results.

For the equations governing true linearized gravitational perturbations around a Kerr black hole, the gauge-invariant part of the system are controlled by solutions to the Teukolsky equation. Energy boundedness and integrated energy decay estimates for the Teukolsky equation have been established in $a= 0$~\cite{linestabDHR}, the case of $|a| \ll M$~\cite{teukDHRsmalla,mateuksmalla}, and in the full sub-extremal case $|a| < M$~\cite{teuksubextremalI,teuksubextremalII}. (For precise leading order asymptotics see~\cite{mapriceteuk,teukprecise}.) It would be interesting to establish analogues of Theorems~\ref{mainLinEstTheo} and~\ref{theoc1pert}; however, in view of the lack of a Lagrangian structure for the Teukolsky equation, it is clear that new ideas would have to be introduced.

While the focus of this paper is on the sub-extremal case $|a| < M$, the extremal case $|a| = M$ is both physically relevant and very interesting mathematically. In fact, for the wave equation on an extremal Kerr background it remains an open question to establish an analogue of the boundedness and integrated energy decay statement~\eqref{schematicIELDwithbound}. For the state of the art see~\cite{aretakisKerr,aretakisinstabilityextreme,azimuthalgajic}. With the current proofs, the implied constants in the estimates of Theorems~\ref{mainLinEstTheo} and~\ref{theoc1pert} blow-up in the extremal limit $|a| \to M$. This is due both to the loss of the red-shift along the event horizon, and to the degeneration of the disjointness of trapping and superradiance (see the discussion of this latter point in the introduction of~\cite{waveKerrlargea}). It would be interesting to determine if a suitable modification of the estimates in Theorems~\ref{mainLinEstTheo} and~\ref{theoc1pert} could be shown when $|a| = M$.

Finally, we note that it is natural to attempt to prove analogues of our results in case of Kerr--Anti--de Sitter or Kerr--de Sitter black holes. We will not attempt here a general conjecture for either sign of the cosmological constant. However, we note that for $\Lambda < 0$, there is, in general, a big dichotomy in stable or unstable behavior of scalar waves depending on whether the Hawking--Reall bound~\cite{hawkreall} is satisfied and what type boundary conditions are posed at the timelike conformal boundary, see~\cite{doldunstable,HolzSmuldecay,holzsmulquasi,holzewarnwellposed} (and also the important precursor to these logarithmic decay results~\cite{burq} as well as the general logarithmic decay result of~\cite{moschidishlog} for asymptotically flat spacetimes). For the positive cosmological constant $\Lambda > 0$, all available results are consistent with the analogue of Theorem~\ref{theoc1pert} being true, but it remains an open problem to determine whether boundedness of energy and integrated energy decay for solutions to the wave equation hold in the full range of admissible parameters. For spacetimes with $\Lambda > 0$ and $|a| \ll M$, see~\cite{hintzvasystab,fangkerrds} and the references therein where the nonlinear stability of the Kerr--de Sitter family is proven. For the state of the art for the wave equation in the full range of admissible parameters, see~\cite{vasypeterstab,casacostmodestab,hintzmodestab,mavrokerrds}; \cite{vasypeterstab} establishes decay statements for the wave equation conditional on the validity of mode stability,~\cite{mavrokerrds} establishes the analogue of~\eqref{schematicIELDwithbound} also conditional on the validity of mode stability, and~\cite{hintzmodestab,casacostmodestab} establish mode stability for certain ranges of parameters.   

\subsection{Sketch of the proof}\label{sketchit}
In this final section of the introduction we will provide a sketch of the proof of our main results.

There are two main ingredients that go into the proof of Theorem~\ref{mainLinEstTheo}. The first ingredient is essentially algebraic; namely we formally consider a solution of the form 
\begin{equation}\label{pluginmode}
\psi = e^{-it\omega}e^{im\phi}Q\left(r,\theta\right),
\end{equation}
for $\omega \in \mathbb{R}$ and $m \in \mathbb{Z}$. If the parameters satisfy 
\begin{equation}\label{notsuper}
\omega\left(\omega - \frac{am}{2Mr_+}\right) > 0,
\end{equation}
then the horizon flux satisfies $\mathbf{J}^T_{\mu}n^{\mu}_{\mathcal{H}^+} \geq 0$, where $T$ denotes the stationary vector field. (This insight originates in~\cite{slowlybound}.) However, in order to prove Theorem~\ref{mainLinEstTheo} we ask ourselves something stronger, that is, can we construct a current which produces a positive energy along a spacelike foliation if  $\omega\left(\omega - \frac{am}{2Mr_+}\right) > 0$? It is useful to note, however, that in view of finite in time energy estimates, it suffices to have a positive energy after integration in time again a suitable bump function.  

Let $K = T + \frac{a}{2Mr_+}\Phi$ denote the Hawking vector field. Using the notations from Section 2.1,  we may compute for a function of the form~\eqref{pluginmode} along a constant $t$-hypersurface
\[\mathbf{J}^K_{\mu}n^{\mu}_{\{t = \rm const\}} = \rho^{-1}\sqrt{\frac{\Pi}{\Delta}}\underbrace{\left(\left(\omega  - \frac{a}{2Mr_+}m \right)\left(\omega  -\frac{2Mar}{\Pi} m\right) \right)}_{\doteq P\left(\omega,m,r\right)}\left|Q\right|^2+\frac{1}{2}\rho \sqrt{\frac{\Delta}{\Pi}} \partial^{\gamma}\psi \partial_{\gamma}\psi.\]
After averaging in time against a bump function the last term can be removed (up to a lower order term) by using a Lagrangian correction. Hence, we focus on whether or not $P > 0$. The key structural observation is that $P$ is indeed positive exactly in the the same frequency range~\eqref{notsuper}! Note also that in view of the fact that $K$ is timelike near the horizon, we can modify the constant $t$-hypersurface near the horizon and maintain the positivity of $\mathbf{J}^K_{\mu}n^{\mu}$ (this is why we work with the vector field $K$ as opposed to $T$). 

The second ingredient that goes into the proof of Theorem~\ref{mainLinEstTheo} is primarily analytic. Namely, so that we can take advantage of the observations above, we must show that, after averaging against bump functions in time, we have an approximate Plancherel's formula in time with acceptable errors. We also need estimates for certain lower order terms that appear. These estimates are achieved by relying on various standard aspects of the theory for the $(t^*,\phi^*)$ pseudo-differential calculus.

We now discuss the main ideas in the proof of Theorem~\ref{theoc1pert}. The basic idea is to split the solution $\psi$ into three pieces $\psi = \psi_{\rm bound} + \psi_{\rm NT} + \psi_{\rm HT}$, where the Fourier support of $\psi_{\rm bound}$ is in the region where $\omega^2+m^2 \lesssim 1$, the Fourier support of $\psi_{\rm NT}$ is in a region which does not experience trapping, and the Fourier support of $\psi_{\rm HT}$ covers high frequencies which experience trapping. For $\psi_{\rm bound}$ we are able to directly perturb the estimate for the exact Kerr metric; the potential derivative loss from terms like $\Box_{g_K} - \Box_g$ may be handled using elliptic estimates and the fact that $\omega^2+m^2 \lesssim 1$. For $\psi_{\rm HT}$ we use a generalization of Theorem~\ref{mainLinEstTheo} (see Theorem~\ref{actuallywehatweneedformain}) which allows us to prove boundedness of energy directly without an integrated energy decay statement. Finally, for $\psi_{\rm NT}$ we may exploit the disjointness from trapping by using an estimate from~\cite{blackboxlargea} to establish an integrated local energy decay statement. 

A key technical difficulty is that one cannot \emph{a priori} take the Fourier transform in time since the solution could grow exponentially. A similar difficulty occurs in~\cite{waveKerrlargea} (and the earlier~\cite{slowlybound}) and we resolve the issue in an analogous way here. Instead of working directly with the operator $\Box_g$, we study solutions to $\Box_{\tau}\psi_{\tau} = 0$, where $\Box_{\tau}$ interpolates between $\Box_g$ and $\Box_{g_K}$ in a region where $t\sim \tau$. We then prove estimates uniform as $\tau \to \infty$. This interpolation process creates a time-dependence for the operator $\Box_{\tau}$ which introduces additional errors, but these turn out to be controllable. 

 The paper is structured as follows. In Section 2 we introduce the Kerr metric, the current formalism, the main operator we will control, and the class of metrics we will study. In Section 3 we establish various theorems for $(t^*, \phi^*)$ pseudo-differential operators that will be useful later on. Section 4 contains the proof of Theorem~\ref{mainLinEstTheo}. Section 5 gives the proof of Theorem~\ref{theoc1pert}. Appendix A gives an example of a family of metrics with stable trapping for which our result applies. Finally, Appendix B gives a brief argument of why Theorem~\ref{fromtheblackboxlargea} holds.

\subsection{Acknowledgments}
YS acknowledges support from an Alfred P. Sloan Fellowship in Mathematics and from NSERC discovery grants RGPIN-2021-02562 and DGECR-2021-00093.
\section{Preliminaries}

\subsection{The Kerr spacetime}\label{Kerrspace}
In this section we will review some standard coordinate systems and hypersurfaces and fix our conventions for the Kerr spacetime. For a true introduction to Kerr family of black holes see, for example,~\cite{oneillKerr,wald}.

Let $M > 0$, $a \in (-M,M)\setminus\{0\}$, and set $r_{\pm} \doteq M \pm \sqrt{M^2-a^2}$. In Boyer-Lindquist coordinates $\left(t,r,\theta,\phi\right) \in \mathbb{R} \times (r_+,\infty) \times \mathbb{S}^2$ the Kerr metric takes the form
\begin{equation}\label{metric}
g_{BL} = -\left(1-\frac{2Mr}{\rho^2}\right)dt^2 - \frac{4Mar\sin^2\theta}{\rho^2}dtd\phi + \frac{\rho^2}{\Delta}dr^2
+ \rho^2 d\theta^2 + \sin^2\theta\frac{\Pi}{\rho^2}d\phi^2,
\end{equation}
\[\Delta := r^2 - 2Mr + a^2 = (r-r_+)(r-r_-),\]
\[\rho^2 := r^2 + a^2\cos^2\theta,\]
\[\Pi := (r^2+a^2)^2 - a^2\sin^2\theta\Delta.\]
\begin{convention}We introduce the convention for this paper that we take $a > 0$. The extension of all results to $a < 0$ will be immediate.
\end{convention}

We introduce the notation $T$ and $\Phi$ to refer to the Killing vector fields $\partial_t$ and $\partial_{\phi}$. The Hawking vector field refers to the particular linear combination of $T$ and $\Phi$ given by
\begin{equation}\label{hawk}
K \doteq T + \frac{a}{2Mr_+}\Phi.
\end{equation}

For future reference, we note also the inverse of $g_{BL}$:
\begin{equation}\label{inversemetric}\begin{split}
&g_{BL}^{-1} = \frac{-\Pi}{\rho^2\Delta}\partial_t\otimes \partial_t - \frac{2Mar}{\rho^2\Delta}\left(\partial_t\otimes \partial_{\phi} + \partial_t\otimes \partial_{\phi}\right)
\\ &\qquad\qquad + \frac{\Delta - a^2\sin^2\theta}{\Delta\rho^2\sin^2\theta}\partial_{\phi}\otimes\partial_{\phi} + \frac{\Delta}{\rho^2}\partial_r\otimes\partial_r + \rho^{-2}\partial_{\theta}\otimes\partial_{\theta},
\end{split}
\end{equation}
and also the volume form
\begin{equation}\label{volumeform}
dVol = \rho^2dtdrd\mathbb{S}^2.
\end{equation}

\begin{convention}If we do not write explicitly a volume form, our convention is that the integration is always with respect to the induced volume form.
\end{convention}

We have
\begin{equation}\label{thisisnablat}\begin{split}
&-\nabla t = -g^{tt}\partial_t - g^{t\phi}\partial_{\phi} = \frac{\Pi}{\rho^2\Delta}\partial_t + \frac{2Mar}{\rho^2\Delta}\partial_{\phi} \Rightarrow 
\\ &\qquad \qquad g_{BL}\left(-\nabla t,-\nabla t\right) = \frac{-\Pi}{\rho^2\Delta}.
\end{split}\end{equation}

In view of the inequality
\[\Pi = \Delta^2 + 4M^2r^2 + \Delta \left(4Mr - a^2\sin^2\theta\right)  > 0,\]
we see that $\nabla t$ is a timelike vector. We may thus take $-\nabla t$ to define our time orientation.

As is well-known, Boyer-Lindquist coordinates are not regular across the event horizon. A convenient coordinate system which is regular across the future event horizon $\mathcal{H}^+$ is given by the so-called Kerr star coordinates $\left(t^*,r,\phi^*,\theta\right) \in \mathbb{R} \times (r_-,\infty) \times \mathbb{S}^2$, where, when $r > r_+$, we have $t^* = t - \overline{t}\left(r\right)$ and $\phi^* = \phi - \overline{\phi}\left(r\right)$ for suitable functions $\overline{t}$ and $\overline{\phi}$ (which vanish for sufficiently large $r$). The specific form of the metric in Kerr star coordinates will not be important for us. Instead, the relevant properties are that $T$ and $\Phi$ extend as Killing vector fields to the entire domain of Kerr star coordinates, the time-orientation uniquely extends to the Kerr star coordinates, $\mathcal{H}^+ \doteq \{r = r_+\}$ is a null hypersurface with normal vector $K$, and $\left(\nabla_KK\right)|_{\mathcal{H}^+} = \kappa K$ for $\kappa > 0$ (where $\nabla$ is the connection associated to the metric).

We then fix the Kerr spacetime $\left(\mathcal{M}_K,g_K\right)$ to be a manifold with boundary with $\mathcal{M}_K \doteq \left\{(t^*,r,\phi^*,\theta) \in \mathbb{R} \times [r_+,\infty) \times \mathbb{S}^2\right\}$ and $g_K$ being the Kerr metric expressed in Kerr star coordinates. We remark that, while the interior of $\mathcal{M}_K$ coincides with the domain of outer communication, the boundary $\{r=r_+\}$ is the portion of the event horizon to the future of the bifurcation sphere. We depict the corresponding Penrose diagram below.
\begin{center}
\begin{tikzpicture}[scale = 1]
\fill[lightgray] (-2,0) -- (0,2) -- (2,0) -- (0,-2) -- (-2,0);
\draw[dashed] (2,0) -- (0,2) node[sloped,above,midway]{\footnotesize $\mathcal{I}^+$};
\draw[dashed] (2,0) -- (0,-2) node[sloped,below,midway]{\footnotesize $\mathcal{I}^-$};
\draw[thick] (-2,0) -- (0,2) node[sloped,above,midway]{\footnotesize $\{r = r_+\}$};
\draw[dashed] (-2,0) -- (0,-2) node[sloped,below,midway]{\footnotesize $\{r = r_+\}$};
\node at (0,0) {\footnotesize $\mathcal{M}_K$};
\path [draw = black, fill = white] (0,-2) circle (1/16);
\path [draw = black, fill = white] (2,0) circle (1/16);
\path [draw = black, fill = white] (0,2) circle (1/16);
\path [draw = black, fill = white] (-2,0) circle (1/16);
\node  at (0,2.20) {\footnotesize $i^+$};
\node at (0,-2.20) {\footnotesize $i^-$};
\node at (2.20,0) {\footnotesize $i^0$};
\end{tikzpicture}
\end{center}

Let $\delta_0$ be any sufficiently small constant so that the Hawking vector field $K$ is causal when $r \in [r_+,r_++10\delta_0]$ and so that when $r \in [r_+,r_++10\delta_0]$ we have the validity of the redshift estimate for the vector field $N$, which is defined below at the end of this section. We will arrange our choices of $\bar{t}(r)$ and $\bar{\phi}(r)$ so that $\{t^* = \tau\} \cap \{r \geq r_++\delta_0\}  = \{t = \tau\}\cap \{r \geq r_+ +\delta_0\}$.

We define a foliation $\left\{\Sigma_{\tau}\right\}_{\tau = -\infty}^{\infty}$ of $\mathcal{M}_K$ by letting $\Sigma_{\tau}  \doteq \{t^* = \tau\}.$ Note that when $r \geq r_+ + \delta_0$, $\Sigma_{\tau}$ is given by $\{t = \tau\}$. We will use $n_{\Sigma_{\tau}}$ to denote the future oriented normal vector to $\Sigma_{\tau}$. We also introduce the notation $\mathcal{R}\left(\tau_1,\tau_2\right)$ to denote the spacetime region $J^+\left(\Sigma_{\tau_1}\right) \cap J^-\left(\Sigma_{\tau_2}\right)$. In the following diagram we have depicted an example of a region $\mathcal{R}\left(\tau_1,\tau_2\right)$.
\begin{center}
\begin{tikzpicture}[scale = 1]
\fill[lightgray] (-2,0) -- (0,2) -- (2,0) -- (0,-2) -- (-2,0);
\draw[dashed] (2,0) -- (0,2) node[sloped,above,midway]{\footnotesize $\mathcal{I}^+$};
\draw[dashed] (2,0) -- (0,-2) node[sloped,below,midway]{\footnotesize $\mathcal{I}^-$};
\draw[thick] (-2,0) -- (0,2) node[sloped,above,midway]{\footnotesize $\{r = r_+\}$};
\draw[dashed] (-2,0) -- (0,-2) node[sloped,below,midway]{\footnotesize $\{r = r_+\}$};
\fill[blue!10, opacity=0.5] 
(2,0) .. controls (.15,.05) .. (-1.7,.3)  --
(-1,1) -- (-1,1) .. controls (.5,.4) .. (2,0);
\draw[thick] (2,0) .. controls (.15,.05) .. (-1.7,.3) node[sloped,below,midway]{\footnotesize $\Sigma_{\tau_1}$};
\draw[thick] (2,0) .. controls (.5,.4) .. (-1,1) node[sloped,above,midway]{\footnotesize $\Sigma_{\tau_2}$};
\node at (-.75,.45) {\footnotesize $\mathcal{R}\left(\tau_1,\tau_2\right)$};

\path [draw = black, fill = white] (0,-2) circle (1/16);
\path [draw = black, fill = white] (2,0) circle (1/16);
\path [draw = black, fill = white] (0,2) circle (1/16);
\path [draw = black, fill = white] (-2,0) circle (1/16);
\node  at (0,2.20) {\footnotesize $i^+$};
\node at (0,-2.20) {\footnotesize $i^-$};
\node at (2.20,0) {\footnotesize $i^0$};
\end{tikzpicture}
\end{center}

Lastly, we let $N$ denote a vector field which satisfies the following properties:
\begin{enumerate}
\item $N$ is timelike on all of $\mathcal{M}_K$.
\item $N$ commutes with $T$ and with $\Phi$.
\item $N = T$ for sufficiently large $r$.
\item\label{redshiftest} For any function $f$ and $r \in [r_+,r_++10\delta_0]$, we have $\mathbf{K}^N\left[f\right] \gtrsim \mathbf{J}^N_{\mu}\left[f\right]N^{\mu}$.
\end{enumerate}
Such vector fields and their corresponding estimates for solutions to wave equations were first introduced by Dafermos--Rodnianski~\cite{redshiftSchw,claylecturenotes}.

When writing out various energy norms it will be useful to introduce the following differential operators on the Kerr spacetime. We define these originally in Boyer-Lindquist coordinates:
\[\partial_{r^*} \doteq \frac{\Delta}{r^2+a^2}\partial_r,\qquad L \doteq \partial_{r^*}+T +\frac{a}{r^2+a^2}\Phi,\qquad \underline{L} \doteq \frac{r^2+a^2}{\Delta}\left(-\partial_{r^*}+T +\frac{a}{r^2+a^2}\Phi\right).\]
The vector fields $L$ and $\underline{L}$ are immediately seen to also be regular null vector fields in $\left(t^*,\phi^*,\theta,\phi\right)$ coordinates. We will use $Y$ to denote any choice of a vector field that is stationary and axisymmetric, vanishes for sufficiently large $r$, and satisfies that $Y|_{\mathcal{H}^+} = \underline{L}$. 

We let $\Omega_1$, $\Omega_2$, and $\Omega_3$ denote the standard angular momentum operators along $\mathbb{S}^2$, where we make the choice that $\Omega_1 = \Phi$. We then define
\[\left|\slashed{\nabla}\psi\right|^2 \doteq r^{-2}\sum_{i=1}^3\left|\Omega_i\psi\right|^2.\]
The vector fields $L$, $\underline{L}$, and $\{\Omega_i\}_{i=1}^3$ span the tangent space at each point of the Kerr spacetime. 

For higher order energies we introduce the following notation:
\[\tilde{\mathfrak{D}}_1 \doteq L,\qquad \tilde{\mathfrak{D}}_2 \doteq \underline{L},\qquad \tilde{\mathfrak{D}}_{2+i} \doteq r^{-1}\Omega_i.\]
(Note that we put a $r^{-1}$ weight as opposed to the corresponding definition in~\cite{blackboxlargea}.) For any multi-index $\textbf{k} = \left(k_1,\cdots,k_5\right)$ we then set
\[\tilde{\mathfrak{D}}^{\textbf{k}} \doteq \tilde{\mathfrak{D}}^{k_1}_1\cdots \tilde{\mathfrak{D}}^{k_5}_5.\]

Now we are ready to define our main energy norm.
\begin{definition}For any sufficiently regular function $f : \mathcal{M}_K \to \mathbb{C}$ and integer $k \geq 0$ we define
\[E_k\left[f\right]\left(\tau\right) \doteq \sum_{|\textbf{k}| \leq k}\int_{\Sigma_{\tau}}\left[\left|L\tilde{\mathfrak{D}}^{\textbf{k}}f\right|^2 + \left|\underline{L}\tilde{\mathfrak{D}}^{\textbf{k}}f\right|^2 + \left|\slashed{\nabla} \tilde{\mathfrak{D}}^{\textbf{k}}f\right|^2\right].\]
For any subset $\mathscr{A} \subset [r_+,\infty)$we also define 
\[E_{k, \mathscr{A}}\left[f\right]\left(\tau\right) \doteq \sum_{|\textbf{k}| \leq k}\int_{\Sigma_{\tau} \cap \{r \in \mathscr{A}\}}\left[\left|L\tilde{\mathfrak{D}}^{\textbf{k}}f\right|^2 + \left|\underline{L}\tilde{\mathfrak{D}}^{\textbf{k}}f\right|^2 + \left|\slashed{\nabla} \tilde{\mathfrak{D}}^{\textbf{k}}f\right|^2\right].\]
\end{definition}
\subsection{Currents}
We review here various current constructions that will play an important role in this paper. 
\subsubsection{Standard Currents}\label{stancurr}
Let $\left(\mathcal{M},g\right)$ be an oriented Lorentzian manifold with the corresponding Levi-Civita connection $\nabla$ and Hodge star operator $*$. Unless said otherwise, in the rest of this subsection all objects are assumed to be defined on $\mathcal{M}$ and real valued. Let $V$ denote a vector field, $w$ a function, $q$ a $1$-form, and $\psi$ a complex-valued function. We then define an associated symmetric $(0,2)$-tensor $\mathbf{T}_{\mu\nu}\left[g,\psi\right]$, $1$-form $\mathbf{J}^{V,w,q}\left[g,\psi\right]$, function $\mathbf{K}^{V,w,q}\left[g,\psi\right]$, and function $\mathbf{H}^{V,w}\left[\psi\right]$ by
\begin{equation}\label{energymomten}
\mathbf{T}_{\mu\nu}\left[g, \psi\right] \doteq {\rm Re}\left(\partial_{\mu}\psi\overline{\partial_{\nu}\psi}\right) - \frac{1}{2}g_{\mu\nu}\partial^{\gamma}\psi \partial_{\gamma}\psi,
\end{equation}
\begin{equation}\label{twistedcurdef}
\mathbf{J}^{V,w, q}_{\mu}\left[g, \psi\right] \doteq  \mathbf{T}_{\mu\nu}\left[g,\psi\right]V^{\nu} + w {\rm Re}\left(\psi\overline{\partial_{\mu}\psi}\right) + q_{\mu}\left|\psi\right|^2,
\end{equation}
\begin{equation}\label{theKdef}
\mathbf{K}^{V,w,q}\left[g,\psi\right] \doteq \pi^V_{\mu\nu}\left[g\right]\mathbf{T}^{\mu\nu}\left[g,\psi\right] + \left(\nabla^{\mu}w\right){\rm Re}\left(\psi\overline{\partial_{\mu}\psi}\right) + w\partial^{\gamma}\psi \partial_{\gamma}\psi + \left(\nabla^{\mu}q_{\mu}\right)\left|\psi\right|^2 + 2q^{\mu}{\rm Re}\left(\psi\overline{\partial_{\mu}\psi}\right),
\end{equation}
\begin{equation}\label{theHdef}
\mathbf{H}^{V,w}\left[\psi\right] \doteq V^{\mu}\partial_{\mu}\psi + w\psi.
\end{equation}
Here $\pi_{\mu\nu}^V\left[g\right]$ corresponds to $1/2$ times the deformation tensor of $V$:
\[\pi_{\mu\nu}^V\left[g\right] \doteq \frac{1}{2}\left(\mathcal{L}_Vg\right)_{\mu\nu} = \frac{1}{2}\left(\nabla_{\mu}V_{\nu} + \nabla_{\nu}V_{\mu}\right).\]
We will often omit various of the arguments of these objects when they are clear from context. In the case when $w$ and $q$ vanish we will write $\mathbf{J}^V_{\mu} = \mathbf{J}^{V,0,0}_{\mu}$, $\mathbf{K}^V = \mathbf{K}^{V,0,0}$, and $\mathbf{H}^V = \mathbf{H}^{V,0}$.

These various quantities are linked through the following divergence identity:
\begin{equation}\label{divIdentity}
\nabla^{\mu}\mathbf{J}_{\mu}  = \mathbf{K} + {\rm Re}\left(\mathbf{H}\overline{\Box_g\psi}\right).
\end{equation}
The divergence theorem then yields corresponding integral identities.
\subsubsection{Twisted currents on Kerr spacetimes}
Certain parts of our arguments will rely on some estimates established in~\cite{blackboxlargea}. These estimates are phrased in terms of so-called ``twisted'' currents. We review the corresponding formalism here. There currents will all depend on the parameter $a$ of the underlying Kerr spacetime we are studying. Following~\cite{holzegelWarnick} (see also Section A.2 of~\cite{blackboxsmalla}), we define the twisted gradient operator $\tilde{\nabla}$  by 

\[ \tilde{\nabla}f \doteq \beta\nabla\left(\beta^{-1} f\right), \qquad\tilde{\nabla}^{\dagger}f = -\beta^{-1}\nabla\left(\beta f\right) \qquad \beta \doteq \left(r^2+a^2\right)^{-1/2}.\]
It is useful to introduce
\begin{equation}\label{thisismathcalV}
\mathcal{V} \doteq  -\frac{\Box_g \beta}{\beta} = \frac{1}{\rho^2}\frac{2Mr^3+a^2r\left(r-4M\right) + a^4}{(r^2+a^2)^2}.
\end{equation}
We note (see (A.9) in ~\cite{blackboxsmalla}) that
\begin{equation}\label{lowerboundmathcalV}
\mathcal{V}\gtrsim r^{-3}.
\end{equation}
We then have the twisted modifications of the various quantities defined above:
\begin{equation}\label{energymomtentwist}
\tilde{\mathbf{T}}_{\mu\nu}\left[g, \psi\right] \doteq {\rm Re}\left(\tilde{\nabla}_{\mu}\psi\overline{\tilde{\nabla}_{\nu}\psi}\right) - \frac{1}{2}g_{\mu\nu}\left[\left(\tilde{\nabla}^{\alpha}\psi\right)\left(\tilde{\nabla}_{\alpha}\psi\right) + \mathcal{V}\left|\psi\right|^2\right].
\end{equation}
\begin{equation}\label{twistedcurdeftwist}
\tilde{\mathbf{J}}^{V,w}_{\mu}\left[g, \psi\right] \doteq  \mathbf{T}_{\mu\nu}\left[g,\psi\right]V^{\nu} + w {\rm Re}\left(\psi\overline{\tilde{\nabla}_{\mu}\psi}\right) -\frac{1}{2}\left(\nabla_{\mu}w\right) \left|\psi\right|^2,
\end{equation}
\begin{equation}\label{theKdeftwist}\begin{split}
\tilde{\mathbf{K}}^{V,w}\left[g,\psi\right] \doteq \pi^V_{\mu\nu}\left[g\right]\tilde{\mathbf{T}}^{\mu\nu}\left[g,\psi\right] + V^{\mu}\tilde{S}_{\mu}\left[\psi\right]+w{\rm Re}\left(\tilde{\nabla}^{\mu}\psi\overline{\tilde{\nabla}_{\mu}\psi}\right)+\left(\beta^{-1}\psi\right)^2\left(-\frac{1}{2}\nabla^{\mu}\left(\beta^2\nabla_{\mu}w\right) + \mathcal{V}w\beta^2\right),
\end{split}\end{equation}
\[\tilde{S}_{\mu} \left[g,\psi\right] \doteq \frac{\tilde{\nabla}^{\dagger}\left(\beta\mathcal{V}\right)}{2\beta}\left|\psi\right|^2 + \frac{\tilde{\nabla}^{\dagger}\beta}{2\beta}\tilde{\nabla}^{\nu}\psi\overline{\tilde{\nabla}_{\nu}\psi},\]
\begin{equation}\label{theHdeftwist}
\tilde{\mathbf{H}}^{V,w}\left[\psi\right] \doteq V^{\mu}\tilde{\nabla}_{\mu}\psi + w\psi.
\end{equation}
Once again, we will omit various arguments when clear from context.

Finally we have the twisted version of the fundamental divergence identity:
\begin{equation}\label{divIdentitytwist}
\nabla^{\mu}\tilde{\mathbf{J}}_{\mu}  = \tilde{\mathbf{K}} + {\rm Re}\left(\tilde{\mathbf{H}}\overline{\Box_g\psi}\right).
\end{equation}

\subsection{$(t^*,\phi^*)$-Fourier analysis}\label{FAS}
In this section we establish some conventions for Fourier analysis with respect to the $(t^*,\phi^*)$ of the Kerr star coordinates (cf.~Section 3.2 of~\cite{blackboxlargea}). This will also naturally lead to a notion of Fourier analysis with respect to the $(t,\phi)$ of Boyer--Lindquist coordinates in their domain of validity.

We let
\begin{equation}\label{defmathbbH}
\mathbb{H} \doteq \left\{ f\left(t^*,\phi^*,\theta\right) \in C^{\infty}\left(\mathbb{R}\times \mathbb{S}^2 \to \mathbb{C}\right) :  \int_{\mathbb{R} \times \mathbb{S}^2}\left|\partial^i_{t^*}\mathring{\nabla}^jf\right|^2\, dt d\mathbb{S}^2 < \infty, \forall (i,j) \in \mathbb{N}\times \mathbb{N}\right\},
\end{equation}
where $\mathring{\nabla}$ denotes the covariant derivative with respect to the round metric on $\mathbb{S}^2$. For any $f \in \mathbb{H}$ we may define the corresponding $(t^*,\phi^*)$-Fourier transform $\mathcal{F}\left[f\right]\left(\omega,m,\theta\right) : \mathbb{R} \times \mathbb{Z} \times (0,\pi)\to \mathbb{C}$ by
\begin{equation}\label{FTdef}
\mathcal{F}\left[f\right]\left(\omega,m,\theta\right) \doteq \int_{\mathbb{R}}\int_0^{2\pi}e^{i\omega t^*}e^{-im\phi^*}f\left(t^*,\phi^*,\theta\right)\, d\phi^*dt^*.
\end{equation}
We have the corresponding inversion formula:
\begin{equation}\label{FTinv}
f\left(t,\phi,\theta\right) = \mathcal{F}^{-1}\left[\mathcal{F}\left[f\right]\right]\left(t^*,\phi^*,\theta\right) \doteq \frac{1}{(2\pi)^2}\int_{\mathbb{R}}\sum_{m \in \mathbb{Z}}e^{-i\omega t^*}e^{im\phi^*}\mathcal{F}\left[f\right]\left(\omega,m,\theta\right)\, d\omega.
\end{equation}

We say that a function $\psi : \mathcal{M}_K \to \mathbb{C}$ lies in $\mathbb{H}_K$ if  for every  bounded interval $I \subset [r_+,\infty)$ and $N \in \mathbb{N}$, we have that 
\[\sum_{i+j+k \leq N}\int_{(t^*,r,\phi^*,\theta) \in \mathbb{R} \times I \times \mathbb{S}^2}\left|\partial_{t^*}^i\partial_r^j\mathring{\nabla}^kf\right|^2\, dtdrd\mathbb{S}^2 < \infty,\]
where $\mathring{\nabla}$ denotes the standard round spherical gradient acting along each $\mathbb{S}^2$. Note that $\psi \in \mathbb{H}_K$ implies that the restriction of $\psi$ to each curve of constant $r \in [r_+,\infty)$ lies in the space $\mathbb{H}$.

\begin{remark}\label{endecay}
Using Sobolev embeddings, we immediately see that, for any compact $K \subset [r_+,\infty)$,  $n\geq 1$, and $f\in \mathbb{H}_K$, we have
\begin{equation}
\lim_{t^*\to { \pm}\infty}E_{n,K}\left[f\right]\left(t^*\right) = 0.
\end{equation}
\end{remark}

We thus obtain a notion of the Fourier transform for such $\psi$:
\begin{equation}\label{FTKerrstar}
\mathcal{F}\left[\psi\right]\left(\omega,m,r,\theta\right) \doteq \int_{\mathbb{R}}\int_0^{2\pi}e^{i\omega t^*}e^{-im\phi^*}\psi\left(t^*,r,\phi^*,\theta\right)\, dt^*\, d\phi^*.
\end{equation}
For all $r > r_+$, we can also define a Fourier transform with respect to the Boyer--Lindquist coordinates $(t,\phi)$:
\begin{equation}\label{FTBL}
\mathcal{F}_{BL}\left[\psi\right]\left(\omega,m,r,\theta\right) \doteq \int_{\mathbb{R}}\int_0^{2\pi}e^{i\omega t}e^{-im\phi}\psi\left(t^*\left(t,r\right),r,\phi^*\left(\phi,r\right),\theta\right)\, dt\, d\phi.
\end{equation}
We have the following easily established relation between the two Fourier transforms:
\begin{equation}\label{relatedthetwoFTs}
\mathcal{F} = e^{i\omega \bar{t}(r)}e^{-i m \overline{\phi}(r)}\mathcal{F}_{BL},\qquad \mathcal{F}^{-1} = e^{-i\omega \bar{t}(r)}e^{i m \overline{\phi}(r)}\mathcal{F}^{-1}_{BL}
\end{equation}

For any smooth bounded function $a\left(\omega,m\right) : \mathbb{R} \times \mathbb{Z} \to \mathbb{C}$, we obtain a corresponding operator $Q_a : \mathbb{H}_K \to \mathbb{H}_K$ defined by
\begin{equation}\label{defQa}
Q_a\left[\psi\right] \doteq \mathcal{F}^{-1}\left[ a \mathcal{F}\left[\psi\right]\right]. 
\end{equation}
When $r > r_+$, in view of~\eqref{relatedthetwoFTs}, the operator $Q_a$ would be unchanged if we replaced $\mathcal{F}$ with $\mathcal{F}_{BL}$. We now use~\eqref{defQa} to define an operator $\mathcal{P}_{HT}: \mathbb{H}_K \to \mathbb{H}_K$. Heuristically speaking, $\mathcal{P}_{HT}$ selects a high frequency neighborhood of the trapped set in the case $m\neq 0$, while excluding all superradiant frequencies. The case $m=0$ can be treated by a separate argument.
\begin{definition}\label{hightrappedpart}Let $\delta_1 > 0$ and $\delta_2 > 0$ be small constants, and let $A_{\rm high} > 0$ be a constant which is sufficiently large depending on $\delta_1$ and $\delta_2$. Then let $q(x)$ is a smooth function which is identically $1$ for $|x| \leq 1$ and identically $0$ for $|x| \geq 2$, $p\left(x\right)$ be a smooth function which vanishes for $x \leq 1$ and is identically $1$ for $x \geq 2$, and $q_{\delta_1}(x)$ be a smooth function which is $1$ when $x \geq \frac{a}{2M r_+} + \delta_1$ and $0$ when $x \leq \frac{a}{2Mr_+} + (1/2)\delta_1$. Then we define an operator $\mathcal{P}_{HT}$ by
\[\mathcal{P}_{HT} \doteq Q_{a_{HT}},\]
\[a_{HT}\left(\omega,m\right) \doteq \mathbf{1}_{\left\{m \neq 0\right\}}\left(1- q\left(\frac{\omega^2+m^2}{A^2_{\rm high}}\right)\right)\left(p\left(\frac{|\omega|}{\delta_2 |m|}\right)1_{\omega m < 0} + q_{\delta_1}\left(\frac{\omega}{m}\right)\right) 
,\]
where, for $\mathcal{A} \subset \mathbb{Z}$, $\mathbf{1}_{\mathcal{A}}$ denotes the indicator function for $\mathcal{A}$.
\end{definition}

Our main results will concern solutions to the wave equation $\Box_{g_K}\psi = F$ which are defined to the future of the hypersurface $\Sigma_0$. Since such $\psi$ are not a priori defined on all of $\mathcal{M}_K$, $\mathcal{P}_{HT}\psi$ is also not well-defined a priori. To remedy this, we may use the main results of~\cite{waveKerrlargea} to define a suitable extension operator of $\psi$ to all of $\mathcal{M}_K$:
\begin{theorem}\label{extendtoeverywhere}\cite{waveKerrlargea} Let $\psi : J^+\left(\Sigma_0\right) \to \mathbb{C}$ solve~\eqref{waveEqnKerr},
satisfy that $\left(n_{\Sigma_0}\psi,\psi\right)|_{\Sigma_0}$ is smooth and compactly supported, and assume that $F$ is smooth and  compactly supported on $J^{+}(\Sigma_1)$. Then there exist functions $\mathscr{E}\psi \in \mathbb{H}_K$ and $\tilde F$ so that 
\begin{enumerate}
    \item The wave equation
    \[\Box_{g_K}\mathscr{E}\psi = F + \tilde{F}\]
    holds on all of $\mathcal{M}_K$.
    \item\label{vanishclosehornegtstar} For sufficiently negative $\tilde{t}^*$ we have that $\mathcal{E}\psi|_{t^* = \tilde{t}^*}$ is supported in $r \geq r_+ + \delta_0$.
    \item The function $\tilde{F}$ satisfies the following properties:
    \begin{enumerate}
        \item\label{fa} $\tilde{F}$ is supported to the past of $\Sigma_{-3}$ and within the region $\{r \leq r_+ + 2\delta_0\}$.
        \item\label{fb} The intersection of the support of $\tilde{F}$ and $\{r \leq r_+ + \delta_0\}$ is contained in the future of $\Sigma_{-4}$. 
        \item\label{fc} For every integer $k \geq 0$, $\tilde{F}$ satisfies that 
        \[\sum_{|\textbf{k}| \leq k}\sup_{\tau < -3}\int_{\Sigma_{\tau}}\left|\tilde{\mathfrak{D}}^{\textbf{k}}\tilde{F}\right|^2 + \int_{-\infty}^{-3}\int_{\Sigma_{\tau}}\left|\tilde{\mathfrak{D}}^{\textbf{k}}\tilde{F}\right|^2\, d\tau \lesssim_k E_k\left[\psi\right]\left(0\right).\]
    \end{enumerate}
    \item For every integer $k \geq 0$, we have that
    \begin{enumerate}
        \item \[\sup_{\tau < 0}E_k\left[ \mathscr{E}\psi\right]\left(\tau\right) \lesssim_k E_k\left[\psi\right]\left(0\right) .\]
        \item For all sufficiently large $\tilde{R}$, independent of $\psi$, we have
        \[\int_{-\infty}^0E_{k,{\tilde{R} \leq r \leq 100 \tilde{R}}}\left[ \mathscr{E}\psi\right](t)\, dt \lesssim_{k,\tilde{R}} E_k\left[\psi\right]\left(0\right).\]
    \end{enumerate}

\end{enumerate}
\end{theorem}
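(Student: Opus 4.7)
The plan is to construct $\mathscr{E}\psi$ in two steps: first extend $\psi$ to a global solution of the homogeneous wave equation on $\mathcal{M}_K$ by backward evolution from $\Sigma_0$, and then multiply by a spacetime cutoff that kills the extension near the horizon in the far past. Concretely, let $\tilde\psi$ denote the backward solution of $\Box_{g_K}\tilde\psi = 0$ starting from the Cauchy data $(n_{\Sigma_0}\psi,\psi)|_{\Sigma_0}$. Since the initial data are smooth and compactly supported in the interior (hence vanish to infinite order at $\Sigma_0 \cap \mathcal H^+$), this backward problem is well-posed with zero characteristic data prescribed on the past portion of $\mathcal H^+$, and $\tilde\psi$ matches $\psi$ smoothly across $\Sigma_0$. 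Applying the energy boundedness and ILED estimates from \cite{waveKerrlargea} (used in the time-reversed sense, the proof there being essentially symmetric in $t^*$ once the multiplier hierarchy is rerun with reversed time orientation, and superradiance handled identically by the $(t^*,\phi^*)$-microlocal decomposition) yields
\[\sup_{\tau \leq 0} E_k[\tilde\psi](\tau) \;+\; \int_{-\infty}^{0} E_{k,\mathcal{A}}[\tilde\psi](\tau)\, d\tau \;\lesssim_k\; E_k[\psi](0)\]
for any bounded subset $\mathcal{A}\subset[r_+,\infty)$ disjoint from the trapped radii (and with the usual degeneration in $\mathcal A$ if it meets the trapped set, which we will not need).

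Next I introduce the cutoff. Let $\eta(t^*)$ be a smooth function vanishing for $t^* \leq -4$ and equal to $1$ for $t^* \geq -3$, and recall the radial cutoff $\xi(r)$ from the paper. Define
\[\chi_{\rm cut}(t^*,r) \doteq 1 - (1-\eta(t^*))(1-\xi(r)) = \eta(t^*) + (1-\eta(t^*))\xi(r),\]
and set $\mathscr{E}\psi \doteq \chi_{\rm cut}\,\tilde\psi$. Then $\chi_{\rm cut} \equiv 1$ on $\{t^* \geq -3\}$, so $\mathscr{E}\psi \equiv \psi$ there; and $\chi_{\rm cut} \equiv 0$ on $\{t^* \leq -4\}\cap\{r \leq r_+ + \delta_0\}$, giving property (2). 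Writing
\[\Box_{g_K}\mathscr{E}\psi = \chi_{\rm cut}\,\Box_{g_K}\tilde\psi + [\Box_{g_K},\chi_{\rm cut}]\,\tilde\psi,\]
the first term equals $F$ on the support of $F \subset J^+(\Sigma_1)\subset\{t^*\geq -3\}$ (where $\chi_{\rm cut}=1$ and $\tilde\psi = \psi$) and vanishes to the past of $\Sigma_1$ since $\tilde\psi$ is homogeneous there. Hence $\tilde F \doteq [\Box_{g_K},\chi_{\rm cut}]\tilde\psi$ verifies (1).

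For the support claims (3a)--(3b), $[\Box_{g_K},\chi_{\rm cut}]$ is supported where a first or second derivative of $\chi_{\rm cut}$ is non-zero. Since $\partial_{t^*}\chi_{\rm cut} = \eta'(t^*)(1-\xi(r))$ is supported in $\{t^*\in[-4,-3]\}\cap\{r \leq r_+ + 2\delta_0\}$, and $\partial_r\chi_{\rm cut} = (1-\eta(t^*))\xi'(r)$ is supported in $\{t^*\leq -3\}\cap\{r \in [r_+ + \delta_0, r_+ + 2\delta_0]\}$, the union is contained in $\{t^* < -3\}\cap\{r \leq r_+ + 2\delta_0\}$, giving (3a); moreover the portion with $r \leq r_+ + \delta_0$ arises only from the $\partial_{t^*}$ term, hence lies in $\{t^*\in[-4,-3]\}$, giving (3b). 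For (3c) and (4), the cutoff $\chi_{\rm cut}$ has uniformly bounded $C^N$ norm and $\tilde F$ involves at most one derivative of $\tilde\psi$, so $k$ derivatives of $\tilde F$ are bounded pointwise in $\tau$ by $E_k[\tilde\psi](\tau)$; the spacetime integral comes from the region $\{r \in [r_+ + \delta_0, r_+ + 2\delta_0]\}$ which is away from trapping, so backward ILED from Step 1 applies. Property (4a) is immediate from the backward energy bound and (4b) follows from ILED in the shell $[\tilde R, 100\tilde R]$, also disjoint from the trapped set.

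The main obstacle is Step 1, namely justifying uniform backward energy boundedness and ILED on the full subextremal Kerr exterior. The Kerr metric admits no simple continuous isometry reversing the time orientation in Kerr star coordinates (the map $(t,\phi)\mapsto(-t,-\phi)$ does not act cleanly on $(t^*,\phi^*)$), so one cannot just quote \cite{waveKerrlargea} verbatim. One has to verify that the Morawetz multiplier constructions, the redshift vector field $N$, and the microlocal decomposition from \cite{waveKerrlargea} can be rerun with reversed time orientation; the one subtlety is the $T$-energy flux on the past portion of $\mathcal H^+$ (where the characteristic data is chosen to be zero), which, since we are solving the characteristic problem from $\Sigma_0$ backward with zero flux on this null boundary, contributes no uncontrolled term. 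Once this is established the remainder of the construction is an algebraic bookkeeping exercise as above.
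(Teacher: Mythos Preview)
Your overall architecture---extend $\psi$ to a global solution $\tilde\psi$ of $\Box_{g_K}\tilde\psi=0$ and then multiply by a spacetime cutoff equal to $1$ on $J^+(\Sigma_{-3})\cup\{r\geq r_++O(\delta_0)\}$---matches the paper exactly, and your verification of the support properties (3a)--(3b) is the same computation.

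The difference lies in how $\tilde\psi$ is obtained, and this is precisely where you flag an obstacle. You solve a backward characteristic problem from $\Sigma_0$ with zero data on the past portion of $\mathcal H^+$ and then need time-reversed boundedness and ILED from that hypersurface. As you note, $(t,\phi)\mapsto(-t,-\phi)$ does not preserve $\Sigma_0$; more seriously, under backward evolution the redshift at $\mathcal H^+$ becomes a \emph{blueshift}, so ``rerunning'' the $N$-multiplier is not a bookkeeping exercise: the bulk term $\mathbf K^N$ has the wrong sign, and outgoing null geodesics followed backward approach $\mathcal H^+$ exponentially, so your $\tilde\psi$ can concentrate arbitrarily close to the horizon as $t^*\to -\infty$ even from compactly supported data. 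Your claimed uniform backward bound $\sup_{\tau\leq 0}E_k[\tilde\psi](\tau)\lesssim E_k[\psi](0)$ is therefore not justified as stated.

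The paper sidesteps this entirely: it invokes the extension procedure of Section~13.1.1 of \cite{waveKerrlargea} to pass from data on $\Sigma_0$ to a solution on the \emph{full} Kerr exterior with controlled energy on a Cauchy hypersurface $\tilde\Sigma$ through the bifurcation sphere (e.g.\ $\{t=0\}$ in Boyer--Lindquist coordinates). Since $\tilde\Sigma$ is invariant under $(t,\phi)\mapsto(-t,-\phi)$ and this isometry exchanges $\mathcal H^+$ with $\mathcal H^-$, the results of \cite{waveKerrlargea} on $J^+(\tilde\Sigma)$ transfer verbatim to $J^-(\tilde\Sigma)$, with the genuine redshift now sitting at $\mathcal H^-$. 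The subsequent cutoff makes $\mathscr E\psi$ vanish near $\mathcal H^+$ for large negative $t^*$, so the degeneration of the $\Sigma_\tau$ foliation near the bifurcation sphere is harmless. In short, the paper resolves your ``main obstacle'' not by rerunning the multipliers but by choosing a time-symmetric Cauchy hypersurface upstream.
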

\begin{proof}We first discuss $k = 0$. Let $\tilde{\Sigma}$ be any Cauchy hypersurface for the Kerr exterior spacetime (such as the union of $\{t = 0\}$ and the bifurcation sphere). Then the extension procedure described in Section 13.1.1 of~\cite{waveKerrlargea} allows us to construct a solution $\tilde{\psi}$  to \eqref{waveEqnKerr} on the whole Kerr exterior spacetime so that 
\[\left\vert\left\vert \nabla_{\tilde{\Sigma}}\tilde\psi\right\vert\right\vert^2_{L^2\left(\tilde\Sigma\right)} + \left\vert\left\vert n_{\tilde\Sigma}\tilde\psi\right\vert\right\vert^2_{L^2\left(\tilde\Sigma\right)} \lesssim \left\vert\left\vert \nabla_{\Sigma_0}\psi\right\vert\right\vert^2_{L^2\left(\Sigma_0\right)} + \left\vert\left\vert n_{\Sigma_0}\psi\right\vert\right\vert^2_{L^2\left(\Sigma_0\right)}.\]
Then we let $\tilde{\chi}$ be a suitable cut-off function which is $0$ in $J^-\left(\Sigma_{-4}\right)\cap \{r \leq r_++2\delta_0\}$ and is identically $1$ for $J^+\left(\Sigma_{-3}\right) \cup\{r \geq r_++3\delta_0\}$. Set $\mathscr{E}\psi \doteq \tilde{\chi}\tilde{\psi}$. Then the desired estimates for $\mathscr{E}\psi$ and $\tilde{F}$ are an immediate consequence of the main energy boundedness and local integrated energy decay results from~\cite{waveKerrlargea} and the use of a Hardy inequality to control $r^{-1}\psi$ in $L^2$. Finally, the higher order statements follow in a similar fashion using the corresponding higher order boundedness results and integrated energy decay results from~\cite{waveKerrlargea}.
\end{proof}
\begin{remark}\label{timetranslatetheextension} It will be useful later to observe that if we let $\tau > 0$ and apply the time translation map $t \mapsto t + \tau$ to Theorem~\ref{extendtoeverywhere}, all of the implied constants will be independent of $\tau$.
\end{remark}
\begin{remark}Item~\ref{vanishclosehornegtstar} in Theorem~\ref{extendtoeverywhere} will be technically convenient for us because it allows us to use the energies $E_k$ which are defined along constant $t^*$ hypersurfaces as our basic norm to control the solution even though the foliation by constant $t^*$ hypersurfaces degenerates near the bifurcation sphere and the past event horizon. 
\end{remark}
\begin{convention}\label{rlowrhighconv} In the context of Definition~\ref{themetricclass}, we will  take $r_++(3/2)\delta_0 = r_{low}$, so that, in particular, the restriction of $\tilde{F}$ to $J^-(\Sigma_{-4})$ is supported in the region $r \in (r_{\rm low},R_{\rm high})$ 
\end{convention}

With the extension operator $\mathscr{E}$ from Theorem~\ref{extendtoeverywhere}, we may now extend the class of functions to which we can apply $\mathcal{P}_{HT}$.
\begin{definition}\label{phtwithtehextend}Let $\psi : J^+\left(\Sigma_0\right) \to \mathbb{C}$ solve~\eqref{waveEqnKerr},
satisfies that $\left(n_{\Sigma_0}\psi,\psi\right)|_{\Sigma_0}$ is smooth and compactly supported, and has that $F$ is smooth compactly supported to the future of $\Sigma_1$. Then, using the extension operator $\mathscr{E}$ from Theorem~\ref{extendtoeverywhere}, we define 
\begin{equation}
\mathcal{P}_{HT}\psi \doteq \mathcal{P}_{HT}\left(\mathscr{E}\psi\right).
\end{equation}
\end{definition}

\subsection{The interpolating operator}
Let $g \in \mathscr{A}_{\epsilon,,r_{\rm low},R_{\rm high}}$ as in Definition~\ref{themetricclass}. When we study to solutions to $\Box_g\psi = 0$, we will not know \emph{a priori} that solution $\psi$  lies in the space $\mathbb{H}_K$. It will thus be useful to introduce an ``interpolating operator'' which essentially plays the role of the  interpolating metric from Definition 11.2.2 of~\cite{waveKerrlargea}.
\begin{definition}\label{definterpolate}Let $g \in \mathscr{A}_{\epsilon,,r_{\rm low},R_{\rm high}}$  and let $\tau \geq 10$. Then let $\eta_{\tau} : \mathcal{M}_K \to \mathbb{R}$ be a cut-off function which identically $1$ on $J^+\left(\Sigma_0\right) \cap J^-\left(\Sigma_{\tau}\right)$, identically $0$ on $J^-\left(\Sigma_{-1}\right) \cup J^+\left(\Sigma_{\tau+1}\right)$, $\eta$ depends only on $t^*$, and $\eta$ satisfies $\left\vert\left\vert \eta\right\vert\right\vert_{C^N} \lesssim_N 1$. Then we define the \emph{interpolating operator} $\Box_{\tau}$ by setting
\[\Box_{\tau} \doteq \eta_{\tau}\Box_g + \left(1-\eta_{\tau}\right)\Box_{g_K}.\]
It will also be useful to define a cut-off $\tilde{\eta}_{\tau}$ which is identically $1$ on $J^+\left(\Sigma_1\right) \cap J^-\left(\Sigma_{\tau-1}\right)$ and so that ${\rm supp}\left(\tilde{\eta}_{\tau}\right) \subset \{t^* : \eta_{\tau}(t^*) = 1\}.$
\end{definition}
In view of the $C^1$ closeness of $g$ and $g_K$, it is clear that each $\Sigma_{\tau}$ is spacelike for $g$, $g_K$, and the principal symbol of $\Box_{\tau}$. Moreover the sets $J^{\pm}\left(\Sigma_{\tau}\right)$ are all the same for $g$, $g_K$, or the principal symbol of $\Box_{\tau}$. Also, for solutions $\psi$ to $\Box_{\tau}\psi 
 = 0$ it is immediate that we may define an extension operator $\mathscr{E}$ as in Theorem~\ref{extendtoeverywhere}.
\begin{corollary}\label{corextend}Theorem~\ref{extendtoeverywhere} holds if we replace $\Box_{g_K}$ everywhere by $\Box_{\tau}$. The constants in the various estimates do not depend on $\tau$.
\end{corollary}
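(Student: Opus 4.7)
The plan is to reduce to Theorem~\ref{extendtoeverywhere} itself by combining a short finite-in-time backwards Cauchy solve across the slab on which $\Box_{\tau}$ differs from $\Box_{g_K}$ with the already-established extension machinery in the far past. The $\tau$-independence of constants will then be transparent from this split.

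The key structural observation is that $\Box_{\tau} = \Box_{g_K}$ on $J^{-}(\Sigma_{-1})$ and on $J^{+}(\Sigma_{\tau+1})$. On the $\tau$-independent slab $\mathcal{R}(-1,0)$ the cutoff $\eta_{\tau}$ satisfies $\|\eta_{\tau}\|_{C^N} \lesssim_N 1$ uniformly in $\tau$ by Definition~\ref{definterpolate}. Combining this with Convention~\ref{episalwayssmall} and the $C^1$-closeness of $g$ to $g_K$, the coefficients of $\Box_{\tau}$ on $\mathcal{R}(-5,0)$ are uniformly bounded independent of $\tau$, and the leaves $\Sigma_{t^*}$ remain uniformly spacelike for its principal symbol there.

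I would construct $\mathscr{E}\psi$ in three stages. First, on $J^{+}(\Sigma_0)$ set $\mathscr{E}\psi := \psi$, so that $\Box_{\tau}\mathscr{E}\psi = F$ holds by hypothesis. Second, on the $\tau$-independent slab $\mathcal{R}(-5,0)$, let $\tilde{\psi}$ be the unique solution of the backwards Cauchy problem $\Box_{\tau}\tilde{\psi} = 0$ with Cauchy data induced by $\psi$ on $\Sigma_0$, and set $\mathscr{E}\psi := \tilde{\psi}$; standard finite-in-time energy estimates for the uniformly hyperbolic operator $\Box_{\tau}$ yield
\[
\sup_{\tau' \in [-5,0]} E_k\left[\tilde{\psi}\right](\tau') \lesssim_k E_k\left[\psi\right](0)
\]
with implicit constants independent of $\tau$, and the resulting combined function is smooth across $\Sigma_0$ because $F$ is supported in $J^{+}(\Sigma_1)$ and therefore vanishes to all orders on $\Sigma_0$, so the Cauchy jets of $\psi$ and $\tilde{\psi}$ agree to all orders there. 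Third, on $J^{-}(\Sigma_{-5})$, note that $\Box_{\tau} = \Box_{g_K}$ throughout; after the obvious time translation (cf.\ Remark~\ref{timetranslatetheextension}), apply Theorem~\ref{extendtoeverywhere} with trivial inhomogeneity to the induced Cauchy data of $\tilde{\psi}$ on $\Sigma_{-5}$ to produce the desired extension, which satisfies $\Box_{g_K}(\cdot) = \tilde{F}$ with $\tilde{F}$ supported near the bifurcation sphere in the far past; since $\Box_{\tau} = \Box_{g_K}$ on the support of this piece and of $\tilde{F}$, equivalently $\Box_{\tau}(\cdot) = \tilde{F}$. Matching at $\Sigma_{-5}$ is smooth since both sides solve $\Box_{g_K}(\cdot) = 0$ in a neighborhood and share Cauchy data. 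Properties~(2), (3)(a)--(c) and estimates~(4)(a)--(b) then follow directly from the corresponding statements of Theorem~\ref{extendtoeverywhere} applied to the far-past piece, together with the $\tau$-independent estimate above on $\mathcal{R}(-5,0)$.

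The only real obstacle is ensuring the $\tau$-uniformity of all constants, and this is immediate from the split: the slab on which $\Box_{\tau} \neq \Box_{g_K}$ is $\tau$-independent with uniformly bounded coefficients, so the finite-in-time estimates there carry $\tau$-independent constants; and the far-past construction invokes Theorem~\ref{extendtoeverywhere} for $\Box_{g_K}$ alone, with no $\tau$-dependence whatsoever.
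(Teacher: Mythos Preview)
Your proposal is correct and matches the paper's (implicit) approach: the paper treats the corollary as immediate, the underlying reason being exactly what you identify, namely that $\Box_\tau = \Box_{g_K}$ on $J^-(\Sigma_{-1})$, so the entire extension construction of Theorem~\ref{extendtoeverywhere} (the cut-off $\tilde\chi$, the resulting $\tilde F$, and the invocation of the boundedness and integrated local energy decay from~\cite{waveKerrlargea}) takes place in a region where the two operators already coincide, while a $\tau$-independent finite-in-time energy estimate bridges the fixed slab near $\Sigma_0$. One cosmetic remark: by time-translating the black-box application of Theorem~\ref{extendtoeverywhere} to start from $\Sigma_{-5}$, your literal support constants in items~(3)(a)--(b) shift (e.g.\ to $\Sigma_{-8}$, $\Sigma_{-9}$); to recover them exactly one simply repeats the proof of Theorem~\ref{extendtoeverywhere} verbatim with $\Box_{g_K}$ replaced by $\Box_\tau$, noting that the cut-off region $J^-(\Sigma_{-3}) \cap \{r \le r_+ + 2\delta_0\}$ already lies where $\Box_\tau = \Box_{g_K}$.
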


We will generally work with the PDE $\Box_{\tau}$ instead of $\Box_g$. The following definition will be convenient for this.
\begin{definition}\label{defpsitau}Let $\psi$ be as in the statement of Theorem~\ref{theoc1pert}, and choose $\tau \geq 10$. Then we define $\tilde{\psi}_{\tau}$ to solve $\Box_{\tau}\tilde{\psi}_{\tau} = 0$ so that $\left(n_{\Sigma_0}\tilde{\psi}_{\tau},\psi_{\tau}\right)|_{\Sigma_0} = \left(n_{\Sigma_0}\psi,\psi\right)|_{\Sigma_0}$. Finally, we set $\psi_{\tau} \doteq \mathscr{E}\tilde{\psi}_{\tau}$, where $\mathscr{E}$ is the extension operator from Corollary~\ref{corextend}.
\end{definition}

\subsubsection{A useful identity for integration by parts}
We will often use the following result:

\begin{lemma}\label{thatscoolintbyparts}Let $\{x^i\}_{i=0}^3$ denote any set of local coordinates on a coordinate chart in $\mathcal{M}_K$. Let $i,j,k \in \{0,1,2,3\}$ be not necessarily distinct, and let $\partial_i$, $\partial_j$, and $\partial_k$ denote the corresponding coordinate derivatives. Then, for any function $f$, defined on the corresponding coordinate chart, we have
\[{\rm Re}\left(\left(\overline{\partial_i\partial_jf}\right)\left(\partial_kf\right)\right) = \frac{1}{2}\left[\partial_i\left({\rm Re}\left(\overline{\partial_jf}\partial_kf\right)\right) - \partial_k\left({\rm Re}\left(\overline{\partial_jf}\partial_if\right)\right) + \partial_j\left({\rm Re}\left(\overline{\partial_kf}\partial_if\right)\right)\right] \]
 In particular, if $V$ is a smooth vector field, $Q=\Box_{g_K}-\Box_{\tau}$, where $g$ is a metric satisfying Definition~\ref{themetricclass}, and $f\in \mathbb{H_K}$, 
we have
\[
\left|\int_{\mathcal{M}_K}{\rm Re}\left((Vf)(\overline{Qf})\right) \right|\lesssim \epsilon\int_{-\infty}^{\infty} E_{0, r\in [r_{low}, R_{high}]}[f](s) ds
\]
\end{lemma}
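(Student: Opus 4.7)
The first identity is purely algebraic. Let me denote $X_{ijk} := \operatorname{Re}(\overline{\partial_i\partial_j f}\,\partial_k f)$ and $A_{ij} := \operatorname{Re}(\overline{\partial_i f}\,\partial_j f)$, so that $A_{ij}$ is manifestly symmetric in $i,j$, while $X_{ijk}$ is symmetric in its first two indices by commutativity of mixed partials. The Leibniz rule gives $\partial_\ell A_{ij} = X_{\ell ij} + X_{\ell ji}$, and using both symmetries one checks directly that $\partial_i A_{jk} + \partial_j A_{ik} - \partial_k A_{ij} = 2 X_{ijk}$, which is the claimed identity.

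For the estimate, by a partition of unity on $\mathcal{M}_K$ subordinate to Kerr-star coordinate charts (together with the usual sphere charts near $\theta = 0, \pi$), it suffices to work in one chart. Since $Q = \Box_{g_K} - \Box_\tau = \eta_\tau(\Box_{g_K} - \Box_g)$, expanding in coordinates gives
\[
Qf = B^{\mu\nu}\,\partial_\mu\partial_\nu f + B^\mu\,\partial_\mu f,
\]
where $B^{\mu\nu} = \eta_\tau\bigl((g_K^{-1})^{\mu\nu} - (g^{-1})^{\mu\nu}\bigr)$ is real and symmetric, and $B^\mu$ collects the first-order pieces involving $\partial g$ and $\partial g_K$. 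By the $C^1$-closeness in Definition \ref{themetricclass} and smoothness of $\eta_\tau$, one has $\|B^{\mu\nu}\|_{C^1} \lesssim \epsilon$ while $\|B^\mu\|_{C^0} \lesssim \epsilon$; both families of coefficients are supported in $\{r \in [r_{\rm low}, R_{\rm high}]\}$ (intersected with $\operatorname{supp}\eta_\tau$), and, by continuity of $g-g_K$, they vanish at the $r$-endpoints $r = r_{\rm low}, R_{\rm high}$.

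Writing $Vf = V^k\partial_k f$, the lower-order contribution $V^k B^\mu\operatorname{Re}(\partial_k f\,\overline{\partial_\mu f})$ is pointwise controlled by $\epsilon|\partial f|^2$ on the support of $B^\mu$, which integrates to the desired bound. For the principal part $V^k B^{ij} X_{ijk}$, apply the algebraic identity and use the symmetry $B^{ij} = B^{ji}$ to combine the first two terms, obtaining
\[
V^k B^{ij}\,X_{ijk} = V^k B^{ij}\,\partial_i A_{jk} - \tfrac{1}{2}\,V^k B^{ij}\,\partial_k A_{ij}.
\]
Integrate each term once against the smooth volume form $\sqrt{-g_K}\,d^4x$. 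Boundary contributions vanish: in $t^*$ because $\eta_\tau$ is compactly supported, in $r$ because $B^{ij}$ vanishes at $r_{\rm low}, R_{\rm high}$, in $\phi^*$ by periodicity, and in $\theta$ by the choice of sphere charts in the partition of unity. The transported derivative lands on $V^k B^{ij}\sqrt{-g_K}$, which is $L^\infty$-bounded by $\epsilon$ thanks to the $C^1$ control on $B^{ij}$; since $|A_{ij}| \lesssim |\partial f|^2$, Cauchy–Schwarz produces the claimed bound by $\epsilon \int_{-\infty}^\infty E_{0, r\in[r_{\rm low},R_{\rm high}]}[f](s)\,ds$. The main point, and the reason for the algebraic identity in this specific form, is that the coefficients $B^{ij}$ are only $C^1$: a direct double integration by parts on $\partial_i\partial_j f$ would demand $C^2$ control of the metric, which is unavailable, whereas the identity reduces the task to a single integration by parts that exactly fits the $C^1$ regularity hypothesis.
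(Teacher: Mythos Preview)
Your proof is correct and follows the same approach as the paper: verify the algebraic identity directly via the Leibniz rule and symmetry of mixed partials, then use it to reduce the principal part to a single integration by parts, exploiting the $C^1$ (rather than $C^2$) control on the metric perturbation. The paper's proof is terse, merely citing the divergence theorem together with Remark~\ref{endecay} and conditions (2) and (3) of Definition~\ref{themetricclass}; the only cosmetic difference is that for the $t^*$-boundary the paper invokes the decay of $f\in\mathbb{H}_K$ from Remark~\ref{endecay}, whereas you use the compact $t^*$-support of $\eta_\tau$ (hence of $B^{\mu\nu}$), which is equally valid here.
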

\begin{proof}
The first statement is an easily verified identity. The second follows from applying the divergence theorem, combined with Remark~\ref{endecay} and conditions (2) and (3) in Definition~\ref{themetricclass}.
\end{proof}

\section{Pseudo-differential operator estimates}\label{PDOstuff}
In this section we will collect some estimates which are a straightforward consequence of the theory of $(t^*,\phi^*)$ pseudo-differential operators. We first review the definition of a (suitable class of) $(t^*,\phi^*)$ pseudo-differential operator. The following result is standard.
\begin{theorem}\label{semitheo}Let $k$ be a non-positive integer and $A_{\rm high}$ be a sufficiently large positive constant. Suppose that $o\left(t,\omega,m\right) : \mathbb{R} \times \mathbb{R} \times \mathbb{Z} \to \mathbb{C}$ is smooth in $t$ and $\omega$ and satisfies for each $i,j \in \mathbb{Z}_{\geq 0}$ that
\begin{equation}\label{thesynbolassum}
\left|\partial_{t^*}^i\partial_{\omega}^jo\right| \lesssim_{i,j} \left(1+|\omega|+|m|+A_{\rm high}\right)^{k-j}.
\end{equation}

Then we may define a unique  linear operator $\mathcal{O} : \mathbb{H}_K \to \mathbb{H}_K$  which when acting on smooth functions $f$ which are compactly supported in $t^*$ is defined by
\begin{equation}\label{theooperator}
\mathcal{O}f\left(t^*,\phi^*,r,\theta\right) = \frac{1}{(2\pi)^2}\int_{\mathbb{R}}\sum_{m \in \mathbb{Z}}o\left(t^*,\omega,m\right)\mathcal{F}\left[f\right]\left(\omega,m,r,\theta\right) e^{-i\omega t^*}e^{im\phi^*}\, d\omega,
\end{equation}
and so that for each fixed $r$,
\begin{equation}\label{L2bound}
\left\vert\left\vert \mathcal{O}f\right\vert\right\vert_{L^2\left(\mathbb{R}_{t^*}\times \mathbb{S}^2\right)} \lesssim  \left\vert\left\vert f\right\vert\right\vert_{L^2\left(\mathbb{R}_{t^*}\times \mathbb{S}^2\right)}.
\end{equation}
Thus, for each fixed $r$, we may consider $\mathcal{O}$ acting on $L^2\left(\mathbb{R}_{t^*}\times\mathbb{S}^2\right)$.

The integer $k$ is the ``order'' of the operator $\mathcal{O}$.

We can also represent $O$ as a singular integral (for fixed $r$ and $\theta$):
\[
\mathcal{O}f\left(t^*,\phi^*,r,\theta\right) = \frac{1}{(2\pi)^2} \int_{\mathbb{R}} \int_0^{2\pi} K(t^*, \phi^*, t', \phi') f(t^*, \phi^*, r, \theta) d\phi' dt'
\]
where
\begin{equation}\label{kernel}
K(t^*, \phi^*, t', \phi') = \int_{\mathbb{R}} \sum_{m\in\mathbb{Z}} o(t^*, \omega, m) e^{i\omega(t'-t^*)} e^{im(\phi^*-\phi')} d\omega
\end{equation}

If $\mathcal{O}_1$ is order $k_1$ and $\mathcal{O}_2$ is order $k_2$, then $\left[\mathcal{O}_1,\mathcal{O}_2\right]$ will be an operator of the form~\eqref{theooperator} and will be of order $k_1+k_2-1$.
\end{theorem}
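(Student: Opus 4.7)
The plan is to adapt the standard pseudo-differential calculus to the hybrid continuous/discrete setting of this paper, where the Fourier transform is taken over $\mathbb{R}$ in $t^*$ and as a Fourier series over $\mathbb{S}^1$ in $\phi^*$, with $r$ and $\theta$ entering the symbol only as frozen parameters. I would first define $\mathcal{O}$ by \eqref{theooperator} for $f$ that is smooth and compactly supported in $t^*$, so that $\mathcal{F}[f]$ has rapid decay in $\omega$ and the defining integral converges absolutely, and then extend by density once the $L^2$ bound \eqref{L2bound} is established.

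For the $L^2$ estimate I would perform a mode decomposition in $\phi^*$, writing $f(t^*,\phi^*,r,\theta) = \sum_m f_m(t^*,r,\theta)\,e^{im\phi^*}$. For each fixed $m \in \mathbb{Z}$, the restriction of $\mathcal{O}$ to that mode is the one-dimensional $t^*$-pseudo-differential operator $O_m$ with symbol $o_m(t^*,\omega) \doteq o(t^*,\omega,m)$. Since $k \leq 0$, the hypothesis \eqref{thesynbolassum} yields $\left|\partial_{t^*}^i \partial_\omega^j o_m\right| \lesssim_{i,j} (1+|\omega|)^{-j}$ \emph{uniformly in} $m$, so each $o_m$ lies in the standard H\"ormander class $S^0_{1,0}(\mathbb{R}\times \mathbb{R})$ with seminorms bounded independently of $m$. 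The one-dimensional Calder\'on--Vaillancourt theorem then gives $\|O_m f_m\|_{L^2_{t^*}} \lesssim \|f_m\|_{L^2_{t^*}}$ with a uniform constant, and Plancherel in $\phi^*$ assembles these into \eqref{L2bound}. The kernel formula \eqref{kernel} follows by substituting the definition of $\mathcal{F}[f]$ into \eqref{theooperator} and swapping the $\omega$- and $(t',\phi')$-integrals; repeated integration by parts in $\omega$ produces rapid polynomial decay of $K$ in $|t^*-t'|$, which legitimizes the interchange and makes $K$ a bona fide distributional kernel.

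For the mapping $\mathcal{O} : \mathbb{H}_K \to \mathbb{H}_K$ I would argue through commutator identities. Since $o$ does not depend on $r$ or $\theta$, the derivatives $\partial_r$ and $\partial_\theta$ commute with $\mathcal{O}$; similarly $\partial_{\phi^*}$ commutes exactly, as it corresponds to multiplication by $im$. The commutator $[\partial_{t^*},\mathcal{O}]$ equals the operator associated with the symbol $\partial_{t^*} o$, which satisfies the same estimates \eqref{thesynbolassum} of order $k$ and is thus again $L^2$-bounded. Iterating these commutation identities and applying \eqref{L2bound} gives uniform control of every mixed $\partial_{t^*}^i\partial_r^j\mathring{\nabla}^k$ norm of $\mathcal{O}f$ in terms of the corresponding norms of $f$, which is exactly what membership in $\mathbb{H}_K$ requires.

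The main technical work is the commutator assertion. I would insert the Fourier representation of $\mathcal{O}_2 f$ into that of $\mathcal{O}_1$, producing the composed symbol
\begin{equation*}
o_{12}(t^*,\omega,m) \;=\; \int o_1(t^*,\omega+\eta,m)\,\widehat{o_2}(\eta,\omega,m)\,e^{-i\eta t^*}\,d\eta,
\end{equation*}
where $\widehat{o_2}$ is the Fourier transform in $t^*$ of $o_2$. Taylor expanding $o_1$ in its second slot around $\omega$ gives the standard asymptotic expansion $o_{12} = o_1 o_2 + \tfrac{1}{i}(\partial_\omega o_1)(\partial_{t^*} o_2) + R_{12}$ with $R_{12}$ of order $k_1+k_2-2$. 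Writing the analogous expansion for $\mathcal{O}_2 \mathcal{O}_1$ and subtracting cancels the symmetric $o_1 o_2$ contribution, leaving the Poisson bracket $\tfrac{1}{i}(\partial_\omega o_1\, \partial_{t^*} o_2 - \partial_\omega o_2\, \partial_{t^*} o_1)$ plus a remainder of order $k_1+k_2-2$; the symbol estimates for $o_1,o_2$ immediately give that this Poisson bracket satisfies \eqref{thesynbolassum} with $k$ replaced by $k_1+k_2-1$. The principal obstacle is making the Taylor remainder rigorous: one must exploit the Schwartz decay of $\widehat{o_2}(\eta,\omega,m)$ in $\eta$ (coming from the $C^\infty_{t^*}$ smoothness of $o_2$) to bound the remainder integral, while tracking the precise loss in the $\omega$-order when $\partial_\omega$ is applied to $o_1$. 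This is the standard but delicate step in the pseudo-differential calculus, and in our hybrid discrete--continuous setting the parameter $m$ enters only as a bounded tag parametrizing a family of uniform one-dimensional calculations, so no new phenomena appear beyond the classical one.
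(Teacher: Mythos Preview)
Your proposal is correct and follows the standard route. The paper does not actually provide a proof of this theorem; it simply prefaces the statement with ``The following result is standard'' and leaves it at that. Your sketch---mode decomposition in $\phi^*$ to reduce to a uniform family of one-dimensional $t^*$-pseudo-differential operators, Calder\'on--Vaillancourt for the $L^2$ bound, the kernel representation via Fourier inversion with integration by parts in $\omega$, and the usual composition/commutator calculus via Taylor expansion of the symbol---is exactly the standard argument the paper is implicitly invoking, so there is nothing to compare.
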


\begin{remark}\label{fromttotstar}It is sometimes more convenient to work in Boyer--Lindquist coordinates, which is defined similarly, using $\mathcal{F}_{BL}$ instead of $\mathcal{F}$. In view of~\eqref{relatedthetwoFTs} it is straightforward to translate results between Boyer--Lindquist and Kerr star coordinates.
\end{remark}

We now introduce a notation for weighted $L^2$ spaces on $\{r \geq r_+\}\cap \Sigma_{\tau}$.
\begin{definition}For any  non-negative $w\left(r,\theta\right) : [r_+,\infty) \times (0,\pi)$, we let $\mathcal{E}_{w,t^*}$ denote $L^2$ along $\Sigma_{t^*}$ with respect to the volume form multiplied by $w(r,\theta)dr d\mathbb{S}^2$. We also define the space $\mathcal{P}_w$ to denote $L^2$ along $\mathbb{R}_{t^*}\times \{r \geq r_+\} \times \mathbb{S}^2$ with respect to the volume form $w(r,\theta)dt^*drd\mathbb{S}^2$.
\end{definition}
\begin{convention}We introduce the convention that unless said otherwise, $w$ is some non-negative function defined on $(r,\theta) \in [r_+,\infty) \times (0,\pi)$.
\end{convention}
The following corollary is an immediate consequence of~\eqref{L2bound}.
\begin{corollary}\label{OonPw}Any $(t^*,\phi^*)$ pseudo-differential operator $\mathcal{O}$ as defined by Theorem~\ref{semitheo} extends to a bounded operator on $\mathcal{P}_w$.
\end{corollary}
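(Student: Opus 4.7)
The plan is to bootstrap from the $L^2$ bound~\eqref{L2bound}, which holds for each fixed $r$, to the weighted global bound on $\mathcal{P}_w$. The essential observation I will rely on is that the symbol $o(t^*,\omega,m)$ is independent of $(r,\theta)$ and the operator~\eqref{theooperator} acts only in the $(t^*,\phi^*)$ variables, so $\mathcal{O}$ commutes with multiplication by any function of $(r,\theta)$. In particular, viewing $\mathbb{R}_{t^*}\times \mathbb{S}^2$ as fibered over $(0,\pi)_\theta$, the operator is diagonal with respect to this fibration.

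First, I would upgrade~\eqref{L2bound} to a pointwise-in-$(r,\theta)$ statement: for each fixed $r \in [r_+,\infty)$ and a.e.\ $\theta \in (0,\pi)$,
\[\int_{\mathbb{R}}\int_0^{2\pi}|\mathcal{O}f(t^*,\phi^*,r,\theta)|^2\,dt^*\,d\phi^* \lesssim \int_{\mathbb{R}}\int_0^{2\pi}|f(t^*,\phi^*,r,\theta)|^2\,dt^*\,d\phi^*,\]
with implicit constant independent of $(r,\theta)$. To prove this, I would apply~\eqref{L2bound} to $\chi(\theta)f$ for a smooth bump $\chi$ localized near an arbitrary $\theta_0$; since $\mathcal{O}(\chi f) = \chi\,\mathcal{O}f$, the inequality
\[\int|\chi(\theta)|^2|\mathcal{O}f|^2\,dt^*\,d\phi^*\sin\theta\,d\theta \lesssim \int|\chi(\theta)|^2|f|^2\,dt^*\,d\phi^*\sin\theta\,d\theta\]
combined with Lebesgue differentiation after shrinking the support of $\chi$ yields the pointwise-in-$\theta$ bound. (Alternatively, one could just inspect the proof of~\eqref{L2bound} directly and observe that it naturally produces the pointwise-in-$\theta$ estimate since the operator does not mix different values of $\theta$.)

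Finally, I would multiply both sides of the pointwise bound by the non-negative weight $w(r,\theta)$ and integrate in $(r,\theta)$ over $\{r \geq r_+\}\times\mathbb{S}^2$. By Fubini this gives
\[\|\mathcal{O}f\|^2_{\mathcal{P}_w} = \int w(r,\theta)|\mathcal{O}f|^2\,dt^*\,dr\,d\mathbb{S}^2 \lesssim \int w(r,\theta)|f|^2\,dt^*\,dr\,d\mathbb{S}^2 = \|f\|^2_{\mathcal{P}_w},\]
and extension from smooth $f$ compactly supported in $t^*$ to all of $\mathcal{P}_w$ follows by the usual density argument. I expect no real obstacle here, as the corollary is essentially a direct consequence of~\eqref{L2bound}; the only minor technical point is the pointwise-in-$\theta$ strengthening, which is immediate from the $(r,\theta)$-fibered structure of $\mathcal{O}$.
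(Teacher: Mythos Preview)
Your proof is correct and matches the paper's approach, which simply records the corollary as an immediate consequence of~\eqref{L2bound}. You are more explicit than the paper about the pointwise-in-$\theta$ strengthening needed to accommodate the $\theta$-dependence of $w$, but this is exactly the content implicit in the paper's one-line justification, since $\mathcal{O}$ is diagonal in $(r,\theta)$.
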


The following remark will also be useful.
\begin{remark}\label{Planc}
Since $w$ is independent of $t^*$ and $\phi^*$, we obtain the Plancherel formula
\begin{equation}\label{Plancherel}
 \int_{-\infty}^{\infty}\| f\|_{\mathcal{E}_{w,t^*}}^2  dt^* = \int_{\mathbb{R}}\int_{r_+}^{\infty}\int_0^\pi\sum_{m \in \mathbb{Z}}|\mathcal{F} f|^2 w(r,\theta) d\theta dr d\omega
\end{equation}

As a consequence, if $o=o(\omega, m)$ is a multiplier, we have
\begin{equation}\label{mult}
\int_{-\infty}^{\infty}\| Of\|_{\mathcal{E}_{w,t^*}}^2 dt^*\leq \|o\|^2_{L^{\infty}} \int_{-\infty}^{\infty}\|f\|^2_{\mathcal{E}_{w,t^*}} dt^*.
\end{equation}
\end{remark}

We  next introduce a higher order version of $\mathcal{E}_{w,t^*}$ norm.
\begin{definition}Let $n$ be a non-negative integer and $f \in \mathbb{H}_K$. We then set
\[\left\vert\left\vert f\right\vert\right\vert_{\mathcal{E}_{w,t^*,n}} \doteq \sum_{0 \leq i+j+k \leq n}\left\vert\left\vert \ \left|\partial_{t^*}^i\partial_r^j\mathring{\nabla}^kf\right| \ \right\vert\right\vert_{\mathcal{E}_{w,t^*}},\]
where $\mathring{\nabla}$ denotes the standard round spherical gradient acting along each $\mathbb{S}^2$.

We then let $\mathcal{E}_{w,t^*,n}$ denote the corresponding completion of smooth functions under the norm $\left\vert\left\vert \cdot \right\vert\right\vert_{\mathcal{E}_{w,t^*,n}}$. 

\end{definition}

For various $(t^*,\phi^*)$ pseudo-differential operators $\mathcal{O}$ we will often establish estimates concerning the norm $\left\vert\left\vert \mathcal{O}f\right\vert\right\vert_{\mathcal{E}_{w,t^*}}$ and then desire to ``upgrade'' them to estimates involving $\left\vert\left\vert \mathcal{O}f\right\vert\right\vert_{\mathcal{E}_{w,t^*,n}}$. This is most straightforwardly done with a suitable set of differential operators which have good commutation properties with $\mathcal{O}$. The following lemma provides a convenient set of such operators.
\begin{lemma}\label{thisisusefultocommute}Let $\mathcal{O}$ be a $(t^*,\phi^*)$ pseudo-differential operator as defined in Theorem~\ref{semitheo}. We now introduce some useful operators with good commutation properties with $\mathcal{O}$:
\begin{enumerate}
\item Let $\mathring{\Delta} : \mathbb{H}_K \to \mathbb{H}_K$ denote the standard spherical Laplacian. We have $\left[\mathring{\Delta},\mathcal{O}\right] = \left[\mathring{\Delta},\partial_{t^*}\right] = \left[\mathring{\Delta},\partial_r\right] = 0$.
\item For any function $f \in \mathbb{H}_K$ we may consider $\partial_{\theta}f$ originally defined for $\theta \not\in \{0,\pi\}$. Then $\partial_{\theta}f$ extends to a function lying in $\mathcal{P}_w$. We moreover have that $\left[\partial_{\theta},\mathcal{O}\right] = \left[\partial_{\theta},\partial_{t^*}\right] = \left[\partial_{\theta},\partial_r\right]= 0$. 
\item For any function $f \in \mathbb{H}_K$ we may consider $(\sin^{-1}\theta)\partial_{\phi}f$ originally defined for $\theta \not\in \{0,\pi\}$. Then $(\sin^{-1}\theta)\partial_{\phi}f$ extends to a function lying in $\mathcal{P}_w$. We moreover have that $\left[(\sin^{-1}\theta)\partial_{\phi},\mathcal{O}\right] = \left[(\sin^{-1}\theta)\partial_{\phi},\partial_{t^*}\right]= \left[(\sin^{-1}\theta)\partial_{\phi},\partial_r\right]=0$. 
\end{enumerate}

Along with $\partial_{t^*}$ we will now use the above differential operators to provide useful estimates for the norms $\left\vert\left\vert \cdot\right\vert\right\vert_{\mathcal{E}_{w,t^*,n}}$. For all $n\geq 1$ and $f \in \mathbb{H}_K$, we have that
    \begin{align}\label{comparegood}
   & \left\vert\left\vert f\right\vert\right\vert_{\mathcal{E}_{w,t^*,n}} \sim \sum_{i+j+2k \leq n}\left\vert\left\vert \partial_{t^*}^i\partial_r^j\mathring{\Delta}^kf\right\vert\right\vert_{\mathcal{E}_{w,t^*}} 
   \\ &\qquad\qquad \qquad\qquad + \sum_{i+j+2k \leq n-1}\left\vert\left\vert \partial_{\theta}\partial_{t^*}^i\partial_r^j\mathring{\Delta}^kf\right\vert\right\vert_{\mathcal{E}_{w,t^*}} + \sum_{i+j+2k \leq n-1}\left\vert\left\vert (\sin^{-1}\theta)\partial_{\phi}\partial_{t^*}^i\partial_r^j\mathring{\Delta}^kf\right\vert\right\vert_{\mathcal{E}_{w,t^*}}.
    \end{align}
\end{lemma}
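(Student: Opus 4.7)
The plan is to verify the commutation claims directly from the Fourier multiplier representation~\eqref{theooperator} of $\mathcal{O}$, and then to reduce~\eqref{comparegood} to a slice-wise elliptic estimate on the round 2-sphere. The commutators $[\mathring{\Delta},\partial_{t^*}]$, $[\partial_\theta,\partial_{t^*}]$, $[(\sin^{-1}\theta)\partial_\phi,\partial_{t^*}]$ and their $\partial_r$-analogues vanish trivially because none of the spherical operators involves $t^*$ or $r$. For the commutators with $\mathcal{O}$, the formula~\eqref{theooperator} exhibits $\mathcal{O}$ as acting only in the $(t^*,\phi^*)$-Fourier variables, treating $r$ and $\theta$ as parameters; moreover, $\partial_\phi$ acts under the $(t^*,\phi^*)$-Fourier transform by multiplication by $im$, which commutes with multiplication by the symbol $o(t^*,\omega,m)$. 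Consequently $\mathcal{O}$ commutes with $\partial_r$, $\partial_\theta$, with any multiplication operator in $(r,\theta)$, and with $\partial_\phi$. Splitting $\mathring{\Delta} = (\sin\theta)^{-1}\partial_\theta(\sin\theta\,\partial_\theta) + (\sin^{-2}\theta)\partial_\phi^2$ then yields the three commutator identities involving $\mathcal{O}$.

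For the claim that $\partial_\theta f$ and $(\sin^{-1}\theta)\partial_\phi f$ define elements of $\mathcal{P}_w$, I would use the fact that local Sobolev embedding, combined with the definition of $\mathbb{H}_K$, shows that for any $f\in\mathbb{H}_K$ the restriction to each fixed $(t^*,r)$-slice is smooth on $\mathbb{S}^2$ (as an intrinsic manifold, not merely in coordinates). The pointwise identity $|\mathring{\nabla}h|^2 = |\partial_\theta h|^2 + (\sin^{-2}\theta)|\partial_\phi h|^2$ for smooth $h$ on $\mathbb{S}^2$ then bounds the coordinate expressions $\partial_\theta f$ and $(\sin^{-1}\theta)\partial_\phi f$ pointwise on $\mathbb{S}^2\setminus\{0,\pi\}$ by $|\mathring{\nabla}f|$. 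The resulting square-integrability is preserved by the $\mathcal{P}_w$-weight, and so $\partial_\theta f$ and $(\sin^{-1}\theta)\partial_\phi f$ are identified with elements of $\mathcal{P}_w$ by density from smooth test functions compactly supported off the poles.

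For the norm equivalence~\eqref{comparegood} the plan is to work slice-wise in $(t^*,r)$. Setting $h:=\partial_{t^*}^i\partial_r^j f$ for each pair $(i,j)$ with $i+j\leq n$ and writing $N:=n-i-j$, a spherical-harmonic expansion of $h$ yields the purely $\mathbb{S}^2$ equivalence
\begin{equation*}
\sum_{k\leq N}\int_{\mathbb{S}^2}|\mathring{\nabla}^k h|^2\, d\mathbb{S}^2 \;\sim\; \sum_{2k\leq N}\int_{\mathbb{S}^2}|\mathring{\Delta}^k h|^2\, d\mathbb{S}^2 + \sum_{2k\leq N-1}\int_{\mathbb{S}^2}\bigl(|\partial_\theta\mathring{\Delta}^k h|^2 + (\sin^{-2}\theta)|\partial_\phi\mathring{\Delta}^k h|^2\bigr)\, d\mathbb{S}^2,
\end{equation*}
in which the $\lesssim$ direction is pointwise and the $\gtrsim$ direction is standard elliptic regularity for $\mathring{\Delta}$, most transparent in the spherical-harmonic basis. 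Multiplying by $w(r,\theta)$ and applying Fubini in $(t^*,r)$ recovers~\eqref{comparegood}. I do not anticipate a serious obstacle here: the commutator identities are immediate from the Fourier multiplier definition, and the norm equivalence is a standard elliptic identity on $\mathbb{S}^2$. The one mildly delicate step is the behavior at the coordinate poles $\theta\in\{0,\pi\}$, which is handled uniformly by the pointwise gradient bound from the extension step.
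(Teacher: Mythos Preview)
Your proposal is correct and follows essentially the same approach as the paper: the commutation claims are immediate from the fact that the symbol of $\mathcal{O}$ is independent of $\theta$ (and $\phi$ enters only via $m$), and the norm equivalence~\eqref{comparegood} is reduced to the pointwise identity $|\mathring{\nabla}h|^2 = |\partial_\theta h|^2 + (\sin^{-2}\theta)|\partial_\phi h|^2$ together with standard elliptic estimates for $\mathring{\Delta}$ on $L^2(\mathbb{S}^2)$. One small slip: the norm $\|\cdot\|_{\mathcal{E}_{w,t^*,n}}$ is defined at fixed $t^*$, so no Fubini in $t^*$ is needed---you integrate only in $r$ and over $\mathbb{S}^2$.
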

\begin{proof}Since the symbol of $\mathcal{O}$ does not depend on $\phi$ or $\theta$ the commutation properties are immediately verified.

The remaining assertions of the lemma are immediate consequences of the following facts. First, we have that for any function $h$ on $\mathbb{S}^2$:
\[\left|\mathring{\nabla}h\right|^2 = \left|\partial_{\theta}h\right|^2 + (\sin^{-2}\theta)\left|\partial_{\phi}h\right|^2.\]
Second, in view of elliptic estimates along $\mathbb{S}^2$, we have that for any non-negative integer $k$:
\[\sum_{j\leq k}\left\vert\left\vert \mathring{\Delta}^jh\right\vert\right\vert_{L^2\left(\mathbb{S}^2\right)} \sim \sum_{j \leq 2k}\left\vert\left\vert \mathring{\nabla}^jh\right\vert\right\vert_{L^2\left(\mathbb{S}^2\right)},\ \sum_{j\leq k}\left\vert\left\vert \mathring{\nabla}\mathring{\Delta}^jh\right\vert\right\vert_{L^2\left(\mathbb{S}^2\right)}+\sum_{j\leq k}\left\vert\left\vert \mathring{\Delta}^jh\right\vert\right\vert_{L^2\left(\mathbb{S}^2\right)} \sim \sum_{j \leq 2k+1}\left\vert\left\vert \mathring{\nabla}^jh\right\vert\right\vert_{L^2\left(\mathbb{S}^2\right)}.\]
\end{proof}

The following result expresses the pseudo-locality of $(t^*,\phi^*)$ pseudo-differential operators.
\begin{theorem}\label{pseudoloc}Let $\mathcal{O}$ be a $(t^*,\phi^*)$ pseudo-differential operator of order $k \leq 0$ defined by~\eqref{theooperator}, and let $f \in \mathbb{H}_K$. Then for any $t^*_0 \in \mathbb{R}$, $s > 0$, and positive integer $N$, we have that
\begin{equation}\label{local}\int_{|t^*-t^*_0| \leq s}\left\vert\left\vert \mathcal{O}f\right\vert\right\vert_{\mathcal{E}_{w,t^*, n}}^2\, dt^* \lesssim_{s,N, n} A_{\rm high}^{2k} \int_{\mathbb{R}}\frac{\left\vert\left\vert f\right\vert\right\vert_{\mathcal{E}_{w,t^*, n}}^2}{\left(1+|t^*-t^*_0|\right)^N}\, dt^*. 
\end{equation}
In particular, we have
\begin{equation}\label{L2}
\int_{\mathbb{R}}\left\vert\left\vert \mathcal{O}f\right\vert\right\vert_{\mathcal{E}_{w,t^*,n}}^2\, dt^*  \lesssim_n A_{\rm high}^{2k}\int_{\mathbb{R}}\left\vert\left\vert f\right\vert\right\vert_{\mathcal{E}_{w,t^*, n}}^2\, dt^*,
\end{equation}
and, for any interval $I \subset \mathbb{R}$ with $|I| \gtrsim 1$ and $q > 0$,
\begin{equation}\label{astartbutcandobetter}
\int_I\left\vert\left\vert \mathcal{O}f\right\vert\right\vert^2_{\mathcal{E}_{w,t^*,n}}\, dt^* \lesssim_{q, n} A_{\rm high}^{2k}\left[\int_I \left\vert\left\vert f\right\vert\right\vert^2_{\mathcal{E}_{w,t^*, n}}\, dt^*  + \sup_{t^*} \int_{t^*-q}^{t^*+q}\left\vert\left\vert f\right\vert\right\vert^2_{\mathcal{E}_{w,t^*, n}}\, ds \right].
\end{equation}

We can replace the $\left\vert\left\vert \mathcal{O}f\right\vert\right\vert^2_{\mathcal{E}_{w,t^*,n}}$ in each of these estimates with $\sum_{i+j=0}^{-k}\left\vert\left\vert \mathcal{O}T^i\Phi^jf\right\vert\right\vert^2_{\mathcal{E}_{w,t^*,n}}$ if we drop the $A_{\rm high}^{2k}$ from the right hand side.
\end{theorem}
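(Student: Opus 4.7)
The plan rests on the kernel representation \eqref{kernel} combined with integration by parts in $\omega$. Starting from \eqref{kernel}, integrating by parts $N_1$ times in $\omega$ converts the oscillating factor $e^{i\omega(t'-t^*)}$ into a prefactor $(i(t'-t^*))^{-N_1}$ multiplied by $\partial_{\omega}^{N_1}o$, which by \eqref{thesynbolassum} obeys $|\partial_{\omega}^{N_1}o| \lesssim (1+|\omega|+|m|+A_{\rm high})^{k-N_1}$. Splitting the $\omega$-integration and $m$-summation according to whether $|\omega|+|m|$ lies below or above $A_{\rm high}$, one finds $\int \sum_m |\partial_\omega^{N_1} o|\, d\omega \lesssim A_{\rm high}^{k-N_1+C}$ for a fixed constant $C$, provided $N_1$ is taken large enough for the integral and sum to converge. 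Taking $N_1$ large enough that $k-N_1+C \leq 2k$ (which is possible since $k\leq 0$) gives the pointwise off-diagonal bound
\[|K(t^*,\phi^*,t',\phi')| \lesssim_{N_1} A_{\rm high}^{2k}(1+|t^*-t'|)^{-N_1}\]
for any desired $N_1$, at the price of worse implicit constants.

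With this bound, I would prove \eqref{local} by splitting $f = f_{\rm near} + f_{\rm far}$, where $f_{\rm near}$ is the product of $f$ with a smooth cutoff supported in $\{|t'-t^*_0| \leq 10s\}$. For $f_{\rm near}$ I would invoke the $L^2$-boundedness \eqref{L2bound} (whose implicit constant is $\lesssim A_{\rm high}^k$, since $|o|\lesssim A_{\rm high}^k$ uniformly when $k\leq 0$), and use the trivial localization $\int_{|t'-t^*_0| \leq 10s} \|f\|_{\mathcal{E}_{w,t^*}}^2\, dt' \lesssim_{s,N} \int \|f\|_{\mathcal{E}_{w,t^*}}^2 (1+|t'-t^*_0|)^{-N}\, dt'$. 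For $f_{\rm far}$, supported in $|t'-t^*_0|>10s$, the kernel bound together with $t^*\in[t^*_0-s,t^*_0+s]$ gives $1+|t^*-t'| \gtrsim_s 1+|t'-t^*_0|$, and Cauchy--Schwarz in $t'$ against the weight $(1+|t'-t^*_0|)^{-N_1}$ produces the matching contribution with all remaining $A_{\rm high}$-powers absorbed into $A_{\rm high}^{2k}$. To upgrade from $\mathcal{E}_{w,t^*}$ to $\mathcal{E}_{w,t^*,n}$ via \eqref{comparegood}, I would use Lemma~\ref{thisisusefultocommute}: $\mathcal{O}$ commutes with $\partial_r$, $\mathring{\Delta}$, $\partial_\theta$ and $\sin^{-1}\theta\,\partial_\phi$, while $[\partial_{t^*},\mathcal{O}]$ is again a $(t^*,\phi^*)$ pseudo-differential operator of order $k$ with symbol $\partial_{t^*}o$ satisfying \eqref{thesynbolassum}; iterating reduces the higher-order case to the scalar case.

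The remaining estimates then follow routinely. For \eqref{L2}, one may either use \eqref{L2bound} directly with the uniform symbol bound $|o|\lesssim A_{\rm high}^k$ (as in the multiplier identity \eqref{mult}), or integrate \eqref{local} over $t^*_0$ with $N\geq 2$. For \eqref{astartbutcandobetter}, I would decompose $f = \chi_I f + (1-\chi_I)f$, handle the first summand by \eqref{L2} to produce the $\int_I \|f\|^2$ term, and decompose $I^c$ into dyadic blocks of length $q$ for the second; each block's contribution is controlled by $\sup_{t^*}\int_{t^*-q}^{t^*+q}\|f\|_{\mathcal{E}_{w,t^*,n}}^2\, ds$ multiplied by an $A_{\rm high}^{2k}$-weighted kernel factor that decays geometrically in the distance from $I$, so summing gives the stated bound. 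The final replacement assertion about substituting $\sum_{i+j=0}^{-k}\|\mathcal{O}T^i\Phi^j f\|^2$ for $A_{\rm high}^{2k}\|\mathcal{O}f\|^2$ follows by applying the same machinery to $\mathcal{O}T^i\Phi^j$, whose symbol $(-i\omega)^i(im)^j o$ is of order $k+i+j$; summing over $0 \leq i+j \leq -k$ covers all frequency scales up to $A_{\rm high}$ and effectively provides an order-$0$ estimate with constant independent of $A_{\rm high}$. The main technical obstacle is the careful bookkeeping of $A_{\rm high}$-powers in the pointwise kernel bound and in the dyadic tail sum for \eqref{astartbutcandobetter}; neither step is deep, but both require tracking constants to verify that the claimed factor $A_{\rm high}^{2k}$ is achieved rather than a weaker power.
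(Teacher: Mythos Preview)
Your proposal is correct and follows essentially the same approach as the paper, which for $n=0$ simply cites the singular integral representation \eqref{kernel} together with the proof of Theorem~1 in Chapter~VI of Stein's \emph{Harmonic Analysis}, and for $n>0$ invokes Lemma~\ref{thisisusefultocommute} and the fact that $[\partial_{t^*},\mathcal{O}]$ is again order $k$. Your integration-by-parts derivation of the off-diagonal kernel decay and the near/far splitting are exactly the standard ingredients behind that reference, so you have effectively unpacked what the paper leaves to citation.
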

\begin{proof} When $n = 0$, this follows from a minor extension of the proof of Theorem 1 in Chapter VI of~\cite{BigStein} , using Remark~\ref{Planc} and the singular integral representation \eqref{kernel}.

For $n > 0$, we simply use Lemma~\ref{thisisusefultocommute} and the fact that $\left[\partial_{t^*},\mathcal{O}\right]$ is a $(t^*,\phi^*)$ pseudo-differential operator of order $k$.
\end{proof}

The next corollary will be convenient later.

\begin{corollary}\label{blahblah1231234}Let $\mathcal{O}$ be an operator of order $k \leq 0$ defined by~\eqref{theooperator} so that the symbol $o$ is a real valued function of $\omega$ and $m$. Let $I_1 \subset \mathbb{R}$ be a bounded interval, let $I_2 \subset [r_+,\infty)$ be an interval. Let $f,h \in \mathbb{H}_K$ satisfy that $\tilde{r} \in I_2$ implies that $h|_{r=\tilde{r}}$ is supported on $I_1$. Then, for any $\delta > 0$ and $q \gtrsim 1$, we have that
\begin{equation}\begin{split}
\left|\int_{\mathcal{M}_K \cap \{r \in I_2\}}{\rm Re}\left(\mathcal{O}f\overline{\mathcal{O}h}\right)\right| \lesssim_{|I_1|} \delta^{-1}\int_{\mathcal{M}_K \cap \{r \in I_2\}}\left|h\right|^2 + \delta \sup_{\tilde{t}^*} \int_{\mathcal{M}_K \cap \{r \in I_2\}\cap \{t^* \in [\tilde{t}^*-q,\tilde{t}^*+q]\} }\left|f\right|^2.
\end{split}
\end{equation}
\end{corollary}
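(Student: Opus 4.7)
My plan is as follows. Since the symbol $o(\omega,m)$ is real-valued and independent of $t^*$, Plancherel's identity~\eqref{Plancherel} (applied at each fixed $r$ on $\mathbb{R}_{t^*}\times\mathbb{S}^2$) shows that $\mathcal{O}$ is self-adjoint with respect to the inner product $\int \cdot\, dt^*\,d\phi^*\,d\theta$, while the composition rule of Theorem~\ref{semitheo} shows that $\mathcal{O}\circ\mathcal{O}$ is a $(t^*,\phi^*)$ pseudo-differential operator of order $2k\leq 0$ with symbol $o^2$. Writing the $\mathcal{M}_K$-volume form on $\{r\in I_2\}$ in the split form $w(r,\theta)\,dr\,d\theta\,dt^*\,d\phi^*$, one therefore rewrites
\[
\int_{\mathcal{M}_K \cap \{r \in I_2\}}{\rm Re}\bigl(\mathcal{O}f\,\overline{\mathcal{O}h}\bigr) = \int_{\mathcal{M}_K \cap \{r \in I_2\}}{\rm Re}\bigl(f\,\overline{(\mathcal{O}\circ\mathcal{O})h}\bigr).
\]
The problem then reduces to exploiting the fact that $h|_{r=\tilde r}$ is supported in $t^*\in I_1$ for every $\tilde r\in I_2$, together with the fact that $\mathcal{O}\circ\mathcal{O}$ is of non-positive order and hence enjoys the pseudo-local regularization of Theorem~\ref{pseudoloc}.

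The second step is to localize in $t^*$. Partition $\mathbb{R}_{t^*}$ into intervals $J_n\doteq[nq,(n+1)q]$, and apply~\eqref{local} with $\mathcal{O}$ replaced by $\mathcal{O}\circ\mathcal{O}$, with the window parameter of size $q$, and with a large $N$ (say $N\geq 4$). Because the support of $h$ is contained in $\{t^*\in I_1,\,r\in I_2\}$, one obtains, uniformly in $\tilde r\in I_2$ (and using that $A_{\rm high}^{4k}\lesssim 1$ since $k\leq 0$),
\[
\int_{J_n}\|(\mathcal{O}\circ\mathcal{O})h\|^2_{\mathcal{E}_{w,t^*}}\,dt^* \lesssim_{q,N}\frac{1}{(1+{\rm dist}(J_n,I_1))^N}\int_{\mathcal{M}_K \cap\{r\in I_2\}}|h|^2.
\]

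The third step is to apply Cauchy--Schwarz in each slab $J_n$ and sum. Writing $S_f\doteq\sup_{\tilde t^*}\int_{\mathcal{M}_K\cap\{r\in I_2\}\cap\{t^*\in[\tilde t^*-q,\tilde t^*+q]\}}|f|^2$, we have $\int_{J_n\cap\{r\in I_2\}}|f|^2\leq S_f$ for every $n$ (this is where the condition $q\gtrsim 1$ enters, to ensure the slabs fit into windows of length $2q$). Combining this with the previous display gives
\[
\left|\int_{\mathcal{M}_K\cap\{r\in I_2\}}{\rm Re}\bigl(f\,\overline{(\mathcal{O}\circ\mathcal{O})h}\bigr)\right|\leq S_f^{1/2}\left(\int_{\mathcal{M}_K\cap\{r\in I_2\}}|h|^2\right)^{1/2}\sum_{n\in\mathbb{Z}}\frac{C_{q,N}}{(1+{\rm dist}(J_n,I_1))^{N/2}}.
\]
The geometric sum converges provided $N>2$, with the total controlled by a constant depending only on $|I_1|$ and $q$; the claimed estimate then follows by AM--GM with weight $\delta$.

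The only subtle point in the above argument is verifying that the implicit constant produced by~\eqref{local} at each slab $J_n$ is independent of the base point, i.e.\ independent of $n$. This is precisely guaranteed by the hypothesis that $o$ is independent of $t^*$: the $(t^*,\phi^*)$-symbol assumption~\eqref{thesynbolassum} then holds with constants not depending on the origin of the $t^*$-variable, so translation invariance of the whole construction allows one to re-center Theorem~\ref{pseudoloc} at the midpoint of each $J_n$ without loss. With this observation in hand the argument is mechanical.
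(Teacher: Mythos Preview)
Your argument is correct, but it takes a somewhat different and longer route than the paper's. The key divergence is in which factor receives $\mathcal{O}^2$. The paper moves both copies of $\mathcal{O}$ onto $f$ rather than onto $h$: using Plancherel one writes
\[
\int_{\mathcal{M}_K\cap\{r\in I_2\}}{\rm Re}\bigl(\mathcal{O}f\,\overline{\mathcal{O}h}\bigr)
=\int_{\mathcal{M}_K\cap\{r\in I_2\}}{\rm Re}\bigl(\mathcal{O}^2 f\,\overline{h}\bigr)
=\int_{\mathcal{M}_K\cap\{r\in I_2\}\cap\{t^*\in I_1\}}{\rm Re}\bigl(\mathcal{O}^2 f\,\overline{h}\bigr),
\]
the last equality following from the support hypothesis on $h$. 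A single Cauchy--Schwarz then reduces matters to bounding $\int_{I_1}\|\mathcal{O}^2 f\|^2_{\mathcal{E}_{w,s}}\,ds$, for which the paper directly invokes~\eqref{astartbutcandobetter} together with the trivial observation that $\int_{I_1}\|f\|^2_{\mathcal{E}_{w,s}}\,ds\lesssim |I_1|\sup_{\tilde t^*}\int_{\tilde t^*-q}^{\tilde t^*+q}\|f\|^2_{\mathcal{E}_{w,s}}\,ds$. No slab decomposition or summation is needed.

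Your approach---moving $\mathcal{O}^2$ onto $h$, partitioning $\mathbb{R}_{t^*}$ into slabs, applying~\eqref{local} slab-by-slab to exploit the decay away from $I_1$, and then summing---is valid and illustrates the pseudo-locality more directly, but it essentially reproves a special case of~\eqref{astartbutcandobetter} along the way. The paper's route is shorter because it leverages that estimate as a black box.
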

\begin{proof}Since the symbol of $\mathcal{O}$ does not depend on $t^*$ and is real valued, it is a consequence of Plancherel's theorem that 
\[\int_{\mathcal{M}_K \cap \{r \in I_2\}}{\rm Re}\left(\mathcal{O}f\overline{\mathcal{O}h}\right) = \int_{\mathcal{M}_K \cap \{r \in I_2\}}{\rm Re}\left(\mathcal{O}^2f\overline{h}\right) = \int_{\mathcal{M}_K \cap \{r \in I_2\} \cap \{t^* \in I_1\}}{\rm Re}\left(\mathcal{O}^2f\overline{h}\right).\]
We then apply Cauchy--Schwarz, use~\eqref{astartbutcandobetter} and the fact that
\[
\int_{\mathcal{M}_K \cap \{r \in I_2\} \cap \{t^* \in I_1\}} |f|^2 \lesssim |I_1|  \sup_{\tilde{t}^*} \int_{\mathcal{M}_K \cap \{r \in I_2\}\cap \{t^* \in [\tilde{t}^*-q,\tilde{t}^*+q]\} }\left|f\right|^2.
\]

\end{proof}

This next result concerns the case when we estimate $\mathcal{O}f$ in a region disjoint from the support of $f$.
\begin{theorem}\label{offthesupportitsgood}Let $\mathcal{O}$ be a $(t^*,\phi^*)$ pseudo-differential operator defined by~\eqref{theooperator}, and let $f \in \mathbb{H}_K$. Let $I_1 \subset I_2 \subset \mathbb{R}$ be two intervals so that $\left|I_2\setminus \overline{I_1}\right| \gtrsim 1$  and so that ${\rm supp}\left(f\right) \subset \mathbb{R}\setminus I_2$. Then, for every $\tilde{t} \gtrsim 1$, and non-negative integers $n$, $m$ and $\tilde{N}$, we have
\begin{equation}\label{usethislater123123}
 \sum_{i+j \leq m}\int_{I_1}\left\vert\left\vert \mathcal{O}T^i\Phi^jf\right\vert\right\vert^2_{\mathcal{E}_{w,s,n}}\, ds \lesssim A_{\rm high}^{-2\tilde{N}}\sup_{t^*}\int_{t^*-\tilde{t}}^{t^*+\tilde{t}}\left\vert\left\vert f\right\vert\right\vert^2_{\mathcal{E}_{w,s,n}}\, ds,
\end{equation}
\begin{equation}\label{L1disj}
\sum_{i+j \leq m}\int_{I_1}\left\vert\left\vert \mathcal{O}T^i\Phi^jf\right\vert\right\vert_{\mathcal{E}_{w,s,n}}\, ds \lesssim A^{-\tilde{N}}_{\rm high} \sup_{t^*} \int_{t^*-\tilde{t}}^{t^*+\tilde{t}}\left\vert\left\vert f\right\vert\right\vert_{\mathcal{E}_{w,s,n}}\, ds.
\end{equation} 
\end{theorem}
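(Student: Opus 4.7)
The plan is to sharpen the singular integral representation \eqref{kernel} for $\mathcal{O}T^i\Phi^j$ into a pointwise kernel estimate that simultaneously displays $A_{\rm high}$-gain and spatial decay in $|t'-s|$, and then to exploit the separation between $I_1$ and $\mathrm{supp}(f)$ to close the argument. The first step is to observe that $\mathcal{O}T^i\Phi^j$ is again a $(t^*,\phi^*)$ pseudo-differential operator, with symbol
\[ o_{i,j}(t^*,\omega,m) \doteq o(t^*,\omega,m)(-i\omega)^i(im)^j \]
satisfying $|\partial_\omega^\beta o_{i,j}|\lesssim (1+|\omega|+|m|+A_{\rm high})^{k+i+j-\beta}$ by the Leibniz rule. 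Performing $N$ integrations by parts in $\omega$ in the kernel formula \eqref{kernel}, estimating the resulting $\omega$-integral via the symbol bound, and then summing in $m$ should yield, for all $i+j\leq m$ and all $N\geq k+m+3$, the pointwise estimate
\[ |K_{i,j}(s,\phi^*,t',\phi')| \lesssim_N A_{\rm high}^{k+i+j-N+2}\,|t'-s|^{-N} \qquad \text{for } |t'-s|\gtrsim 1. \]

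For any prescribed $\tilde N$, I then take $N$ large enough that $k+i+j-N+2\leq -2\tilde{N}$ holds uniformly in $i+j\leq m$ and also $N\geq 2$ for spatial summability. The support hypothesis on $f$ together with the gap condition gives $|t'-s|\gtrsim 1$ whenever $s\in I_1$ and $t'\in\mathrm{supp}(f)$, so the kernel bound applies throughout the domain of integration. Writing $\mathcal{O}T^i\Phi^jf$ as a convolution in $(t^*,\phi^*)$ against $K_{i,j}$, applying Cauchy--Schwarz, and splitting the $t'$-integral into dyadic annuli $W_\ell = \{|t'-s|\in[2^\ell,2^{\ell+1}]\}$, the sliding-window hypothesis $\tilde t\gtrsim 1$ allows me to bound
\[ \int_{W_\ell}\|f\|^2_{\mathcal{E}_{w,t',n}}\,dt' \leq \bigl(2^\ell/\tilde t+1\bigr)\sup_{t^*}\int_{t^*-\tilde t}^{t^*+\tilde t}\|f\|^2_{\mathcal{E}_{w,s,n}}\,ds, \]
while the factor $2^{-\ell N}$ from the kernel provides a convergent geometric sum in $\ell$. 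Integrating the resulting pointwise estimate over $s\in I_1$ and $\phi^*\in[0,2\pi]$ yields \eqref{usethislater123123}; the $L^1$ bound \eqref{L1disj} follows either by Cauchy--Schwarz from \eqref{usethislater123123} (absorbing an extra half-power of $A_{\rm high}$) or directly by applying Fubini to the same kernel estimate.

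To promote the base estimate to the higher-order norm $\mathcal{E}_{w,s,n}$ with $n>0$, I invoke Lemma~\ref{thisisusefultocommute}: the operators $\partial_r$, $\partial_\theta$, and $(\sin^{-1}\theta)\partial_\phi$ commute exactly with $\mathcal{O}$, while $[\partial_{t^*},\mathcal{O}]$ is again a $(t^*,\phi^*)$ pseudo-differential operator of the same order $k$ (with symbol $\partial_{t^*}o$), and each commutation preserves the symbol bound \eqref{thesynbolassum}. Consequently, all $n$-th order derivatives of $\mathcal{O}T^i\Phi^jf$ reduce to finite sums of operators of the same structure applied to derivatives of $f$, each controlled by the $n=0$ argument above. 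The main obstacle I anticipate is the careful book-keeping of $A_{\rm high}$-powers through the iterated $\omega$-integration by parts and the $m$-summation, in particular ensuring that enough derivatives are taken to secure absolute convergence of both the frequency integral and the sum while still extracting the desired factor $A_{\rm high}^{-2\tilde{N}}$; once the kernel estimate is in hand, the remaining convolution and commutator manipulations are routine.
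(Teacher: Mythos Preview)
Your approach matches the paper's: the paper's proof is a one-line reference to the singular integral realization (Section~2.2, Chapter~VI of Stein), and you are spelling out precisely that argument via repeated integration by parts in $\omega$ to obtain kernel decay $|K_{i,j}(s,t')|\lesssim A_{\rm high}^{k+m-N+2}|t'-s|^{-N}$. The promotion to $n>0$ via Lemma~\ref{thisisusefultocommute} is identical to the paper.

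One point needs more care. As written, your dyadic sum over $W_\ell$ produces a pointwise-in-$s$ bound that is \emph{uniform} over $I_1$; integrating this over $s\in I_1$ then costs a factor $|I_1|$, which is infinite in key applications (Lemma~\ref{pseudolocforthewin} takes $I_1=(t_0+1,\infty)$, and Lemma~\ref{usefulinthatonepart} also uses the result on half-lines). The same issue obstructs the Cauchy--Schwarz route from~\eqref{usethislater123123} to~\eqref{L1disj}. The fix is straightforward: retain the $s$-dependent starting scale $\ell_0(s)\sim\log_2 d(s,\mathrm{supp}(f))$ in the dyadic sum, so that the pointwise bound carries an extra factor $d(s,\mathrm{supp}(f))^{-(N-1)}$; since $I_1$ and $\mathrm{supp}(f)$ lie on opposite sides of the gap in $I_2$, this factor is integrable over $I_1$ for $N\geq 3$. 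Equivalently, apply Fubini and integrate in $s$ \emph{before} decomposing dyadically in $t'$---which is in fact your second stated route to~\eqref{L1disj}, and should be the primary one.
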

\begin{proof} When $n = 0$ this is an immediate consequence of the singular integral realization of the operator $\mathcal{O}$. See Section 2.2 of Chapter VI in~\cite{BigStein}.

For $n > 0$, we simply use Lemma~\ref{thisisusefultocommute} and the fact that $\left[\partial_{t^*},\mathcal{O}\right]$ is also a $(t^*,\phi^*)$ pseudo-differential operator.
\end{proof}

We next note the following fact about the commutator of a $(t^*,\phi^*)$ pseudo-differential operator and a smooth function $h$.
\begin{lemma}\label{commisminus1}Let $\mathcal{O}$ be an operator of order $k \leq 0$ defined by~\eqref{theooperator},  $n$ be a non-negative integer, and let $h(t) \in C^{\infty}_c\left(\mathbb{R}\right)$.  Then $\left[\mathcal{O},h\right]$ is a $(t^*,\phi^*)$ pseudo-differential operator of order $k-1$. 

Furthermore, if we let $q \geq 1$, $\tilde{N}\lesssim 1$, and $I$ any open interval which contains the support of $h$ and satisfies $|I\setminus supp(h)| \gtrsim 1$, then we have
\begin{equation}\label{onethesupportw}
\int_{\mathbb{R}}\left\vert\left\vert \left[\mathcal{O},h\right]f \right\vert\right\vert_{\mathcal{E}_{w,s, n}}^2\, ds \lesssim A_{\rm high}^{-2}\int_I \left\vert\left\vert f\right\vert\right\vert^2_{\mathcal{E}_{w,s, n}}\, ds + A_{\rm high}^{-2\tilde{N}}\sup_{t^*}\int_{t^*-q}^{t^*+q}\left\vert\left\vert f\right\vert\right\vert^2_{\mathcal{E}_{w,s, n}}\, ds,
\end{equation}
\begin{equation}\label{blahbi2j}\begin{split}
&\int_{\mathbb{R}}h^2\left\vert\left\vert \mathcal{O}f \right\vert\right\vert_{\mathcal{E}_{w,s, n}}^2\, ds \lesssim \int_{\mathbb{R}}h^2\left\vert\left\vert f \right\vert\right\vert_{\mathcal{E}_{w,s,n}}^2\, ds
\\ &\qquad +A_{\rm high}^{-2}\int_I \left\vert\left\vert f\right\vert\right\vert^2_{\mathcal{E}_{w,s, n}}\, ds + A_{\rm high}^{-2\tilde{N}}\sup_{t^*}\int_{t^*-q}^{t^*+q}\left\vert\left\vert f\right\vert\right\vert^2_{\mathcal{E}_{w,s, n}}\, ds.
\end{split}\end{equation}
\end{lemma}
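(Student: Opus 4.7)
The plan is to establish the order reduction via symbol calculus first, and then derive the two estimates by splitting $f$ according to the support of $h$ and the interval $I$.

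First I would show that $[\mathcal{O},h]$ is a $(t^*,\phi^*)$ pseudo-differential operator of order $k-1$. Since $h$ depends only on $t^*$, only $\omega$-derivatives of $o$ appear in the asymptotic expansion of $\sigma(\mathcal{O}\circ h)$; no $m$-derivatives are required, which is important because $m$ is discrete. Writing out~\eqref{theooperator} and carrying out the standard asymptotic expansion, the principal symbol of $[\mathcal{O},h]$ is $-i(\partial_\omega o)\,h'(t^*)$, which satisfies~\eqref{thesynbolassum} with $k$ replaced by $k-1$ in view of the hypothesized bounds on $o$; the remainder terms in the expansion are even lower order and can be controlled by the same symbolic computation.

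For~\eqref{onethesupportw}, I would choose a smooth cut-off $\chi(t^*)$ equal to $1$ on $\mathrm{supp}(h)$, supported in $I$, with $|I\setminus \mathrm{supp}(\chi)|\gtrsim 1$, which is possible because $|I\setminus\mathrm{supp}(h)|\gtrsim 1$. Split $f=\chi f+(1-\chi)f$. Applying $[\mathcal{O},h]$ (now known to be of order $k-1\leq -1$) to the first piece, Theorem~\ref{pseudoloc} (specifically~\eqref{L2} with $k-1$ in place of $k$) yields a factor $A_{\rm high}^{2(k-1)}\leq A_{\rm high}^{-2}$ times $\int_I\|f\|_{\mathcal{E}_{w,s,n}}^{2}\,ds$. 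For the second piece, the identity $[\mathcal{O},h]g = \mathcal{O}(hg) - h\mathcal{O}(g)$ combined with $h(1-\chi)\equiv 0$ reduces matters to estimating $h\,\mathcal{O}((1-\chi)f)$. Since $\mathrm{supp}(h)$ and $\mathrm{supp}((1-\chi)f)$ are separated by a distance $\gtrsim 1$, Theorem~\ref{offthesupportitsgood} (with $I_1=\mathrm{supp}(h)$, $I_2=\mathrm{supp}(\chi)$, $m=0$, $\tilde t=q$) delivers the factor $A_{\rm high}^{-2\tilde{N}}$ times $\sup_{t^*}\int_{t^*-q}^{t^*+q}\|f\|_{\mathcal{E}_{w,s,n}}^{2}\,ds$.

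For~\eqref{blahbi2j}, I would use the identity $h\mathcal{O}f = \mathcal{O}(hf) - [\mathcal{O},h]f$ and the triangle inequality. The first resulting term is bounded by $\int h^2\|f\|_{\mathcal{E}_{w,s,n}}^{2}\,ds$ via Theorem~\ref{pseudoloc}~\eqref{L2} (since $k\leq 0$, the $A_{\rm high}^{2k}$ factor is harmless), and the second is precisely what was just established in~\eqref{onethesupportw}. Adding yields~\eqref{blahbi2j}.

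The main obstacle is carrying out the symbol calculus of Step 1 in a way that is compatible with the $(t^*,\phi^*)$ formulation and that tracks the $A_{\rm high}$-dependence through the asymptotic expansion; one must check that $\partial_\omega o$ genuinely satisfies~\eqref{thesynbolassum} at order $k-1$ (including the $A_{\rm high}$-weight) and that the remainder in the expansion of $\sigma(\mathcal{O}\circ h)$ is of even lower order. Once the order $k-1$ claim is in hand, the two estimates follow mechanically from Theorems~\ref{pseudoloc} and~\ref{offthesupportitsgood}.
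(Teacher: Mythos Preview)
Your proposal is correct and follows essentially the same approach as the paper: the commutator order drop via symbol calculus, the split of $f$ via a cutoff near $\mathrm{supp}(h)$ with Theorem~\ref{pseudoloc} on the near piece and Theorem~\ref{offthesupportitsgood} on the far piece, and the deduction of~\eqref{blahbi2j} from $h\mathcal{O}f=\mathcal{O}(hf)-[\mathcal{O},h]f$. The paper organizes the split for~\eqref{onethesupportw} into three pieces rather than two (also partitioning the integration domain) and handles $n>0$ by explicit commutation with $\partial_{t^*}$, but these are cosmetic differences; one small imprecision in your outline is the choice $I_2=\mathrm{supp}(\chi)$, which should instead be the interval on which $\chi\equiv 1$ so that $(1-\chi)f$ is genuinely supported outside $I_2$.
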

\begin{proof}The fact that $\left[\mathcal{O},h\right]$ is an order $k-1$ $(t^*,\phi^*)$ pseudo-differential operator is an immediate consequence of Theorem~\ref{semitheo} and the fact that multiplication by $h(t)$ is an order $0$ $(t^*,\phi^*)$ pseudo-differential operator.

Assume now that $n = 0$. Let $\tilde{I}$ be an open interval which contains the support of $h$ and is strictly included in $I$. Then let $\tilde{\chi}$ be a cut-off function which is supported in the interval $I$ and identically $1$ on the interval $\tilde{I}$. To establish~\eqref{onethesupportw}, we write  
\[\int_{\mathbb{R}}\left\vert\left\vert \left[\mathcal{O},h\right]f \right\vert\right\vert_{\mathcal{E}_{w,s}}^2\, ds = \underbrace{\int_{\tilde{I}}\left\vert\left\vert \left[\mathcal{O},h\right]{\tilde{\chi}f} \right\vert\right\vert_{\mathcal{E}_{w,s}}^2\, ds}_{\doteq I} +\underbrace{\int_{ \tilde{I}}\left\vert\left\vert \left[\mathcal{O},h\right]{\left(1-\tilde{\chi}\right)f} \right\vert\right\vert_{\mathcal{E}_{w,s}}^2\, ds}_{\doteq II}+\underbrace{\int_{\mathbb{R}\setminus \tilde{I}}\left\vert\left\vert \mathcal{O} (h f)\right\vert\right\vert_{\mathcal{E}_{w,s}}^2\, ds}_{\doteq III}.\]
We then estimate $I$ using \eqref{astartbutcandobetter} in Theorem~\ref{pseudoloc}, combined with the fact that
\[
\sup_{t^*}\int_{t^*-q}^{t^*+q}\left\vert\left\vert \tilde\chi f\right\vert\right\vert^2_{\mathcal{E}_{w,s}}\, ds \lesssim \int_I \left\vert\left\vert f\right\vert\right\vert^2_{\mathcal{E}_{w,s}}\, ds.
\]
Finally, we estimate $II$  and $III$ using Theorem~\ref{offthesupportitsgood}.

To prove~\eqref{blahbi2j} with $n=0$ we note 
\begin{equation}\label{toprovesmthwhoknows}
\int_{\mathbb{R}}h^2(s)\left\vert\left\vert \mathcal{O}f\right\vert\right\vert^2_{\mathcal{E}_{w,s}}\, ds \lesssim\int_{\mathbb{R}}\left\vert\left\vert \mathcal{O}\left(hf\right)\right\vert\right\vert^2_{\mathcal{E}_{w,s}}\, ds + \int_{\mathbb{R}}\left\vert\left\vert \left[h,\mathcal{O}\right]f\right\vert\right\vert^2_{\mathcal{E}_{w,s}}\, ds . 
\end{equation}
The result then follows from Theorem~\ref{pseudoloc} and~\eqref{onethesupportw}.

We now consider~\eqref{onethesupportw} in the case when $n > 0$. In view of Lemma~\ref{thisisusefultocommute} we only need to discuss how to commute our estimate with $\partial^j_{t^*}$. For $n=1$, it suffices to observe that
\[\partial_{t^*}\left[\mathcal{O},h\right]f  = \left[\mathcal{O},h\right]\partial_{t^*}f - \left[\mathcal{O},\partial_{t^*}h\right]f + \left[\left[\mathcal{O},\partial_{t^*}\right],h\right]f,\]
and we can apply~\eqref{onethesupportw} to each of these terms. The case of higher $n$ follows similarly. 

Finally, we consider~\eqref{blahbi2j} in the case when $n > 0$. When $n = 1$, the estimate follows by replacing~\eqref{toprovesmthwhoknows} with
\begin{equation}\label{toprovesmthwhoknows123}\begin{split}
&\int_{\mathbb{R}}h^2(s)\left\vert\left\vert \mathcal{O}f\right\vert\right\vert^2_{\mathcal{E}_{w,s,1}}\, ds \lesssim \int_{\mathbb{R}}\left(\partial_s h\right)^2\left\vert\left\vert \mathcal{O}f\right\vert\right\vert^2_{\mathcal{E}_{w,s}}\, ds 
+ \int_{\mathbb{R}}\left\vert\left\vert h\mathcal{O}f\right\vert\right\vert^2_{\mathcal{E}_{w,s,1}}\, ds
\lesssim \\ & \int_{\mathbb{R}}\left(\partial_s h\right)^2\left\vert\left\vert \mathcal{O}f\right\vert\right\vert^2_{\mathcal{E}_{w,s}}\, ds +\int_{\mathbb{R}}\left\vert\left\vert \mathcal{O}\left(hf\right)\right\vert\right\vert^2_{\mathcal{E}_{w,s,1}}\, ds + \int_{\mathbb{R}}\left\vert\left\vert \left[h,\mathcal{O}\right]f\right\vert\right\vert^2_{\mathcal{E}_{w,s,1}}\, ds, 
\end{split}\end{equation}
and then arguing as above. The case $n > 1$ follows similarly using induction.
\end{proof}

The next lemma may be used in place of Lemma~\ref{commisminus1} when the support of $h$ is large, but $h'(t)$ has small support.
\begin{lemma}\label{usefulinthatonepart}Let $\mathcal{O}$ be an operator of order $k \leq 0$ defined by~\eqref{theooperator}, and let $h(t^*) \in C^{\infty}\left(\mathbb{R}\right)$ satisfy that ${\rm supp}\left(h'(t^*)\right) \subset I$ where $I$ is an interval so that $|I|\lesssim 1$. Then, for any $q > 0$  and non-negative integer $n$, 
\[\sum_{i+j=0}^{|k-1|}\int_{\mathbb{R}}\left\vert\left\vert \left[\mathcal{O},h\right]T^i\Phi^jf \right\vert\right\vert_{\mathcal{E}_{w,s, n}}^2\, ds \lesssim_q \sup_{t^*}\int_{t^*-q}^{t^*+q}\left\vert\left\vert f\right\vert\right\vert^2_{\mathcal{E}_{w,s, n}}\, ds,\]
\[\sum_{i+j=0}^{|k-1|}\int_{\mathbb{R}}\left\vert\left\vert \left[\mathcal{O},h\right]T^i\Phi^jf \right\vert\right\vert_{\mathcal{E}_{w,s,n}}\, ds \lesssim_q \sup_{t^*}\sqrt{\int_{t^*-q}^{t^*+q}\left\vert\left\vert f\right\vert\right\vert^2_{\mathcal{E}_{w,s, n}}\, ds}.\]
\end{lemma}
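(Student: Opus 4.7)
\emph{Strategy.} The plan is to decompose $h$ into a compactly supported piece and a smoothed step function, exploiting that $h$ is constant on each component of $\mathbb{R} \setminus I$. Write $I = [a,b]$ and let $c_\pm$ be the values of $h$ on the two sides of $I$. Pick a smooth monotone function $\psi$ with $\psi \equiv 0$ on $(-\infty, b]$ and $\psi \equiv 1$ on $[b+1,\infty)$, and set $g := h - c_- - (c_+ - c_-)\psi$. Then $g \in C^\infty_c$ is supported in $[a, b+1]$, an interval of size $\lesssim 1$, and since $[\mathcal{O},\mathrm{const}] = 0$ we obtain
\[
[\mathcal{O}, h] = [\mathcal{O}, g] + (c_+ - c_-)[\mathcal{O}, \psi].
\]

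\emph{First term.} For $[\mathcal{O}, g]$, I apply Lemma~\ref{commisminus1} directly: since $g$ is compactly supported on a bounded interval, estimate~\eqref{onethesupportw} (with $I'$ a slightly enlarged bounded interval containing $\mathrm{supp}(g)$ with $|I'\setminus \mathrm{supp}(g)| \gtrsim 1$) yields the desired bound for $[\mathcal{O}, g]f$ without derivatives. The $T^i\Phi^j$-derivatives are then incorporated using that $[\mathcal{O},g]T^i\Phi^j$ is a $(t^*,\phi^*)$ pseudo-differential operator of order $k - 1 + i + j \leq 0$ for $i+j \leq |k-1|$: the $\Phi^j$ commutes freely with $\mathcal{O}$ and with $g$, while iterated commutators involving $T$ produce either pseudo-differential operators of strictly lower order or commutators of $\mathcal{O}$ with derivatives $g^{(\ell)}$, which are themselves compactly supported on the same bounded set, so the same estimate applies.

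\emph{Second term.} For $[\mathcal{O}, \psi]$, set $\widehat J := [b-1, b+2]$, so $\psi(s) \in \{0,1\}$ for $s \notin \widehat J$. Split the $s$-integration. When $s \notin \widehat J$, one has $[\mathcal{O}, \psi]f(s) = \mathcal{O}((\psi - \psi(s))f)(s)$; since $(\psi-\psi(s))f$ is supported on the side of $[b,b+1]$ opposite to $s$ with gap $\gtrsim 1$, estimate~\eqref{usethislater123123} of Theorem~\ref{offthesupportitsgood} (applied with $I_1$ either $(-\infty, b-1)$ or $(b+2,\infty)$ and $I_2$ slightly enlarged toward $\mathrm{supp}(\psi f)$ or $\mathrm{supp}((1-\psi)f)$) delivers the bound with $A_{\rm high}^{-2\tilde N}$ decay, which is far more than needed. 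When $s \in \widehat J$, the set has size $\lesssim 1$; I decompose $f = \chi_0 f + (1-\chi_0)f$ with $\chi_0$ a bump supported on $[b-1-q, b+2+q]$ and identically $1$ on $\widehat J$. The operator $[\mathcal{O},\psi]T^i\Phi^j$ is of order $\leq 0$, hence $L^2$-bounded, and applied to $\chi_0 f$ this yields control by $\sup_{t^*}\int_{t^*-q}^{t^*+q}\left\vert\left\vert f\right\vert\right\vert^2_{\mathcal{E}_{w,s,n}}\, ds$; the $(1-\chi_0)f$-piece has support disjoint from $\widehat J$ with gap $\gtrsim q$ and is again handled by Theorem~\ref{offthesupportitsgood}.

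\emph{$L^1$ bound and main obstacle.} The $L^1$ estimate follows from the same split: on $\widehat J$, Cauchy--Schwarz converts the $L^2_s$-bound into an $L^1_s$-bound with a loss of $|\widehat J|^{1/2} \lesssim 1$; off $\widehat J$, estimate~\eqref{L1disj} of Theorem~\ref{offthesupportitsgood} provides the $L^1$-decay directly. The main technical obstacle is the bookkeeping of the $T^i\Phi^j$-derivatives, since $T = \partial_{t^*}$ does not commute with $\mathcal{O}$ (whose symbol depends on $t^*$) nor with the multiplication operator $h$. The relevant structural facts are that $\Phi$ commutes with both $\mathcal{O}$ and $h$, that $[T,\mathcal{O}]$ is a $(t^*,\phi^*)$ pseudo-differential operator of the same order $k$ in view of~\eqref{thesynbolassum}, and that $[T,h] = h'$ is compactly supported in $I$. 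Iterating via the Jacobi identity thus produces a finite list of pseudo-differential operators of total order $\leq 0$ composed with multiplications by $g^{(\ell)}$ and $\psi^{(\ell)}$, both supported on bounded intervals of size $\lesssim 1$, and each such term falls under the same analysis outlined above.
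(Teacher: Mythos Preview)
Your proof is essentially correct but takes a genuinely different route from the paper. The paper splits \emph{the input} $f$ via a partition of unity $\chi_1+\chi_2+\chi_3=1$ subordinate to the two half-lines $J_1,J_2$ on which $h$ is constant and a bounded transition region. The key observation is that $[\mathcal{O},h]T^i\Phi^j(\chi_l f)=[\mathcal{O},h-h_l]T^i\Phi^j(\chi_l f)=(h-h_l)\mathcal{O}T^i\Phi^j(\chi_l f)$, since $(h-h_l)\chi_l f\equiv 0$; this reduces immediately to an off-support estimate~\eqref{usethislater123123}. The remaining piece $\chi_3 f$ is compactly supported, so the order-$\leq 0$ operator $[\mathcal{O},h]T^i\Phi^j$ is simply applied via~\eqref{astartbutcandobetter}. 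By contrast, you split \emph{the multiplier} $h$ into a constant, a compactly supported $g$, and a smoothed step $\psi$. The $g$-piece invokes Lemma~\ref{commisminus1}; the $\psi$-piece needs a further split of the $s$-integration and then of $f$ itself. The paper's decomposition is more economical (a single algebraic identity replaces your entire step-function analysis), while yours has the mild advantage that the $g$-piece literally matches the hypotheses of Lemma~\ref{commisminus1}.

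Two points of care in your write-up. First, for the $L^1$ estimate you only discuss the $\psi$-piece; for the $g$-piece you cannot pass from the $L^2(\mathbb{R}_s)$ bound of Lemma~\ref{commisminus1} to $L^1(\mathbb{R}_s)$ by Cauchy--Schwarz alone, since the domain is unbounded. You need the same near/far split in $s$: on a bounded neighbourhood of $\mathrm{supp}(g)$ use Cauchy--Schwarz, and off it write $[\mathcal{O},g]f=\mathcal{O}(gf)$ (since $g$ vanishes there) and apply~\eqref{L1disj}. Second, your commutator bookkeeping for $T^i$ is slightly imprecise: moving $T$ through $[\mathcal{O},g]$ produces not only commutators $[\mathcal{O}',g^{(\ell)}]$ but also non-commutator terms of the form $g^{(\ell)}\mathcal{O}'$, which must be handled separately via~\eqref{astartbutcandobetter} (they are supported in the bounded set $\mathrm{supp}(g')$). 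Both are routine fixes, not genuine gaps.
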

\begin{proof} We start with case of $n = 0$. We may write $\mathbb{R}\setminus I = J_1 \cup J_2$ for two infinite intervals $J_1$ and $J_2$. The function $h(t^*)$ must take constant values $h_1$ and $h_2$ on $J_1$ and $J_2$. For each $l \in \{1,2\}$ let $\chi_l$ be a smooth function supported in $J_l$ so that $\{t^* : \chi_l(t^*) \neq 1\} \cap J_l$ is contained in an interval of size $\lesssim 1$. Set $\chi_3 \doteq 1- \chi_1-\chi_2$. We now can write 
\begin{equation}\label{howtosplititupforthissillylemma}\begin{split}
\| \left[\mathcal{O},h\right]T^i\Phi^jf\|_{\mathcal{E}_{w,s}} &\leq \|\left[\mathcal{O},h-h_1\right]T^i\Phi^j\left(\chi_1f\right)\|_{\mathcal{E}_{w,s}} + \|\left[\mathcal{O},h-h_2\right]T^i\Phi^j\left(\chi_2f\right)\|_{\mathcal{E}_{w,s}}
\\ &\qquad + \|\left[\mathcal{O},h\right]T^i\Phi^j\left(\chi_3f\right)\|_{\mathcal{E}_{w,s}}
\\ &= \|\left(h-h_1\right)\mathcal{O}T^i\Phi^j\left(\chi_1f\right)\|_{\mathcal{E}_{w,s}} + \|\left(h-h_2\right)\mathcal{O}T^i\Phi^j\left(\chi_2f\right)\|_{\mathcal{E}_{w,s}} 
\\ &\qquad + \|\left[\mathcal{O},h\right]T^i\Phi^j\left(\chi_3f\right)\|_{\mathcal{E}_{w,s}}.
\end{split}\end{equation}

Applying \eqref{usethislater123123}  yields for $l\in\{1,2\}$:
\[\begin{split}
\int_{\mathbb{R}}\|\left(h-h_l\right)\mathcal{O}T^i\Phi^j\left(\chi_l f\right)\|_{\mathcal{E}_{w,s}}^2\, ds \lesssim  \sup_{t^*}\int_{t^*-q}^{t^*+q}\left\vert\left\vert f\right\vert\right\vert^2_{\mathcal{E}_{w,s}}\, ds,
\end{split}
\] 
while \eqref{astartbutcandobetter}, the fact that $\left[\mathcal{O},h\right]$ is order $k-1$, and the fact that the support of $\chi_3$ is  contained in an interval of size $\lesssim 1$  yields
\[
\int_{\mathbb{R}}\|\left[\mathcal{O},h\right]T^i\Phi^j\left(\chi_3f\right)\|_{\mathcal{E}_{w,t}}^2 dt \lesssim \int_{\mathbb{R}} \|\left(\chi_3f\right)\|_{\mathcal{E}_{w,t}}^2 dt \lesssim_q  \sup_{t^*}\int_{t^*-q}^{t^*+q}\left\vert\left\vert f\right\vert\right\vert^2_{\mathcal{E}_{w,s}}\, ds.
\]

Upgrading the estimate to the case of $n > 0$ is done in an analogous fashion to the proof of Lemma~\ref{commisminus1}.
\end{proof}

The next lemma  may be thought of as a consequence of a localized version of Plancherel's theorem. First we need a suitable definition.
\begin{definition}\label{projpair}We say that a pair of $(t^*,\phi^*)$ pseudo-differential operators $\left(\mathcal{O},\tilde{\mathcal{O}}\right)$ of order $0$ are a projecting pair if the corresponding symbols $o$ and $\tilde{o}$ do not depend on $t$, we have $\left|o\right|,\left|\tilde{o}\right| \leq 1$, and $\tilde{\mathcal{O}}\mathcal{O} = \mathcal{O}$.
\end{definition}
\begin{remark}The standard example to keep in mind for Definition~\ref{projpair} is when $o\left(\cdot,m\right)$ is a bump function of amplitude at most $1$ and supported in a certain interval and $\tilde{o}\left(\cdot,m\right)$ is a another bump function of amplitude at most one and supported in a slightly larger interval so that $\tilde{o}$ is identically $1$ on the support of $o$.
\end{remark}

Now we are ready for a version of a localized Plancherel's theorem.
\begin{lemma}\label{localizethatplancherelyay}Let $\left(\mathcal{O},\tilde{O}\right)$ be a projecting pair in the sense of Definition~\ref{projpair}, and let $ J \subset \mathbb{R} \times \mathbb{Z}$ be the support of the symbol $\tilde{o}$ of $\tilde{O}$. 

Let $\{D_i\}_{i=1}^4$ be a set of (possibly complex) first order differential operators on $\mathcal{M}_K$ so that at each point $p \in \mathcal{M}_K$, there exist complex numbers $a_i^1$, $a_i^2$, and $a_i^3$ so that $D_i = a_i^1T  +a_i^2\Phi + a_i^3$. We assume moreover that as functions on $\mathcal{M}_K$, the $\{a_i^j\}_{j=1}^3$ are independent of $t^*$ and $\phi^*$.

Let $p_i : \mathcal{M}_K \times \mathbb{R} \times \mathbb{Z} \to \mathbb{C}$ denote the corresponding symbol of $D_i$ determined by $T \mapsto -i\omega$ and $\Phi \mapsto im$. Assume that for all points in $\mathcal{M}_K$ and $\left(\omega,m\right) \in J$ we have that
\begin{equation}
{\rm Re}\left(p_1\overline{p_2}\right) \leq {\rm Re}\left(p_3\overline{p_4}\right).
\end{equation}
Then the following estimates hold: Let $h(t^*)\in C^{\infty}_c$, $q > 0$, $\delta > 0$, and $I$ be a bounded interval which contains the support of $h$ and satisfies $|I| \gtrsim 1$. Then there exists a constant $C$ so that we will have that
\begin{equation}\label{finiteinterapproxplan}\begin{split}
&\int_{-\infty}^{\infty}\int_{r_+}^{\infty}\int_{\mathbb{S}^2}h^2(t^*){\rm Re}\left(D_1\mathcal{O}f\overline{D_2\mathcal{O}f}\right) w\, d\mathbb{S}^2\, dr\, dt^* \leq
\\&\qquad \int_{-\infty}^{\infty}\int_{r_+}^{\infty}\int_{\mathbb{S}^2}h^2(t^*){\rm Re}\left(D_3\mathcal{O}f\overline{D_4\mathcal{O}f}\right) w\, d\mathbb{S}^2\, dr\, dt^* 
\\ &\qquad + C\left[\delta^{-1}\int_I\left\vert\left\vert \mathcal{O}f\right\vert\right\vert^2_{\mathcal{E}_{w,t^*}}\, dt^* +\delta \sum_{i=1}^4\int_I\left\vert\left\vert \mathcal{O}D_if\right\vert\right\vert^2_{\mathcal{E}_{w,t^*}}\, dt^*\right]
\\ &\qquad + CA_{\rm high}^{-1}\sum_{i=1}^4\left[\sup_{t^*} \int_{t^*-q}^{t^*+q}\left\vert\left\vert D_i\mathcal{O}f\right\vert\right\vert_{\mathcal{E}_{w,s}}^2\, ds + \int_I\left\vert\left\vert D_i\mathcal{O}f\right\vert\right\vert_{\mathcal{E}_{w,t^*}}^2\, dt^*\right].
\end{split}\end{equation}
The constant $C$ may be taken to be proportional to a sum of  $\left\vert\left\vert h\right\vert\right\vert_{C^N}$ for some sufficiently large $N$ and  the sum of the squares of the implied constants in~\eqref{thesynbolassum} for $i+j \leq N$. 

Lastly, we note that if none of the $D_i$ involve $T$, then the third line of~\eqref{finiteinterapproxplan} may be dropped. 

\end{lemma}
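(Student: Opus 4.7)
The strategy is to convert the physical-space inequality into a Fourier-side comparison via Plancherel in $(t^*,\phi^*)$ at fixed $(r,\theta)$, exploiting the fact that $F \doteq \mathcal{O}f$ satisfies $\tilde{\mathcal{O}}F = F$ and is therefore essentially frequency-localized on $J = {\rm supp}(\tilde{o})$, where the pointwise hypothesis holds. The main technical difficulty comes from the weight $h^2(t^*)$, which prevents a naive direct application of Plancherel; handling it requires a Leibniz rearrangement together with a low-high splitting of $hF$ by the projector $\tilde{\mathcal{O}}$.

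First I would use the identity $hD_iF = D_i(hF) - a_i^1 h'F$, which is valid since the coefficients $a_i^j$ are independent of $t^*,\phi^*$, so $[T,h]=h'$ and $[\Phi,h]=0$. This yields
\[\int h^2 {\rm Re}(D_1 F \overline{D_2 F})\,w = \int {\rm Re}(D_1(hF)\overline{D_2(hF)})\,w + E_{12},\]
where $E_{12}$ collects cross terms of the form $hh'a_i^1 F\overline{D_j F}$ and $(h')^2 a_1^1\overline{a_2^1}|F|^2$. Each is bounded by Cauchy--Schwarz with a parameter $\delta$ by the expression on the third line of the desired error. Crucially, if no $D_i$ involves $T$ then every $a_i^1 = 0$ and $E_{12} \equiv 0$, which is precisely the simplification claimed in the last sentence of the lemma.

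Next I would split $hF = u + v$ with $u \doteq \tilde{\mathcal{O}}(hF)$ and $v \doteq [h,\tilde{\mathcal{O}}]F$; the identity $v = (I-\tilde{\mathcal{O}})(hF) = [h,\tilde{\mathcal{O}}]F$ uses $\tilde{\mathcal{O}}F = F$. Because $\mathcal{F}(u) = \tilde{o}\cdot\mathcal{F}(hF)$ is supported in $J$, Plancherel at fixed $(r,\theta)$ yields
\[\int {\rm Re}(D_1 u \overline{D_2 u})\,dt^*\,d\phi^* = C\!\int {\rm Re}(p_1\overline{p_2})|\mathcal{F}(u)|^2\,d\omega\sum_m \le C\!\int {\rm Re}(p_3\overline{p_4})|\mathcal{F}(u)|^2\,d\omega\sum_m = \int {\rm Re}(D_3 u \overline{D_4 u})\,dt^*\,d\phi^*,\]
by the hypothesis, and integration against $w(r,\theta)$ preserves the inequality. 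This is the only step that uses the symbol inequality.

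The remaining contributions from the expansion
\[{\rm Re}(D_i(hF)\overline{D_j(hF)}) = {\rm Re}(D_iu\overline{D_ju}) + {\rm Re}(D_iu\overline{D_jv}) + {\rm Re}(D_iv\overline{D_ju}) + {\rm Re}(D_iv\overline{D_jv})\]
must be shown to fit into the fourth-line $A_{\rm high}^{-1}$ error. Since $D_i\tilde{\mathcal{O}} = \tilde{\mathcal{O}}D_i$, one computes $D_iv = [h,\tilde{\mathcal{O}}]D_iF + a_i^1[h',\tilde{\mathcal{O}}]F$. Lemma~\ref{commisminus1} bounds $\int\|[h,\tilde{\mathcal{O}}]D_iF\|_{\mathcal{E}_{w,s}}^2\,ds$ by $A_{\rm high}^{-2}\int_I\|D_iF\|^2 + A_{\rm high}^{-2\tilde N}\sup_{t^*}\int_{t^*-q}^{t^*+q}\|D_iF\|^2$, while the second summand is bounded directly using the small support of $h'$ and the $L^2$-boundedness of $\tilde{\mathcal{O}}$ by $\int_I\|F\|^2$. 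Combined with the trivial bound $\int\|D_iu\|^2 \lesssim \int_I(\|D_iF\|^2 + \|F\|^2)$ coming from $D_iu = \tilde{\mathcal{O}}(hD_iF + a_i^1 h'F)$ and Cauchy--Schwarz on the $u$-$v$ cross terms, this yields exactly the claimed $A_{\rm high}^{-1}$ error, with any residual $\|F\|^2$ contributions absorbed into the third-line $\delta^{-1}\int_I\|\mathcal{O}f\|^2$. The main technical obstacle is keeping careful bookkeeping of which lemma produces which error term and verifying the stated dependence of $C$ on $\|h\|_{C^N}$ and the symbol constants of $\tilde{\mathcal{O}}$, which is inherited from Lemma~\ref{commisminus1}, Lemma~\ref{usefulinthatonepart}, and Theorem~\ref{pseudoloc}.
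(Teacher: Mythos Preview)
Your approach is essentially the same as the paper's: both insert the projector $\tilde{\mathcal{O}}$ via $\tilde{\mathcal{O}}\mathcal{O}f=\mathcal{O}f$, separate a main piece on which Plancherel and the symbol inequality act, and control the remainder through the order~$-1$ commutator $[h,\tilde{\mathcal{O}}]$ (Lemma~\ref{commisminus1}) together with a Leibniz expansion of $hD_i$. The only difference is the order of operations---the paper commutes $h$ through $\tilde{\mathcal{O}}$ first and then applies Leibniz (producing error terms labelled $I$--$IV$), while you apply Leibniz first (your $E_{12}$) and then split $hF=u+v$; the resulting error budgets coincide. One small omission: after the Plancherel step you arrive at $\int{\rm Re}(D_3u\,\overline{D_4u})$, and you still need to ``uncommute'' back to $\int h^2{\rm Re}(D_3\mathcal{O}f\,\overline{D_4\mathcal{O}f})$, which the paper does explicitly by reversing the same manipulations; this produces the $i=3,4$ contributions to the third and fourth lines. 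Also, your bound on $a_i^1[h',\tilde{\mathcal{O}}]F$ by $\int_I\|F\|^2$ is correct but unnecessarily crude---this is another order~$-1$ commutator and Lemma~\ref{commisminus1} already gives it an $A_{\rm high}^{-2}$ prefactor, though your cruder bound still fits into the stated error.
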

\begin{proof} Due to our assumptions on the coefficients, we have that $[D_i, \mathcal{O}] = [D_i, \tilde{\mathcal{O}}] = 0.$ We compute
\begin{equation}\begin{split}
&\int_{-\infty}^{\infty}\int_{r_+}^{\infty}\int_{\mathbb{S}^2}h^2(t^*){\rm Re}\left(D_1\mathcal{O}f\overline{D_2\mathcal{O}f}\right) w\, d\mathbb{S}^2\, dr\, dt^* =
\\ &\qquad \int_{-\infty}^{\infty}\int_{r_+}^{\infty}\int_{\mathbb{S}^2}h^2(t^*){\rm Re}\left(D_1\tilde{\mathcal{O}}\mathcal{O}f\overline{D_2\tilde{\mathcal{O}}\mathcal{O}f}\right) w\, d\mathbb{S}^2\, dr\, dt^* =
\\ &\qquad \underbrace{\int_{-\infty}^{\infty}\int_{r_+}^{\infty}\int_{\mathbb{S}^2}{\rm Re}\left(\left[h,\tilde{\mathcal{O}}\right]\left(D_1\mathcal{O}f\right)\overline{hD_2\mathcal{O}f}\right) w\, d\mathbb{S}^2\, dr\, dt^*}_{\doteq I} +
\\ &\qquad \underbrace{\int_{-\infty}^{\infty}\int_{r_+}^{\infty}\int_{\mathbb{S}^2}{\rm Re}\left(\tilde{\mathcal{O}}\left(hD_1\mathcal{O}f\right)\overline{\left[h,\tilde{\mathcal{O}}\right]D_2\mathcal{O}f}\right) w\, d\mathbb{S}^2\, dr\, dt^*}_{\doteq II} +
\\ &\qquad \int_{-\infty}^{\infty}\int_{r_+}^{\infty}\int_{\mathbb{S}^2}{\rm Re}\left(\left(\tilde{\mathcal{O}}\left(hD_1\mathcal{O}f\right)\right)\overline{\tilde{\mathcal{O}}\left(hD_2\mathcal{O}f\right)}\right) w\, d\mathbb{S}^2\, dr\, dt^* = 
\\ &\qquad I + II+
\\&\qquad \underbrace{-\int_{-\infty}^{\infty}\int_{r_+}^{\infty}\int_{\mathbb{S}^2}{\rm Re}\left(\left(\tilde{\mathcal{O}}\left(\left(D_1h\right)\mathcal{O}f\right)\right)\overline{\tilde{\mathcal{O}}\left(hD_2\mathcal{O}f\right)}\right) w\, d\mathbb{S}^2\, dr\, dt^*}_{\doteq III} +
\\ &\qquad \underbrace{-\int_{-\infty}^{\infty}\int_{r_+}^{\infty}\int_{\mathbb{S}^2}{\rm Re}\left(\left(\tilde{\mathcal{O}}\left(D_1h\mathcal{O}f\right)\right)\overline{\tilde{\mathcal{O}}\left(D_2h\right)\mathcal{O}f}\right) w\, d\mathbb{S}^2\, dr\, dt^*}_{\doteq IV} +
\\ &\qquad \underbrace{+\int_{-\infty}^{\infty}\int_{r_+}^{\infty}\int_{\mathbb{S}^2}{\rm Re}\left(\left(\tilde{\mathcal{O}}\left(D_1h\mathcal{O}f\right)\right)\overline{\tilde{\mathcal{O}}\left(D_2h\mathcal{O}f\right)}\right) w\, d\mathbb{S}^2\, dr\, dt^*}_{\doteq V}. 
\end{split}
\end{equation}
In view of Lemma~\ref{commisminus1} and Cauchy-Schwarz, we have 
\begin{equation}\label{thisisforIyeah}
\left|I\right| + \left|II\right| \lesssim A_{\rm high}^{-1}\left[\sum_{j=1}^2\sup_{t^*} \int_{t^*-q}^{t^*+q}\left\vert\left\vert D_i\mathcal{O}f\right\vert\right\vert_{\mathcal{E}_{w,s}}^2\, ds + \int_I\left\vert\left\vert D_i\mathcal{O}f\right\vert\right\vert_{\mathcal{E}_{w,s}}^2\, ds\right].
\end{equation}

In view of Plancherel's theorem and Cauchy-Schwarz, we have that
\begin{equation}\label{thisisforIIyeah}
\left|III\right| + \left|IV\right| \lesssim \delta^{-1}\int_I\left\vert\left\vert \mathcal{O}f\right\vert\right\vert^2_{\mathcal{E}_{w,s}}\, ds +\delta \sum_{i=1}^4\int_I\left\vert\left\vert \mathcal{O}D_if\right\vert\right\vert^2_{\mathcal{E}_{w,s}}\, ds.
\end{equation}

Using Plancherel's theorem again, we have that
\[V \leq\int_{-\infty}^{\infty}\int_{r_+}^{\infty}\int_{\mathbb{S}^2}{\rm Re}\left(\left(\tilde{\mathcal{O}}\left(D_3h\mathcal{O}f\right)\right)\overline{\tilde{\mathcal{O}}\left(D_4h\mathcal{O}f\right)}\right) w\, d\mathbb{S}^2\, dr\, dt^* .\]
It then remains to uncommute the operator $\tilde{\mathcal{O}}$ and the function $h$ using the analogues of the estimates~\eqref{thisisforIyeah} and~\eqref{thisisforIIyeah}. One then obtains~\eqref{finiteinterapproxplan}.

If none of the $D_i$ involve $T$, then $D_i$ will commute with $h$, and we will not have terms like $III$ and $IV$.

\end{proof}

The following is a cruder version of Lemma~\ref{localizethatplancherelyay}, which, despite having a more restricted range of validity, is sometimes more convenient to apply.
\begin{lemma}\label{crudeplanch}Let $\left(\mathcal{O},\tilde{O}\right)$ be a projecting pair in the sense of Definition~\ref{projpair}, and let $ J \subset \mathbb{R} \times \mathbb{Z}$ be the support of the symbol $\tilde{o}$ of $\tilde{O}$. 

Let $\{D_i\}_{i=1}^2$ be a set of (possibly complex) first order differential operators on $\mathcal{M}_K$ so that at each point $p \in \mathcal{M}_K$, there exists complex numbers $a_1$, $a_2$, and $a_3$ so that $D_1 =  a_1\Phi + a_2$ and $D_2 = a_3T$. We assume moreover that as functions on $\mathcal{M}_K$, the $\{a_i\}$ are independent of $t^*$ and $\phi^*$.

Let $p_i : \mathcal{M}_K\times \mathbb{R} \times \mathbb{Z} \to \mathbb{C}$ denote the corresponding symbol of $D_i$ determined by $T \mapsto -i\omega$ and $\Phi \mapsto im$. Assume there exists a large constant $A_{\rm large}$ so that for all points in $\mathcal{M}_K$ and $\left(\omega,m\right) \in J$ we have that
\begin{equation}
\left|p_1\right|^2 \lesssim  A_{\rm large}^{-2}\left|p_2\right|^2.
\end{equation}
Then the following estimates hold: Let $h(t^*) \in C^{\infty}_c$, $q > 0$, $\delta > 0$, and $I$ be a bounded interval which contains the support of $h$ and satisfies $|I| \gtrsim 1$. Then there exists a constant $C$ so that we will have that
\begin{equation}\label{crudeapproxplan}\begin{split}
&\int_{-\infty}^{\infty}\int_{r_+}^{\infty}\int_{\mathbb{S}^2}h^2(t^*)\left|D_1\mathcal{O}f\right|^2 w\, d\mathbb{S}^2\, dr\, dt^* \lesssim
\\&\qquad {\rm max}\left(A_{\rm large}^{-2},A_{\rm high}^{-2}\right)\Big[\int_{-\infty}^{\infty}\int_{r_+}^{\infty}\int_{\mathbb{S}^2}h^2(t^*)\left|D_2\mathcal{O}f\right|^2 w\, d\mathbb{S}^2\, dr\, dt^*
 \\ &\qquad + \sup_{t^*} \int_{t^*-q}^{t^*+q}\left[\left\vert\left\vert \mathcal{O}f\right\vert\right\vert_{\mathcal{E}_{w,s}}^2+\left\vert\left\vert D_1\mathcal{O}f\right\vert\right\vert_{\mathcal{E}_{w,s}}^2\right]\, ds+\int_I\left[\left\vert\left\vert \mathcal{O}f\right\vert\right\vert_{\mathcal{E}_{w,s}}^2+\left\vert\left\vert D_1\mathcal{O}f\right\vert\right\vert_{\mathcal{E}_{w,s}}^2\right]\, ds\Big].
\end{split}\end{equation}
The implied constant may be taken to be proportional to a sum of  $\left\vert\left\vert h\right\vert\right\vert_{C^N}$ for some sufficiently large $N$ and  the sum of the squares of the implied constants in~\eqref{thesynbolassum} for $i+j \leq N$. 
\end{lemma}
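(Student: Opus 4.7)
The plan is to bound $\int h^2(t^*)\|D_1\mathcal{O}f\|^2$ by a direct Plancherel argument exploiting the pointwise symbol inequality $|p_1|^2 \lesssim A_{\rm large}^{-2}|p_2|^2$ on $J$, rather than by invoking Lemma~\ref{localizethatplancherelyay} (which would introduce unwanted $\|D_2\mathcal{O}f\|^2$ errors that one cannot absorb back into the weighted main term $\int h^2\|D_2\mathcal{O}f\|^2$). The starting observation is that the coefficients $a_1,a_2,a_3$ depend only on $(r,\theta)$, so $D_1$ and $D_2$ both commute with $\tilde{\mathcal{O}}$, and moreover $D_1=a_1\Phi+a_2$ involves no $t^*$-derivative and hence also commutes with multiplication by $h=h(t^*)$. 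Combined with the projecting-pair identity $\mathcal{O}f=\tilde{\mathcal{O}}\mathcal{O}f$, this lets me decompose
\[
h D_1 \mathcal{O}f \;=\; h\tilde{\mathcal{O}}(D_1\mathcal{O}f) \;=\; \tilde{\mathcal{O}}(hD_1\mathcal{O}f) + [h,\tilde{\mathcal{O}}](D_1\mathcal{O}f).
\]

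Squaring and integrating on $\mathcal{P}_w$ reduces the problem to estimating the two pieces. The commutator piece is bounded directly by Lemma~\ref{commisminus1} (equation~\eqref{onethesupportw}, taking $\tilde N=1$), which contributes $A_{\rm high}^{-2}\int_I \|D_1\mathcal{O}f\|^2_{\mathcal{E}_{w,s}}\,ds + A_{\rm high}^{-2}\sup_{t^*}\int_{t^*-q}^{t^*+q}\|D_1\mathcal{O}f\|^2_{\mathcal{E}_{w,s}}\,ds$, both within the allowed error terms in~\eqref{crudeapproxplan}. For the main piece, I apply the Plancherel formula~\eqref{Plancherel} in $(t^*,\phi^*)$ at fixed $(r,\theta)$: because $D_1$ commutes with $h$, one has $hD_1\mathcal{O}f = D_1(h\mathcal{O}f)$, and a symbol calculation yields $\mathcal{F}(\tilde{\mathcal{O}}(hD_1\mathcal{O}f)) = \tilde o\,p_1\,\mathcal{F}(h\mathcal{O}f)$. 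Since $|\tilde o|\leq 1$ and $\tilde o$ is supported in $J$, where $|p_1|^2\leq CA_{\rm large}^{-2}|p_2|^2$, one obtains the pointwise-in-frequency estimate
\[
\bigl|\tilde o\,p_1\,\mathcal{F}(h\mathcal{O}f)\bigr|^2 \;\leq\; CA_{\rm large}^{-2}\bigl|p_2\,\mathcal{F}(h\mathcal{O}f)\bigr|^2 \;=\; CA_{\rm large}^{-2}\bigl|\mathcal{F}(D_2(h\mathcal{O}f))\bigr|^2,
\]
and Plancherel in reverse gives $\int\|\tilde{\mathcal{O}}(hD_1\mathcal{O}f)\|^2_{\mathcal{E}_{w,s}}\,ds \leq CA_{\rm large}^{-2}\int\|D_2(h\mathcal{O}f)\|^2_{\mathcal{E}_{w,s}}\,ds$.

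Finally, pushing $h$ past $D_2$ via the product rule gives $D_2(h\mathcal{O}f) = hD_2\mathcal{O}f + a_3 h'(t^*)\mathcal{O}f$; squaring and integrating produces the desired main term $\int h^2\|D_2\mathcal{O}f\|^2$ together with a remainder from $h'$ bounded by $C\|h'\|_{L^\infty}^2\sup|a_3|^2\int_I\|\mathcal{O}f\|^2_{\mathcal{E}_{w,s}}\,ds$, which fits into the allowed errors. Assembling the pieces and uniformizing the prefactors via $\max(A_{\rm large}^{-2},A_{\rm high}^{-2})$ yields~\eqref{crudeapproxplan}. The one mildly delicate point is this asymmetry in commutation with $h$: $D_1$ commutes with $h$ but $D_2=a_3T$ does not, and it is precisely the resulting $a_3 h'\mathcal{O}f$ term that accounts for the $\|\mathcal{O}f\|^2$ error contributions appearing on the right-hand side of~\eqref{crudeapproxplan}. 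The proof needs no $\|D_2\mathcal{O}f\|^2$ error terms at all, which is exactly why the direct Plancherel approach is preferable to routing through Lemma~\ref{localizethatplancherelyay}.
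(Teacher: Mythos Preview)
Your proof is correct and is precisely what the paper means by ``essentially repeat the proof of Lemma~\ref{localizethatplancherelyay}''; the paper omits all details. You have correctly identified and exploited the key structural simplification, namely that $D_1=a_1\Phi+a_2$ contains no $T$-derivative and therefore commutes with $h(t^*)$, so that the analogues of the $III$ and $IV$ terms from the proof of Lemma~\ref{localizethatplancherelyay} never arise on the $D_1$ side---this is exactly why no $\|D_2\mathcal{O}f\|^2$ error terms appear in~\eqref{crudeapproxplan}.
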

\begin{proof}One may essentially repeat the proof of Lemma~\ref{localizethatplancherelyay}. We omit the straightforward details.
\end{proof}

\section{Proof of Theorem~\ref{mainLinEstTheo}}\label{proofthemainlineestsec}
In this section we will give the proof of Theorem~\ref{mainLinEstTheo}. In fact, we will prove a more general result concerning solutions to the interpolating equation from Definition~\ref{definterpolate}. 
\begin{theorem}\label{actuallywehatweneedformain}Let $g \in \mathscr{A}_{\epsilon,r_{\rm low},R_{\rm high}}$, $\tau \gg 10$, and let $\Box_{\tau}$ be as in Definition~\ref{definterpolate}. Suppose that we have $\psi_{\tau} : J^+\left(\Sigma_0\right) \to \mathbb{C}$ which solves
\begin{equation}\label{taupssolve}
\Box_{\tau}\psi_{\tau} = F,
\end{equation}
 with $F$ compactly supported on $J^{+}(\Sigma_1)$, and $\left(n_{\Sigma_0}\psi,\psi\right)|_{\Sigma_0}$  smooth and compactly supported.

Set $\Psi_{\tau} \doteq \mathcal{P}_{HT}\psi_{\tau}$, $\mathscr{F} \doteq  \mathcal{P}_{HT}F + \left[\Box_{\tau},\mathcal{P}_{HT}\right]\psi_{\tau}$, and let $k \geq 0$. We then have 
\begin{equation}\label{thebetterengthing}\begin{split}
&\sup_{{t} \in (10,\tau-10)}\int_{-\infty}^{\infty}\chi_{t}(s)E_k\left[\Psi_{\tau}\right](s)\, ds  \lesssim_k 
 A_{\rm high}^{-1}\sup_{{t} > 0}\int_{-\infty}^{\infty}\chi_{t}(s)E_k\left[\psi_{\tau}\right]\left(s\right)\, ds 
\\ &+ \sum_{j=0}^k\sup_{{t} \in (10,\tau-10) }\left|\int_{\mathcal{M}_K}\tilde{\chi}^2_{t}{\rm Re}\left( T^j\mathscr{F}\overline{K\left(T^j\Psi_{\tau}\right)}\right)\right|
\\  & +\sum_{j+l=1}^k\sup_{{t} \in (10,\tau-10) }\left|\int_{\mathcal{M}_K}\tilde{\chi}_{t}^2{\rm Re}\left( T^jY^l\mathscr{F}\overline{N\left(T^jY^l\Psi_{\tau}\right)}\right)\right|+ E_k\left[\psi_{\tau}\right](0)
\\ &+\sum_{j=0}^k \sup_{{t} \in (10,\tau-10)}\left|\int_{\mathcal{M}_K}\rho \sqrt{\frac{\Delta}{\Pi}}\chi_{t}^2\xi{\rm Re}\left( T^j\mathscr{F}\overline{\left(T^j\Psi_{\tau}\right)}\right)\right|
\\ &+\sup_{{t} \in (10,\tau-10)}\int_{-\infty}^{\infty}\chi_{t}(s)E_{k-2}\left[\mathscr{F}\right]\left(s\right)\, ds+\sup_{{t} \in (10,\tau-10)}\int_{\mathcal{M}_K}\chi_{t}(s)\left|\mathscr{F}\right|^2\, ds.
\end{split}\end{equation}

\end{theorem}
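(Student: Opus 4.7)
The plan is to build the proof around the positivity computation of $\mathbf{J}^K_\mu n^\mu$ sketched in Section~\ref{sketchit}, supplemented by the time-averaged Plancherel machinery of Section~\ref{PDOstuff} and a redshift estimate near the horizon. First I would apply the divergence identity~\eqref{divIdentity} to the current $\mathbf{J}^{K,w,q}[g_K,\Psi_\tau]$ over the slab bounded by $\Sigma_0$ and $\Sigma_t$, after multiplying by $\chi_t^2$. The bulk terms produce $\mathbf{K}^{K,w,q}$ plus a contribution involving $\Box_{g_K}\Psi_\tau$; since $\Box_{g_K}\Psi_\tau = \mathscr{F} + (\Box_{g_K}-\Box_\tau)\Psi_\tau$, the second piece can be handled by Lemma~\ref{thatscoolintbyparts} and the factor $\epsilon$ it produces, while the $\mathscr{F}$ piece is exactly what appears on the right hand side of~\eqref{thebetterengthing}. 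The deformation tensor $\pi^K$ vanishes, so the $\mathbf{K}^{K,w,q}$ terms are purely the $(w,q)$-contributions, which are lower order.

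The heart of the proof is the boundary term on $\Sigma_t$. In Boyer--Lindquist coordinates a Fourier mode gives
\[
\mathbf{J}^K_\mu n^\mu_{\{t={\rm const}\}} = \rho^{-1}\sqrt{\tfrac{\Pi}{\Delta}}\,P(\omega,m,r)|Q|^2 + \tfrac{1}{2}\rho\sqrt{\tfrac{\Delta}{\Pi}}\,\partial^\gamma\psi\,\partial_\gamma\psi,
\]
where $P(\omega,m,r) = (\omega-\tfrac{a}{2Mr_+}m)(\omega-\tfrac{2Mar}{\Pi}m)$. The symbol $a_{HT}$ in Definition~\ref{hightrappedpart} is supported exactly in the region where $P \gtrsim (1-\delta_1')(\omega^2+m^2) \gtrsim A_{\rm high}^2$ holds pointwise in $r\in[r_+,\infty)$, after shrinking $\delta_1,\delta_2$ relative to the gap $r-r_+$ (this requires the cut-off $\xi$ supplied for the $r$-close-to-$r_+$ region where instead the timelike character of $K$ and the redshift vector $N$ will be used). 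Applying Lemma~\ref{localizethatplancherelyay} with the projecting pair associated to $\mathcal{P}_{HT}$ and a slightly enlarged projector then converts the pointwise positivity of $P$ into the desired integrated positivity for the time-averaged flux, up to the error terms on the right hand side of~\eqref{finiteinterapproxplan} which fall inside the allowed classes of~\eqref{thebetterengthing}. The Lagrangian term $\tfrac{1}{2}\rho\sqrt{\Delta/\Pi}\partial^\gamma\psi\partial_\gamma\psi$ is removed by choosing $w$ and $q$ appropriately (essentially via a twisted Lagrangian correction as in~\eqref{twistedcurdef}): after integration by parts against $\tilde{\chi}_t^2$ it reduces to a $|\Psi_\tau|^2$ term controlled either by a Hardy inequality or via the $\rho\sqrt{\Delta/\Pi}\,\xi$ bulk term on the right.

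Near the horizon, where $K$ becomes null and the $\xi$ cutoff in the Lagrangian correction kicks in, I would instead apply the $N$-multiplier on the region $\{r_+ \leq r \leq r_+ + 10\delta_0\}$; property~\ref{redshiftest} of $N$ gives $\mathbf{K}^N \gtrsim \mathbf{J}^N_\mu N^\mu$, absorbing the error produced when transitioning from $K$-positivity to the coercive energy. The two multipliers are combined using a smooth partition of unity subordinate to $\{\xi=0\}\cup\{\xi=1\}$; this is why the $N$-term appears only for $j+l\geq 1$ in~\eqref{thebetterengthing}. For the higher-order statements ($k\geq 1$), I would commute~\eqref{taupssolve} with $T^j$ (which commutes with $\Box_{g_K}$, $\mathcal{P}_{HT}$, $K$, and $N$) and, close to the horizon, with the redshift commutation field $Y^l$; the commutator $[Y^l,\Box_\tau]$ is absorbed using the redshift structure as in~\cite{claylecturenotes}, while the commutator loses a derivative only in a region where $\mathcal{P}_{HT}$ already supplies the corresponding gain via~\eqref{L2}.

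The main obstacle I expect is handling $[\Box_\tau,\mathcal{P}_{HT}]\psi_\tau$ (absorbed into $\mathscr{F}$) together with the time-dependence of $\Box_\tau$ through $\eta_\tau$: because $\eta_\tau$ varies only on unit-length time intervals near $0$ and $\tau$, $[\Box_\tau,\mathcal{P}_{HT}]$ is at worst a zeroth-order $(t^*,\phi^*)$-pseudo-differential operator applied to $\psi_\tau$ plus an error whose symbol picks up an extra $A_{\rm high}^{-1}$ by Theorem~\ref{semitheo} and Lemma~\ref{usefulinthatonepart}, which is exactly what feeds the first term on the right of~\eqref{thebetterengthing}. The only subtle point is ensuring that all reorderings of $T^j$, $Y^l$, $\mathcal{P}_{HT}$, and the time cutoff $\tilde\chi_{t,A}$ produce either (i) a gain of $A_{\rm high}^{-1}$ (absorbable into the first right-hand-side term), (ii) a lower-order contribution bounded by $E_{k-2}[\mathscr{F}]$ and $\|\mathscr{F}\|_{L^2}^2$ (the final two right-hand-side terms, hence the removal of $E_{k-2}$ for $k\leq 1$), or (iii) data at $\Sigma_0$; this is a careful but mechanical bookkeeping given the calculus developed in Section~\ref{PDOstuff}.
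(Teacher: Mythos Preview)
Your proposal follows essentially the same strategy as the paper: the $K$-energy identity time-averaged against $\chi^2$, the Lagrangian correction to remove the indefinite $\partial^\gamma\Psi_\tau\partial_\gamma\Psi_\tau$ term, the positivity of $P(\omega,m,r)$ on the support of $a_{HT}$ (the paper packages this as Lemma~\ref{somemono} plus Lemma~\ref{itsformallynoactuallypositive}), and the redshift upgrade near the horizon followed by $T$-commutation and elliptic estimates for $k\geq 1$.

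There is one genuine omission. To make sense of $\mathcal{P}_{HT}\psi_\tau$ at all you need the extension $\mathscr{E}\psi_\tau$ of Corollary~\ref{corextend}, and this extension solves $\Box_\tau(\mathscr{E}\psi_\tau)=\tilde F$ with an extra inhomogeneity $\tilde F$ supported in $J^-(\Sigma_{-3})$. After applying $\mathcal{P}_{HT}$ this becomes $\tilde{\mathscr{F}}\doteq\mathcal{P}_{HT}\tilde F$, so the equation you are actually working with is $\Box_\tau\Psi_\tau=\mathscr{F}+\tilde{\mathscr{F}}$, not $\Box_\tau\Psi_\tau=\mathscr{F}$. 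The term $\tilde{\mathscr{F}}$ is \emph{not} compactly supported in $t^*$ (pseudo-differential operators smear support), so it contributes to the bulk integrals $\int_{\mathcal{M}_K}\tilde\chi_t^2\,\mathrm{Re}(T^k\tilde{\mathscr{F}}\,\overline{KT^k\Psi_\tau})$ over the full interval $(-2,t+2)$. These cannot be absorbed into your $\mathscr{F}$-terms and must instead be shown to be $\lesssim E_k[\psi_\tau](0)$. The paper does this in the last paragraph of the proof of Theorem~\ref{actuallywehatweneedformain} by slicing the support of $\tilde\chi_t$ into unit intervals, using the $L^1_t$ off-support bound~\eqref{L1disj} (equivalently Lemma~\ref{pseudolocforthewin}) to control $\int_I\|T^{k}\tilde{\mathscr{F}}\|_{L^2(\Sigma_s)}\,ds$ by $A_{\rm high}^{-1}E_k[\psi](0)$, and pairing this against the sup of the time-averaged energy of $\Psi_\tau$. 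Without this step the argument does not close: the $\tilde\chi_t^2$-weighted bulk term grows with $t$ unless you exploit the rapid off-support decay of $\mathcal{P}_{HT}$ acting on the past-supported $\tilde F$.

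A minor point: your explanation for why the $N$-term starts at $j+l\geq 1$ is not quite the mechanism. The $k=0$ estimate already uses the $N$-multiplier implicitly through the $(1-\xi)\mathbf{J}^K$ flux near the horizon (where $K$ is timelike), but the explicit $N$-bulk terms in~\eqref{thebetterengthing} arise only when one commutes with the transverse vector field $Y$ to recover full $E_k$ control up to the horizon, as in Lemma~\ref{redsfhiftit}.
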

\begin{remark}
When $g = g_K$, $\Box_{\tau} = \Box_{g_K}$, and $\left[\Box_{\tau},\mathcal{P}_{HT}\right] = 0$. Theorem~\ref{mainLinEstTheo} thus immediately follows from Theorem~\ref{actuallywehatweneedformain} and finite in time energy estimates.  
\end{remark}

It will be convenient to set
\[\tilde{\mathscr{F}}\doteq \mathcal{P}_{HT}\tilde{F},\]
where $\tilde{F}$ is as Corollary~\ref{corextend}.
We will generally apply our estimates to the equation
\begin{equation}\label{PsitauscrFeqn}
    \Box_{\tau}\Psi_{\tau} = \mathscr{F}+ \tilde{\mathscr{F}},
\end{equation}
for $t^* \in (10,\tau-10)$. 

\subsection{Estimates for the operator $\mathcal{P}_{HT}$}
In this section, we will use the estimates from Section~\ref{PDOstuff} to establish various useful properties of the operator $\mathcal{P}_{HT}$.

\begin{convention}We introduce the convention that in this section all implied constants may depend on the choice of $\delta_1$ and $\delta_2$ in Definition~\ref{hightrappedpart}. 
\end{convention}

The next few lemmas will be used to show that, after a bit of averaging in time, we can control $\mathcal{P}_{HT}f$ and $\Phi\mathcal{P}_{HT}f$ by $T\mathcal{P}_{HT}f$. We start with $\mathcal{P}_{HT}f$.
\begin{lemma}\label{nothingbyT}Let $f \in \mathbb{H}_K$ and $t \in \mathbb{R}$. Then
\begin{align}\label{theestforthislemm123a}\begin{split}
&\qquad \int_{-\infty}^{\infty}\int_{r_+}^{\infty}\int_{\mathbb{S}^2}\chi_{t}^2\left|\mathcal{P}_{HT}f\right|^2w\, d\mathbb{S}^2 dr ds \lesssim 
A_{\rm high}^{-2}\int_{-\infty}^{\infty}\int_{r_+}^{\infty}\int_{\mathbb{S}^2}\chi_{t}^2\left|T\mathcal{P}_{HT}f\right|^2w\, d\mathbb{S}^2drds 
  \\  &\qquad \qquad  + A_{\rm high}^{-2}\sup_{t^* \in \mathbb{R}}\int_{t^*-1/2}^{t^*+1/2}\left\vert\left\vert \mathcal{P}_{HT}f\right\vert\right\vert^2_{\mathcal{E}_{w,s}}\, ds.
\end{split}\end{align}
\end{lemma}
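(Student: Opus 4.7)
The plan is to apply the crude localized Plancherel estimate, Lemma~\ref{crudeplanch}, to a projecting pair built around $\mathcal{P}_{HT}$ with $D_1$ being the identity and $D_2 = T$. The algebraic input that makes this work is that on $\mathrm{supp}(a_{HT})$ one has $|\omega|\gtrsim_{\delta_1,\delta_2} A_{\rm high}$. To see this, note that if $(\omega,m)\in\mathrm{supp}(a_{HT})$ then $m\neq 0$, $\omega^2+m^2\geq A_{\rm high}^2$, and at least one of the two terms in the second parenthesis of $a_{HT}$ is non-vanishing. If the first is nonzero then $|\omega|\geq \delta_2|m|$; if the second is nonzero then $\omega/m\geq \tfrac{a}{2Mr_+}+\tfrac{\delta_1}{2}$ (and since $a>0$, $\omega$ and $m$ share a sign). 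In either case $|\omega|\gtrsim_{\delta_1,\delta_2}|m|$, so $\omega^2+m^2\lesssim_{\delta_1,\delta_2}\omega^2$, and combining with $\omega^2+m^2\geq A_{\rm high}^2$ yields $|\omega|\gtrsim_{\delta_1,\delta_2}A_{\rm high}$.

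Next I would construct $\tilde{\mathcal{P}}_{HT}$ by enlarging slightly each of the cutoffs defining $a_{HT}$ to obtain a symbol $\tilde a_{HT}$ that equals $1$ on $\mathrm{supp}(a_{HT})$ while still enjoying the bound $|\omega|\gtrsim_{\delta_1,\delta_2}A_{\rm high}$ on its support; then $(\mathcal{P}_{HT},\tilde{\mathcal{P}}_{HT})$ is a projecting pair in the sense of Definition~\ref{projpair}. With this in hand, I would apply Lemma~\ref{crudeplanch} with $h=\chi_t$, $I=[t-3,t+3]$, $q=1/2$, $D_1=1$ (corresponding to $a_1=0$, $a_2=1$, so $p_1\equiv 1$), $D_2=T$ (so $p_2=-i\omega$), and $A_{\rm large}$ a suitable constant multiple of $A_{\rm high}$. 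The required symbol inequality $|p_1|^2=1\lesssim A_{\rm high}^{-2}|\omega|^2=A_{\rm high}^{-2}|p_2|^2$ on $\mathrm{supp}(\tilde a_{HT})$ is precisely the bound established above. Since $D_1=1$ makes $\|D_1\mathcal{P}_{HT}f\|_{\mathcal{E}_{w,s}}=\|\mathcal{P}_{HT}f\|_{\mathcal{E}_{w,s}}$, the conclusion of Lemma~\ref{crudeplanch} reads
\begin{align*}
\int\chi_t^2|\mathcal{P}_{HT}f|^2 w\,d\mathbb{S}^2\,dr\,ds &\lesssim A_{\rm high}^{-2}\int\chi_t^2|T\mathcal{P}_{HT}f|^2 w\,d\mathbb{S}^2\,dr\,ds\\
&\quad + A_{\rm high}^{-2}\sup_{t^*}\int_{t^*-1/2}^{t^*+1/2}\|\mathcal{P}_{HT}f\|^2_{\mathcal{E}_{w,s}}\,ds+A_{\rm high}^{-2}\int_I\|\mathcal{P}_{HT}f\|^2_{\mathcal{E}_{w,s}}\,ds.
\end{align*}
The last error on the right is harmless: covering $I$ by a bounded number of unit-length subintervals gives $\int_I\|\mathcal{P}_{HT}f\|^2_{\mathcal{E}_{w,s}}\,ds\lesssim \sup_{t^*}\int_{t^*-1/2}^{t^*+1/2}\|\mathcal{P}_{HT}f\|^2_{\mathcal{E}_{w,s}}\,ds$, and it is absorbed into the existing supremum term, completing~\eqref{theestforthislemm123a}.

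The main point, rather than a real obstacle, is the verification that $|\omega|\gtrsim A_{\rm high}$ on $\mathrm{supp}(a_{HT})$: the frequency cutoff $1-q((\omega^2+m^2)/A_{\rm high}^2)$ alone only guarantees $\omega^2+m^2\gtrsim A_{\rm high}^2$, and it is crucial that the excision of superradiant frequencies built into the second factor of $a_{HT}$ also forces $|\omega|\gtrsim |m|$, so that the bulk of the combined frequency is carried by $\omega$. This structural feature is exactly what allows one to trade a factor of $T$ for a gain of $A_{\rm high}^{-1}$, and it is what makes $\mathcal{P}_{HT}$ useful in the sequel.
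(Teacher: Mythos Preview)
Your proposal is correct and follows exactly the paper's approach: apply Lemma~\ref{crudeplanch} with $D_1=1$, $D_2=T$, $h=\chi_t$, $q=1/2$, and $A_{\rm large}\sim A_{\rm high}$, using that $|\omega|\gtrsim_{\delta_1,\delta_2} A_{\rm high}$ on (a slight enlargement of) $\mathrm{supp}(a_{HT})$. The paper's proof is terse and simply records the parameter choices, whereas you have spelled out the frequency-support verification and the absorption of the $\int_I$ term into the supremum error, both of which are correct.
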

\begin{proof}This is an immediate consequence of Lemma~\ref{crudeplanch} where we take
\[
a_1 = 0, \quad a_2=a_3=1, \quad h=\chi_{t}, \quad\mathcal{O} = P_{HT}, \quad q=1/2, \quad A_{large} = A_{high},
\]
while $J$ is a suitable neighborhood of the support of $a_{HT}$ so that $|\omega|\gtrsim A_{high}$ on $J$.

\end{proof}

Next, we consider $\Phi\mathcal{P}_{HT}f$.
\begin{lemma}\label{phibyT}Let $f \in \mathbb{H}_K$ and $t \in \mathbb{R}$. Then
\begin{align}
   & \int_{-\infty}^{\infty}\int_{r_+}^{\infty}\int_{\mathbb{S}^2}\chi_{t}^2\left|\Phi\mathcal{P}_{HT}f\right|^2w\, d\mathbb{S}^2drdt \lesssim 
\sum_{i = -1}^{1}\int_{-\infty}^{\infty}\int_{r_+}^{\infty}\int_{\mathbb{S}^2}\chi_{t+i}^2\left|T\mathcal{P}_{HT}f\right|^2w\, d\mathbb{S}^2drdt^* 
      \\  &\qquad  + A_{\rm high}^{-2}\sup_{t^* \in \mathbb{R}}\int_{t^*-1/2}^{t^*+1/2}\left\vert\left\vert \Phi\mathcal{P}_{HT}f\right\vert\right\vert^2_{\mathcal{E}_{w,s}}\, ds.
\end{align}
\end{lemma}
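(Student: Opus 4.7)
The plan is to adapt the proof of Lemma~\ref{nothingbyT} by replacing the symbolic comparison $1\lesssim A_{\rm high}^{-2}|\omega|^2$ (which drove that argument) with a symbolic comparison between $|m|^2$ and $|\omega|^2$ on the support of $a_{HT}$. First I would establish the key pointwise fact: on $\operatorname{supp}(a_{HT})$ we have $|m|^2\lesssim |\omega|^2$ with implied constant depending on $\delta_1$ and $\delta_2$. Indeed, Definition~\ref{hightrappedpart} forces either $p\!\left(|\omega|/(\delta_2|m|)\right)\neq 0$, yielding $|\omega|\geq \delta_2|m|$, or $q_{\delta_1}(\omega/m)\neq 0$, yielding $\omega/m\geq \tfrac{a}{2Mr_+}+\tfrac{\delta_1}{2}>0$. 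In both cases $|\omega|/|m|$ is bounded below by a positive constant.

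With this in hand, I would apply Lemma~\ref{crudeplanch} with $\mathcal{O}=\mathcal{P}_{HT}$, $h=\chi_{t}$, $q=1/2$, $D_1=\Phi$ (i.e.\ $a_1=1,a_2=0$), $D_2=CT$ for $C=C(\delta_1,\delta_2)$ chosen so that $|m|^2\leq C^2|\omega|^2$ on the support, and $A_{\rm large}=1$. Since $\max(A_{\rm large}^{-2},A_{\rm high}^{-2})=1$ for $A_{\rm high}$ large, this yields the schematic bound
\begin{equation*}
\int\chi_t^2\,|\Phi\mathcal{P}_{HT}f|^2\, w\lesssim \int\chi_t^2\,|T\mathcal{P}_{HT}f|^2\,w+\bigl(\sup+\!\int_I\bigr)\!\bigl[\|\mathcal{P}_{HT}f\|^2_{\mathcal{E}_{w,s}}+\|\Phi\mathcal{P}_{HT}f\|^2_{\mathcal{E}_{w,s}}\bigr],
\end{equation*}
where $I$ is a bounded interval slightly larger than $\operatorname{supp}(\chi_t)$ and the $\sup$ is over $\tau$ of the integral over $|s-\tau|\leq 1/2$.

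Next I would dispose of the four error contributions. For the $\|\mathcal{P}_{HT}f\|^2$ pieces, Lemma~\ref{nothingbyT} applied to $\chi_{t+i}^2$ (and an iteration to absorb the self-referential $\sup\|\mathcal{P}_{HT}f\|^2$) produces an $A_{\rm high}^{-2}$ gain, after which the result is absorbed into the main term $\sum_{i=-1}^{1}\int\chi_{t+i}^2|T\mathcal{P}_{HT}f|^2\,w$. The $\int_I$ piece of $\|\Phi\mathcal{P}_{HT}f\|^2$ is bounded by $\sum_{i=-1}^{1}\int \chi_{t+i}^2\|\Phi\mathcal{P}_{HT}f\|^2$ using that $\sum_{i=-1}^{1}\chi_{t+i}^2\gtrsim 1$ on $I$, which is where the index range $i\in\{-1,0,1\}$ in the statement originates.

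The main obstacle will be obtaining the $A_{\rm high}^{-2}$ weight (rather than merely $O(1)$) in front of the $\sup_{\tau}\!\int_{|s-\tau|\leq 1/2}\|\Phi\mathcal{P}_{HT}f\|^2$ error, since a single application of Lemma~\ref{crudeplanch} leaves this sup with a constant coefficient. To gain the missing factor, I would replace the last step above by an application of Lemma~\ref{localizethatplancherelyay} with $D_1=D_2=\Phi$, $D_3=D_4=CT$, choosing $\delta$ suitably small: the $A_{\rm high}^{-1}$ factors explicit in that lemma give a small coefficient on the $\|\Phi\mathcal{P}_{HT}f\|^2$ error, one takes the $\sup$ over $t$ in the resulting estimate, absorbs $\sup_t\int\chi_t^2|\Phi\mathcal{P}_{HT}f|^2$ into the left-hand side, and then reinserts the resulting preliminary bound into the sup error. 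This two-step iteration should upgrade $A_{\rm high}^{-1}$ to $A_{\rm high}^{-2}$ and conclude the proof.
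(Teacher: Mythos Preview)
Your approach is correct in spirit and uses the same two ingredients as the paper: the symbol comparison $|m|\lesssim_{\delta_1,\delta_2}|\omega|$ on the support of $a_{HT}$, and Lemma~\ref{nothingbyT} for the zeroth-order errors. The paper's proof, however, is much more direct: it simply invokes Lemma~\ref{crudeplanch} with $D_1=\Phi$, $D_2=T$, handles the lower-order $\|\mathcal{P}_{HT}f\|^2$ terms on the third line of~\eqref{crudeapproxplan} by Lemma~\ref{nothingbyT} (which is where the shifted cutoffs $\chi_{t+i}$, $i\in\{-1,0,1\}$, enter), and finally uses that $m\neq 0$ on the support to replace $\|\mathcal{P}_{HT}f\|^2$ by $\|\Phi\mathcal{P}_{HT}f\|^2$ in the remaining error.

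Your two-step iteration through Lemma~\ref{localizethatplancherelyay} is unnecessary and also does not quite land on the stated fixed-$t$ estimate: once you take $\sup_t$ to absorb, the reinsertion you describe produces a $\sup_t\int\chi_t^2|T\mathcal{P}_{HT}f|^2$ on the right-hand side rather than the localized $\sum_{i=-1}^{1}\int\chi_{t+i}^2|T\mathcal{P}_{HT}f|^2$. The cleaner route (implicit in the paper's appeal to the \emph{proof} of Lemma~\ref{crudeplanch}) is to observe that since $D_1=\Phi$ contains no $T$, the only source of $\|\Phi\mathcal{P}_{HT}f\|^2$ errors is the commutator terms $I,II$ of the proof of Lemma~\ref{localizethatplancherelyay}; there one factor in the Cauchy--Schwarz is $\|[h,\tilde{\mathcal{O}}]\Phi\mathcal{P}_{HT}f\|\lesssim A_{\rm high}^{-1}(\ldots)^{1/2}$ and the other is $\|h\Phi\mathcal{P}_{HT}f\|$, i.e.\ the square root of the left-hand side itself, so Young's inequality with a small absolute parameter absorbs the latter and yields the $A_{\rm high}^{-2}$ coefficient on the $\sup$ error directly, with no iteration.
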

\begin{proof}
This follows from Lemma~\ref{crudeplanch} and Lemma~\ref{nothingbyT}. Indeed, we apply Lemma~\ref{crudeplanch} with
\[
a_2 = 0, \quad a_1=a_3=1, \quad h=\chi_{t}, \quad\mathcal{O} = P_{HT}, \quad q=1/2, \quad A_{large} = A_{high},
\]
while $J$ is a suitable neighborhood of the support of $a_{HT}$ so that $|\omega|\gtrsim |m|$ on $J$, and control the lower order terms on the third line of \eqref{crudeapproxplan} by Lemma~\ref{nothingbyT}. Finally, since $m \neq 0$ on the support of $a_{HT}$ we may bound $\left\vert\left\vert \mathcal{P}_{HT}f\right\vert\right\vert^2_{\mathcal{E}_{w,s}}$ by $\left\vert\left\vert \Phi\mathcal{P}_{HT}f\right\vert\right\vert^2_{\mathcal{E}_{w,s}}$.

\end{proof}

The next lemma will be the key to showing the global positivity of our main physical space estimate (see Lemma~\ref{wowwhatanestimate} below).
\begin{lemma}\label{itsformallynoactuallypositive}Let $f \in \mathbb{H}_K$, $t \in \mathbb{R}$, and $h(r)$ be a positive function which satisfies $h(r) \leq \frac{a}{2Mr_+}$. Then there exists a small constant $c = c(\delta_1,\delta_2)> 0$ and a large constant $C > 0$ so that 
\begin{align}\label{blahblah2341}
&\int_{-\infty}^{\infty}\int_{r_+}^{\infty}\int_{\mathbb{S}^2}\chi_{t}^2{\rm Re}\left(\left(T\mathcal{P}_{HT}f+h(r)\Phi\mathcal{P}_{HT}f\right)\left(\overline{T\mathcal{P}_{HT}f + \frac{a}{2Mr_+}\Phi\mathcal{P}_{HT}f}\right)\right)w\, d\mathbb{S}^2drdt^* 
\\ \nonumber &\qquad \geq  c\int_{-\infty}^{\infty}\int_{r_+}^{\infty}\int_{\mathbb{S}^2}\chi_{t}^2\left[\left|T\mathcal{P}_{HT}f\right|^2+\left|\Phi \mathcal{P}_{HT}f\right|^2\right]w\, d\mathbb{S}^2drdt^*  
\\ \nonumber &\qquad - CA_{\rm high}^{-1}\sup_{t^* \in \mathbb{R}}\int_{t^*-1/2}^{t^*+1/2}\left[\left\vert\left\vert T\mathcal{P}_{HT}f\right\vert\right\vert^2_{\mathcal{E}_{w,s}}+\left\vert\left\vert \Phi \mathcal{P}_{HT}f\right\vert\right\vert^2_{\mathcal{E}_{w,s}}\right]\, ds.
\end{align}
\end{lemma}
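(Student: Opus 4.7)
The approach is to recognize the quadratic form on the left hand side of~\eqref{blahblah2341} as the symbol of a physical-space quantity controllable by Lemma~\ref{localizethatplancherelyay}. Specifically, setting $D_3 \doteq T + h(r)\Phi$ and $D_4 \doteq K$ (both with coefficients independent of $t^*$ and $\phi^*$), the integrand equals $\operatorname{Re}\!\bigl(D_3\mathcal{P}_{HT}f\,\overline{D_4\mathcal{P}_{HT}f}\bigr)$, whose symbol under $T\mapsto -i\omega$, $\Phi\mapsto im$ is
\[
P(\omega,m;r) \doteq (\omega - h(r)m)\!\left(\omega - \tfrac{a}{2Mr_+}m\right).
\]
The first step is the symbol positivity: there exists $c_0 = c_0(\delta_1,\delta_2) > 0$ with $P \geq c_0\omega^2$ \emph{and} $P \geq c_0 m^2$, uniformly in $0 < h(r) \leq \tfrac{a}{2Mr_+}$, on the support of $a_{HT}$. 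That support is the union of $\{\omega m < 0,\ |\omega|\geq \delta_2|m|\}$ and $\{\omega/m \geq \tfrac{a}{2Mr_+} + \delta_1/2\}$. On the first region, both factors of $P$ have sign opposite to $m$ with magnitudes $\geq |\omega|$ and $\geq \tfrac{a}{2Mr_+}|m|$, giving $P \geq \omega^2$ and $P \geq \tfrac{a\delta_2}{2Mr_+}m^2$. On the second (WLOG $m > 0$), both factors exceed $(\delta_1/2)m$ and also $\tfrac{\delta_1/2}{a/(2Mr_+)+\delta_1/2}\omega$. The same positivity persists with a slightly smaller constant on $J$, the support of $\tilde a_{HT}$, for a projecting pair $(\mathcal{P}_{HT},\tilde{\mathcal{P}}_{HT})$ whose symbol $\tilde a_{HT}$ is a mild enlargement of $a_{HT}$.

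Next, apply Lemma~\ref{localizethatplancherelyay} twice to the projecting pair, each time with time-cutoff $\chi_t$ (playing the role of $h$ in that lemma), $I$ a fixed bounded interval of length $\sim 1$ containing $[t-2,t+2]$, $D_3,D_4$ as above, and with the two choices $D_1 = D_2 = \sqrt{c_0}\,T$ and $D_1 = D_2 = \sqrt{c_0}\,\Phi$. In these cases $\operatorname{Re}(p_1\overline{p_2})$ equals $c_0\omega^2$ and $c_0 m^2$ respectively, each $\leq P = \operatorname{Re}(p_3\overline{p_4})$ on $J$. Summing the two resulting inequalities from~\eqref{finiteinterapproxplan} and dividing by $2$ yields
\[
\int\!\chi_t^2 \operatorname{Re}\!\bigl(D_3\mathcal{P}_{HT}f\,\overline{D_4\mathcal{P}_{HT}f}\bigr) w\, d\mathbb{S}^2 dr\, dt^* \geq \tfrac{c_0}{2}\!\int\!\chi_t^2\!\left[|T\mathcal{P}_{HT}f|^2 + |\Phi\mathcal{P}_{HT}f|^2\right] w\, d\mathbb{S}^2 dr\, dt^* - \mathcal{E},
\]
where $\mathcal{E}$ collects the error terms from the two applications.

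To finish, one must show $\mathcal{E}$ is bounded by $CA_{\rm high}^{-1}\sup_{t^*}\!\int_{t^*-1/2}^{t^*+1/2}[\|T\mathcal{P}_{HT}f\|_{\mathcal{E}_{w,s}}^2+\|\Phi\mathcal{P}_{HT}f\|_{\mathcal{E}_{w,s}}^2]\,ds$. The $\delta$-weighted errors $\delta\!\int_I \|D_i\mathcal{P}_{HT}f\|^2$ for $D_i \in \{T,\Phi,T+h\Phi,K\}$ are each $\lesssim \int_I(|T\mathcal{P}_{HT}f|^2 + |\Phi\mathcal{P}_{HT}f|^2)$ and, since $|I|\sim 1$, are $\lesssim \sup_{t^*}\!\int_{t^*-1/2}^{t^*+1/2}(\cdots)$. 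The $\delta^{-1}$-weighted term $\delta^{-1}\!\int_I \|\mathcal{P}_{HT}f\|^2$ is handled by Lemma~\ref{nothingbyT} (applied to a cutoff supported slightly beyond $I$), whose own trailing $A_{\rm high}^{-2}\sup\|\mathcal{P}_{HT}f\|^2$ term is absorbed using the direct Plancherel bound $A_{\rm high}^{2}\|\mathcal{P}_{HT}f\|^2 \lesssim \|T\mathcal{P}_{HT}f\|^2 + \|\Phi\mathcal{P}_{HT}f\|^2$, which follows from $\omega^2+m^2 \geq A_{\rm high}^2$ on $\operatorname{supp}(a_{HT})$. The remaining $A_{\rm high}^{-1}$-weighted contributions from~\eqref{finiteinterapproxplan} are already in the target form. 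Taking $\delta$ a small fixed constant (so that $\delta^{-1}A_{\rm high}^{-2} \ll A_{\rm high}^{-1}$ once $A_{\rm high}$ is large), one obtains the stated inequality. The main obstacle is precisely this final bookkeeping step: every error term must collapse onto the single $A_{\rm high}^{-1}$-weighted supremum on the right hand side of~\eqref{blahblah2341} while preserving a strictly positive coefficient in front of $\int\chi_t^2(|T\mathcal{P}_{HT}f|^2+|\Phi\mathcal{P}_{HT}f|^2)$.
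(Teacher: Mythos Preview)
Your strategy is essentially the paper's: verify the symbol inequality $(\omega-h(r)m)(\omega-\tfrac{a}{2Mr_+}m)\geq c(\omega^2+m^2)$ on a neighborhood of $\operatorname{supp}(a_{HT})$, then invoke Lemma~\ref{localizethatplancherelyay}. Your two-fold application (once with $D_1=D_2=\sqrt{c_0}\,T$, once with $D_1=D_2=\sqrt{c_0}\,\Phi$) is a harmless variant of the paper's single application with $D_1=D_2=c^{1/2}(T+i\Phi)$.

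There is, however, a genuine gap in your error bookkeeping. You propose to take $\delta$ a \emph{fixed} small constant. With that choice the term $\delta\sum_i\int_I\|D_i\mathcal{P}_{HT}f\|_{\mathcal{E}_{w,s}}^2\,ds$ on the right of~\eqref{finiteinterapproxplan} is only $O(\delta)\sup_{t^*}\int_{t^*-1/2}^{t^*+1/2}\bigl[\|T\mathcal{P}_{HT}f\|^2+\|\Phi\mathcal{P}_{HT}f\|^2\bigr]\,ds$, not $O(A_{\rm high}^{-1})$ times that quantity. Since $\int_I$ dominates $\int\chi_t^2$ (the inequality goes the wrong way), this error cannot be absorbed into the positive term either, so you do not recover the stated inequality~\eqref{blahblah2341}. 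The correct choice, and the one the paper makes, is $\delta=A_{\rm high}^{-1}$: then the $\delta$-weighted errors carry the factor $A_{\rm high}^{-1}$ directly, while the $\delta^{-1}$-weighted term becomes $A_{\rm high}\int_I\|\mathcal{P}_{HT}f\|^2$, which after Lemma~\ref{nothingbyT} picks up an $A_{\rm high}^{-2}$ and lands at $O(A_{\rm high}^{-1})$ as well.

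A smaller point: your ``direct Plancherel bound'' $A_{\rm high}^{2}\|\mathcal{P}_{HT}f\|_{\mathcal{E}_{w,s}}^2\lesssim\|T\mathcal{P}_{HT}f\|_{\mathcal{E}_{w,s}}^2+\|\Phi\mathcal{P}_{HT}f\|_{\mathcal{E}_{w,s}}^2$ is not valid as a \emph{time-localized} estimate (Plancherel in $\omega$ requires integration over all of $\mathbb{R}_{t^*}$). What does hold at each fixed $t^*$ is the Poincar\'e bound $\|\mathcal{P}_{HT}f\|_{\mathcal{E}_{w,s}}\leq\|\Phi\mathcal{P}_{HT}f\|_{\mathcal{E}_{w,s}}$ coming from $m\neq0$; this is exactly what the paper uses, and it suffices once $\delta=A_{\rm high}^{-1}$.
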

\begin{proof}

The main observation is that we can pick $J$ a neighborhood of the support of $a_{HT}$ so that there is a constant $c=c(\delta_1, \delta_2)$ so that on $J$
\[
\left(\omega- h(r)m\right)\left(\omega - \frac{a}{2Mr_+} m\right) \geq c(\omega^2+m^2).
\]

We now apply Lemma~\ref{localizethatplancherelyay}   with
\[
D_1 = T + h(r)\Phi, \quad D_2 = T + \frac{a}{2Mr_+}\Phi, \quad D_3 = D_4 = c^{1/2} (T+i\Phi), 
\]
and $\delta = A_{\rm high}^{-1}$. We then use Lemma~\ref{nothingbyT} to control the term involving $\mathcal{P}_{HT}f$, and finally note that since $m \neq 0$  on the support of $a_{HT}$, we may replace $\mathcal{P}_{HT}f$ with $\Phi\mathcal{P}_{HT}f$ in the final estimate.

\end{proof}

In this next lemma we show that the time-average $L^2$ norm of $\mathcal{P}_{HT}f$ may be controlled by the corresponding time-average $L^2$ norms of $f$.
\begin{lemma}\label{totheoriginal}Let $f \in \mathbb{H}_K$, $t \in \mathbb{R}$, and $n$ be a non-negative integer. Then
\begin{equation}\label{Backtothebeginning}
\int_{-\infty}^{\infty}\chi_{t}^2(s)\left\vert\left\vert \mathcal{P}_{HT}f\right\vert\right\vert^2_{\mathcal{E}_{w,s,n}}\, ds \lesssim \int_{-\infty}^{\infty}\chi_{t}^2(s)\left\vert\left\vert f\right\vert\right\vert^2_{\mathcal{E}_{w,s,n}}\, ds + A_{\rm high}^{-2}\sup_{t^*\in\mathbb{R}}\int_{t^*-1/2}^{t^*+1/2}\left\vert\left\vert f\right\vert\right\vert^2_{\mathcal{E}_{w,s,n }}\, ds
\end{equation}
\end{lemma}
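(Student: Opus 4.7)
The estimate is essentially a direct corollary of~\eqref{blahbi2j} in Lemma~\ref{commisminus1}, applied with $\mathcal{O} = \mathcal{P}_{HT}$ and $h = \chi_t$. The plan is to first verify that $\mathcal{P}_{HT}$ falls within the framework of Theorem~\ref{semitheo} as a $(t^*,\phi^*)$ pseudo-differential operator of order $k=0$, and then to invoke~\eqref{blahbi2j} with suitable choices of the auxiliary interval $I$, parameter $q$, and decay exponent $\tilde{N}$.

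For the first step, I would note that the symbol $a_{HT}(\omega,m)$ of Definition~\ref{hightrappedpart} is supported in the region $\{m\neq 0,\ 1\leq (\omega^2+m^2)/A_{\rm high}^2\leq 4\}$, so in particular $|\omega|,|m|\lesssim A_{\rm high}$ and $|m|\geq 1$ on its support. The only potential non-smoothness in $a_{HT}$ comes from the indicator $\mathbf{1}_{\omega m<0}$, but this is multiplied by $p(|\omega|/(\delta_2|m|))$, which vanishes for $|\omega|\leq \delta_2|m|$ and hence ensures smoothness of the product across $\omega=0$. The symbol bounds~\eqref{thesynbolassum} with $k=0$ are then verified by elementary differentiation: each $\omega$-derivative falling on $p(|\omega|/(\delta_2|m|))$ or $q_{\delta_1}(\omega/m)$ produces a factor of $|m|^{-1}\lesssim (1+|\omega|+|m|+A_{\rm high})^{-1}$, while each $\omega$-derivative falling on $q((\omega^2+m^2)/A_{\rm high}^2)$ produces a factor of $A_{\rm high}^{-1}$ thanks to the support restriction. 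Thus $\mathcal{P}_{HT}$ is indeed an order-$0$ pseudo-differential operator in the sense of Theorem~\ref{semitheo}, and it has no explicit $t^*$-dependence, so the $\partial_{t^*}^i$ conditions are trivial.

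For the second step, I would apply~\eqref{blahbi2j} with $\mathcal{O}=\mathcal{P}_{HT}$, $h=\chi_t$, $I=(t-3,t+3)$ (which contains $\mathrm{supp}(\chi_t)=[t-2,t+2]$ and satisfies $|I\setminus\mathrm{supp}(\chi_t)|\gtrsim 1$), $q=1/2$, and $\tilde N=1$. This yields
\begin{equation*}
\int_{\mathbb{R}}\chi_t^2\|\mathcal{P}_{HT}f\|^2_{\mathcal{E}_{w,s,n}}\,ds \lesssim \int_{\mathbb{R}}\chi_t^2\|f\|^2_{\mathcal{E}_{w,s,n}}\,ds + A_{\rm high}^{-2}\int_I\|f\|^2_{\mathcal{E}_{w,s,n}}\,ds + A_{\rm high}^{-2}\sup_{t^*}\int_{t^*-1/2}^{t^*+1/2}\|f\|^2_{\mathcal{E}_{w,s,n}}\,ds.
\end{equation*}
The first term already matches the main term on the right-hand side of~\eqref{Backtothebeginning}. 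The middle term can then be absorbed into the final sup term by covering $(t-3,t+3)$ with a bounded (and $t$-independent) number of unit intervals, each of which is controlled by $\sup_{t^*}\int_{t^*-1/2}^{t^*+1/2}\|f\|^2_{\mathcal{E}_{w,s,n}}\,ds$. This gives precisely~\eqref{Backtothebeginning} and completes the proof.

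I do not expect any real obstacle here; the only mildly non-routine piece is the symbol check for $a_{HT}$ at $\omega=0$, which follows from the cancellation between $\mathbf{1}_{\omega m<0}$ and the vanishing of $p(|\omega|/(\delta_2|m|))$ near $\omega=0$. Everything else is bookkeeping with the machinery developed in Section~\ref{PDOstuff}.
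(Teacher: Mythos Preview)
Your approach is exactly the paper's: invoke~\eqref{blahbi2j} from Lemma~\ref{commisminus1} with $\mathcal{O}=\mathcal{P}_{HT}$ and $h=\chi_t$, then absorb the $\int_I$ term into the sup term. One slip in your symbol discussion: the factor $\bigl(1-q((\omega^2+m^2)/A_{\rm high}^2)\bigr)$ vanishes for $(\omega^2+m^2)/A_{\rm high}^2\leq 1$ and equals $1$ for $(\omega^2+m^2)/A_{\rm high}^2\geq 2$, so the support of $a_{HT}$ is the \emph{high}-frequency region $\{\omega^2+m^2\geq A_{\rm high}^2\}$, not a bounded annulus; nevertheless your verification of the order-$0$ symbol bounds goes through unchanged, since on the support of the derivatives of $p$ and $q_{\delta_1}$ one has $|m|\sim|\omega|$ and hence $|m|\gtrsim A_{\rm high}$.
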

\begin{proof}This is a consequence of \eqref{blahbi2j} in Lemma~\ref{commisminus1}.
\end{proof}

In this final lemma, we consider the situation when $f$ is supported for $t < t_0$, and show that we can control $\mathcal{P}_{HT}f$ in $L^1_t$ norms in the region $t > t_0+1$ in terms of $L^{\infty}_t$ norms of $f$ in the region $t< t_0$.
\begin{lemma}\label{pseudolocforthewin}Let $f \in \mathbb{H}_K$, $t_0 \in \mathbb{R}$, and suppose that ${\rm supp}\left(f\right) \subset \{t^* < t_0\}$. Then, for any non-negative integers $n$ and $k$, we have
\[\sum_{i+j \leq n}\int_{t_0+1}^{\infty}\left\vert\left\vert {T^i\Phi^j}\mathcal{P}_{HT}f\right\vert\right\vert_{\mathcal{E}_{w,s}}\, ds \lesssim_k A_{\rm high}^{-k}\sup_{t^*} \left\vert\left\vert f\right\vert\right\vert_{\mathcal{E}_{w,t^*}}.\]
\end{lemma}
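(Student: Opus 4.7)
The plan is to show that Lemma~\ref{pseudolocforthewin} is an almost immediate consequence of the off-support pseudo-locality estimate~\eqref{L1disj} in Theorem~\ref{offthesupportitsgood}, combined with the fact that $T$ and $\Phi$ commute with $\mathcal{P}_{HT}$.

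First I would fix the intervals for the application: take $I_1 = [t_0+1,\infty)$ and $I_2 = (t_0,\infty)$. Then, since the support hypothesis gives ${\rm supp}(f) \subset \{t^* < t_0\} \subset \mathbb{R}\setminus I_2$, and since $I_2 \setminus \overline{I_1} = (t_0,t_0+1)$ has length $1 \gtrsim 1$, the hypotheses of Theorem~\ref{offthesupportitsgood} are satisfied with $\mathcal{O} = \mathcal{P}_{HT}$, which is a $(t^*,\phi^*)$ pseudo-differential operator of order $k = 0$ as can be verified from the symbol estimates for $a_{HT}$ in Definition~\ref{hightrappedpart} on its support $\{|\omega|+|m| \gtrsim A_{\rm high}\}$.

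Next I would observe that, because the symbol $a_{HT}(\omega,m)$ is independent of $t^*$ and $\phi^*$, and because $T = \partial_{t^*}$ and $\Phi = \partial_{\phi^*}$ on $\mathcal{M}_K$, the vector fields $T$ and $\Phi$ commute with $\mathcal{P}_{HT}$, so that $T^i\Phi^j \mathcal{P}_{HT} f = \mathcal{P}_{HT} T^i\Phi^j f$ for all $i,j$. Applying the $L^1$ estimate~\eqref{L1disj} with $m = n$, regularity index $0$, $\tilde{t} = 1$, and $\tilde{N} = k$ then yields
\[
\sum_{i+j \leq n}\int_{t_0+1}^{\infty}\|\mathcal{P}_{HT} T^i\Phi^j f\|_{\mathcal{E}_{w,s}}\, ds \lesssim_{k,n} A_{\rm high}^{-k} \sup_{t^*}\int_{t^*-1}^{t^*+1}\|f\|_{\mathcal{E}_{w,s}}\, ds.
\]

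Finally, using the trivial bound $\sup_{t^*}\int_{t^*-1}^{t^*+1}\|f\|_{\mathcal{E}_{w,s}}\, ds \leq 2\sup_{t^*}\|f\|_{\mathcal{E}_{w,t^*}}$ and relabelling produces the claimed inequality. There is no real obstacle here; the lemma is essentially a packaging of~\eqref{L1disj} in a form convenient for later use, and the only verification one needs is that $\mathcal{P}_{HT}$ satisfies the symbol hypotheses of Theorem~\ref{semitheo}, which is straightforward from the product structure of $a_{HT}$ and the fact that each cutoff factor is smooth with derivatives controlled by inverse powers of $|\omega|+|m|+A_{\rm high}$ on the support.
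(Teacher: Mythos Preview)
Your proposal is correct and takes essentially the same approach as the paper, which simply says the lemma follows from Theorem~\ref{offthesupportitsgood} with $I_1=(t_0+1,\infty)$ and $I_2=(t_0,\infty)$. Your added remarks about the commutation of $T$, $\Phi$ with $\mathcal{P}_{HT}$ (to match the order $T^i\Phi^j\mathcal{P}_{HT}$ in the lemma with the order $\mathcal{O}T^i\Phi^j$ in~\eqref{L1disj}) and the trivial bound on the time-averaged supremum are exactly the details one would fill in.
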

\begin{proof}This follows from Theorem~\ref{offthesupportitsgood}  with $I_1=(t_0+1, \infty)$ and $I_2=(t_0, \infty)$.

\end{proof}
\subsection{Physical space currents}
In this section we will construct a physical space current which, when combined with the Fourier analysis of the previous section will yield a proof of Theorem~\ref{actuallywehatweneedformain}. We note that throughout the section we will refer to the function $\tilde{F}$ which is defined in Theorem~\ref{extendtoeverywhere}.

We start by analyzing the flux term $\mathbf{J}^K_{\mu}n^{\mu}_{\Sigma_{s}}$ when $r \geq r_++\delta_0$ on the Kerr metric.

\begin{lemma}\label{whatexactlydoesjklooklike}On the Kerr metric $g_K$, we have that $r \geq r_+ + \delta_0$ implies
\begin{equation}\label{normaltoSigmawhenrislargeenough}
n_{\Sigma_{s}} = -\left(\nabla t\right)\rho\sqrt{\frac{\Delta}{\Pi}} = \rho^{-1}\sqrt{\frac{\Pi}{\Delta}}T + \rho^{-1} \frac{2Mar}{\sqrt{\Delta \Pi}}\Phi,
\end{equation}
\begin{equation}\label{JKflux}\begin{split}
\mathbf{J}^K_{\mu}\left[f\right]n^{\mu}_{\Sigma_{s}} &= \rho^{-1}\sqrt{\frac{\Pi}{\Delta}}{\rm Re}\left(\left(Tf + \frac{a}{2Mr_+}\Phi f\right)\left(\overline{Tf + \frac{2Mar}{\Pi}\Phi f}\right) \right)+\frac{1}{2}\rho \sqrt{\frac{\Delta}{\Pi}}\partial^{\gamma}f \partial_{\gamma}f
\end{split}
\end{equation}
\begin{equation}\label{JKfluxFull}\begin{split}
\mathbf{J}^K_{\mu}\left[f\right]n_{\Sigma_{s}}^{\mu} &= \frac{1}{2}\rho^{-1}\sqrt{\frac{\Pi}{\Delta}}\left|T f\right|^2 + \rho^{-1}\frac{a}{2Mr_+}\sqrt{\frac{\Pi}{\Delta}}\text{Re}\left(T f\overline{\Phi f}\right)
\\  &+ \frac{1}{2} \rho \sqrt{\frac{\Delta}{\Pi}}\left(\left[\frac{1}{\rho^2\sin^2\theta} + \frac{a^2}{\Delta\rho^2}\left(\frac{2r}{r_+} - 1\right)\right]\left|\Phi f\right|^2
+ \frac{\Delta}{\rho^2}\left|\partial_rf\right|^2 + \rho^{-2}\left|\partial_{\theta}f\right|^2\right).
\end{split}\end{equation}
\end{lemma}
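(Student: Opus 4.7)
The proof is a direct calculation in Boyer--Lindquist coordinates, which coincide with Kerr star coordinates for $r \geq r_+ + \delta_0$ by our choice of $\bar t(r)$ and $\bar\phi(r)$. In this region $\Sigma_s = \{t = s\}$, so the conormal to $\Sigma_s$ is $dt$ and the future-pointing unit normal is $n_{\Sigma_s} = (-\nabla t)/\sqrt{-g(\nabla t,\nabla t)}$. Applying \eqref{thisisnablat} together with the identity $g(\nabla t,\nabla t) = g^{tt} = -\Pi/(\rho^2\Delta)$ immediately yields \eqref{normaltoSigmawhenrislargeenough}.

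For \eqref{JKflux}, I would use the definition \eqref{energymomten} of $\mathbf{T}_{\mu\nu}$ with $V = K$ in \eqref{twistedcurdef} (and $w = q = 0$), giving
\[
\mathbf{J}^K_\mu[f] n^\mu_{\Sigma_s} = \mathrm{Re}\bigl(Kf\,\overline{n_{\Sigma_s}f}\bigr) - \tfrac12 g(K,n_{\Sigma_s})\, \partial^\gamma f\,\partial_\gamma f.
\]
The key simplification is that since $n_{\Sigma_s}$ is proportional to $-\nabla t$ and $K(t) = T(t) = 1$, one has $g(K,n_{\Sigma_s}) = -K(t)\,\rho\sqrt{\Delta/\Pi} = -\rho\sqrt{\Delta/\Pi}$, so the Lagrangian term comes out exactly as in \eqref{JKflux}. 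Substituting the expressions for $K$ and $n_{\Sigma_s} f$ obtained from \eqref{hawk} and \eqref{normaltoSigmawhenrislargeenough}, and factoring out $\rho^{-1}\sqrt{\Pi/\Delta}$ from the second factor, yields the factored form \eqref{JKflux}.

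For \eqref{JKfluxFull}, the plan is to expand the product in the first term of \eqref{JKflux} and separately expand $\partial^\gamma f\,\partial_\gamma f$ using the inverse metric \eqref{inversemetric}, then collect coefficients of $|Tf|^2$, $\mathrm{Re}(Tf\,\overline{\Phi f})$, $|\Phi f|^2$, $|\partial_r f|^2$ and $|\partial_\theta f|^2$. The two contributions to $|Tf|^2$ combine as $\rho^{-1}\sqrt{\Pi/\Delta} - \tfrac12\rho^{-1}\sqrt{\Pi/\Delta} = \tfrac12\rho^{-1}\sqrt{\Pi/\Delta}$. For the cross term, the $\tfrac{2Mar}{\Pi}$ contribution from the quadratic form is exactly cancelled by the term coming from $2g^{t\phi} = -4Mar/(\rho^2\Delta)$ in $\partial^\gamma f\,\partial_\gamma f$, leaving only the coefficient $\rho^{-1}\tfrac{a}{2Mr_+}\sqrt{\Pi/\Delta}$ stated in \eqref{JKfluxFull}.

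The only mildly delicate step is the $|\Phi f|^2$ coefficient: from the factored first term one gets $\rho^{-1}\sqrt{\Pi/\Delta}\cdot\tfrac{a^2 r}{r_+\Pi}$, and from the Lagrangian one gets $\tfrac12\rho\sqrt{\Delta/\Pi}\cdot\tfrac{\Delta - a^2\sin^2\theta}{\Delta\rho^2\sin^2\theta}$. Pulling out a common factor of $\tfrac12\rho\sqrt{\Delta/\Pi}$ and using the algebraic identity
\[
\frac{\Delta - a^2\sin^2\theta}{\Delta\rho^2\sin^2\theta} = \frac{1}{\rho^2\sin^2\theta} - \frac{a^2}{\rho^2\Delta},
\]
combined with $\rho^{-1}\sqrt{\Pi/\Delta}\cdot\tfrac{a^2 r}{r_+\Pi} = \tfrac12\rho\sqrt{\Delta/\Pi}\cdot\tfrac{2a^2 r}{r_+\rho^2\Delta}$, produces precisely $\tfrac12\rho\sqrt{\Delta/\Pi}\bigl[\tfrac{1}{\rho^2\sin^2\theta} + \tfrac{a^2}{\Delta\rho^2}(\tfrac{2r}{r_+}-1)\bigr]$ as in \eqref{JKfluxFull}. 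The radial and angular terms transfer directly from $\partial^\gamma f\,\partial_\gamma f$. There is no real obstacle beyond careful bookkeeping; the only place where the specific choice $K = T + \tfrac{a}{2Mr_+}\Phi$ enters nontrivially is in this recombination of the $|\Phi f|^2$ coefficient.
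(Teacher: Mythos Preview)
Your proposal is correct and follows essentially the same approach as the paper: the paper likewise derives \eqref{normaltoSigmawhenrislargeenough} from \eqref{thisisnablat}, notes that $g_K(K,n_{\Sigma_s}) = g_K(T,n_{\Sigma_s}) = -\rho\sqrt{\Delta/\Pi}$ to obtain \eqref{JKflux}, and declares \eqref{JKfluxFull} a straightforward computation. You have simply supplied the bookkeeping details the paper omits.
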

\begin{proof}
    The equation~\eqref{normaltoSigmawhenrislargeenough} follows from~\eqref{thisisnablat} and the fact that when $r \geq r_++\delta_0$, $\Sigma_{s}$ is a hypersurface of constant $t$. 

Keeping in mind that $K = T + \frac{a}{2Mr_+}\Phi$, we have
\[g_K\left(K,n_{\Sigma_{s}}\right) = g_K\left(T,n_{\Sigma_{s}}\right) = -\rho\sqrt{\frac{\Delta}{\Pi}}.\]
Given this, the formula~\eqref{JKflux} is immediate, and the formula~\eqref{JKfluxFull} follows from a straightforward computation. 
\end{proof}

Now we come to the key estimate which will combine a time-averaged $\mathbf{J}^K$-energy estimate with a Lagrangian correction to eliminate the last term in~\eqref{JKflux}.
\begin{lemma}\label{wowwhatanestimate} Let $\psi_{\tau}$ be as in the statement of Theorem~\ref{actuallywehatweneedformain}. Then we have, for every $t \in (10,\tau-10)$, 
\begin{equation}\label{wowthisistheresultofavglagr}
\begin{split}
  &\int_{-\infty}^{\infty}\chi^2_t(s)\left(\int_{\Sigma_s}\mathbf{J}^K_{\mu}n^{\mu}_{\Sigma_s}\left(1-\xi\right)\right)\, ds 
  \\ &\ \  + \int_{\mathcal{M}_K}{\rho \sqrt{\frac{\Delta}{\Pi}}}\chi_t^2\xi\Bigg(\rho^{-2}\frac{\Pi}{\Delta}{\rm Re}\left(\left(T\Psi_{\tau} + \frac{a}{2Mr_+}\Phi\Psi_{\tau}\right)\left(\overline{T\Psi_{\tau} + \frac{2Mar}{\Pi}\Phi\Psi_{\tau}}\right) \right)
   \\ &\qquad \qquad \qquad  +\frac{1}{4}\Box_g\left(\rho \sqrt{\frac{\Delta}{\Pi}}\chi_{\tau}^2\xi\right)\left|\Psi_{\tau}\right|^2\Bigg)
    \\ & \leq O(\epsilon)  \int_{-\infty}^{\infty}\chi_t^2(s)E_{0,r\geq r_++\delta_0/2}\left[\Psi_{\tau}\right]\left(s\right)\, ds
   + \int_{-\infty}^{\infty}\chi^2(s)\left(\int_{\Sigma_t}\mathbf{J}^K_{\mu}n^{\mu}_{\Sigma_t}\right)\, ds 
 \\ &  + \left|\int_{\mathcal{M}_K}\tilde{\chi}_t^2{\rm Re}\left(\mathscr{F}\overline{K\Psi_{\tau}}\right)\right|+ \left|\int_{\mathcal{M}_K}{\rho \sqrt{\frac{\Delta}{\Pi}}}\chi_t^2\xi{\rm Re}\left(\mathscr{F}\overline{\Psi_{\tau}}\right)\right|
   \\ &\ \  + \left|\int_{\mathcal{M}_K}\tilde{\chi}^2_t{\rm Re}\left(\tilde{\mathscr{F}}\overline{K\Psi_{\tau}}\right)\right|+ \left|\int_{\mathcal{M}_K}{\rho \sqrt{\frac{\Delta}{\Pi}}}\chi_t^2\xi{\rm Re}\left(\tilde{\mathscr{F}}\overline{\Psi_{\tau}}\right)\right|,
\end{split}
\end{equation}
Here the currents $\mathbf{J}^K$ are defined with respect to $\Psi_\tau$ and with respect to the metric $g$.
\end{lemma}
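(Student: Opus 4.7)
The plan is to apply a time-averaged $K$-energy identity. The crucial observation is that since $\mathcal{L}_T g = \mathcal{L}_{\Phi} g = 0$ by Definition~\ref{themetricclass}, the Hawking vector field $K = T + \tfrac{a}{2Mr_+}\Phi$ is Killing for $g$ as well as for $g_K$, so $\pi^{K}_{\mu\nu}[g] = 0$ and $\mathbf{K}^K[g,\Psi_\tau] \equiv 0$. I will first apply the divergence theorem to $\tilde{\chi}_t^2\,\mathbf{J}^K_\mu[g,\Psi_\tau]$ on $\mathcal{M}_K$; since $\tilde{\chi}_t^2$ is compactly supported in $t^*$, the only boundary contribution is the horizon flux $\int_{\mathcal{H}^+}\tilde{\chi}_t^2|K\Psi_\tau|^2$, which enters with the sign making it a nonnegative quantity that may be discarded. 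Using $\partial_s \tilde{\chi}_t^2 = \chi_0^2 - \chi_t^2$ together with the coarea formula, this yields
\[
\int \chi_t^2(s)\!\int_{\Sigma_s}\!\mathbf{J}^K_\mu n^\mu\, d\sigma\, ds \;\leq\; \int \chi_0^2(s)\!\int_{\Sigma_s}\!\mathbf{J}^K_\mu n^\mu\, d\sigma\, ds + \left|\int_{\mathcal{M}_K} \tilde{\chi}_t^2\,{\rm Re}\bigl(K\Psi_\tau\,\overline{\Box_g \Psi_\tau}\bigr)\, dVol\right|.
\]

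Next, I will split the $\Sigma_s$-flux via $1 = (1-\xi) + \xi$. The near-horizon piece $\int_{\Sigma_s}\mathbf{J}^K_\mu n^\mu (1-\xi)$ is retained on the left-hand side (this piece is already nonnegative since $K$ is causal and $n_{\Sigma_s}$ future-timelike on $r \leq r_+ + 10\delta_0$). For the piece weighted by $\xi$, supported where $r \geq r_+ + \delta_0$ so that Lemma~\ref{whatexactlydoesjklooklike} applies, I substitute
\[
\mathbf{J}^K_\mu n^\mu = \rho^{-1}\sqrt{\tfrac{\Pi}{\Delta}}\,{\rm Re}\!\Bigl(\bigl(T\Psi_\tau + \tfrac{a}{2Mr_+}\Phi \Psi_\tau\bigr)\,\overline{\bigl(T\Psi_\tau + \tfrac{2Mar}{\Pi}\Phi \Psi_\tau\bigr)}\Bigr) + \tfrac{1}{2}\rho\sqrt{\tfrac{\Delta}{\Pi}}\,\partial^\gamma \Psi_\tau\,\partial_\gamma \Psi_\tau.
\]
After converting $d\sigma\,ds$ into $dVol$, which absorbs a Jacobian and produces the outer weight $\rho\sqrt{\Delta/\Pi}$ appearing in the lemma statement, the positive-definite first summand becomes exactly the ${\rm Re}(\cdots)$ term on the left-hand side of~\eqref{wowthisistheresultofavglagr}. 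The second, Lagrangian-type summand is rewritten via $\partial^\gamma \Psi_\tau\,\partial_\gamma \Psi_\tau = \tfrac{1}{2}\Box_g |\Psi_\tau|^2 - {\rm Re}(\overline{\Psi_\tau}\Box_g \Psi_\tau)$; integrating the $\Box_g|\Psi_\tau|^2$ piece by parts twice against the weight $\rho\sqrt{\Delta/\Pi}\chi_\tau^2\xi$ (legitimate since this weight is compactly supported in $t^*$ and vanishes near the horizon via $\xi$) transfers the $\Box_g$ onto the weight, producing exactly the $\tfrac{1}{4}\Box_g(\rho\sqrt{\Delta/\Pi}\chi_\tau^2\xi)|\Psi_\tau|^2$ term on the left-hand side. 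The residual $-\tfrac{1}{2}\int \rho\sqrt{\Delta/\Pi}\chi_t^2\xi\,{\rm Re}(\overline{\Psi_\tau}\Box_g\Psi_\tau)$ gets sent to the right-hand side of the inequality.

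To conclude, I reduce every occurrence of $\Box_g \Psi_\tau$ using the equation satisfied by $\Psi_\tau$:
\[
\Box_g \Psi_\tau = \Box_\tau \Psi_\tau + (\Box_g - \Box_\tau)\Psi_\tau = \mathscr{F} + \tilde{\mathscr{F}} + (1 - \eta_\tau)(\Box_g - \Box_{g_K})\Psi_\tau.
\]
The first two summands immediately produce the $\tilde{\chi}_t^2\,{\rm Re}(\mathscr{F}\,\overline{K\Psi_\tau})$ term, its $\tilde{\mathscr{F}}$-analogue, and the $\rho\sqrt{\Delta/\Pi}\chi_t^2\xi\,{\rm Re}(\mathscr{F}\overline{\Psi_\tau})$ terms that appear on the right-hand side. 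For the commutator piece, whose principal part is $(g-g_K)^{\mu\nu}\partial_\mu\partial_\nu \Psi_\tau$ with $\|g - g_K\|_{C^1} = O(\epsilon)$, I will invoke Lemma~\ref{thatscoolintbyparts} to transfer one derivative off $\Psi_\tau$ via integration by parts; combined with the fact that $1 - \eta_\tau$ is supported only in the narrow transition interval $t^* \in [-1, 0]$ (which lies inside the support of $\tilde{\chi}_t$) and that $g - g_K$ vanishes outside $r \in [r_{\rm low}, R_{\rm high}]$, this contributes the $O(\epsilon)\int\chi_t^2 E_{0, r\geq r_+ + \delta_0/2}[\Psi_\tau]\,ds$ error term. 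The hardest step will be the careful bookkeeping of boundary terms in the two integrations by parts: one must verify that the weight $\rho\sqrt{\Delta/\Pi}\chi_\tau^2\xi$ genuinely produces no boundary contribution, using both the time-compactness of $\chi_\tau$ and the vanishing of $\xi$ near the horizon.
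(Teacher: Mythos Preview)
Your approach is essentially the paper's: a time-averaged $K$-energy identity (you package it as a single weighted divergence theorem using $\partial_s\tilde{\chi}_t^2=\chi_0^2-\chi_t^2$, while the paper applies the slab estimate on $\mathcal{R}(s,t+s)$ and then integrates in $s$ against $\chi^2$ --- these are the same computation) followed by a Lagrangian correction to kill the $\tfrac12\partial^\gamma\Psi_\tau\partial_\gamma\Psi_\tau$ term (you do it by hand via $\partial^\gamma\psi\partial_\gamma\psi=\tfrac12\Box_g|\psi|^2-{\rm Re}(\bar\psi\Box_g\psi)$; the paper uses the current $\mathbf{J}^{0,w,q}$ with $w=-\tfrac12\rho\sqrt{\Delta/\Pi}\chi_t^2\xi$, $q=\tfrac14\nabla w$ --- again equivalent).

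There is, however, one bookkeeping slip in your identification of the $O(\epsilon)$ error. The term $O(\epsilon)\int\chi_t^2(s)E_{0,r\ge r_++\delta_0/2}[\Psi_\tau](s)\,ds$ in the lemma does \emph{not} come from the $(1-\eta_\tau)(\Box_g-\Box_{g_K})\Psi_\tau$ piece. That piece is supported on $t^*\in[-1,0]$, while $\chi_t^2$ is supported on $[t-2,t+2]$ with $t>10$; the two supports are disjoint, so the $(1-\eta_\tau)$ contribution cannot land in a $\chi_t^2$-weighted integral (it belongs with the initial-data term near $s\approx 0$). The genuine source of the $O(\epsilon)$ term is the one you glossed over: when you ``substitute'' the flux formula from Lemma~\ref{whatexactlydoesjklooklike}, that formula is stated for the Kerr metric $g_K$, whereas your current $\mathbf{J}^K_\mu$ is taken with respect to $g$. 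Since $\|g-g_K\|_{C^1}\le\epsilon$, the discrepancy between the $g$-flux and the $g_K$-formula on the future slice $\Sigma_{t+s}$ is $O(\epsilon)E_{0,r\ge r_++\delta_0/2}[\Psi_\tau](t+s)$, and averaging against $\chi^2(s)$ gives exactly the $\chi_t^2$-weighted error term. Once you make this correction, your argument goes through.
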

\begin{proof}
We start by applying a $K$-energy estimate to~\eqref{PsitauscrFeqn}. For any $t \in (10,\tau-10)$ and $s\in\mathbb{R}$, we apply the identity~\eqref{divIdentity} in the region $
\mathcal{R}\left(s,t+s\right)$ with $V = K$, $w = 0$, $q = 0$, and $\varpi = 0$. Since $K$ is a Killing vector field, we have that $\mathbf{K}^K = 0$. Moreover, since $K$ is future oriented and normal to $\mathcal{H}^+$, we have that $\mathbf{J}^K_{\mu}n_{\mathcal{H}^+}^{\mu} \geq 0$. After applying the divergence theorem we thus obtain, using that $\Box_{\tau}\Psi_{\tau} = \mathscr{F}+\tilde{\mathscr{F}}$:
\begin{equation}\label{thefirstenergyest}
\int_{\Sigma_{t+s}}\mathbf{J}_{\mu}^Kn_{\Sigma_{t+s}}^{\mu} \leq \int_{\Sigma_s}\mathbf{J}_{\mu}^Kn_{\Sigma_s}^{\mu} + \int_{\mathcal{R}(s,t+s)}{\rm Re}\left(\mathscr{F}\overline{K\Psi_{\tau}}\right)+ \int_{\mathcal{R}(s,t+s)}{\rm Re}\left(\tilde{\mathscr{F}}\overline{K\Psi_{\tau}}\right).
\end{equation}
Applying Lemma~\ref{whatexactlydoesjklooklike} and that $g \in \mathscr{A}_{\epsilon,r_{\rm low},R_{\rm high}}$, we re-write~\eqref{thefirstenergyest} as 
\begin{equation}\label{thefirstenergyestexplicit}\begin{split}
&\int_{\Sigma_{t+s}}\mathbf{J}_{\mu}^Kn_{\Sigma_{t+s}}^{\mu}\left(1-\xi\right)  + O\left(\epsilon E_{0,r \geq r_+ + \delta_0/2}\left[\Psi_{\tau}\right]\left(t+s\right)\right)
\\ &\qquad +  \int_{\Sigma_{t+s}}\xi\left(\rho^{-1}\sqrt{\frac{\Pi}{\Delta}}{\rm Re}\left(\left(T\Psi_{\tau} + \frac{a}{2Mr_+}\Phi\Psi_{\tau}\right)\left(\overline{T\Psi_{\tau} + \frac{2Mar}{\Pi}\Phi\Psi_{\tau}}\right) \right)+\frac{1}{2}\rho \sqrt{\frac{\Delta}{\Pi}}\partial^{\gamma}\Psi_{\tau}\partial_{\gamma}\Psi_{\tau}\right) \leq \\ &\qquad \int_{\Sigma_s}\mathbf{J}_{\mu}^Kn_{\Sigma_t}^{\mu}  + \int_{\mathcal{R}(s,t+s)}{\rm Re}\left(\mathscr{F}\overline{K\Psi_{\tau}}\right)+ \int_{\mathcal{R}(s,t+s)}{\rm Re}\left(\tilde{\mathscr{F}}\overline{K\Psi_{\tau}}\right).
\end{split}\end{equation}
We note that in view of the smallness of $\delta_0$, we have that $K$ is timelike for $r \in (r_+,r_++2\delta_0)$ and thus that $\mathbf{J}^K_{\mu}n^{\mu}_{\Sigma_{t+s}}\left(1-\xi\right) \geq 0$.

Now we may integrate the inequality~\eqref{thefirstenergyestexplicit} in $s$ against the bump-function $\chi^2(s)$ to obtain
\begin{equation}\label{afteraveragingtheenergy}
\begin{split}
  &  \int_{-\infty}^{\infty}\chi_t^2(s)\left(\int_{\Sigma_s}\mathbf{J}^K_{\mu}n^{\mu}_{\Sigma_s}\left(1-\xi\right)\right)\, ds + O\left(\epsilon \int_{-\infty}^{\infty}\chi_t^2(s)E_{0,r\geq r_+ + \delta_0/2}\left[\Psi_{\tau}\right]\left(s\right)\, ds\right) +
  \\ &  \int_{\mathcal{M}_K\cap \{r \geq r_++\delta_0\}} \rho \sqrt{\frac{\Delta}{\Pi}}\chi_t^2\xi\Bigg(\rho^{-2}\frac{\Pi}{\Delta}{\rm Re}\left(\left(T\Psi_{\tau} + \frac{a}{2Mr_+}\Phi\Psi_{\tau}\right)\left(\overline{T\Psi_{\tau} + \frac{2Mar}{\Pi}\Phi\Psi_{\tau}}\right) \right)
   +\frac{1}{2}\partial^{\gamma}\Psi_{\tau}\partial_{\gamma}\Psi_{\tau}\Bigg)
   \\ &\leq \int_{-\infty}^{\infty}\chi^2(s)\left(\int_{\Sigma_s}\mathbf{J}^K_{\mu}n^{\mu}_{\Sigma_s}\right)\, ds +  \left|\int_{\mathcal{M}_K}\tilde{\chi}_{t}^2{\rm Re}\left(\mathscr{F}\overline{K\Psi_{\tau}}\right)\right| +  \left|\int_{\mathcal{M}_K}\tilde{\chi}_{t}^2{\rm Re}\left(\tilde{\mathscr{F}}\overline{K\Psi_{\tau}}\right)\right|.
\end{split}
\end{equation}

Our final goal is to get rid of the term proportional to $\partial^{\gamma}\Psi_{\tau}\partial_{\gamma}\Psi_{\tau}$ in the second line of~\eqref{afteraveragingtheenergy}. We consider a current $\mathbf{J}_{\mu}^{V,w,q,\varpi}$ current with $V = 0$, $w = -\frac{1}{2}\rho \sqrt{\frac{\Delta}{\Pi}}\chi_{t}^2\xi$, $q = \frac{1}{4}\nabla_{\mu}\left(\rho \sqrt{\frac{\Delta}{\Pi}}\chi_{t}^2\xi\right)$, and $\varpi = 0$. We add the corresponding divergence identity to~\eqref{afteraveragingtheenergy} and obtain~\eqref{wowthisistheresultofavglagr}.  Note that there are no boundary terms on the horizon or along $\Sigma_{t^*}$ for $t^* \to \pm \infty$ since $\xi\chi_{t}$ vanishes there.

\end{proof}

It will also be useful to have a version of Lemma~\ref{wowwhatanestimate} where we only time average the $\mathbf{J}^K$ energy estimate.
\begin{lemma}\label{onlyaverageKest}Let $\psi_{\tau}$ be as in the statement of Theorem~\ref{actuallywehatweneedformain}. Then we have, for every $t \in (10,\tau-10)$, 
\begin{equation}\label{averagedenergyyay}\begin{split}
\int_{-\infty}^{\infty}\chi_t^2(s)\left(\int_{\Sigma_s}\mathbf{J}^K_{\mu}n^{\mu}_{\Sigma_s}\right)\, ds &\leq \int_{-\infty}^{\infty}\chi^2(s)\left(\int_{\Sigma_s}\mathbf{J}^K_{\mu}n^{\mu}_{\Sigma_s}\right)\, ds + \left|\int_{\mathcal{M}_K}\tilde{\chi}^2_t{\rm Re}\left(\mathscr{F}\overline{K\Psi_{\tau}}\right)\right|
\\ &\qquad \qquad  + \left|\int_{\mathcal{M}_K}\tilde{\chi}^2_t{\rm Re}\left(\tilde{\mathscr{F}}\overline{K\Psi_{\tau}}\right)\right|.
\end{split}
\end{equation}
Here the currents $\mathbf{J}^K$ are defined with respect to $\Psi_\tau$.
\end{lemma}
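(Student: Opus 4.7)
The plan is to recycle the first half of the proof of Lemma~\ref{wowwhatanestimate} --- namely the time-sliced $K$-energy inequality --- while skipping the Lagrangian correction that eliminated the $\partial^\gamma\Psi_\tau \partial_\gamma\Psi_\tau$ term there. Concretely, for each $s \in \mathbb{R}$ I would apply the divergence identity~\eqref{divIdentity} to equation~\eqref{PsitauscrFeqn} on the region $\mathcal{R}(s,t+s)$ with $V = K$, $w = 0$, $q = 0$. Because $K$ is Killing for $g_K$ and $g$ differs from $g_K$ only by a $C^1$-perturbation of size $\epsilon$ supported in $r \in [r_{\rm low}, R_{\rm high}]$, the bulk term $\int \mathbf{K}^K$ produces at most an $O(\epsilon)\int_s^{t+s} E_{0, r \in [r_{\rm low}, R_{\rm high}]}[\Psi_\tau](\sigma)\, d\sigma$ error, which under Convention~\ref{episalwayssmall} is absorbed in the same bookkeeping as in the analogous first step of Lemma~\ref{wowwhatanestimate}. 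Since $K$ is future-directed and null or timelike along $\mathcal{H}^+$, the horizon flux $\mathbf{J}^K_\mu n^\mu_{\mathcal{H}^+} \geq 0$ may be dropped. The divergence theorem then produces
\[\int_{\Sigma_{t+s}} \mathbf{J}^K_\mu n^\mu_{\Sigma_{t+s}} \;\leq\; \int_{\Sigma_s} \mathbf{J}^K_\mu n^\mu_{\Sigma_s} \;+\; \int_{\mathcal{R}(s,t+s)} {\rm Re}\bigl((\mathscr{F} + \tilde{\mathscr{F}})\, \overline{K\Psi_\tau}\bigr).\]

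Next I would integrate both sides against $\chi^2(s)\, ds$. The change of variable $\sigma = s+t$ converts the left-hand side into $\int_{-\infty}^\infty \chi_t^2(\sigma)\bigl(\int_{\Sigma_\sigma} \mathbf{J}^K_\mu n^\mu_{\Sigma_\sigma}\bigr)\, d\sigma$, matching the left-hand side of~\eqref{averagedenergyyay}, while the $\Sigma_s$-term is already in the desired form. For the two bulk source terms I would exchange the order of integration: for any spacetime point $p$ with $t^*(p) = t^*_p$, the condition $p \in \mathcal{R}(s,t+s)$ is equivalent to $t^*_p - t \leq s \leq t^*_p$, so
\[\int_{-\infty}^{\infty} \chi^2(s)\, \mathbf{1}_{\mathcal{R}(s,t+s)}(p)\, ds \;=\; \int_{t^*_p - t}^{t^*_p} \chi^2(s)\, ds \;=\; \tilde{\chi}_t^2(t^*_p),\]
by the definition~\eqref{tildechitauA} with $A = 0$. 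Applying this identity to each of $\mathscr{F}$ and $\tilde{\mathscr{F}}$ reshapes the bulk sources into the two $\tilde{\chi}_t^2$-weighted spacetime integrals on the right-hand side of~\eqref{averagedenergyyay}.

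The only real obstacle is the $O(\epsilon)$ deformation-tensor error arising from the fact that $K$ is only approximately Killing for the interpolating operator on the perturbation region $[r_{\rm low}, R_{\rm high}]$; this is exactly the error already controlled in the first step of Lemma~\ref{wowwhatanestimate}, and is absorbed here in the same manner. In this sense Lemma~\ref{onlyaverageKest} is a strict weakening of Lemma~\ref{wowwhatanestimate}: it retains only the averaged $K$-flux bound, without the Lagrangian-corrected weighted bulk term that was used to extract positivity there.
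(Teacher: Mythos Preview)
Your approach is essentially the paper's: derive the sliced $K$-energy inequality~\eqref{thefirstenergyest} and integrate it against $\chi^2(s)\,ds$, then use Fubini to convert the bulk source terms into the $\tilde{\chi}_t^2$-weighted integrals. The paper's entire proof is the one-line reference ``integrate~\eqref{thefirstenergyest} in $s$ against $\chi^2(s)$.''

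There is one conceptual slip in your write-up. You claim an $O(\epsilon)$ bulk error from $\mathbf{K}^K$ because ``$K$ is Killing for $g_K$ and $g$ differs from $g_K$.'' But by Definition~\ref{themetricclass} the metric $g$ satisfies $\mathcal{L}_T g = \mathcal{L}_\Phi g = 0$, so $K = T + \frac{a}{2Mr_+}\Phi$ is \emph{exactly} Killing for $g$, and the deformation-tensor bulk $\mathbf{K}^K[g,\Psi_\tau]$ vanishes identically. The $O(\epsilon)$ error that appears in Lemma~\ref{wowwhatanestimate} has a different origin: it arises only when one replaces the $g$-flux $\mathbf{J}^K_\mu n^\mu_{\Sigma}$ by its explicit Boyer--Lindquist expression from Lemma~\ref{whatexactlydoesjklooklike}, which is computed for $g_K$. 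Since Lemma~\ref{onlyaverageKest} keeps the flux intact and never invokes that substitution, no $\epsilon$-error appears --- which is why the statement~\eqref{averagedenergyyay} has no $\epsilon$-term on its right-hand side. Your appeal to ``the same bookkeeping'' to absorb a term that is in fact zero into a right-hand side that has no room for it is therefore misplaced, though harmless.
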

\begin{proof}This follows from the proof of Lemma~\ref{wowwhatanestimate}  by integrating \eqref{thefirstenergyest} in $s$ against the bump-function $\chi^2(s)$.
\end{proof}

We next check that $\frac{2Mar}{\Pi}$ is globally dominated by $\frac{a}{2Mr_+}$ so that we can eventually apply Lemma~\ref{itsformallynoactuallypositive} to the estimate from Lemma~\ref{wowwhatanestimate}.
\begin{lemma}\label{somemono}We have that $\frac{2Mar}{\Pi} \leq \frac{a}{2Mr_+}$ for all $r \in [r_+,\infty)$.
\end{lemma}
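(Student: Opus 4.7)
Since $a > 0$ by the standing convention and $\Pi > 0$, the claimed inequality $\tfrac{2Mar}{\Pi} \leq \tfrac{a}{2Mr_+}$ is equivalent to the purely polynomial assertion
\[
4M^2 r\, r_+ \leq \Pi \qquad \text{for all } r \geq r_+ \text{ and all } \theta.
\]
The plan is to establish this by exhibiting $\Pi - 4M^2 r r_+$ as a sum of manifestly non-negative quantities, using the algebraic rearrangement of $\Pi$ already recorded earlier in the text (right after~\eqref{thisisnablat}).

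The first step is to recall the identity
\[
\Pi = \Delta^2 + 4M^2 r^2 + \Delta\bigl(4Mr - a^2 \sin^2\theta\bigr),
\]
which follows from $r^2 + a^2 = \Delta + 2Mr$ and expanding $(r^2+a^2)^2 = (\Delta + 2Mr)^2$. Subtracting $4M^2 r r_+$ and grouping the terms gives
\[
\Pi - 4M^2 r\, r_+ = \Delta^2 + \Delta\bigl(4Mr - a^2 \sin^2\theta\bigr) + 4M^2 r\,(r - r_+).
\]
The second step is to verify that, in the range $r \geq r_+$, each of these three summands is non-negative. The first summand $\Delta^2$ is a square. The third summand is non-negative since $r \geq r_+ > 0$. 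For the middle summand, $\Delta = (r-r_+)(r-r_-) \geq 0$ on $[r_+,\infty)$, and the coefficient $4Mr - a^2\sin^2\theta$ is bounded below by $4Mr_+ - a^2$; using $a^2 = r_+ r_- = r_+(2M - r_+)$, one finds $4Mr_+ - a^2 = r_+(2M + r_+) > 0$. This yields $\Pi - 4M^2 r r_+ \geq 0$, and equality is attained at $r = r_+$ (where $\Delta = 0$ and $(r^2+a^2)^2 = (2Mr_+)^2 = 4M^2 r_+^2$), completing the proof.

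There is essentially no obstacle here: the result is a one-variable algebraic inequality, and the only mildly delicate point is confirming the sign of $4Mr_+ - a^2$ in the sub-extremal regime, which follows immediately from the relations $r_+ r_- = a^2$ and $r_+ + r_- = 2M$.
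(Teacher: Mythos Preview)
Your proof is correct, but it takes a genuinely different route from the paper's. The paper argues by monotonicity: it observes that equality holds at $r = r_+$ and then shows that $r/\Pi$ is strictly decreasing on $[r_+,\infty)$ by computing $\partial_r(r/\Pi)$ and bounding the numerator as $(r^2+a^2)(-3r^2+2a^2) < 0$ via $\sin^2\theta \leq 1$, $r^2-a^2 \leq r^2+a^2$, and $r \geq r_+ > M > a$. You instead rearrange the inequality to $\Pi - 4M^2 r r_+ \geq 0$ and exhibit this directly as a sum of three non-negative terms using the decomposition of $\Pi$ already recorded in the paper. Your argument is more elementary (no differentiation) and slightly cleaner; the paper's monotonicity argument, on the other hand, yields the stronger statement that $2Mar/\Pi$ is strictly decreasing in $r$, which is not needed here but could in principle be useful elsewhere.
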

\begin{proof}We have 
\[\frac{2Mar}{\Pi}|_{r=r_+} = \frac{a}{2Mr_+}.\]
Thus it suffices to show that $\frac{r}{\Pi}$ is non-increasing as a function of $r$. We compute
\begin{equation}\begin{split}
\frac{\partial h}{\partial r} &= \frac{(r^2+a^2)(-3r^2+a^2) + a^2\sin^2\theta\left(r^2-a^2\right)}{\Pi^2}
\\ &\leq \frac{(r^2+a^2)(-3r^2+2a^2)}{\Pi^2}
\\ &< 0.
\end{split}
\end{equation}
Here we have used the inequalities $\sin^2\theta \leq 1$, $r^2-a^2 \leq r^2+a^2$, and $r \geq r_+ > M > a$.
\end{proof}

We close the subsection with a straightforward combination of our physical space currents and Fourier analysis results.
\begin{lemma}\label{itallstartsherewow}Let $\psi_{\tau}$ be as in the statement of Theorem~\ref{actuallywehatweneedformain}. Then we have
\begin{equation}
    \begin{split}
   & {\rm sup}_{t \in (10,\tau-10)}\int_{-\infty}^{\infty}\chi_t^2(s)\left(\int_{\Sigma_s}\left[\left(1-\xi\right)\mathbf{J}^K_{\mu}n^{\mu}_{\Sigma_s} + \xi \mathbf{J}^N_{\mu}n^{\mu}_{\Sigma_s}\right]\right)\, ds \lesssim
 \\ &\qquad {\rm sup}_{t \in (10,\tau-10)}\left|\int_{\mathcal{M}_K}\tilde{\chi}_t^2{\rm Re}\left(\mathscr{F}\overline{K\Psi_{\tau}}\right)\right|+ {\rm sup}_{t \in (10,\tau-10)}\left|\int_{\mathcal{M}_K}{\rho \sqrt{\frac{\Delta}{\Pi}}}\chi_t^2\xi{\rm Re}\left(\mathscr{F}\overline{\Psi_{\tau}}\right)\right|
  \\ &\qquad + {\rm sup}_{t \in (10,\tau-10)}\left|\int_{\mathcal{M}_K}\tilde{\chi}_t^2{\rm Re}\left(\tilde{\mathscr{F}}\overline{K\Psi_{\tau}]}\right)\right|+ {\rm sup}_{t \in (10,\tau-10)}\left|\int_{\mathcal{M}_K}{\rho \sqrt{\frac{\Delta}{\Pi}}}\chi_t^2\xi{\rm Re}\left(\tilde{\mathscr{F}}\overline{\Psi_{\tau}}\right)\right|
 \\ &\qquad +A_{\rm high}^{-1} \sup_{t > 10}\int_{-\infty}^{\infty}\chi_t^2(s)E_0\left[\psi_{\tau}\right]\left(s\right)\, ds + E_0\left[\psi_{\tau}\right]\left(0\right).
    \end{split}
\end{equation}
   Here the currents are defined with respect to $\Psi$.
\end{lemma}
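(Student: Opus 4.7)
The plan is to combine Lemma~\ref{wowwhatanestimate} with the Fourier-space positivity of Lemma~\ref{itsformallynoactuallypositive}, dispose of the $\int\chi^2\int_{\Sigma_s}\mathbf{J}^K$ term produced on the right of Lemma~\ref{wowwhatanestimate} via Lemma~\ref{onlyaverageKest}, and then promote the resulting time-averaged $(|T\Psi_\tau|^2+|\Phi\Psi_\tau|^2)$-control on $\mathrm{supp}(\xi)$ to full $\mathbf{J}^N$-flux control through a wave-equation-based integration-by-parts.

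Concretely, I first apply Lemma~\ref{wowwhatanestimate} to $\Psi_\tau$. Its second LHS integrand is precisely of the bilinear form treated by Lemma~\ref{itsformallynoactuallypositive}: with $h(r)=2Mar/\Pi$ (which satisfies $h\leq a/(2Mr_+)$ by Lemma~\ref{somemono}) and weight $w(r,\theta)=\rho^{-1}\sqrt{\Pi/\Delta}\xi(r)$, that lemma provides the lower bound $c\int\chi_t^2\xi(|T\Psi_\tau|^2+|\Phi\Psi_\tau|^2)$ modulo an $A_{\rm high}^{-1}$-small sup error of the same type. The Lagrangian correction $\rho\sqrt{\Delta/\Pi}\chi_t^2\xi\cdot\tfrac{1}{4}\Box_g(\rho\sqrt{\Delta/\Pi}\chi_t^2\xi)|\Psi_\tau|^2$ has a pointwise bounded coefficient, and Lemma~\ref{nothingbyT} converts the resulting $\int\chi_t^2|\Psi_\tau|^2$ into $A_{\rm high}^{-2}\int\chi_t^2|T\Psi_\tau|^2$ plus an $A_{\rm high}^{-2}$-small sup term, all absorbable. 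The $\int\chi^2\int_{\Sigma_s}\mathbf{J}^K_\mu n^\mu$ term on the right of Lemma~\ref{wowwhatanestimate} is in turn handled via Lemma~\ref{onlyaverageKest}, yielding $E_0[\psi_\tau](0)$ (through Theorem~\ref{extendtoeverywhere}) together with the $\mathscr{F}$- and $\tilde{\mathscr{F}}$-dependent entries already present on the right of the target inequality; the $O(\epsilon)E_{0,r\geq r_++\delta_0/2}[\Psi_\tau]$ error is bounded by $\epsilon\sup\int\chi^2 E_0[\psi_\tau]$ and absorbed into the $A_{\rm high}^{-1}$ term by Convention~\ref{episalwayssmall}.

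At this stage one controls $\int\chi_t^2\int(1-\xi)\mathbf{J}^K_\mu n^\mu_{\Sigma_s}+\int\chi_t^2\int\xi(|T\Psi_\tau|^2+|\Phi\Psi_\tau|^2)$ by the stated RHS, but $\mathbf{J}^N_\mu n^\mu_{\Sigma_s}$ additionally involves $|\partial_r\Psi_\tau|^2$ and $|\partial_\theta\Psi_\tau|^2$. To recover these, the plan is to exploit that on $\mathrm{supp}(\xi)=\{r\geq r_++\delta_0\}$ the principal coefficients $g^{rr}\sim\Delta/\rho^2$ and $g^{\theta\theta}\sim 1/\rho^2$ of $\Box_\tau$ (for $g\in\mathscr{A}_{\epsilon,r_{\rm low},R_{\rm high}}$ with $\epsilon$ small) are bounded strictly below. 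Multiplying $\Box_\tau\Psi_\tau=\mathscr{F}+\tilde{\mathscr{F}}$ by $\chi_t^2\xi\overline{\Psi_\tau}$, integrating over $\mathcal{M}_K$, and integrating by parts in the spatial variables yields
\[
\int\chi_t^2\xi\!\left(|\partial_r\Psi_\tau|^2+|\partial_\theta\Psi_\tau|^2\right)\lesssim\int\chi_t^2\xi(|T\Psi_\tau|^2+|\Phi\Psi_\tau|^2)+\int\chi_t^2|\mathscr{F}+\tilde{\mathscr{F}}|^2+(\text{lower order}),
\]
where the lower order remainders—coming from differentiation of $\xi$ and $\chi_t^2$ and from subprincipal coefficients of $\Box_\tau$—are handled via Cauchy--Schwarz together with Lemma~\ref{nothingbyT}. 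Since on $\mathrm{supp}(\xi)$ the flux $\mathbf{J}^N_\mu n^\mu_{\Sigma_s}$ is pointwise equivalent to $|T\Psi_\tau|^2+|\Phi\Psi_\tau|^2+|\partial_r\Psi_\tau|^2+|\partial_\theta\Psi_\tau|^2$ (with $N$ uniformly timelike and $\Delta>0$ there), the desired estimate then follows.

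The hard part will be the consistent bookkeeping of the various inhomogeneous terms: in particular the commutator $[\Box_\tau,\mathcal{P}_{HT}]\psi_\tau$ hidden inside $\mathscr{F}$, the extension-error $\tilde{\mathscr{F}}$ contributions, and the $\epsilon$-perturbation of $\Box_\tau$ from $\Box_{g_K}$, so that every term produced along the way matches an entry on the right of Lemma~\ref{itallstartsherewow} without generating uncontrollable top-order energies; pseudolocality of $\mathcal{P}_{HT}$ (Lemma~\ref{pseudolocforthewin}) together with $\epsilon\ll A_{\rm high}^{-1}$ is what will make the accounting close.
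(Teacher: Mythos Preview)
Your strategy through Lemmas~\ref{wowwhatanestimate}, \ref{itsformallynoactuallypositive}, \ref{somemono}, and~\ref{nothingbyT} tracks the paper closely. The divergence is in the final step, where you recover the $|\partial_r\Psi_\tau|^2+|\partial_\theta\Psi_\tau|^2$ contributions to $\xi\,\mathbf{J}^N_\mu n^\mu_{\Sigma_s}$.

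The paper's route here is purely algebraic and avoids your extra wave-equation integration by parts. Formula~\eqref{JKfluxFull} shows that $\mathbf{J}^K_\mu n^\mu_{\Sigma_s}$ already contains $|\partial_r\Psi_\tau|^2$ and $|\partial_\theta\Psi_\tau|^2$ with positive coefficients on $\mathrm{supp}(\xi)$; the only sign-indefinite piece is the $T\Psi_\tau\cdot\Phi\Psi_\tau$ cross term. Hence there is a small $\tilde\delta>0$ with $|T\Psi_\tau|^2+|\Phi\Psi_\tau|^2+\tilde\delta\,\mathbf{J}^K_\mu n^\mu_{\Sigma_s}\gtrsim\mathbf{J}^N_\mu n^\mu_{\Sigma_s}$ pointwise on $\mathrm{supp}(\xi)$. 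The paper then adds a small multiple of Lemma~\ref{onlyaverageKest} (which controls $\int\chi_t^2\int\mathbf{J}^K_\mu n^\mu$ at the \emph{late} time $t$) to the output of Lemma~\ref{wowwhatanestimate}, and the full $\mathbf{J}^N$-flux drops out without generating any new inhomogeneous terms.

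Your integration-by-parts approach can be made to work, but as written it produces $\int\chi_t^2|\mathscr{F}+\tilde{\mathscr{F}}|^2$, which is \emph{not} on the right-hand side of the lemma. To match the statement you must retain the bilinear form $\mathrm{Re}\int\chi_t^2\xi\,\overline{\Psi_\tau}(\mathscr{F}+\tilde{\mathscr{F}})$ rather than Cauchy--Schwarz it, and even then the weight you obtain differs from the specific $\rho\sqrt{\Delta/\Pi}$ in the lemma (harmless downstream, but not an exact match). Separately, you invoke Lemma~\ref{onlyaverageKest} for the wrong purpose: it does not ``dispose of'' the \emph{initial} averaged $\mathbf{J}^K$-flux on the right of Lemma~\ref{wowwhatanestimate} --- that term is handled via Lemma~\ref{totheoriginal} together with Theorem~\ref{extendtoeverywhere} (cf.~\eqref{removeHT}). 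In the paper, Lemma~\ref{onlyaverageKest} instead supplies the small $\tilde\delta\,\mathbf{J}^K$ contribution at time $t$ that feeds into the algebraic step above.
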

\begin{proof}

Note first that, using Lemma~\ref{somemono} in conjunction with Lemma~\ref{itsformallynoactuallypositive} yields
\[\begin{split}
&\int_{\mathcal{M}_K}{\rho \sqrt{\frac{\Delta}{\Pi}}}\chi_t^2\xi\Bigg(\rho^{-2}\frac{\Pi}{\Delta}{\rm Re}\left(\left(T\Psi_{\tau} + \frac{a}{2Mr_+}\Phi\Psi_{\tau}\right)\left(\overline{T\Psi_{\tau} + \frac{2Mar}{\Pi}\Phi\Psi_{\tau}}\right) \right) \\ &\geq c\int_{\mathcal{M}_K}\chi_t^2\xi\left(|T\Psi_{\tau}|^2 + |\Phi\Psi_{\tau}|^2\right) - CA_{\rm high}^{-1}\sup_{t^* \in \mathbb{R}}\int_{\mathcal{R}\left(t^*-1/2, t^*+1/2\right)}\xi\left( |T\Psi_\tau|^2 + |\Phi \Psi_\tau|^2\right).
\end{split}\]

 Due to \eqref{JKfluxFull}, there is a small constant $\tilde\delta$ so that 
\[
|T\Psi_\tau|^2 + |\Phi \Psi_\tau|^2 + \tilde\delta \mathbf{J}^K_{\mu}n^{\mu}_{\Sigma_t} \gtrsim \mathbf{J}^N_{\mu}n^{\mu}_{\Sigma_t}
\]
on the support of $\xi$.

We now add a sufficiently small constant times the estimate of Lemma~\ref{onlyaverageKest} to the estimate from Lemma~\ref{wowwhatanestimate}. Then we may apply Lemma~\ref{nothingbyT} and the fact that $\epsilon$ is small enough, to obtain that 
\begin{equation}\label{blahbakbh2342}
    \begin{split}
   & {\rm sup}_{t \in (10,\tau-10)}\int_{-\infty}^{\infty}\chi_t^2(s)\left(\int_{\Sigma_s}\left[\left(1-\xi\right)\mathbf{J}^K_{\mu}n^{\mu}_{\Sigma_t} + \xi \mathbf{J}^N_{\mu}n^{\mu}_{\Sigma_t}\right]\right)\, ds \lesssim
\\&\qquad \int_{-\infty}^{\infty}\chi^2(s)\left(\int_{\Sigma_s}\left[\left(1-\xi\right)\mathbf{J}^K_{\mu}n^{\mu}_{\Sigma_s} + \xi \mathbf{J}^N_{\mu}n^{\mu}_{\Sigma_s}\right]\right)\, ds 
\\ &\qquad + {\rm sup}_{t \in (10,\tau-10)}\left|\int_{\mathcal{M}_K}\tilde{\chi}^2_t{\rm Re}\left(\mathscr{F}\overline{K\Psi_{\tau}}\right)\right|+ {\rm sup}_{t \in (10,\tau-10)}\left|\int_{\mathcal{M}_K}{\rho \sqrt{\frac{\Delta}{\Pi}}}\chi_t^2\xi{\rm Re}\left(\mathscr{F}\overline{\Psi_{\tau}}\right)\right|
\\ &\qquad + {\rm sup}_{t \in (10,\tau-10)}\left|\int_{\mathcal{M}_K}\tilde{\chi}_t^2{\rm Re}\left(\tilde{\mathscr{F}}\overline{K\Psi_{\tau}}\right)\right|+ {\rm sup}_{t \in (10,\tau-10)}\left|\int_{\mathcal{M}_K}{\rho \sqrt{\frac{\Delta}{\Pi}}}\chi_t^2\xi{\rm Re}\left(\tilde{\mathscr{F}}\overline{\Psi_{\tau}}\right)\right|
\\ &\qquad + A^{-1}_{\rm high}\sup_{t \in\mathbb{R}}\int_{-\infty}^{\infty}\chi_t^2(s)\left(\int_{\Sigma_s} \xi \mathbf{J}^N_{\mu}n^{\mu}_{\Sigma_s}\right)\, ds.
    \end{split}
\end{equation}
To control the first term, we use Lemma~\ref{totheoriginal} and Theorem~\ref{extendtoeverywhere} concerning the extension operator $\mathscr{E}$:
\begin{equation}\label{removeHT}
\begin{split}
& \int_{-\infty}^{\infty}\chi^2(s)\left(\int_{\Sigma_s}\left[\left(1-\xi\right)\mathbf{J}^K_{\mu}n^{\mu}_{\Sigma_s} + \xi \mathbf{J}^N_{\mu}n^{\mu}_{\Sigma_s}\right]\right)\, ds \lesssim \int_{-\infty}^{\infty}\chi^2(s) E_0[\Psi_\tau](s) ds 
\\ & \lesssim \int_{-\infty}^{\infty}\chi^2(s) E_0[\psi_\tau](s) +A_{\rm high}^{-2}\sup_{t^*\in\mathbb{R}}\int_{t^*-1/2}^{t^*+1/2} E_0[\psi_\tau](s) ds \lesssim E_0[\psi_\tau](0) 
\\ & +A_{\rm high}^{-1} \sup_{t > 10}\int_{-\infty}^{\infty}\chi_t^2(s)E_0\left[\psi_{\tau}\right]\left(s\right)\, ds
\end{split}
\end{equation}
while for the last term we simply note that Theorem~\ref{extendtoeverywhere} implies that
\[
A^{-1}_{\rm high} \sup_{t \in\mathbb{R}}\int_{-\infty}^{\infty}\chi_t^2(s)\left(\int_{\Sigma_s} \xi \mathbf{J}^N_{\mu}n^{\mu}_{\Sigma_s}\right)\, ds \lesssim 
A_{\rm high}^{-1} \left(\sup_{t > 10}\int_{-\infty}^{\infty}\chi_t^2(s)E_0\left[\psi_{\tau}\right]\left(s\right)\, ds + E_0\left[\psi_{\tau}\right]\left(0\right)\right)
\]

\end{proof}

\subsection{Higher order commutation, elliptic estimates, and the redshift estimates}
Next, we observe that we may commute our estimate from Lemma~\ref{itallstartsherewow} with $T$.
\begin{lemma}\label{Tgoesthroughwow}Let $\psi_{\tau}$ be as in the statement of Theorem~\ref{actuallywehatweneedformain}.  Then, for any $k \geq 0$, we have 
\begin{equation}\label{whatactcommwouldliketoshowihope}
    \begin{split}
   & {\rm sup}_{t\in (10,\tau-10)}\int_{-\infty}^{\infty}\chi_t^2(s)\left(\int_{\Sigma_t}\left[\left(1-\xi\right)\mathbf{J}^K_{\mu}\left[T^k\Psi_{\tau}\right]n^{\mu}_{\Sigma_t} + \xi \mathbf{J}^N_{\mu}\left[T^k\Psi_{\tau}\right]n^{\mu}_{\Sigma_t}\right]\right)\, ds \lesssim
 \\ & + {\rm sup}_{t \in (10,\tau-10)}\left|\int_{\mathcal{M}_K}\tilde{\chi}_t^2{\rm Re}\left(T^k\mathscr{F}\overline{KT^k\Psi_{\tau}}\right)\right|+ {\rm sup}_{t \in (10,\tau-10)}\left|\int_{\mathcal{M}_K}{\rho \sqrt{\frac{\Delta}{\Pi}}}\chi_t^2\xi{\rm Re}\left(T^k\mathscr{F}\overline{T^k\Psi_{\tau}}\right)\right|
  \\ & + {\rm sup}_{t \in (10,\tau-10)}\left|\int_{\mathcal{M}_K}\tilde{\chi}_t^2{\rm Re}\left(T^k\tilde{\mathscr{F}}\overline{KT^k\Psi_{\tau}}\right)\right|+ {\rm sup}_{t \in (10,\tau-10)}\left|\int_{\mathcal{M}_K}{\rho \sqrt{\frac{\Delta}{\Pi}}}\chi_t^2\xi{\rm Re}\left(T^k\tilde{\mathscr{F}}\overline{T^k\Psi_{\tau}}\right)\right|
 \\ & +A_{\rm high}^{-1} \sup_{t > 10}\int_{-\infty}^{\infty}\chi_t^2(s)E_k\left[\psi_{\tau}\right]\left(s\right)\, ds + E_k\left[\psi_{\tau}\right]\left(0\right).
    \end{split}
\end{equation}
\end{lemma}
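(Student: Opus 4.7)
The plan is to apply the estimate of Lemma~\ref{itallstartsherewow} with $\Psi_\tau$ replaced by $T^k\Psi_\tau$. The first ingredient is that, since $\mathcal{L}_Tg=\mathcal{L}_Tg_K=0$, both $\Box_g$ and $\Box_{g_K}$ commute with $T$, so
\[
[\Box_\tau,T] = (T\eta_\tau)\bigl(\Box_{g_K}-\Box_g\bigr),
\]
and iterating the Leibniz rule shows that $[\Box_\tau,T^k]\Psi_\tau$ is a finite sum of terms of the form $(T^{j+1}\eta_\tau)(\Box_{g_K}-\Box_g)T^{k-j-1}\Psi_\tau$ with $0\le j\le k-1$. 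Each such term contains at least one factor $T^{j+1}\eta_\tau$ with $j+1\ge 1$, so the entire commutator $\mathcal{R}_k\doteq[\Box_\tau,T^k]\Psi_\tau$ is supported in $t^*\in[-1,0]\cup[\tau,\tau+1]$. Next, since $\mathcal{P}_{HT}$ is a Fourier multiplier in $(\omega,m)$, it commutes with $T$, so $T^k\Psi_\tau=\mathcal{P}_{HT}(T^k\psi_\tau)$ and
\[
\Box_\tau(T^k\Psi_\tau) \;=\; T^k\mathscr{F} + T^k\tilde{\mathscr{F}} + \mathcal{R}_k.
\]

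Given this, I would repeat the proof of Lemma~\ref{itallstartsherewow} verbatim, with $\Psi_\tau$ replaced by $T^k\Psi_\tau$ and $\mathscr{F}+\tilde{\mathscr{F}}$ replaced by $T^k\mathscr{F}+T^k\tilde{\mathscr{F}}+\mathcal{R}_k$. The $K$-energy estimate, its Lagrangian correction, and the positivity provided by Lemmas~\ref{itsformallynoactuallypositive}, \ref{somemono}, \ref{nothingbyT}, and \ref{totheoriginal} go through unchanged, producing the first four terms on the right-hand side of~\eqref{whatactcommwouldliketoshowihope} (with $T^k\mathscr{F}$, $T^k\tilde{\mathscr{F}}$ in the role of $\mathscr{F}$, $\tilde{\mathscr{F}}$). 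The surviving $A_{\rm high}^{-1}$ term becomes $A_{\rm high}^{-1}\sup_t\int\chi_t^2 E_0[T^k\psi_\tau]\lesssim A_{\rm high}^{-1}\sup_t\int\chi_t^2 E_k[\psi_\tau]$, and the initial-data contributions collapse to $E_k[\psi_\tau](0)$ by item~(4) of Theorem~\ref{extendtoeverywhere} together with Lemma~\ref{totheoriginal}.

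What remains is the error generated by $\mathcal{R}_k$, which appears in the two bilinear expressions
\[
\int_{\mathcal{M}_K}\tilde{\chi}_t^{\,2}\,\mathrm{Re}\bigl(\mathcal{R}_k\,\overline{K(T^k\Psi_\tau)}\bigr)\quad\text{and}\quad \int_{\mathcal{M}_K}\rho\sqrt{\tfrac{\Delta}{\Pi}}\chi_t^{\,2}\xi\,\mathrm{Re}\bigl(\mathcal{R}_k\,\overline{T^k\Psi_\tau}\bigr).
\]
For $t\in(10,\tau-10)$, both $\chi_t$ and $\tilde{\chi}_t$ are supported in $[-2,t+2]\subset[-2,\tau-8]$, so the portion of $\mathcal{R}_k$ supported in $[\tau,\tau+1]$ contributes nothing. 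On the remaining interval $[-1,0]$, each term in $\mathcal{R}_k$ has the form $(T^{j+1}\eta_\tau)(\Box_{g_K}-\Box_g)T^{k-j-1}\Psi_\tau$, whose $L^2$-norm is bounded by a constant times $\epsilon\bigl(\|\partial^2 T^{k-j-1}\Psi_\tau\|_{L^2}+\|\partial T^{k-j-1}\Psi_\tau\|_{L^2}\bigr)$, because the $C^1$-closeness of $g$ to $g_K$ in Definition~\ref{themetricclass} makes the coefficients of $\Box_{g_K}-\Box_g$ of size $O(\epsilon)$. Cauchy--Schwarz combined with item~(4) of Theorem~\ref{extendtoeverywhere} and Lemma~\ref{totheoriginal} then bounds these error terms by $\epsilon\, E_k[\psi_\tau](0)\lesssim E_k[\psi_\tau](0)$, producing the last summand of~\eqref{whatactcommwouldliketoshowihope}.

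The main delicate point is the derivative count: a term in $\mathcal{R}_k$ can carry as many as $k+1$ derivatives on $\Psi_\tau$ (namely $k-1$ from $T^{k-1}$ plus $2$ from $\Box_{g_K}-\Box_g$), which is \emph{precisely} the level of $E_k$. Nothing survives at the $E_{k+1}$ level because at least one $T$ must land on $\eta_\tau$; once this observation is in place, the argument closes cleanly by the $C^1$-smallness of the coefficients of $\Box_{g_K}-\Box_g$. No integration by parts in $\mathcal{R}_k$ is needed, which is fortunate as we do not have $C^2$ control on $g$.
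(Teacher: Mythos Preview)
Your proof is correct and follows the same basic approach as the paper: apply Lemma~\ref{itallstartsherewow} to $T^k\Psi_\tau$ in place of $\Psi_\tau$, using that $T$ commutes with both $\Box_g$ and $\Box_{g_K}$. In fact you are more careful than the paper's own proof, which simply asserts that $[\Box_\tau,T]=0$ on the supports of $\tilde{\chi}_t$ and $\chi_t$; this is literally true for $\chi_t$ (supported in $[t-2,t+2]\subset[8,\tau-8]$ where $\eta_\tau\equiv 1$) but not for $\tilde{\chi}_t$, whose support reaches down to $t^*=-2$ and hence overlaps $[-1,0]$ where $\eta_\tau$ transitions. Your explicit treatment of the commutator $\mathcal{R}_k$ on $[-1,0]$ and its absorption into $E_k[\psi_\tau](0)$ via the $\epsilon$-smallness of $\Box_{g_K}-\Box_g$ and the extension estimates is exactly what is needed to close this small gap. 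One minor remark: the $\chi_t^2\xi$ bilinear term receives no contribution from $\mathcal{R}_k$ at all (since $\chi_t$ is supported away from $[-1,0]$), so only the $\tilde{\chi}_t^2$ term requires the argument you give.
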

\begin{proof}

We simply apply Lemma~\ref{itallstartsherewow} and note that, since $g$ is stationary, we have $$[\Box_{\tau}, T] = [\Box_g, T] = 0$$
on the supports of $\tilde{\chi}_t$ and $\chi_t$. 

\end{proof}

Away from the horizon, the vector fields $T$ and $\Phi$ always span a timelike direction. Thus, keeping in mind also Lemma~\ref{phibyT}, we may use Lemma~\ref{Tgoesthroughwow} to control $E_{k,\{r \geq r_++\delta\}}$ for any $\delta > 2\delta_0$.
\begin{lemma}
Let $\psi_{\tau}$ be as in the statement of Theorem~\ref{actuallywehatweneedformain}. Then we have, for every  $k \geq 1$, and $\delta > 2\delta_0$ 
\begin{equation}
    \begin{split}
   & \qquad{\rm sup}_{t \in (10,\tau-10)}\int_{-\infty}^{\infty}\chi_t^2(s)E_{k,\{r \geq r_++\delta\}}\left[\Psi_{\tau}\right]\left(s\right)\, ds \lesssim_{k,\delta} \\& \qquad {\rm sup}_{t  \in (10,\tau-10)}\int_{-\infty}^{\infty}\chi_t^2(s)E_{k-2,\{r \geq r_++\delta/2\}}\left[\mathscr{F}\right]\left(s\right)\, ds
    \\& \qquad +{\rm sup}_{t \in (10,\tau-10)}\int_{-\infty}^{\infty}\chi_t^2(s)E_{k-2,\{r \geq r_++\delta/2\}}\left[\tilde{\mathscr{F}}\right]\left(s\right)\, ds 
    \\ & \qquad + \sup_{t \in (10,\tau-10)}\int_{\mathcal{M}_K \cap \{r \geq r_++\delta/2\}}\chi_t^2(s)\left|\mathscr{F}\right|^2 ds
   + \sup_{t \in (10,\tau-10)}\int_{\mathcal{M}_K \cap \{r \geq r_++\delta/2\}}\chi_t^2(s)\left|\tilde{\mathscr{F}}\right|^2 ds
 \\ &\qquad + \sup_{t \in (10,\tau-10)}\left|\int_{\mathcal{M}_K}\tilde{\chi}_t^2{\rm Re}\left(T^k\mathscr{F}\overline{KT^k\Psi_{\tau}}\right)\right|+ \sup_{t \in (10,\tau-10)}\left|\int_{\mathcal{M}_K}{\rho \sqrt{\frac{\Delta}{\Pi}}}\chi_t^2\xi{\rm Re}\left(T^k\mathscr{F}\overline{T^k\Psi_{\tau}}\right)\right|
  \\ &\qquad + \sup_{t \in (10,\tau-10)}\left|\int_{\mathcal{M}_K}\tilde{\chi}_t^2{\rm Re}\left(T^k\tilde{\mathscr{F}}\overline{KT^k\Psi_{\tau}}\right)\right|+ \sup_{t \in (10,\tau-10)}\left|\int_{\mathcal{M}_K}{\rho \sqrt{\frac{\Delta}{\Pi}}}\chi_t^2\xi{\rm Re}\left(T^k\tilde{\mathscr{F}}\overline{T^k\Psi_{\tau}}\right)\right|
 \\ &\qquad +A_{\rm high}^{-1} \sup_{t > 10}\int_{-\infty}^{\infty}\chi_t^2(s)E_k\left[\psi_{\tau}\right]\left(s\right)\, ds + E_k\left[\psi_{\tau}\right]\left(0\right),
    \end{split}
\end{equation}
where if $k = 1$, we drop the terms involving $E_{k-2}$.
\end{lemma}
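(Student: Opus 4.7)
The plan is to combine Lemma~\ref{Tgoesthroughwow}, applied with $T^j$-commutation for each $0\le j\le k$, with interior elliptic estimates on constant-$t^*$ slices. First I would observe that, since $\delta>2\delta_0$, the region $\{r\ge r_++\delta\}$ lies in the set $\{\xi\equiv 1\}$, so the integrand $(1-\xi)\mathbf{J}^K[T^j\Psi_\tau]n+\xi\mathbf{J}^N[T^j\Psi_\tau]n$ reduces there to $\mathbf{J}^N[T^j\Psi_\tau]n$, which is coercive because $N$ is timelike. Consequently Lemma~\ref{Tgoesthroughwow}, summed over $0\le j\le k$, gives control of
\[
\sup_{t\in(10,\tau-10)}\int_{-\infty}^{\infty}\chi_t^2(s)\sum_{j\le k}E_{0,\{r\ge r_++\delta\}}[T^j\Psi_\tau](s)\,ds
\]
by the desired right-hand side. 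In particular we obtain $L^2_{t,x}$ control of every mixed derivative $\partial_\mu T^j\Psi_\tau$ with $j\le k$ and $\partial_\mu\in\{T,\Phi,\partial_r,\partial_\theta\}$.

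To promote these mixed bounds to the purely spatial derivatives present in $E_k[\Psi_\tau]$ (such as $\partial_r^2\Psi_\tau$ or $\partial_\theta^2\Psi_\tau$), I would use the wave equation itself. Since $[T,\Box_\tau]=0$ on the support of $\tilde\eta_\tau$, we have $\Box_\tau(T^j\Psi_\tau)=T^j(\mathscr{F}+\tilde{\mathscr{F}})$. Writing the principal part of $\Box_g$ in Boyer--Lindquist form,
\[
\Box_g u = g^{rr}\partial_r^2 u + g^{\theta\theta}\partial_\theta^2 u + g^{\phi\phi}\Phi^2 u + g^{tt}T^2 u + 2g^{t\phi}T\Phi u + (\text{first order}),
\]
and observing that $g^{rr}=\Delta/\rho^2>0$ and $g^{\theta\theta}=\rho^{-2}>0$ are uniformly bounded below on $\{r\ge r_++\delta/2\}$, the two-dimensional operator $g^{rr}\partial_r^2+g^{\theta\theta}\partial_\theta^2$ in $(r,\theta)$ is uniformly elliptic there. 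Standard interior elliptic regularity, applied at each fixed pair $(t,\phi)$ with interior region $\{r\ge r_++\delta\}$ and exterior region $\{r\ge r_++\delta/2\}$ and then integrated in $(t,\phi)$ against $\chi_t^2$, gains two spatial derivatives at the cost of $L^2_{t,x}$ norms of $T^j(\mathscr{F}+\tilde{\mathscr{F}})$, of $T^{j+2}\Psi_\tau$, of $T^{j+1}\Phi\Psi_\tau$, of $T^j\Phi^2\Psi_\tau$, and of lower-order first derivatives.

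I would then iterate the elliptic step: apply it in turn to $T^{k-1}\Psi_\tau,T^{k-2}\Psi_\tau,\dots$ to build up spatial derivatives of increasing order, combining at each step with the output of Lemma~\ref{Tgoesthroughwow}. The $\Phi$-derivatives appearing in the right-hand side of the elliptic estimate are converted to $T$-derivatives using Lemma~\ref{phibyT}, which applies because $\Psi_\tau=\mathcal{P}_{HT}\psi_\tau$; the resulting correction terms are $A_{\rm high}^{-1}$-suppressed and can be absorbed into the $A_{\rm high}^{-1}\sup_t\int\chi_t^2 E_k[\psi_\tau]$ contribution already present on the right-hand side. The $L^2_{t,x}$ norms of $T^j(\mathscr{F}+\tilde{\mathscr{F}})$ for $j\le k-1$ are bounded by $\int\chi_t^2 E_{k-2}[\mathscr{F}+\tilde{\mathscr{F}}]$, which accounts for the $E_{k-2}$-terms in the statement; when $k=1$ no spatial iteration is needed and only raw $L^2$-norms of $\mathscr{F},\tilde{\mathscr{F}}$ enter, explaining the $k=1$ exception. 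The $\tilde{\mathscr{F}}$-contributions throughout are absorbed into $E_k[\psi_\tau](0)$ via Theorem~\ref{extendtoeverywhere}, item~(3)(c).

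The main obstacle will be the elliptic step inside the ergoregion, where $g^{\phi\phi}$ changes sign and so $\Phi^2$ cannot be placed into the elliptic principal part. The resolution is exactly to move $\Phi^2 T^j\Psi_\tau$ to the right-hand side of the elliptic equation and to control it via Lemma~\ref{phibyT} (and, for the purely zeroth-order piece, Lemma~\ref{nothingbyT}). This is the sole point at which the trapping projection $\mathcal{P}_{HT}$ enters the argument, and it is precisely what makes the whole scheme work; the remainder is careful bookkeeping of the interplay between the two-derivative elliptic gain, the count of $T$-commutations, and the $\Phi$-to-$T$ trade-off.
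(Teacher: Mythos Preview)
Your strategy—Lemma~\ref{Tgoesthroughwow} iterated over $T^j$ for $j\le k$, then spatial elliptic estimates, with the $\Phi$-to-$T$ conversion of Lemma~\ref{phibyT}—is exactly the paper's. There is, however, a technical gap in your elliptic step near the poles $\theta\in\{0,\pi\}$: the two-dimensional operator $g^{rr}\partial_r^2+g^{\theta\theta}\partial_\theta^2$ in $(r,\theta)$ cannot by itself yield the weighted $\sin^{-1}\theta\,\partial_\phi$-control implicit in $|\slashed\nabla\Psi_\tau|$, and plain $L^2$ control of $\Phi\Psi_\tau$ from Lemma~\ref{phibyT} does not carry the required $\sin^{-2}\theta$ weight. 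The fix is to keep the full spherical Laplacian on the elliptic side: writing
\[
\Box_{g}u \;=\; \frac{\Delta}{\rho^2}\partial_r^2u + \frac{1}{\rho^2}\mathring\Delta u + g^{tt}T^2u + 2g^{t\phi}T\Phi u - \frac{a^2}{\Delta\rho^2}\Phi^2u + (\text{first order}),
\]
the spatial principal part $\frac{\Delta}{\rho^2}\partial_r^2+\frac{1}{\rho^2}\mathring\Delta$ is uniformly elliptic on $\{r\ge r_++\delta/2\}\times\mathbb{S}^2$, and only the \emph{bounded}-coefficient term $-\tfrac{a^2}{\Delta\rho^2}\Phi^2$ is moved to the right and handled via Lemma~\ref{phibyT}. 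The ergoregion is a red herring here: it is the pole degeneracy of $g^{\phi\phi}$, not its sign change, that obstructs your two-dimensional formulation. With this correction your argument goes through. (A minor point: there is no need to absorb the $\tilde{\mathscr F}$-contributions into $E_k[\psi_\tau](0)$, since they appear explicitly on the right-hand side of the statement.)
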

\begin{proof}This is an immediate consequence of Lemma~\ref{Tgoesthroughwow}, Lemma~\ref{phibyT}, and the following elliptic estimate, which holds for any function $f$ and a sufficiently large constant $A$ and $k \geq 1$:
\begin{equation}\label{thisistheelltahtweuse}\begin{split}
&E_{k,\{r \geq r_++\delta\}}\left[f\right](t)\, \lesssim_k E_{k-2,\{r \geq r_++\delta/2\}}\left[\Box_{g_K}f\right]\left(t\right) + \int_{\Sigma_t \cap \{r \geq _+ + \delta/2\}}\left|\Box_{g_K}f\right|^2
\\ &\qquad + \sum_{l=0}^{k-1}\sum_{i+j= 0}^{k-l}E_{l,\{r_++\delta/2 \leq r \leq A\}}\left[T^i\Phi^jf\right]\left(t\right) + \sum_{j=0}^{k-1}\sum_{i= 0}^{k-j}E_{j, \{r \geq r_++ A\}}\left[T^if\right]\left(t\right),
\end{split}\end{equation}
where, if $k = 1$, we drop the first term on the right hand side of~\eqref{thisistheelltahtweuse}. 

Indeed,  we can write
\[
E_1\Psi_{\tau} = \Box_{g_K} \Psi_{\tau} + O(1) \sum_{i+j=1}^2 T^i\Phi^j \psi_{\tau} ,\quad r_++\delta/2 \leq r \leq \frac{A}2 
\]
\[
E_2\Psi_{\tau} = \Box_{g_K}\Psi_{\tau} + O(1) \sum_{i=1}^2 T^i \psi_{\tau} ,\quad \frac{A}4 \leq r 
\]
for elliptic operators $E_1$ and $E_2$  along $\Sigma_s$. 

The conclusion follows since in the region $t \in (10,\tau-10)$ we have that $\Box_{\tau} = \Box_{g_K}$.
\end{proof}

Finally, we may establish good estimates near the horizon by exploiting both the redshift multiplier and redshift commutator of~\cite{redshiftSchw,claylecturenotes}.
\begin{lemma}\label{redsfhiftit}Let $\psi_{\tau}$ be as in the statement of Theorem~\ref{actuallywehatweneedformain}.  Then, for any $k \geq 1$, we have 

\begin{equation}\label{EH}
    \begin{split}
   & {\rm sup}_{t \in (10,\tau-10)}\int_{-\infty}^{\infty}\chi_t^2(t)E_k\left[\Psi\right]\left(s\right)\, ds \lesssim_k {\rm sup}_{t \in (10,\tau-10)}\int_{-\infty}^{\infty}\chi_t^2(s)E_{k-2}\left[\mathscr{F}\right]\left(s\right)\, ds
   \\ & + \sup_{t \in (10,\tau-10)}\int_{\mathcal{M}_K }\chi_t^2(s)\left|\mathscr{F}\right|^2 + {\rm sup}_{t \in (10,\tau-10)}\left|\int_{\mathcal{M}_K}\tilde{\chi}_t^2{\rm Re}\left(T^k\mathscr{F}\overline{KT^k\Psi_{\tau}}\right)\right|
 \\ &+ \sup_{t > 0}\left|\int_{\mathcal{M}_K}{\rho \sqrt{\frac{\Delta}{\Pi}}}\chi_t^2\xi{\rm Re}\left(T^k\mathscr{F}T^k\overline{\Psi_{\tau}}\right)\right|
 +\sum_{j+l=1}^k\sup_{t \in (10,\tau-10) }\left|\int_{\mathcal{M}_K}\tilde{\chi}_t^2{\rm Re}\left(\left( T^jY^l\mathscr{F}\right)\overline{N\left(T^jY^l\Psi_{\tau}\right)}\right)\right|
 \\ &+A_{\rm high}^{-1} \sup_{t > 0}\int_{-\infty}^{\infty}\chi_t(s)E_k\left[\psi_{\tau}\right]\left(s\right)\, ds + E_k\left[\psi_{\tau}\right]\left(0\right)
 \\ &+{\sup}_{t \in (10,\tau-10)}\int_{-\infty}^{\infty}\chi_t^2(s)E_{k-2}\left[\tilde{\mathscr{F}}\right]\left(s\right)\, + \sup_{t \in (10,\tau-10)}\int_{\mathcal{M}_K }\chi_t^2(s)\left|\tilde{\mathscr{F}}\right|^2ds 
 \\ & + {\sup}_{t \in (10,\tau-10)}\left|\int_{\mathcal{M}_K}{\rho \sqrt{\frac{\Delta}{\Pi}}}\chi_t^2\xi{\rm Re}\left(T^k\tilde{\mathscr{F}}\overline{T^k\Psi_{\tau}}\right)\right|
+ {\sup}_{t \in (10,\tau-10)}\left|\int_{\mathcal{M}_K}\tilde{\chi}_t^2{\rm Re}\left(T^k\tilde{\mathscr{F}}\overline{KT^k\Psi_{\tau}}\right)\right|
 \\ & + \sum_{j+l=1}^k\sup_{t \in (10,\tau-10) }\left|\int_{\mathcal{M}_K}\tilde{\chi}_t^2{\rm Re}\left(\left( T^jY^l\tilde{\mathscr{F}}\right)\overline{N\left(T^jY^l\Psi_{\tau}\right)}\right)\right|,
    \end{split}
\end{equation}

where, if $k = 1$, we drop the terms involving $E_{k-2}$.
\end{lemma}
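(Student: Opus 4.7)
The plan is to complement the previous lemma, which controls the time-averaged energy $E_{k,\{r\ge r_+ + \delta\}}[\Psi_\tau]$ away from the horizon for any $\delta > 2\delta_0$, with a near-horizon estimate produced by the Dafermos--Rodnianski redshift multiplier $N$ together with the redshift commutation vector field $Y$, and then combine the two. Set $\Psi_{j,l} \doteq T^j Y^l \Psi_\tau$ for $1 \le j+l \le k$. Since $\mathcal{L}_T g = 0$ and $g = g_K$ in a neighborhood of the event horizon (Definition~\ref{themetricclass}), on the slab $\{t^* \in (10,\tau-10)\}$ where $\Box_\tau = \Box_g$ one has
\[
\Box_\tau \Psi_{j,l} \;=\; T^j Y^l(\mathscr F + \tilde{\mathscr F}) \;+\; T^j[\Box_{g_K},Y^l]\Psi_\tau \;+\; \mathrm{err}_{j,l},
\]
where $\mathrm{err}_{j,l}$ is supported in $r \in [r_{\rm low}, R_{\rm high}]$ and arises from the difference $\Box_g - \Box_\tau$. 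I would apply the divergence identity~\eqref{divIdentity} to this equation with multiplier $V = N$, $w = q = 0$, averaged against $\tilde{\chi}_t^2$ over $\mathcal{M}_K$.

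By property~\ref{redshiftest} of $N$, the bulk $\mathbf{K}^N[\Psi_{j,l}]$ dominates $\mathbf{J}^N_\mu[\Psi_{j,l}] N^\mu$ on $\{r \le r_+ + 10\delta_0\}$, producing coercive near-horizon control; the horizon flux is non-negative because $N$ is timelike. The Dafermos--Rodnianski commutation identity (see~\cite{redshiftSchw,claylecturenotes}) decomposes $[\Box_{g_K}, Y^l]\Psi_\tau$ near the horizon as a positive multiple of $Y^{l+1}\Psi_\tau$ (the red-shift contribution) plus terms of order at most $l$ in the regular frame $\{L,\underline{L},\slashed{\nabla}\}$. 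When paired against $N\Psi_{j,l}$, the top-order piece reinforces the bulk coercivity for $\delta_0$ sufficiently small, while the lower-order pieces are absorbed by inducting on $l$, the base case $l=0$ being supplied by Lemma~\ref{Tgoesthroughwow}. The error term $\mathrm{err}_{j,l}$ is controlled by Lemma~\ref{thatscoolintbyparts} to contribute at most $O(\epsilon)\int \chi_t^2 E_k[\Psi_\tau]\, ds$, absorbable for $\epsilon$ small.

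Away from the horizon, in $\{r \ge r_+ + \delta_0\}$, the bulk $\chi_t^2 \mathbf{K}^N[\Psi_{j,l}]$ is of the same order as $\chi_t^2 E_{j+l}[\Psi_\tau]$ and is controlled by the previous lemma. Adding the near-horizon $N$-current estimate to that estimate controls $\sup_t \int \chi_t^2 \sum_{j+l \le k} E_0[\Psi_{j,l}]$, which via elliptic estimates in $\{r \ge r_+ + \delta_0\}$ and direct coercivity near the horizon gives control of $\sup_t \int \chi_t^2 E_k[\Psi_\tau]$. Lemmas~\ref{nothingbyT} and~\ref{phibyT} absorb any $|\Psi_\tau|^2$ and $|\Phi\Psi_\tau|^2$ contributions into the $T$-derivative energy up to an $A_{\rm high}^{-1}$-loss, which is folded into the term $A_{\rm high}^{-1}\sup_t \int \chi_t^2 E_k[\psi_\tau]$ on the right-hand side. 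The case $k=1$ is simpler because no $E_{k-2}$ elliptic error appears and the $l$-induction has only one step.

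The main obstacle is verifying that the red-shift positivity survives after time-averaging against $\tilde{\chi}_t^2$ and through the interpolating operator $\Box_\tau$. Since $\eta_\tau \equiv 1$ on $J^+(\Sigma_0) \cap J^-(\Sigma_\tau)$, which contains the supports of $\chi_t$ and $\tilde{\chi}_t$ for $t \in (10,\tau-10)$, the commutator computation reduces to the unperturbed $[\Box_{g_K}, Y^l]$ near the horizon, and the standard Dafermos--Rodnianski argument goes through uniformly in $\tau$.
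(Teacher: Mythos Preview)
Your proposal is correct and follows essentially the same approach as the paper: the paper's proof is a single sentence stating that the lemma follows from a straightforward adaptation of the standard Dafermos--Rodnianski redshift multiplier and commutator argument (Section 3.3 of~\cite{claylecturenotes}) to the time-averaged setting, which is precisely the scheme you outline. Your write-up is more detailed than the paper's, but the underlying strategy---using the $N$-multiplier for near-horizon coercivity, the $Y$-commutation with induction on $l$, and combining with the previous lemma away from the horizon---is identical.
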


\begin{proof}This follows from a straightforward adaption of the now standard arguments for upgrading a degenerate boundedness statement to non-degenerate boundedness statement (see Section 3.3 of~\cite{claylecturenotes}) to the time-averaged setting.
\end{proof}
\begin{remark}If $F$ is supported in  a region $\{r \geq r_+ + \tilde{\delta}\}$ is it clear that we may drop the terms on the right hand side involving the vector field $N$ and $Y$.
\end{remark}

Finally, we may give the proof of Theorem~\ref{actuallywehatweneedformain}.
\begin{proof}
In view of Lemma~\ref{redsfhiftit}, it only remains to control the terms involving $\tilde{\mathscr{F}}$, namely the last three lines of \eqref{EH}. 

 For the terms involving $\chi_t^2$, we simply use Theorem~\ref{extendtoeverywhere}, Lemma~\ref{pseudolocforthewin} and Cauchy Schwarz. 

The last two terms are more difficult, since the support of $\tilde{\chi}_t$ can potentially be large.  Let us control the second to last term. We partition $I$, the support of $\tilde{\chi}_t$, into intervals $I_j$ with disjoint interiors so that $1/2\leq |I_j|\leq 1$. Using the fundamental theorem of calculus, we obtain
\[
\|T^k\tilde{\mathscr{F}}\|_{L^{\infty}(I_j)L^2(\Sigma_s)} \lesssim \int_{I_j} \|T^k\tilde{\mathscr{F}}(s)\|_{L^2(\Sigma_s)} + \|T^{k+1}\tilde{\mathscr{F}}(s)\|_{L^2(\Sigma_s)} ds
\]
We can thus estimate
\[\begin{split}
& \int_{I_j} \int_{\Sigma_s}\left|T^k\tilde{\mathscr{F}}\right|\left|K T^k \Psi_\tau\right| d\Sigma_s ds \leq \|T^k\tilde{\mathscr{F}}\|_{L^{\infty}(I_j)L^2(\Sigma_s)}
\int_{I_j} \|K T^k \Psi_\tau\|_{L^2(\Sigma_s)}
\\ & \lesssim \|T^k\tilde{\mathscr{F}}\|_{L^{\infty}(I_j)L^2(\Sigma_s)} \left(\int_{I_j} E_k[\Psi_{\tau}](s)ds\right)^{1/2}  
\end{split}\]

We now sum over $j$. We obtain
\begin{equation}\label{Ftilde}
\begin{split}
{\rm sup}_{t \in (10,\tau-10)}\left|\int_{\mathcal{M}_K}\tilde{\chi}_t^2{\rm Re}\left(T^k\tilde{\mathscr{F}}\overline{KT^k\Psi_{\tau}}\right)\right| \\ \lesssim \left(\int_I \|T^k\tilde{\mathscr{F}}(s)\|_{L^2(\Sigma_s)} + \|T^{k+1}\tilde{\mathscr{F}}(s)\|_{L^2(\Sigma_s)} ds\right) \left(\sup_{t^*>5}{\int_{t^*-1/2}^{t^*+1/2}}E_k[\Psi_{\tau}](s)ds\right)^{1/2}
\end{split}
\end{equation}
Using \eqref{L1disj} and Theorem~\ref{extendtoeverywhere}, we obtain
\[
\int_I \|T^k\tilde{\mathscr{F}}(s)\|_{L^2(\Sigma_s)} + \|T^{k+1}\tilde{\mathscr{F}}(s)\|_{L^2(\Sigma_s)} ds \lesssim 
A_{\rm high}^{-1} E_k\left[\psi\right]\left(0\right)\]
while \eqref{astartbutcandobetter} yields
\[
\left(\sup_{t^*>5}{\int_{t^*-1/2}^{t^*+1/2}}E_k[\Psi_{\tau}](s)ds\right)^{1/2} \lesssim \left(\sup_{t^*>5}{\int_{t^*-1/2}^{t^*+1/2}}E_k[\psi_{\tau}](s)ds\right)^{1/2}
\]
A similar argument controls the last term.

\end{proof}

\section{Proof of Theorem~\ref{theoc1pert}}\label{proofTheoc1}
In this section we will provide the proof of Theorem~\ref{theoc1pert}. We start with a useful remark.
\begin{remark}\label{axiandnotaxi}
For $\psi$ as in the statement of Theorem~\ref{theoc1pert}, we may always split $\psi = \psi_1 + \psi_2$ where $\Phi \psi_1 = 0$ and $\psi_1 = \frac{1}{2\pi}\int_0^{2\pi} \psi\, d\phi$. In view of the $L^2$-orthogonality of Fourier series, it suffices to establish Theorem~\ref{theoc1pert} separately for $\psi_1$ and $\psi_2$.
\end{remark}
For $\psi_1$, we have the following:
\begin{lemma}\label{easyifaxi}Let $\psi$ as in the statement of Theorem~\ref{theoc1pert} and assume additionally that $\Phi\psi = 0$. Then, for every positive integer $k \geq 1$, we have
\[\sup_{t^* \geq 0}E_k\left[\psi\right](t^*) \lesssim_k E_k\left[\psi\right]\left(0\right).\]
\end{lemma}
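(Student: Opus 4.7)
\medskip\noindent
\textbf{Proof plan.} The plan is a direct multiplier-based energy estimate which, thanks to axisymmetry, avoids the superradiance obstruction entirely and is essentially as clean as the Schwarzschild argument. The first step is to observe that since $\mathcal{L}_\Phi g = 0$ by Definition~\ref{themetricclass}, we have $[\Box_g,\Phi] = 0$, so the condition $\Phi\psi = 0$ on $\Sigma_0$ propagates to all of $J^+(\Sigma_0)$. In particular $T\psi = K\psi$ pointwise, where $K = T + \frac{a}{2Mr_+}\Phi$ is itself Killing for $g$ (as a constant linear combination of the two Killing generators).

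Next I would apply the divergence identity \eqref{divIdentity} in $\mathcal{R}(0,\tau)$ with the multiplier $V = T + \kappa N$ for a small $\kappa > 0$. Since $T$ is Killing, $\mathbf{K}^T = 0$, while $\mathbf{K}^{\kappa N}$ is pointwise non-negative on $[r_+, r_+ + 10\delta_0]$ by property~(\ref{redshiftest}) in the definition of $N$ and is supported in a compact region farther out where it can be absorbed. The horizon flux reduces to $\mathbf{J}^T_\mu K^\mu|_{\mathcal{H}^+} = |K\psi|^2 \geq 0$, using that $T\psi = K\psi$ and that $g(T,K)|_{\mathcal{H}^+} = 0$ (a direct computation on Kerr, which is exact here since $g = g_K$ near the horizon). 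Using~\eqref{normaltoSigmawhenrislargeenough} and the inverse metric~\eqref{inversemetric} together with $\Phi\psi = 0$, an explicit calculation for $r \geq r_+ + \delta_0$ gives
\[
\mathbf{T}_{\mu\nu}[g_K,\psi]\,T^\mu n_{\Sigma_\tau}^\nu \;=\; \frac{1}{2\rho}\sqrt{\frac{\Pi}{\Delta}}\,(T\psi)^2 \;+\; \frac{\Delta^{3/2}}{2\rho\sqrt{\Pi}}(\partial_r\psi)^2 \;+\; \frac{\sqrt{\Delta}}{2\rho\sqrt{\Pi}}(\partial_\theta\psi)^2 \;\geq\; 0,
\]
the would-be superradiant cross terms vanishing precisely because $\Phi\psi = 0$. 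Combining with $\kappa\mathbf{J}^N$ (which removes the degeneracy of the above weights as $r \to r_+$) gives a flux density on $\Sigma_\tau$ equivalent to the energy density in $E_0$. The error from $g - g_K$ in the intermediate region $\{r_{\rm low}\leq r\leq R_{\rm high}\}$, estimated via Lemma~\ref{thatscoolintbyparts}, is $O(\epsilon)E_0$ and absorbed for $\epsilon$ small. Grönwall then yields $\sup_{\tau\geq 0}E_0[\psi](\tau)\lesssim E_0[\psi](0)$.

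For $k \geq 1$, I would commute the equation with $T^k$: since $T$ is Killing for $g$, $T^k\psi$ still satisfies $\Box_g T^k\psi = 0$ and remains axisymmetric, so the zero-order estimate applied to $T^k\psi$ controls $E_0[T^k\psi]$. To recover the full $E_k[\psi]$, which involves all combinations of up to $k+1$ derivatives, I would use elliptic estimates on $\Sigma_\tau$: in $\{r \geq r_+ + \delta_0\}$ the spatial part of $\Box_g$ restricted to $\Sigma_\tau$ is uniformly elliptic, so iteratively using the wave equation to trade one $\partial_{t^*}^2$ for a second-order spatial operator yields $H^{k+1}(\Sigma_\tau)$ bounds from the $T^k$-commuted energy. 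In the region $\{r \leq r_+ + 2\delta_0\}$, where $g = g_K$ by Definition~\ref{themetricclass} and Convention~\ref{rlowrhighconv}, the Dafermos--Rodnianski redshift commutation with the vector field $Y$ applies verbatim to upgrade the commuted energies to the full $E_k$ bound. The main technical point---rather than obstacle---is the matching between the exact-Kerr redshift commutation near the horizon and the elliptic estimates for the perturbed metric farther out; this is clean because the two regions are disjoint and separated by a cutoff supported strictly inside the domain where $g$ retains its full regularity.
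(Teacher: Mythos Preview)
Your overall strategy coincides with the paper's: axisymmetry kills the superradiant cross terms so that the $\mathbf{J}^T$-flux is coercive away from the horizon, the red-shift removes the degeneration at $r=r_+$, and then $T^k$-commutation together with elliptic estimates and the red-shift commutation yield the higher-order statement. Your explicit flux computation and the higher-order scheme are correct.

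There is, however, a genuine gap in the closing step. The combined multiplier $V=T+\kappa N$ produces a bulk term $\kappa\,\mathbf{K}^N$ which, in the transition region where $N\neq T$ but the red-shift inequality no longer holds, is of size $O(1)$ (not $O(\epsilon)$) times the energy density; this yields a spacetime error $\kappa C\int_0^\tau E_0(s)\,ds$, and Gr\"onwall applied to that gives $E_0(\tau)\leq E_0(0)e^{C\kappa\tau}$, not boundedness. Your invocation of Lemma~\ref{thatscoolintbyparts} for a ``$g-g_K$ error'' is also misplaced: since $\mathcal{L}_Tg=0$ by Definition~\ref{themetricclass}, the $T$-bulk vanishes \emph{exactly} for $g$, and the only $O(\epsilon)$ discrepancy is between the $g$-flux and your displayed $g_K$-formula, which is a fixed-time error absorbed pointwise by coercivity.

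The correct closing---and what the paper does---is to separate the two multipliers. First run the $\mathbf{J}^T$-identity alone with respect to $g$: since $T$ is Killing for $g$ and the horizon flux is $|K\psi|^2\geq 0$, one obtains the exact degenerate conservation $\int_{\Sigma_\tau}\mathbf{J}^T_\mu n^\mu\leq\int_{\Sigma_0}\mathbf{J}^T_\mu n^\mu$, hence $\sup_\tau E_{0,r\geq r_++\delta}[\psi](\tau)\lesssim E_0[\psi](0)$ with no spacetime error at all. Only then run the $\mathbf{J}^N$-estimate, bounding its transition-region bulk by the already-bounded degenerate energy and using the good near-horizon bulk to derive an integral inequality whose solution is uniformly bounded. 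This is precisely the content of Section~3.3 of~\cite{claylecturenotes}, which both you and the paper cite for the commutation step; it is needed already at the multiplier level.
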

\begin{proof}This is a standard result so we will be brief in our explanation.

We note that in view of the formulas from Section~\ref{stancurr} and the fact that $g\left(\nabla t^*,\Phi\right) = 0$, we have that for any constant $\alpha \in \mathbb{R}$:
\[\mathbf{J}^{T+\alpha \Phi}_{\mu}n^{\mu}_{\Sigma_s} =\mathbf{J}^T_{\mu}n^{\mu}_{\Sigma_s}.\]
In particular, since $T$ and $\Phi$ always span a timelike direction away from the event horizon, for any function $f$ and $\delta > 0$, we have that
\[\int_{\Sigma_s}J^T_{\mu}\left[f\right]n^{\mu}_{\Sigma_s} \gtrsim_{\delta} E_{0,r \geq r_+ + \delta}\left[f\right].\]
We also know that for any integer $i \in [0,k]$, we have $\Box_g T^i\psi = 0$. We thus obtain by applying $\textbf{J}^T$-energy estimates that
\[\sum_{i=0}^k\sup_{t^* \geq 0}E_{0,r \geq r_++\delta}\left[T^i\psi\right](t^*) \lesssim E_0\left[T^i\psi\right](0).\]
Elliptic estimates (as in Lemma~\ref{ellipt} below) then yield
\[\sup_{t^* \geq 0}E_{k,r \geq r_++\delta}\left[\psi\right](t^*) \lesssim_{k,\delta} E_k\left[\psi\right](0).\]
Finally, for $\delta > 0$ sufficiently small, we may use the redshift multiplier and commutator to extend our estimate all the way to $r = r_+$ (see Section 3.3 of~\cite{claylecturenotes}).
\end{proof}

In view of Remark~\ref{axiandnotaxi} and Lemma~\ref{easyifaxi} we introduce the following convention:
\begin{convention}\label{wehavem}Unless said otherwise, throughout Section~\ref{proofTheoc1} we assume that all functions of $\phi$ have mean zero on $[0,2\pi]$ and hence have a vanishing projection onto the $m = 0$ Fourier mode. Similarly, unless said otherwise, when defining operators via~\eqref{defQa}, we will assume $m \neq 0$.
\end{convention}

We also choose the constant $\delta_0$ so that $r_+ + \delta_0 < r_{\rm low}$.

\subsection{Some Fourier decompositions}
As in the proof of Theorem~\ref{mainLinEstTheo}, we will need to define some operators via Fourier analysis with respect to $(t^*,\phi^*)$. 

Our definitions in this section will depend on a large positive constant $A_{\rm high}$. We start with an operator which corresponds to projection to the set of bounded frequencies.
\begin{definition}We define $\mathcal{P}_{\rm bound} : \mathbb{H}_K \to \mathbb{H}_K$ by $\mathcal{P}_{\rm bound} = Q_{a_{\rm bound}}$ (see~\eqref{defQa}) where
\begin{equation}\label{thisisabound}
a_{\rm bound}\left(\omega,m\right) = q\left(\frac{\omega^2+m^2}{A^2_{\rm high}}\right),
\end{equation}
where $1$ denotes the indicator function, and $q(x)$ is a smooth function which is identically $1$ for $|x| \leq 1$ and identically $0$ for $|x| \geq 2$.
\end{definition}

Our next definition will define a family of operators which involve a mild modification to the family of operators $\{P_n\}_{n=1}^{N_s}$ defined in Section 3.4.1 of~\cite{blackboxlargea}. These operators serve to project\footnote{We note that they are not literal projections in that $P_n^2 \neq P_n$ in general.} the solution to suitable subsets of superradiant frequencies.
\begin{definition}\label{thisdefinedpnhigh}For each $n = 1,\cdots,N_s$ (where $N_s$ is as in Section 3.4.1 of~\cite{blackboxlargea}) we define $\mathcal{P}_{n,{\rm high}} : \mathbb{H}_K \to \mathbb{H}_K$ by $\mathcal{P}_{n,{\rm high}} = Q_{a_{n,{\rm high}}}$ (see~\eqref{defQa}) where
\[a_{n,{\rm high}} \doteq \left(1-q\left(\frac{\omega^2+m^2}{A_{\rm high}^2}\right)\right)\overset{\musEighth}{\rho_n}\left(\frac{\omega}{am}\right),\]
and $\overset{\musEighth}{\rho_n}$ is defined as in Section 3.4.1 of~\cite{blackboxlargea}.
\end{definition}
In this paper we will not need the precise definition of the functions $\overset{\musEighth}{\rho_n}$. We will discuss later certain useful estimates associated to the operators $\mathcal{P}_{n,{\rm high}}$ (see Section~\ref{gensuper} below). However it is useful to already note here that associated to each function $\overset{\musEighth}{\rho_n} : \mathbb{R} \to \mathbb{R}$ is a bounded open interval $I_n \subset \mathbb{R}$ so that ${\rm supp}\left(\overset{\musEighth}{\rho_n}\right) \subset I_n$. Moreover, we have the following important fact which is an immediate consequence of the construction of the operators $P_n$ in Section 3.4.1 of~\cite{blackboxlargea}.
\begin{lemma}There exists a choice of functions $p$ and $q_{\delta_1}$ and constants $\delta_1$ and $\delta_2$ in Definition~\ref{hightrappedpart} so that for every sufficiently large choice of $A_{\rm high}$ we have that
\begin{equation}\label{theygivetheidentity}
\mathcal{P}_{\rm bound} + \sum_{n=1}^{N_s}\mathcal{P}_{n,{\rm high}} + \mathcal{P}_{HT} = {\rm Id},
\end{equation}
where ${\rm Id}$ stands for the identity operator on functions in $\mathbb{H}_K$ which have a vanishing projection onto the $m = 0$ Fourier mode. 
\end{lemma}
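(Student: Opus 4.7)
The plan is to reduce the operator identity to a symbolic identity on the Fourier side and then verify the latter using a partition-of-unity argument adapted from the construction of the $\overset{\musEighth}{\rho_n}$ in Section 3.4.1 of \cite{blackboxlargea}. Concretely, since all three operators are of the form $Q_a$ defined by \eqref{defQa}, their sum is also of this form with symbol
\[
a_{\rm sum}(\omega,m) \doteq a_{\rm bound}(\omega,m) + \sum_{n=1}^{N_s} a_{n,{\rm high}}(\omega,m) + a_{HT}(\omega,m).
\]
Writing this out, the symbol factors as
\[
a_{\rm sum}=q\!\left(\tfrac{\omega^2+m^2}{A_{\rm high}^2}\right)+\Bigl(1-q\!\left(\tfrac{\omega^2+m^2}{A_{\rm high}^2}\right)\Bigr)\,S(\omega,m)\cdot\mathbf{1}_{\{m\neq 0\}},
\]
where
\[
S(\omega,m) \doteq \sum_{n=1}^{N_s}\overset{\musEighth}{\rho_n}\!\left(\tfrac{\omega}{am}\right)+p\!\left(\tfrac{|\omega|}{\delta_2|m|}\right)\mathbf{1}_{\omega m<0}+q_{\delta_1}\!\left(\tfrac{\omega}{m}\right).
\]
Since the hypothesis restricts attention to functions whose $m=0$ Fourier component vanishes, by the Plancherel identity \eqref{Plancherel} the operator identity is equivalent to $a_{\rm sum}(\omega,m)=1$ for all $(\omega,m)\in\mathbb{R}\times(\mathbb{Z}\setminus\{0\})$. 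Observing that
\[
a_{\rm sum}-1=\Bigl(1-q\!\left(\tfrac{\omega^2+m^2}{A_{\rm high}^2}\right)\Bigr)\bigl(S(\omega,m)-1\bigr)\qquad (m\neq 0),
\]
the identity reduces to showing $S(\omega,m)=1$ on the set where $1-q(\tfrac{\omega^2+m^2}{A_{\rm high}^2})\neq 0$, i.e.\ for all $(\omega,m)$ with $\omega^2+m^2\geq A_{\rm high}^2$ and $m\neq 0$.

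Since $S$ depends only on the ratio $x\doteq \omega/m$, I would next verify $S(xm,m)\equiv 1$ as a function of $x\in\mathbb{R}$. Rewriting in the variable $x$, we have $p(|\omega|/(\delta_2|m|))\mathbf{1}_{\omega m<0}=p(|x|/\delta_2)\mathbf{1}_{x<0}$, which is identically $1$ for $x\leq -2\delta_2$ and vanishes for $x\geq -\delta_2$; while $q_{\delta_1}(x)$ equals $1$ for $x\geq \tfrac{a}{2Mr_+}+\delta_1$ and vanishes for $x\leq \tfrac{a}{2Mr_+}+\tfrac{\delta_1}{2}$. The remaining gap $x\in[-2\delta_2,\tfrac{a}{2Mr_+}+\delta_1]$ is compact, and the key step is to invoke the construction of Section 3.4.1 of \cite{blackboxlargea}, which gives a smooth partition of unity $\{\overset{\musEighth}{\rho_n}(\cdot/a)\}_{n=1}^{N_s}$ subordinate to a chosen finite cover of this interval by the support intervals $I_n$ that interlock with the supports of the other two pieces in a complementary way. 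The point is that $p$, $q_{\delta_1}$, $\delta_1$, $\delta_2$, and $\{\overset{\musEighth}{\rho_n}\}$ may all be chosen simultaneously so that the three contributions together form a smooth partition of unity of $\mathbb{R}$, giving $S\equiv 1$.

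The main obstacle is ensuring consistency with the construction in \cite{blackboxlargea}, which was carried out without the $p$ and $q_{\delta_1}$ cut-offs considered here. Specifically, one must check that the transition regions of $p$ (near $x=-2\delta_2$) and of $q_{\delta_1}$ (near $x=\tfrac{a}{2Mr_+}+\tfrac{\delta_1}{2}$) can be matched with the transitions at the endpoints of $\bigcup_n I_n$, so that the overlap sums are identically $1$ rather than merely bounded. This can be arranged by taking $\delta_1$, $\delta_2$ to coincide with suitable transition parameters of the outermost $\overset{\musEighth}{\rho_n}$'s, possibly redefining $p$ and $q_{\delta_1}$ to be the complementary bumps $1-\overset{\musEighth}{\rho_1}$ and $1-\overset{\musEighth}{\rho_{N_s}}$ on their respective half-lines. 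Once this matching is verified, $a_{\rm sum}\equiv 1$ on $m\neq 0$, and the lemma follows by inverse Fourier transform applied to any $f\in\mathbb{H}_K$ with vanishing $m=0$ mode.
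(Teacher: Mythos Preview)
Your proposal is correct and takes essentially the same approach as the paper: reduce the operator identity to the symbol identity $a_{\rm sum}\equiv 1$ for $m\neq 0$, factor out the bounded-frequency cut-off, and then build $p$, $q_{\delta_1}$, and the $\overset{\musEighth}{\rho_n}$ as a partition of unity in the variable $x=\omega/m$. The paper's proof is slightly more concrete in that it records from \cite{blackboxlargea} that $\bigcup_n{\rm supp}(\overset{\musEighth}{\rho_n})=[c,d]$ with $c<0<\tfrac{1}{2Mr_+}<d$, and then simply sets $\delta_1=a(d-(2Mr_+)^{-1})$ and $\delta_2=\tfrac{a|c|}{2}$, which makes the matching you describe explicit; but the underlying idea is identical.
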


\begin{proof}
The intervals $I_n$ are chosen (see Appendix A.1.5 of of~\cite{blackboxlargea}) so that
\[
\bigcup_{n=1}^{N_s} {\rm supp}\left(\overset{\musEighth}{\rho_n}\right) = [c, d] \subset \bigcup_{n=1}^{N_s} I_j = \left(-\epsilon, \frac{1}{2Mr_+}+\epsilon\right)
\]
for some suitably small $\epsilon>0$ and $c<0$, $d>\frac{1}{2Mr_+}$. We may then pick 
\[
\delta_1 = a\left(d-(2Mr_+)^{-1}\right), \quad \delta_2 = \frac{a|c|}{2}
\]
and $p$, $q_{\delta_1}$, $\overset{\musEighth}{\rho_n}$ to be a suitable partition of unity. 
\end{proof}

\subsection{Elliptic estimates}
In this section we state a useful elliptic estimate whose straightforward proof we omit.

\begin{lemma}\label{ellipt}Let $k\geq 1$ and   $\mathcal{Q} \in  \left\{\Box_g,\Box_{g_K},\Box_{\tau}\right\}$. There exists a sufficiently large constant $A$ so that the following two elliptic estimates hold:
\begin{enumerate}
    \item Suppose that $f$ vanishes for sufficiently large $r$. Then, for every $\delta > 0$, 
    \begin{equation}\begin{split}
&E_{k,r \geq r_+ + \delta}\left[f\right](\tilde{t})\, dt \lesssim_{k,\delta} E_{k-2,r \geq r_+ + \delta/2}\left[\mathcal{Q}f\right]\left(\tilde{t}\right)+\int_{r \geq r_++\delta/2}\left|\mathcal{Q}f|_{t^* = \tilde{t}}\right|^2
\\ &\qquad + \sum_{i+j= 0}^{k+1}\int_{r_+ + \delta/2 \leq r \leq A} \left|T^i\Phi^jf|_{t^* = \tilde{t}}\right|^2 + \sum_{i= 1}^{k+1}\int_{r \geq A}\left|T^if|_{t^* = \tilde{t}}\right|^2.
\end{split}\end{equation}
\item Let $I_1,I_2 \subset (r_+,\infty)$ be open bounded intervals so that $\overline{I_2} \subset I_1$. Then 
 \begin{equation}\begin{split}
&E_{k,r \in I_2}\left[f\right](\tilde{t})\, dt \lesssim_{k,\delta,I_1,I_2} E_{k-2,r \in I_1}\left[\mathcal{Q}f\right]\left(\tilde{t}\right)+\int_{r \in I_1}\left|\mathcal{Q}f|_{t^* = \tilde{t}}\right|^2
\\ &\qquad + \sum_{i+j= 0}^{k+1}\int_{r \in I_1 \cap \{r \leq A\} }\left|T^i\Phi^jf|_{t^* = \tilde{t}}\right|^2 + \sum_{i= 0}^{k+1}\int_{r \in I_1 \cap \{r\geq A\}}\left|T^if|_{t^* = \tilde{t}}\right|^2.
\end{split}\end{equation}
\end{enumerate}
If $k = 1$ then we drop the terms involving $E_{k-2}$ from the right hand side.
\end{lemma}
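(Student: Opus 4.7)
The proof is via a standard interior elliptic estimate applied on the spacelike slice $\Sigma_{\tilde{t}}=\{t^*=\tilde{t}\}$. The key observation is that $\Sigma_{\tilde{t}}$ is spacelike for each of $g_K$, $g$, and the principal symbol of $\Box_\tau$, so the principal symbol of $\mathcal{Q}$ restricted to $T^*\Sigma_{\tilde{t}}$ is positive definite. In Kerr-star coordinates $T=\partial_{t^*}$ and $\Phi=\partial_{\phi^*}$, and one may write
\[\mathcal{Q}f=\mathcal{A}f+2a^{t\alpha}T\partial_\alpha f+a^{tt}T^2f+(\text{order }\le 1),\]
where $\mathcal{A}$ contains only the pure $\partial_r,\partial_\theta,\partial_{\phi^*}$ second derivatives and is uniformly elliptic on any closed $r$-interval bounded away from $r_+$. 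Solving for $\mathcal{A}f$ expresses purely spatial second derivatives of $f$ in terms of $\mathcal{Q}f$ together with second derivatives involving at least one $T$.

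For the bounded-$r$ pieces of both cases, standard interior elliptic regularity for $\mathcal{A}$ yields, for $V\Subset U$ compactly contained in a fixed $r$-interval in $(r_+,\infty)$,
\[\|f\|_{H^k_{\mathrm{sp}}(V)}\lesssim \|\mathcal{A}f\|_{H^{k-2}_{\mathrm{sp}}(U)}+\|f\|_{L^2(U)}.\]
Substituting the expression for $\mathcal{A}f$ and iterating on $T^i\Phi^jf$ for $i+j\le k+1$ (using the differentiated equations $\mathcal{Q}(T^i\Phi^jf)=T^i\Phi^j\mathcal{Q}f+[\mathcal{Q},T^i\Phi^j]f$) trades $\|f\|_{L^2(U)}$ for $\sum_{i+j\le k+1}\|T^i\Phi^jf\|_{L^2(U)}$. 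The iteration is clean: $\Phi$ is exactly Killing for each of the three choices of $\mathcal{Q}$, while the commutator $[\mathcal{Q},T]$ vanishes for $\Box_g$ and $\Box_{g_K}$ and, for $\Box_\tau$, reduces to $(T\eta_\tau)(\Box_g-\Box_{g_K})$, a second-order spatial operator with bounded coefficients uniformly in $\tau$ that produces only lower-order corrections at each step. Comparing $H^k_{\mathrm{sp}}$ with $E_k$ on $\{r\in I_2\}$ is immediate because the null frame $L,\underline{L},\slashed{\nabla}$ is regular and spans the tangent space there; this yields case (2) and the piece of case (1) on $\{r_++\delta\le r\le A\}$.

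For the asymptotic region $r\ge A$ in case (1), $g_K$ is close to Minkowski and $\mathcal{A}$ is approximately $\partial_r^2+(2/r)\partial_r+r^{-2}\mathring{\Delta}$. With the $r^{-1}$ weight built into $\slashed{\nabla}=r^{-1}\Omega_i$, the quantities $|\partial_rf|^2$ and $|\slashed{\nabla}f|^2$ appearing in $E_k$ are each controlled by $\mathcal{A}f$ alone, via a standard weighted elliptic estimate on $\mathbb{R}^3\setminus B_A$ combined with a Hardy inequality in $r$ (to absorb the zero-order $|f|^2$ contribution into $|\partial_r f|^2$, explaining why the $i=0$ lower-order term is absent for $r\ge A$); consequently angular derivatives need not be supplied as a separate lower-order term, and iterating leaves only $T^if$ with $1\le i\le k+1$. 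The main technical hurdle is precisely this exterior piece, where one must verify that the weighted elliptic regularity absorbs the angular derivatives without a separate $\Phi^j$ contribution; the time dependence of $\Box_\tau$ causes no difficulty since the estimate is applied at a single slice $\tilde{t}$ and relies only on the (uniformly positive definite) principal symbol.
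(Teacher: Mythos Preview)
The paper omits the proof entirely, labelling it straightforward; your outline is a correct account of the standard argument and is essentially what the authors have in mind.

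One small correction: the commutator $(T\eta_\tau)(\Box_g-\Box_{g_K})$ is not a purely spatial operator, since $\Box_g-\Box_{g_K}$ carries $T^2$ and $T\partial_\alpha$ pieces as well as spatial ones. This does not damage your iteration, however. When $[\Box_\tau,T^i]f$ is placed in $H^{k-1-i}_{\rm sp}$ at step $i$ of the scheme, each contribution carries at most $i+1$ derivatives of $f$ from the second-order operator acting on $T^{i-1}f$ (and lower), so together with the $k-1-i$ spatial derivatives this is at most $k$ total derivatives of $f$---genuinely one order below the $k+1$ derivatives in $E_k[f]$, and hence absorbable by induction on $k$ (the $O(\epsilon)$ smallness of the coefficients, while available, is not even needed for this step). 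With that adjustment your sketch stands.
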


\subsection{Bounded frequency estimates}
We start by providing estimates for the bounded frequencies. The main result of the section is the following:
\begin{lemma}\label{enerellipboundedfreq}Let $\psi$ be as in the statement of Theorem~\ref{theoc1pert}. Choose $\tau \gg 1$ and let $\psi_{\tau}$ be defined by Definition~\ref{defpsitau}. Set $\Psi_{\tau} \doteq \mathcal{P}_{\rm bound}\psi_{\tau}$. Let $k \geq 1$ be an integer. Then we have that
\begin{equation}\label{frompart3foruboundi120uo3j}
\begin{split}
&A_{\rm high}^{-2k-2}\sup_{t \in (10,\tau-10)}\int_{-\infty}^{\infty}\chi_t^2(s)E_{k,r \geq r_++\delta_0}\left[\Psi_{\tau}\right]\left(s\right)\, ds  
+ \int_{\mathcal{M}_K} \tilde{\chi}^2_{\tau-10,10}(s)\left|\Psi_{\tau}\right|^2r^{-4}
\\ &\qquad + A_{\rm high}^{-2k-2}\int_{-\infty}^{\infty}\tilde{\chi}^2_{\tau-10,10}(s)E_{k,\{r_{\rm low} \leq r \leq R_{\rm high}\}}\left[\Psi_{\tau}\right](s)\, ds 
\\ &\lesssim_{k}  A_{\rm high}^4\int_{-\infty}^{\infty}\chi^2_{10}(t)E_0\left[\psi_{\tau}\right](s)\, ds 
  + A_{\rm high}^{-2k-4} E_{k-1}\left[\psi\right](0) + A_{\rm high}^2 E_0\left[\psi\right](0)
\\ &\qquad + A_{\rm high}^4 \epsilon^2 \sup_{t^*} \int_{t^*-1/2}^{t^*+1/2} E_{1,\{r_{\rm low} \leq r \leq R_{\rm high}\}}\left[\psi_{\tau}\right]\left(s\right)\, ds
\\ &\qquad +\left(\epsilon + A_{\rm high}^{-2k-4}\right) \sup_{t^*} \int_{t^*-1/2}^{t^*+1/2}E_{k}\left[\psi_{\tau}\right](s)\, ds +\left(\epsilon+A_{\rm high}^{-6}\right)\sup_{t^*} \int_{t^*-1/2}^{t^*+1/2}E_{1}\left[\psi_{\tau}\right](s)\, ds.
\end{split} 
\end{equation}

\end{lemma}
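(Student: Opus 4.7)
The strategy is to exploit that $\Psi_\tau = \mathcal{P}_{\rm bound}\psi_\tau$ has $(t^*,\phi^*)$-Fourier support in the bounded region $\{\omega^2 + m^2 \lesssim A_{\rm high}^2\}$, so Bernstein-type inequalities trade each $T$- or $\Phi$-derivative for a power of $A_{\rm high}$, and to combine this with the standard Kerr ILED estimate~\eqref{schematicIELDwithboundinhomog} applied to $\Psi_\tau$, treating the difference $\Box_\tau - \Box_{g_K}$ as an $O(\epsilon)$ perturbation. Because~\eqref{schematicIELDwithboundinhomog} already provides boundedness at bounded frequencies (where trapping and superradiance are not active), this step is essentially perturbative.

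The first step is to derive the inhomogeneous equation for $\Psi_\tau$. Because $\mathcal{L}_T g = \mathcal{L}_\Phi g = 0$ and the symbol $a_{\rm bound}$ is independent of $t^*$, both $\Box_g$ and $\Box_{g_K}$ commute with $\mathcal{P}_{\rm bound}$; since $\eta_\tau$ depends only on $t^*$, it follows that $[\Box_\tau, \mathcal{P}_{\rm bound}] = [\eta_\tau, \mathcal{P}_{\rm bound}](\Box_g - \Box_{g_K})$. Together with Corollary~\ref{corextend} this yields
\[\Box_{g_K}\Psi_\tau = (\Box_{g_K} - \Box_\tau)\Psi_\tau + \mathcal{P}_{\rm bound}\tilde F + [\eta_\tau, \mathcal{P}_{\rm bound}](\Box_g - \Box_{g_K})\psi_\tau.\]
The first source term is $O(\epsilon)$ and supported in $[r_{\rm low}, R_{\rm high}] \subset \{r \geq r_+ + \delta_0\}$; the second involves the extension error from Theorem~\ref{extendtoeverywhere} and is controllable by $E_{k-1}[\psi](0)$ via Lemma~\ref{pseudolocforthewin}; the third has $\eta_\tau'$-support confined to fixed-size $t^*$-intervals near $0$ and $\tau$, so Lemma~\ref{usefulinthatonepart} provides $\tau$-independent control. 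I then apply the ILED estimate~\eqref{schematicIELDwithboundinhomog} at order $k$ to $\Psi_\tau$: the bulk perturbation is absorbed via Lemma~\ref{thatscoolintbyparts} after converting $\Psi_\tau$-norms back to $\psi_\tau$-norms using Lemma~\ref{totheoriginal} (producing the $\epsilon\sup_{t^*}\int E_k[\psi_\tau]$ and $\epsilon\sup_{t^*}\int E_1[\psi_\tau]$ terms), the extension source produces the $A_{\rm high}^{-N}$-weighted data terms, and the commutator source produces the subleading $A_{\rm high}^2 E_0[\psi](0)$ contribution.

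The third step is to introduce the Bernstein gain: for $\Psi_\tau$ supported in $\{\omega^2+m^2 \lesssim A_{\rm high}^2\}$ we have $\|T^j\Phi^l\Psi_\tau\|_{L^2} \lesssim A_{\rm high}^{j+l}\|\Psi_\tau\|_{L^2}$ for all $j+l \leq k+1$. Combined with the elliptic estimates of Lemma~\ref{ellipt} applied to the above equation, and with the redshift vector $N$ to extend control to the horizon, this bounds $E_k[\Psi_\tau]$ in $\{r \geq r_+ + \delta_0\}$ by $A_{\rm high}^{2(k+1)}$ times a time-averaged $E_0$ of $\Psi_\tau$ plus elliptic errors. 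The weight $A_{\rm high}^{-2k-2}$ on the left balances this inflation and leaves only the main $A_{\rm high}^4$ prefactor on $\int\chi_{10}^2 E_0[\psi_\tau]$ and the small remainders $A_{\rm high}^{-2k-4}$ and $A_{\rm high}^{-6}$ on the absorption terms. The bulk $\int\tilde\chi_{\tau-10,10}^2|\Psi_\tau|^2 r^{-4}$ comes from the Hardy-weighted bulk density on the right of~\eqref{schematicIELDwithboundinhomog}, while the time-integrated local $E_k$ on $[r_{\rm low}, R_{\rm high}]$ follows from elliptic regularity applied to the wave equation on that bounded region, with the $O(\epsilon^2)$ term on the right arising from the $(\Box_{g_K} - \Box_\tau)\Psi_\tau$ contribution when it is squared rather than paired.

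The main obstacle is the careful bookkeeping of $A_{\rm high}$ and $\epsilon$ powers: each Bernstein application inflates by a factor of $A_{\rm high}^{2(k+1)}$, which must precisely cancel the left-side prefactor $A_{\rm high}^{-2k-2}$, leaving exactly the announced $A_{\rm high}^4$ on the main term. A secondary difficulty is ensuring that the $\tau$-dependent cut-off $\eta_\tau$ produces no factors proportional to $\tau$; this is guaranteed by the identity $[\Box_\tau, \mathcal{P}_{\rm bound}] = [\eta_\tau, \mathcal{P}_{\rm bound}](\Box_g - \Box_{g_K})$ together with the fact that $\eta_\tau'$ is supported in two fixed-size intervals, so Lemma~\ref{usefulinthatonepart} is the appropriate tool rather than Lemma~\ref{commisminus1}.
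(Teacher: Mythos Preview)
Your high-level strategy---Kerr ILED for the base estimate, Bernstein inequalities to exploit the frequency localization, and elliptic estimates for the upgrade---matches the paper's. However, there are two places where your implementation diverges from the paper's argument and where your version, as written, does not close.

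First, you propose applying the order-$k$ ILED estimate~\eqref{schematicIELDwithboundinhomog} and absorbing the bulk perturbation $(\Box_{g_K}-\Box_\tau)\Psi_\tau$ via Lemma~\ref{thatscoolintbyparts}. But that lemma handles paired expressions $\int (Vf)\overline{Qf}$, whereas in~\eqref{schematicIELDwithboundinhomog} the inhomogeneity enters through $\sqrt{\int|F^{(\alpha)}|^2}\sqrt{\int|\psi^{(\alpha)}|^2}$ in the bounded-$r$ region, so the integration-by-parts trick is not applicable. The paper instead applies ILED only at order zero, using the $\mu$-weighted form of Theorem~\ref{mainresulfromwavekerrlargea} and its time-averaged Corollary~\ref{timeaveragethepart3est} (Lemma~\ref{applythatstufftotheboun}). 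There the source enters as $\mu^{-1}\int|F|^2$, and the perturbative piece contributes $\epsilon^2\int E_1[\Psi_\tau]$. The higher-order norms are then recovered by the elliptic--Bernstein upgrade in Lemma~\ref{spacetimeellip}, where the elliptic estimate is taken with $\mathcal{Q}=\Box_\tau$ rather than $\Box_{g_K}$; this choice means the source is just $\mathcal{P}_{\rm bound}\tilde F$ plus the commutator $[\Box_\tau,\mathcal{P}_{\rm bound}]\psi_\tau$, so one never has to take more than one derivative of the $C^1$-only metric perturbation.

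Second, you do not describe how the loop actually closes. The paper introduces the auxiliary quantity $\mathscr{W}_\tau$ (the weighted spacetime $L^2$ norm of $\Psi_\tau$ plus a local-in-time $L^2$ sup), bounds $\mathscr{W}_\tau$ back by the right-hand side of Lemma~\ref{applythatstufftotheboun} via a finite-in-time energy estimate (Lemma~\ref{ljjkij2oi22lkj34jk}), and then chooses $\mu=\mu_0 A_{\rm high}^{-4}$ to absorb the resulting $(\mu+\epsilon^2)A_{\rm high}^4\mathscr{W}_\tau$ term (Corollary~\ref{3oij1ioj4oi2}). This $\mu$-choice is precisely what generates the $A_{\rm high}^4$ prefactor on the initial-data term and the $A_{\rm high}^{-6}$ on the $E_1$ absorption term; your bookkeeping remark does not account for this mechanism. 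Finally, note that this lemma only controls $E_k$ for $r\geq r_++\delta_0$; the redshift extension to the horizon is deferred to Lemma~\ref{combineitallwoohoo} and should not be part of the present argument.
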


We will use the following theorem, which is a consequence of the main boundedness and integrated energy decay results proved in~\cite{waveKerrlargea} (see also Theorem D.1 of~\cite{blackboxsmalla} and the discussion in Section 3.1 of~\cite{blackboxlargea}).
\begin{theorem}\label{mainresulfromwavekerrlargea}\cite{waveKerrlargea} Let   $0 < \mu \ll 1$ and $\psi : J^+\left(\Sigma_0\right) \to \mathbb{C}$ solve $\Box_{g_K}\psi = F$ where $F$ is smooth and compactly supported in the spacetime region $\{r_{\rm low} \leq r \leq R_{\rm high}\} \cap J^+\left(\Sigma_0\right)$ and where  $E_0\left[\psi\right]\left(0\right) < \infty$. Then, for every $0 < \tau_0 < \tau_1$, we have
\begin{equation}
\begin{split}
&E_0\left[\psi\right]\left(\tau_1\right)  +\int_{\mathcal{R}\left(\tau_0,\tau_1\right)} \left|\psi\right|^2r^{-4}
\\ &\qquad \lesssim_{R_{\rm high}} \mu^{-1}\int_{\mathcal{R}\left(\tau_0,\tau_1\right)} \left|F\right|^2+ \mu \int_{\tau_0}^{\tau_1}E_{0,\{r_{\rm low} \leq r \leq R_{\rm high}\}}\left[\psi\right](\tau)\, d\tau +E_0\left[\psi\right](\tau_0).
\end{split}    
\end{equation}
\end{theorem}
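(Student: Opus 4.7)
The plan is to reduce the statement directly to the inhomogeneous boundedness-plus-ILED estimate~\eqref{schematicIELDwithboundinhomog} of~\cite{waveKerrlargea}, specialized to $k = 0$ and the weight parameter $\delta = 1$. By the time-translation invariance of $\Box_{g_K}$ and of the foliation $\{\Sigma_\tau\}$, it suffices to prove the case $\tau_0 = 0$; the general case follows by composing with the shift $t \mapsto t - \tau_0$ and redefining the initial data hypersurface accordingly.

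With $k = 0$ and $\delta = 1$, the left-hand side of~\eqref{schematicIELDwithboundinhomog} controls $\sup_{\tau \in [0,\tau_1]} E_0[\psi](\tau)$ through the $|\alpha| = 1$ boundary term, and the spacetime weighted integral $\iint r^{-2}|\psi|^2 \cdot r^{-2} = \iint |\psi|^2 r^{-4}$ through the interior zero-order term. Together these recover the left-hand side of the claim. On the right-hand side of~\eqref{schematicIELDwithboundinhomog}, fix any $R > R_{\rm high}$: since $\mathrm{supp}(F) \subset \{r_{\rm low} \leq r \leq R_{\rm high}\}$, the third line of~\eqref{schematicIELDwithboundinhomog} (the $\{r \geq R\}$ integral) vanishes, while the final $k-1$ term is absent by convention when $k = 0$. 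What remains is the bulk cross term $\sqrt{\iint |F|^2}\sqrt{\iint_{r\leq R}\sum_{|\alpha|\leq 1}|\psi^{(\alpha)}|^2}$ together with the data contribution $E_0[\psi](0)$.

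Since $F$ vanishes outside $\{r_{\rm low} \leq r \leq R_{\rm high}\}$, the Cauchy--Schwarz step producing this cross term can equivalently be executed on $\mathrm{supp}(F)$, improving the second factor to $\sqrt{\iint_{\{r_{\rm low}\leq r\leq R_{\rm high}\}}\sum_{|\alpha|\leq 1}|\psi^{(\alpha)}|^2}$. Applying AM--GM with parameter $\mu$ then splits the cross term as
\[\tfrac{1}{2\mu}\iint_{\mathcal{R}(0,\tau_1)}|F|^2 \; + \; \tfrac{\mu}{2}\int_0^{\tau_1} E_{0,\{r_{\rm low}\leq r\leq R_{\rm high}\}}[\psi](\tau)\, d\tau,\]
which, after absorbing constants into $\mu$, is exactly the right-hand side of the claim.

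The only mildly delicate point is the restriction of the Cauchy--Schwarz to $\mathrm{supp}(F)$, and this is the step I would expect to verify most carefully. It follows upon inspecting the derivation of~\eqref{schematicIELDwithboundinhomog} (compare Remark 3.2.1 and Appendix D of~\cite{blackboxsmalla}): every cross term between $F$ and a multiplier acting on $\psi$ is produced by an integration by parts that is naturally localized where $F$ lives, so Cauchy--Schwarz may be applied on that smaller region. Alternatively, one may begin from the homogeneous ILED~\eqref{schematicIELDwithbound} and introduce $F$ via a Duhamel correction, which automatically localizes the resulting bulk terms to $\mathrm{supp}(F)$.
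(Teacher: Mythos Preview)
Your approach is correct and is precisely what the paper's one-line citation to Remark~3.2.1 and Appendix~D of~\cite{blackboxsmalla} amounts to: specialize~\eqref{schematicIELDwithboundinhomog} to $k=0$, observe that the multiplier-times-$F$ bulk terms are naturally supported where $F$ lives, and split by AM--GM with parameter~$\mu$. Two minor remarks: (i) the second Cauchy--Schwarz factor also carries the $|\alpha|=0$ contribution $|\psi|^2$, which on $\{r_{\rm low}\le r\le R_{\rm high}\}$ is $\lesssim_{R_{\rm high}} r^{-4}|\psi|^2$ and hence absorbs into the left-hand side for $\mu$ small --- you should say this explicitly; (ii) your Duhamel alternative would place $F$ in $L^1_t$ (cf.\ the footnote following~\eqref{schematicIELDwithboundinhomog}) rather than produce the $\mu^{-1}\|F\|_{L^2}^2$ structure, so it is not actually an alternative route to this particular estimate.
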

\begin{proof}See Remark 3.2.1 and Appendix D of~\cite{blackboxsmalla}.
\end{proof}

It will be convenient to have a time averaged version of the estimate of Theorem~\ref{mainresulfromwavekerrlargea}. Taking $\tau_0 =10$, integrating against the bump function $\chi_{t}^2(s)$, using the monotonicity of $\tilde{\chi}_{\tau,A}$ in $\tau$ and local-in-time estimates immediately leads to the following estimate.

\begin{corollary}\label{timeaveragethepart3est}
Let   $0 < \mu \ll 1$ and $\psi : J^+\left(\Sigma_0\right) \to \mathbb{C}$ solve $\Box_{g_K}\psi = F$ where $F$ is smooth and compactly supported in spacetime region $\{r_{\rm low} \leq r \leq R_{\rm high}\} \cap J^+\left(\Sigma_0\right)$ and where  $E_0\left[\psi\right]\left(0\right) < \infty$. Then, for every $\tau_1 \gg 1$, we have
\begin{equation}
\begin{split}
&\sup_{t \in (10,\tau_1)}\int_{-\infty}^{\infty}\chi_{t}^2(s)E_0\left[\psi\right]\left(s\right)\, ds  +\int_{\mathcal{M}_K} \tilde{\chi}_{\tau_1,10}^2\left|\psi\right|^2r^{-4}
\\ &\qquad \lesssim_{R_{\rm high}} \mu^{-1}\int_{\mathcal{M}_K} \tilde{\chi}^2_{\tau_1,10}\left|F\right|^2+ \mu \int_{-\infty}^{\infty}\tilde{\chi}^2_{\tau_1,10}E_{0,\{r_{\rm low} \leq r \leq R_{\rm high}\}}\left[\psi\right](s)\, ds
\\ &\qquad +\int_{-\infty}^{\infty}\chi_{10}^2(s)E_0\left[\psi\right](s)\, ds.
\end{split}    
\end{equation}
\end{corollary}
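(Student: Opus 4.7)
The plan is to begin with the pointwise-in-$s$ estimate provided by Theorem~\ref{mainresulfromwavekerrlargea} with the choice $\tau_0=10$ and $\tau_1=s$,
\begin{equation*}
E_0[\psi](s) + \int_{\mathcal{R}(10,s)} |\psi|^2 r^{-4} \lesssim_{R_{\rm high}} \mu^{-1}\int_{\mathcal{R}(10,s)} |F|^2 + \mu \int_{10}^{s} E_{0,\{r_{\rm low}\leq r\leq R_{\rm high}\}}[\psi](\tau)\, d\tau + E_0[\psi](10),
\end{equation*}
valid for every $s>10$, together with standard local-in-time energy inequalities on $[0,10]$ that control $E_0[\psi](10)$ and related boundary contributions by $\int\chi_{10}^2 E_0$. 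The two pieces of the left-hand side of the corollary will then be obtained by weighting and integrating this inequality appropriately.

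For the supremum term, I would multiply the pointwise estimate by $\chi_t^2(s)$ and integrate in $s\in\mathbb{R}$, separately handling $s\le 10$ by local-in-time estimates whose output is absorbed into the $\int\chi_{10}^2 E_0$ term. Because $\chi_t^2$ has $L^1$-norm $O(1)$ and support of length $\le 4$, and because $\int_{\mathcal{R}(10,s)}|F|^2$ and $\int_{10}^s E_0\,d\tau$ are monotone increasing in $s$, the resulting spacetime double integrals are controlled by $\int_{\mathcal{R}(10,t+2)}|F|^2$ and $\int_{10}^{t+2}E_0\,d\tau$. Since $t\le\tau_1$, monotonicity of $\tilde{\chi}_{\tau,A}$ in $\tau$, combined with the fact that $\tilde{\chi}_{\tau_1,10}^2$ is bounded below by a positive constant on $[12,\tau_1]$, allows these to be replaced by $\int_{\mathcal{M}_K}\tilde{\chi}_{\tau_1,10}^2|F|^2$ and $\int_{-\infty}^\infty \tilde{\chi}_{\tau_1,10}^2 E_{0,\{r_{\rm low}\le r\le R_{\rm high}\}}\,ds$ respectively, up to fixed-length boundary contributions that are absorbed into $\int\chi_{10}^2 E_0$ (using the spacetime compact support of $F$ for $\tau_1$ large enough). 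Taking the supremum over $t\in(10,\tau_1)$ then yields the bound for the first piece.

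For the weighted spacetime integral term, I would use the uniform bound $\tilde{\chi}_{\tau_1,10}^2\lesssim 1$ to dominate it by a constant times $\int_{\mathcal{R}(8,\tau_1+2)}|\psi|^2 r^{-4}$, split the region at $s=10$, handle the piece on $[8,10]$ by local-in-time estimates, and apply the pointwise inequality at $s=\tau_1+2$ for the piece on $[10,\tau_1+2]$. The same monotonicity-and-lower-bound argument for $\tilde{\chi}_{\tau_1,10}^2$ then converts the resulting $\mathcal{R}(10,\tau_1+2)$ integrals of $|F|^2$ and $E_0$ into the weighted integrals appearing on the right-hand side of the corollary.

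There is no substantive analytic difficulty here; the entire argument is a rearrangement. The only point requiring care is that the bump profile of $\tilde{\chi}_{\tau_1,10}$ has transition regions of width $O(1)$ independent of $\tau_1$, so that all boundary contributions remain uniformly controlled as $\tau_1\to\infty$ and the constants in the final inequality are $\tau_1$-independent.
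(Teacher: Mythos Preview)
Your proposal is correct and is exactly the approach the paper uses: apply Theorem~\ref{mainresulfromwavekerrlargea} with $\tau_0=10$, integrate against $\chi_t^2(s)$, and combine monotonicity with local-in-time estimates. One small correction: the fixed-length boundary contribution from the $\mu\int E_0$ term near $s\approx\tau_1$ is not absorbed into $\int\chi_{10}^2 E_0$ but rather (carrying its factor of $\mu$) into the left-hand side after taking the supremum over $t$---or, as the paper's hint suggests, is avoided altogether by using Fubini to produce the weight $\tilde{\chi}_{t,10}^2$ directly and then invoking the stated monotonicity of $\tilde{\chi}_{\tau,A}$ in $\tau$.
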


We now apply Corollary~\ref{timeaveragethepart3est} to $\mathcal{P}_{\rm bound}\psi_{\tau}$.
\begin{lemma}\label{applythatstufftotheboun}Let $\psi$ be as in the statement of Theorem~\ref{theoc1pert}. Choose $\tau \gg 1$ and let $\psi_{\tau}$ be defined by Definition~\ref{defpsitau}. Set $\Psi_{\tau} \doteq \mathcal{P}_{\rm bound}\psi_{\tau}$. Then we have that, for any small constant $\mu > 0$,
\begin{equation}\label{frompart3forubound}
\begin{split}
&\sup_{t \in (10,\tau-10)}\int_{-\infty}^{\infty}\chi_{t}^2(s)E_0\left[\Psi_{\tau}\right]\left(s\right)\, ds  
+ \int_{\mathcal{M}_K} \tilde{\chi}^2_{\tau-10, 10}(s)\left|\Psi_{\tau}\right|^2r^{-4}
\\ &\lesssim (\mu+\epsilon^2)\int_{-\infty}^{\infty}\tilde{\chi}^2_{\tau-10,10}(s)E_{1,\{r_{\rm low} \leq r \leq R_{\rm high}\}}\left[\Psi_{\tau}\right](s)\, ds  + \mu^{-1}\int_{-\infty}^{\infty}\chi^2_{10}(s)E_0\left[\Psi_{\tau}\right](s)\, ds 
\\ &\qquad + \mu^{-1} \epsilon^2 \sup_{t^*} \int_{t^*-1/2}^{t^*+1/2} E_{1,\{r_{\rm low} \leq r \leq R_{\rm high}\}}\left[\psi_{\tau}\right]\left(s\right)\, ds+ \mu^{-1}A_{\rm high}^{-2}E_0\left[\psi\right](0).
\end{split}    
\end{equation}
\end{lemma}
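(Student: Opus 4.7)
The plan is to apply Corollary~\ref{timeaveragethepart3est} with $\tau_1=\tau-10$ to $\Psi_\tau=\mathcal{P}_{\rm bound}\psi_\tau$ after identifying the inhomogeneous equation it satisfies. By Corollary~\ref{corextend} we have $\Box_\tau\psi_\tau=\tilde F$, and Definition~\ref{definterpolate} gives $\Box_{g_K}-\Box_\tau=\eta_\tau(\Box_{g_K}-\Box_g)$; combining,
\[
\Box_{g_K}\psi_\tau=\eta_\tau(\Box_{g_K}-\Box_g)\psi_\tau+\tilde F.
\]
The crucial structural fact is that, by the stationarity and axisymmetry of both $g$ and $g_K$ (Definition~\ref{themetricclass}), the operators $\Box_{g_K}$ and $\Box_{g_K}-\Box_g$ have coefficients depending only on $(r,\theta)$ in Kerr-star coordinates, so they commute with the $(t^\ast,\phi^\ast)$-Fourier multiplier $\mathcal{P}_{\rm bound}$. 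Applying $\mathcal{P}_{\rm bound}$ and expanding the commutator with $\eta_\tau$ yields
\[
\Box_{g_K}\Psi_\tau=\eta_\tau(\Box_{g_K}-\Box_g)\Psi_\tau+[\mathcal{P}_{\rm bound},\eta_\tau](\Box_{g_K}-\Box_g)\psi_\tau+\mathcal{P}_{\rm bound}\tilde F\doteq F^\ast.
\]

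I then feed this equation into Corollary~\ref{timeaveragethepart3est}. The first two components of $F^\ast$ are spatially supported in $[r_{\rm low},R_{\rm high}]$, since $g=g_K$ outside this range and $\mathcal{P}_{\rm bound}$ preserves the $r$-support of a function; the third, $\mathcal{P}_{\rm bound}\tilde F$, is not, but the temporal support of $\tilde F$ lies in $\{t^\ast\leq-3\}$ by Theorem~\ref{extendtoeverywhere}, disjoint from $\operatorname{supp}(\tilde\chi_{\tau-10,10})\subset\{t^\ast\geq 8\}$, so the same argument used to deduce Corollary~\ref{timeaveragethepart3est} from Theorem~\ref{mainresulfromwavekerrlargea} applies to this piece with no essential change. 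Since the left-hand sides match, the task reduces to bounding $\mu^{-1}\int_{\mathcal{M}_K}\tilde\chi_{\tau-10,10}^2|F^\ast|^2$ by the right-hand side of~\eqref{frompart3forubound}.

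I bound the three contributions as follows. The pointwise estimate $|\eta_\tau(\Box_{g_K}-\Box_g)\Psi_\tau|\lesssim\epsilon|\partial^2\Psi_\tau|\,\mathbf{1}_{[r_{\rm low},R_{\rm high}]}$ from Definition~\ref{themetricclass} gives a contribution of size $\mu^{-1}\epsilon^2\int\tilde\chi^2 E_{1,[r_{\rm low},R_{\rm high}]}[\Psi_\tau]$, which is absorbed into the $(\mu+\epsilon^2)\int\tilde\chi^2 E_1[\Psi_\tau]$ term of~\eqref{frompart3forubound} under the standing smallness of $\epsilon$ relative to $\mu$ (Convention~\ref{episalwayssmall}). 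For the commutator $[\mathcal{P}_{\rm bound},\eta_\tau](\Box_{g_K}-\Box_g)\psi_\tau$, I exploit that $\eta_\tau'$ is supported in two intervals of length $\lesssim 1$ (near $t^\ast=-1$ and $t^\ast=\tau+1$); splitting $\eta_\tau$ as a sum of two cutoffs accordingly, each summand satisfies the hypotheses of Lemma~\ref{usefulinthatonepart}, which converts the global $L^2_{t^\ast}$ norm of the commutator applied to $(\Box_{g_K}-\Box_g)\psi_\tau$ into $\sup_{t^\ast}\int_{t^\ast-1/2}^{t^\ast+1/2}\|(\Box_{g_K}-\Box_g)\psi_\tau\|^2\lesssim\epsilon^2\sup_{t^\ast}\int_{t^\ast-1/2}^{t^\ast+1/2}E_{1,[r_{\rm low},R_{\rm high}]}[\psi_\tau]$, producing the $\mu^{-1}\epsilon^2\sup$-term. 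Finally, for $\mathcal{P}_{\rm bound}\tilde F$, Theorem~\ref{offthesupportitsgood} applied with $I_1\supset\operatorname{supp}(\tilde\chi_{\tau-10,10})$ and $I_2$ slightly larger (still disjoint from $\operatorname{supp}(\tilde F)$) yields $\int\tilde\chi^2|\mathcal{P}_{\rm bound}\tilde F|^2\lesssim A_{\rm high}^{-2}\sup_{t^\ast}\int_{t^\ast-1/2}^{t^\ast+1/2}\|\tilde F\|^2$, and Theorem~\ref{extendtoeverywhere} bounds the latter by $E_0[\psi](0)$.

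The main obstacle is handling the commutator $[\mathcal{P}_{\rm bound},\eta_\tau](\Box_{g_K}-\Box_g)\psi_\tau$ with constants independent of $\tau$: a naive $L^2$ bound over the full support of $\tilde\chi_{\tau-10,10}$ would produce an estimate proportional to $\tau\sup_s E_1[\psi_\tau](s)$, which would destroy the uniform-in-$\tau$ character of the inequality needed for Theorem~\ref{theoc1pert}. It is precisely the small-support-of-$\eta_\tau'$ structure of the interpolating cutoff, together with the pseudo-local estimate Lemma~\ref{usefulinthatonepart}, that localizes this error to the $\tau$-independent supremum appearing on the right-hand side of~\eqref{frompart3forubound}.
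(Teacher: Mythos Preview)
Your proposal is correct and follows essentially the same approach as the paper: derive the inhomogeneous equation for $\Psi_\tau$, apply Corollary~\ref{timeaveragethepart3est}, and estimate the three pieces of the inhomogeneity. The only noteworthy difference is your handling of the commutator term $[\mathcal{P}_{\rm bound},\eta_\tau](\Box_{g_K}-\Box_g)\psi_\tau$: you split $\eta_\tau$ into two monotone cutoffs and invoke Lemma~\ref{usefulinthatonepart}, whereas the paper observes directly that $\eta_\tau-1$ vanishes on $\operatorname{supp}(\tilde\chi_{\tau-10,10})$, so that $\tilde\chi_{\tau-10,10}^2[\mathcal{P}_{\rm bound},\eta_\tau-1]f = \tilde\chi_{\tau-10,10}^2\,\mathcal{P}_{\rm bound}\bigl((\eta_\tau-1)f\bigr)$, and then applies Theorem~\ref{offthesupportitsgood} since $(\eta_\tau-1)f$ is supported outside the time interval of integration. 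The paper's route is slightly more direct here, but both arguments localize the error to a $\tau$-independent supremum and yield the same bound.
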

\begin{proof}
We have
\[\Box_{g_K}\Psi_{\tau} = \left(\Box_{g_K}-\Box_{\tau}\right)\Psi_{\tau}-\left[\mathcal{P}_{\rm bound},\Box_{\tau}\right]\psi_{\tau} + \mathcal{P}_{\rm bound}\tilde{F}.\]
Our goal is to estimate all the terms on the right hand side in $L^2$ of spacetime against $\tilde{\chi}^2_{\tau-10, 10}(s)$.

It is immediate that 
\[\int_{\mathcal{M}_K}\tilde{\chi}^2_{\tau-10, 10}(s)\left|\left(\Box_{g_K}-\Box_{\tau}\right)\Psi_{\tau}\right|^2 \lesssim \epsilon^2\int_{-\infty}^{\infty}\tilde{\chi}^2_{\tau-10, 10}(s)E_{1,\{r_{\rm low} \leq r \leq R_{\rm high}\}}\left[\Psi_{\tau}\right](s)\, ds.\]

We next turn to the term $\left[\mathcal{P}_{\rm bound},\Box_{\tau}\right]\psi_{\tau}$. We have
\[\left[\mathcal{P}_{\rm bound},\Box_{\tau}\right]\psi_{\tau} = \left[\mathcal{P}_{\rm bound},\eta_{\tau}-1\right]\left(\Box_g-\Box_{g_K}\right).\]
This implies that
\[\int_{\mathcal{M}_K}\tilde{\chi}^2_{\tau-10, 10}(t)\left|\left[\mathcal{P}_{\rm bound},\Box_{\tau}\right]\psi_{\tau}\right|^2  = \int_{\mathcal{M}_K}\tilde{\chi}^2_{\tau-10, 10}\left|\mathcal{P}_{\rm bound}\left(\left(\eta_{\tau}-1\right)\left(\Box_g-\Box_{g_K}\right)\psi_{\tau}\right)\right|^2,\]
where we have used that $\eta_{\tau} - 1$ vanishes on the support of $\tilde{\chi}^2_{\tau-10, 10}(s)$. Using this fact yet again, we may apply Theorem~\ref{offthesupportitsgood} to successfully estimate this term:
\[
\int_{\mathcal{M}_K}\tilde{\chi}^2_{\tau-10, 10}(s)\left|\left[\mathcal{P}_{\rm bound},\Box_{\tau}\right]\psi_{\tau}\right|^2 \lesssim \epsilon^2\sup_{t^*} \int_{t^*-1/2}^{t^*+1/2} E_{1,\{r_{\rm low} \leq r \leq R_{\rm high}\}}\left[\psi_{\tau}\right]\left(s\right)\, ds
\]

Lastly, we note that we may apply Corollary~\ref{corextend} and Theorem~\ref{offthesupportitsgood} to estimate the term 
\[\int_{\mathcal{M}_K}\tilde{\chi}^2_{\tau-10, 10}(s)\left|\mathcal{P}_{\rm bound}\tilde{F}\right|^2\lesssim 
A_{\rm high}^{-2} \sup_{t^*} \int_{t^*-1/2}^{t^*+1/2} \int_{\Sigma_s}|\tilde F|^2 ds \lesssim A_{\rm high}^{-2}E_0\left[\psi\right](0).\]

Thus, applying Corollary~\ref{timeaveragethepart3est} finishes the proof.

\end{proof}

In the next lemma we will use the elliptic estimate of Lemma~\ref{ellipt} to obtain an estimate for higher regularity norms of $\Psi_{\tau}$ in terms of lower regularity norms.
\begin{lemma}\label{spacetimeellip}Let $\psi$ be as in the statement of Theorem~\ref{theoc1pert}. Choose $\tau \gg 1$ and let $\psi_{\tau}$ be defined by Definition~\ref{defpsitau}. Set $\Psi_{\tau} \doteq \mathcal{P}_{\rm bound}\psi_{\tau}$. Let $k \geq 1$ be an integer,  and let $I \subset (r_++\delta_0,\infty)$ be a closed, bounded interval which contains $[r_{\rm low},R_{\rm high}]$. We denote
\[
\mathscr{W}_{\tau} = \int_{\mathcal{M}_K} \tilde{\chi}^2_{\tau-10, 10}(s)\left|\Psi_{\tau}\right|^2r^{-4} + A_{\rm high}^{-2} \sup_{t^* \in (8,\tau-8)} \int_{t^*-1/2}^{t^*+1/2} \|\Psi_{\tau}(s)\|^2_{L^2(\Sigma_s \cap \{r \in I\})} ds
\]

Then we have that
\begin{equation}\label{onetwothree}
    \begin{split}
       & \int_{-\infty}^{\infty}\tilde{\chi}^2_{\tau-10, 10}(s)E_{k,\{r_{\rm low} \leq r \leq R_{\rm high}\}}\left[\Psi_{\tau}\right](s)\, ds \lesssim_k
        \\ &\qquad \left(\epsilon + A_{\rm high}^{-2}\right) \sup_{t^*} \int_{t^*-1/2}^{t^*+1/2}E_{k}\left[\psi_{\tau}\right](s)\, ds + A_{\rm high}^{-2}E_{k-1}\left[\psi\right](0) +A_{\rm high}^{2k+2}\mathscr{W}_{\tau}.
    \end{split}
\end{equation}
 Analogously, we have
\begin{equation}\label{onetwothreetwo}\begin{split}
&\sup_{t \in (10,\tau-10)}\int_{-\infty}^{\infty}\chi_t^2(s)E_{k,r\geq r_++\delta_0}\left[\Psi_{\tau}\right]\left(s\right)\, ds  \lesssim_{k,\delta}  \\ &\qquad \left(\epsilon + A_{\rm high}^{-2}\right) \sup_{t^*} \int_{t^*-1/2}^{t^*+1/2}E_{ k}\left[\psi_{\tau}\right](s)\, ds + A_{\rm high}^{-2}E_{k-1}\left[\psi\right](0) +A_{\rm high}^{2k+2}\mathscr{W}_{\tau}.
\end{split}\end{equation}
\end{lemma}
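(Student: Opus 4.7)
The plan is to apply the elliptic estimate of Lemma~\ref{ellipt} slice-by-slice to $\Psi_\tau$, integrate against $\tilde{\chi}^2_{\tau-10,10}$ (respectively $\chi_t^2$), and then exploit two key properties of $\mathcal{P}_{\rm bound}$. First, since $g$ and $g_K$ are both stationary and axisymmetric, the coefficients of $\Box_g$ and $\Box_{g_K}$ in Kerr star coordinates depend only on $(r,\theta)$, so $\mathcal{P}_{\rm bound}$ commutes with both. Second, since $a_{\rm bound}$ is supported in $\{\omega^2 + m^2 \le 2A_{\rm high}^2\}$, a Bernstein-type inequality $\|T^i\Phi^j\Psi_\tau\|_{L^2} \lesssim A_{\rm high}^{i+j}\|\Psi_\tau\|_{L^2}$ holds at the time-integrated level: it follows from Corollary~\ref{OonPw} applied to the order-$0$ symbol $(-i\omega)^i(im)^j\tilde{a}(\omega,m)$, where $\tilde{a}$ equals $1$ on the support of $a_{\rm bound}$ and the resulting symbol is bounded by $A_{\rm high}^{i+j}$.

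First I would derive the equation satisfied by $\Psi_\tau$. Writing $\Box_\tau = \Box_{g_K} + \eta_\tau(\Box_g - \Box_{g_K})$ and using $\Box_\tau\psi_\tau = \tilde F$ from Corollary~\ref{corextend}, the commutation property yields
\[
\Box_\tau\Psi_\tau \;=\; \mathcal{P}_{\rm bound}\tilde F \;+\; [\eta_\tau - 1,\,\mathcal{P}_{\rm bound}](\Box_g - \Box_{g_K})\psi_\tau,
\]
and since $\eta_\tau \equiv 1$ on the support of $\tilde\chi^2_{\tau-10,10}$, $\Box_g\Psi_\tau = \Box_\tau\Psi_\tau$ there. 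Lemma~\ref{ellipt} applied with $\mathcal{Q} = \Box_g$ (part (2) with $I_2 = [r_{\rm low},R_{\rm high}]$ and a slightly larger $I_1 \subset I$ for~\eqref{onetwothree}, parts (1)--(2) together for~\eqref{onetwothreetwo}) bounds $E_k[\Psi_\tau]$ at each time by an $L^2$- and $E_{k-2}$-norm of $\Box_g\Psi_\tau$ plus $\sum_{i+j\le k+1}$ integrals of $|T^i\Phi^j\Psi_\tau|^2$ on bounded $r$ and $\sum_{i\le k+1}$ integrals of $|T^i\Psi_\tau|^2$ at large $r$. Integrating this in time against $\tilde\chi^2_{\tau-10,10}$ or $\chi_t^2$ is the starting inequality.

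Both summands of $\Box_g\Psi_\tau$ are $\mathcal{P}_{\rm bound}$ applied to functions whose supports are disjoint from that of $\tilde\chi^2_{\tau-10,10}$: the commutator summand because $(\eta_\tau - 1)(\Box_g - \Box_{g_K})\psi_\tau$ vanishes wherever $\eta_\tau \equiv 1$, and the $\tilde F$ summand because $\tilde F$ is supported in $\{t^* \le -3\}$. The off-support pseudo-locality estimate~\eqref{usethislater123123} of Theorem~\ref{offthesupportitsgood}, iterated together with Lemma~\ref{thisisusefultocommute} to move $\mathcal{P}_{\rm bound}$ past derivatives, yields arbitrary $A_{\rm high}^{-N}$ decay. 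The commutator summand further carries a factor of $\epsilon$ from $\|\Box_g - \Box_{g_K}\|\lesssim\epsilon$, so combined with the higher-order bounds on $\tilde F$ from Theorem~\ref{extendtoeverywhere} these produce exactly the $(\epsilon + A_{\rm high}^{-2})\sup_{t^*}\int_{t^*-1/2}^{t^*+1/2}E_k[\psi_\tau]$ and $A_{\rm high}^{-2}E_{k-1}[\psi](0)$ terms on the right.

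For the lower-order $|T^i\Phi^j\Psi_\tau|^2$ and $|T^i\Psi_\tau|^2$ sums, I would invoke the Bernstein inequality. Since $\chi_t^2(s) \lesssim \tilde\chi^2_{\tau-10,10}(s)$ uniformly for $t$ in a suitable subinterval of $(10,\tau-10)$ and $s$ in the support of $\chi_t$, the bounded-$r$ contribution in both estimates is $\lesssim A_{\rm high}^{2(k+1)}\int\tilde\chi^2|\Psi_\tau|^2$, which, since $r^{-4}$ is bounded below on bounded $r$, is absorbed by $A_{\rm high}^{2k+2}$ times the first term of $\mathscr{W}_\tau$; the large-$r$ contribution in~\eqref{onetwothreetwo} uses the $r^{-4}$ weight directly, and the second (sup) term of $\mathscr{W}_\tau$ absorbs the residual localized-Bernstein and pseudo-locality error terms. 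The main obstacle will be the commutator analysis of the previous paragraph: obtaining the correct $A_{\rm high}^{-N}$ decay at $k-1$ derivatives requires simultaneously commuting $\mathcal{P}_{\rm bound}$ with $(\eta_\tau - 1)$, with $(\Box_g - \Box_{g_K})$ (whose coefficients are only $C^1$), and with $T,\Phi,\partial_r,\partial_\theta,(\sin\theta)^{-1}\partial_\phi$, via repeated application of~\eqref{usethislater123123}, of Lemma~\ref{commisminus1}, and of Lemma~\ref{thisisusefultocommute}.
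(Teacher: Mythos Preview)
Your proposal is correct and follows essentially the same approach as the paper: apply Lemma~\ref{ellipt}, handle the $\Box_\tau\Psi_\tau$ terms via off-support pseudo-locality (Theorem~\ref{offthesupportitsgood}), and handle the $\sum_{i+j\leq k+1}|T^i\Phi^j\Psi_\tau|^2$ terms via a time-localized Bernstein inequality using a projecting pair. Two minor remarks: the paper carries out the localized Bernstein step precisely via~\eqref{blahbi2j} of Lemma~\ref{commisminus1} applied to the order-$0$ operator $A_{\rm high}^{-k-1}T^i\Phi^j\tilde{\mathcal{P}}_{\rm bound}$ with $h=\tilde\chi_{\tau-10,10}$ (Corollary~\ref{OonPw} alone gives only the unlocalized bound, as you implicitly acknowledge when referring to ``residual localized-Bernstein errors''); and your stated worry about commuting $\mathcal{P}_{\rm bound}$ with $(\Box_g-\Box_{g_K})$ is unnecessary, since, as you yourself observed in your first paragraph, stationarity and axisymmetry of $g$ make that commutator identically zero.
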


\begin{proof} In view of Lemma~\ref{ellipt} we have
\begin{equation}
\begin{split}
&\int_{-\infty}^{\infty}\tilde{\chi}^2_{\tau-10,10}(s)E_{k,\{r_{\rm low} \leq r \leq R_{\rm high}\}}\left[\Psi_{\tau}\right](s)\, ds \lesssim_k 
\\ & \int_{\mathcal{M}_K\cap \{ r\in I\}}\tilde{\chi}^2_{\tau-10,10}\left|\Box_{\tau}\Psi_{\tau}\right|^2+\int_{-\infty}^{\infty}\tilde{\chi}^2_{\tau-10,10}(s)E_{k-2, I}\left[\Box_{\tau}\Psi_{\tau}\right](s)\, ds+
\\ & \sum_{i+j = 0}^{k+1}\int_{\mathcal{M}_K\cap \{r \in I\}}\tilde{\chi}^2_{\tau-10,10}(s)\left|T^i\Phi^j\Psi_{\tau}\right|^2.
\end{split}
\end{equation}

Let $\tilde{\chi}_1$ be a cutoff that equals $1$ on the support of $\tilde{\chi}^2_{\tau-10,10}$ with slightly larger support. Using Lemma~\ref{commisminus1}, applied to the  $0$ order operator $A_{\rm high}^{-k-1}T^i\Phi^j\tilde {\mathcal{P}}_{\rm bound}$ for a suitable projecting pair $\tilde p_{\rm bound}$ of $p_{\rm bound}$ and Convention~\ref{wehavem}, we find that 
\begin{equation}
\begin{split}
&\sum_{i+j = 0}^{k+1}\int_{\mathcal{M}_K\cap \{r \in I\}}\tilde{\chi}^2_{\tau-10,10}(s)\left|T^i\Phi^j\Psi_{\tau}\right|^2 \lesssim  A_{\rm high}^{2k+2}\int_{\mathcal{M}_K\cap \{r \in I\}}\tilde{\chi}^2_{\tau-10,10}(s)\left|\Psi_{\tau}\right|^2 
\\ &+ A_{\rm high}^{-2}\sup_{t^*} \int_{t^*-1/2}^{t^*+1/2}E_{0}\left[\Psi_{\tau}\right] 
 + A_{\rm high}^{2k}\int_{\mathcal{M}_K\cap \{r \in I\}}\tilde{\chi}_1^2 \left|\Psi_{\tau}\right|^2 
\lesssim A_{\rm high}^{2k+2}\mathscr{W}_{\tau} + A_{\rm high}^{-2}\sup_{t^*} \int_{t^*-1/2}^{t^*+1/2}E_{0}\left[\Psi_{\tau}\right].
\end{split}
\end{equation}

Arguing as in the proof of Lemma~\ref{applythatstufftotheboun}, we have that 
\begin{equation}\begin{split}
    & \int_{-\infty}^{\infty}\tilde{\chi}^2_{\tau-10,10}(t)E_{k-2, I}\left[\Box_{\tau}\Psi_{\tau}\right](s)\, ds 
    \lesssim \int_{-\infty}^{\infty}\tilde{\chi}^2_{\tau-10,10}(s)E_{k-2, I} [\mathcal{P}_{\rm bound}\tilde{F}]
    \\ &  + \int_{-\infty}^{\infty}\tilde{\chi}^2_{\tau-10,10}(s)E_{k-2, I} \left[[\Box_\tau, \mathcal{P}_{\rm bound}]\psi_\tau\right]
    \lesssim A_{\rm high}^{-2}E_{k-1}\left[\psi\right](0) \\ &   + \epsilon \sup_{t^*} \int_{t^*-1/2}^{t^*+1/2}E_{k,{r_{\rm low} \leq r \leq R_{\rm high}}}\left[\psi_{\tau}\right](s)\, ds.
\end{split}
\end{equation}

Combining these three estimates along with Lemma~\ref{applythatstufftotheboun} completes the proof of~\eqref{onetwothree}. The proof of~\eqref{onetwothreetwo} is analogous.
\end{proof}

In the next lemma we will use finite-time-energy estimates to obtain a useful estimate for $\mathscr{W}_{\tau}$.
\begin{lemma}\label{ljjkij2oi22lkj34jk} Let $\mathscr{W}_{\tau}$ be as in the statement of Lemma~\ref{spacetimeellip}. Then
\begin{equation}\label{jjiojio32oi12jnd}\begin{split}
    &\mathscr{W}_{\tau}  \lesssim \int_{\mathcal{M}_K}\tilde{\chi}^2_{\tau-10,10}(s)\left| \Psi_{\tau}\right|^2 r^{-4} + \sup_{t \in (10,\tau-10)}\int_{-\infty}^{\infty} \chi_{t}^2(s)E_0\left[\Psi_{\tau}\right](s)\, ds
    \\ &\qquad +\epsilon^2\sup_{t^*} \int_{t^*-1/2}^{t^*+1/2} E_{1,\{r_{\rm low} \leq r \leq R_{\rm high}\}}\left[\psi_{\tau}\right]\left(s\right)\, ds  + A_{\rm high}^{-2}E_0\left[\psi\right](0). 
\end{split}\end{equation}
\end{lemma}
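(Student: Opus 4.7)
The plan is to bound the two components of $\mathscr{W}_{\tau}$ separately. The first term appears verbatim on the RHS, so the whole task reduces to bounding the second term
\[
A_{\rm high}^{-2} \sup_{t^* \in (8,\tau-8)} \int_{t^*-1/2}^{t^*+1/2} \|\Psi_{\tau}(s)\|^2_{L^2(\Sigma_s \cap \{r \in I\})} ds.
\]
First I would dispose of the spatial $L^2$ norm. By Convention~\ref{wehavem}, $\Psi_{\tau}=\mathcal{P}_{\rm bound}\psi_{\tau}$ is supported on $m\neq 0$ Fourier modes, so the Poincar\'e inequality on $\mathbb{S}^2$ gives $\|\Psi_{\tau}\|_{L^2(\mathbb{S}^2)}\lesssim \|\Phi\Psi_{\tau}\|_{L^2(\mathbb{S}^2)}$. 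Since $I$ is bounded and contained in $(r_++\delta_0,\infty)$ where $r\sim 1$, this yields $\|\Psi_{\tau}(s)\|^2_{L^2(\Sigma_s\cap\{r\in I\})}\lesssim E_0[\Psi_{\tau}](s)$, so it suffices to bound the corresponding integral of $E_0[\Psi_{\tau}]$.

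Next I would split the supremum over $t^*$ into three regimes. For the bulk range $t^*\in(12,\tau-12)$, since $\chi_{t^*}(s)\geq 1$ on $[t^*-1,t^*+1]\supset[t^*-1/2,t^*+1/2]$, we directly obtain $\int_{t^*-1/2}^{t^*+1/2}E_0[\Psi_{\tau}](s)\,ds\leq \sup_{t\in(10,\tau-10)}\int\chi_t^2(s)E_0[\Psi_{\tau}](s)\,ds$. For the edge ranges $t^*\in(8,12]$ or $t^*\in[\tau-12,\tau-8)$ I would apply the standard forward/backward finite-in-time energy inequality for $\Box_{\tau}\Psi_{\tau}$: for any reference time $\sigma_0\in(10,12)$ (resp.\ $(\tau-12,\tau-10)$) and $s\in[t^*-1/2,t^*+1/2]$,
\[
E_0[\Psi_{\tau}](s)\lesssim E_0[\Psi_{\tau}](\sigma_0)+\int_{\min(s,\sigma_0)}^{\max(s,\sigma_0)}\int_{\Sigma_\sigma}|\Box_{\tau}\Psi_{\tau}|^2\,d\sigma.
\]
Averaging $\sigma_0$ over that reference interval and dominating by $\chi_{11}^2$ or $\chi_{\tau-11}^2$, the reference-time term contributes at most $\sup_{t\in(10,\tau-10)}\int\chi_t^2E_0[\Psi_{\tau}]$, while the source integral is taken over a bounded time slab contained in $[0,13]\cup[\tau-13,\tau]$.

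Finally, I would estimate the source integrals using $\Box_{\tau}\psi_{\tau}=\tilde F$ from Corollary~\ref{corextend}, which gives $\Box_{\tau}\Psi_{\tau}=\mathcal{P}_{\rm bound}\tilde F+[\Box_{\tau},\mathcal{P}_{\rm bound}]\psi_{\tau}$. Since the coefficients of $\Box_{g_K}$ and $\Box_g$ depend only on $r,\theta$, and both metrics are stationary and axisymmetric, while the symbol of $\mathcal{P}_{\rm bound}$ depends only on $(\omega,m)$, we have $[\Box_{g_K},\mathcal{P}_{\rm bound}]=[\Box_g,\mathcal{P}_{\rm bound}]=0$, so $[\Box_{\tau},\mathcal{P}_{\rm bound}]=[\eta_{\tau},\mathcal{P}_{\rm bound}]\,\Box_{\Delta}$ with $\Box_{\Delta}\doteq \Box_g-\Box_{g_K}$. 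For the first piece, $\tilde F$ is supported in $\{t^*<-3\}$ far from the integration slabs, so Theorem~\ref{offthesupportitsgood} (with $\tilde N=1$) yields $\int_0^{13}\int|\mathcal{P}_{\rm bound}\tilde F|^2\lesssim A_{\rm high}^{-2}E_0[\psi](0)$, and similarly on $[\tau-13,\tau]$; with the external $A_{\rm high}^{-2}$ factor this contributes $A_{\rm high}^{-4}E_0[\psi](0)\lesssim A_{\rm high}^{-2}E_0[\psi](0)$. For the commutator, $\eta_{\tau}'$ is supported in the two bounded intervals $[-1,0]\cup[\tau,\tau+1]$, so splitting $\eta_{\tau}$ and applying Lemma~\ref{usefulinthatonepart} gives $\int\|[\eta_{\tau},\mathcal{P}_{\rm bound}]\Box_{\Delta}\psi_{\tau}\|^2\lesssim \sup_{t^*}\int_{t^*-q}^{t^*+q}\|\Box_{\Delta}\psi_{\tau}\|^2$; since $\Box_{\Delta}$ has $C^0$ coefficients of size $O(\epsilon)$ supported in $\{r\in(r_{\rm low},R_{\rm high})\}$, this is bounded by $\epsilon^2\sup_{t^*}\int_{t^*-q}^{t^*+q}E_{1,\{r_{\rm low}\leq r\leq R_{\rm high}\}}[\psi_{\tau}]$, matching the corresponding term of the RHS. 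The main technical subtlety is handling the $\eta_{\tau}$ commutator: one must exploit both the exact commutation of $\mathcal{P}_{\rm bound}$ with the principal second-order structure of $\Box_g$ and $\Box_{g_K}$, and the fact that $\eta_{\tau}'$ lives on bounded intervals, in order to produce the $\epsilon^2$-weighted bound without losing control at top order.
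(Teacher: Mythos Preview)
Your proof is correct and takes a genuinely different route from the paper's. The paper does not invoke the Poincar\'e inequality at all: instead it applies the fundamental theorem of calculus in $t^*$ to bound the local-in-time $L^2$ norm of $\Psi_{\tau}$ itself by the spacetime integral $\int_{\mathcal{M}_K}\tilde{\chi}^2_{\tau-10,10}|\Psi_{\tau}|^2 r^{-4}$ (i.e.\ the first right-hand term) plus a single boundary term $\sup_{t^*\in(\tau-10,\tau-7)}E_0[\Psi_{\tau}](t^*)$; it then handles that one edge via a finite-in-time energy estimate from a reference time $\tilde{t}\in[\tau-11,\tau-10]$ chosen by the mean-value trick. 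Your approach instead upgrades $\|\Psi_{\tau}\|^2_{L^2}$ to $E_0[\Psi_{\tau}]$ via Convention~\ref{wehavem} and then dispatches the bulk directly with the second right-hand term $\sup_{t}\int\chi_t^2E_0[\Psi_{\tau}]$; the price is that you must run the finite-in-time argument at \emph{both} endpoints rather than one. Both proofs estimate the inhomogeneity $\Box_{\tau}\Psi_{\tau}=\mathcal{P}_{\rm bound}\tilde F+[\eta_{\tau},\mathcal{P}_{\rm bound}](\Box_g-\Box_{g_K})\psi_{\tau}$ in the same spirit, though the paper simply points to Lemma~\ref{applythatstufftotheboun} (which uses Theorem~\ref{offthesupportitsgood} for both pieces) while you use Lemma~\ref{usefulinthatonepart} for the $\eta_{\tau}$-commutator. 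Your route is arguably more transparent---it avoids the FTC step entirely---but it leans on the $m\neq 0$ reduction, whereas the paper's argument at this particular step does not.
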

\begin{proof}Using the fundamental theorem of calculus in the $t^*$-direction, we have
\begin{equation}\label{3ioj2oin3}\begin{split}
&\sup_{t^* \in (8,\tau-8)}\int_{t^*-1/2}^{t^*+1/2}\| \Psi_{\tau}(s) \|^2_{L^2\left(\Sigma_s\cap \{r\in I\}\right)} \lesssim \\ &\qquad \int_{\mathcal{M}_K}\tilde{\chi}^2_{\tau-10,10}(s)\left| \Psi_{\tau}\right|^2 r^{-4} + \sup_{t^* \in (\tau-10,\tau-7)}\| T \Psi_{\tau}(t) \|^2_{L^2\left(\Sigma_{t^*}\cap \{r \in I\}\right)}
\lesssim \\ &\qquad \int_{\mathcal{M}_K}\tilde{\chi}^2_{\tau-10,10}(s)\left| \Psi_{\tau}\right|^2 r^{-4} + \sup_{t^* \in (\tau-10,\tau-7)}E_0\left[\Psi_{\tau}\right](t^*).
\end{split}\end{equation}

We can find $\tilde{t} \in [\tau-11,\tau-10]$ so that
\[E_0\left[\Psi_{\tau}\right]\left(\tilde{t}\right) \lesssim \sup_{t^* \in (10,\tau-10)}\int_{-\infty}^{\infty} \chi_{t^*}^2(s)E_0\left[\Psi_{\tau}\right](s)\, ds.\]

In view of finite-in-time energy estimates, we have
\begin{equation}\begin{split}
   & \sup_{t^* \in (\tau-10,\tau-7)}E_0\left[\Psi_{\tau}\right](t^*) \lesssim E_0\left[\Psi_{\tau}\right]\left(\tilde{t}\right) + \int_{\mathcal{R}\left(\tau-11,\tau-7\right)}\left|\Box_{\tau}\left(\Psi_{\tau}\right)\right|^2
    \\ &\qquad \lesssim \sup_{t^* \in (10,\tau-10)}\int_{-\infty}^{\infty} \chi_{t^*}^2(s)E_0\left[\Psi_{\tau}\right](s)\, ds + \int_{\mathcal{R}\left(\tau-11,\tau-7\right)}\left|\Box_{\tau}\left(\Psi_{\tau}\right)\right|^2.
\end{split}\end{equation}

We have
\[\Box_{\tau}\Psi_{\tau} = -\left[\mathcal{P}_{\rm bound},\Box_{\tau}\right]\psi_{\tau} + \mathcal{P}_{\rm bound}\tilde{F}.\]
Thus, arguing just as in the proof of Lemma~\ref{applythatstufftotheboun} allows us to obtain
\begin{equation}\begin{split}
&\int_{\mathcal{R}\left(\tau-11,\tau-7\right)}\left|\Box_{\tau}\left(\Psi_{\tau}\right)\right|^2 \lesssim \epsilon^2\sup_{t^*} \int_{t^*-1/2}^{t^*+1/2} E_{1,\{r_{\rm low} \leq r \leq R_{\rm high}\}}\left[\psi_{\tau}\right]\left(s\right)\, ds
\\ &\qquad + A_{\rm high}^{-2}E_0\left[\psi\right](0),
\end{split}\end{equation}
which finishes the proof.
\end{proof}

 In this next corollary, we give our final estimate $\mathscr{W}_{\tau}$.
\begin{corollary}\label{3oij1ioj4oi2}Let $\mathscr{W}_{\tau}$ be as in the statement of Lemma~\ref{spacetimeellip}. Then we have
\begin{equation}\begin{split}
&\mathscr{W}_{\tau} \lesssim A_{\rm high}^4\int_{-\infty}^{\infty}\chi^2_{10}(s)E_0\left[\Psi_{\tau}\right](s)\, ds 
 + A_{\rm high}^4 \epsilon^2 \sup_{t^*} \int_{t^*-1/2}^{t^*+1/2} E_{1,\{r_{\rm low} \leq r \leq R_{\rm high}\}}\left[\psi_{\tau}\right]\left(s\right)\, ds
 \\ &\qquad + A_{\rm high}^2E_0\left[\psi\right](0) +\left(\epsilon+A_{\rm high}^{-6}\right)\sup_{t^*} \int_{t^*-1/2}^{t^*+1/2}E_{1}\left[\psi_{\tau}\right](s)\, ds.
\end{split}
\end{equation}

\end{corollary}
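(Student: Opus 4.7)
The plan is to combine the three preceding lemmas and absorb the $\mathscr{W}_\tau$ term via a careful choice of the parameter $\mu$ in Lemma~\ref{applythatstufftotheboun}. Starting from Lemma~\ref{ljjkij2oi22lkj34jk}, the control of $\mathscr{W}_\tau$ is reduced to bounding the two quantities
\[
(a) \ \int_{\mathcal{M}_K}\tilde{\chi}^2_{\tau-10,10}(s)|\Psi_\tau|^2 r^{-4},\qquad (b) \ \sup_{t\in(10,\tau-10)}\int_{-\infty}^{\infty}\chi_t^2(s)E_0[\Psi_\tau](s)\,ds,
\]
modulo the already-acceptable lower order terms $\epsilon^2 \sup_{t^*}\int E_{1,\{r_{\rm low}\le r\le R_{\rm high}\}}[\psi_\tau]$ and $A_{\rm high}^{-2}E_0[\psi](0)$.

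Next, I apply Lemma~\ref{applythatstufftotheboun} with a small parameter $\mu>0$, which dominates $(a)+(b)$ by $(\mu+\epsilon^2)\int_{-\infty}^{\infty}\tilde\chi^2_{\tau-10,10}E_{1,\{r_{\rm low}\le r\le R_{\rm high}\}}[\Psi_\tau]$, $\mu^{-1}\int \chi^2_{10}E_0[\Psi_\tau]$, $\mu^{-1}\epsilon^2 \sup \int E_{1,\{r_{\rm low}\le r\le R_{\rm high}\}}[\psi_\tau]$, and $\mu^{-1}A_{\rm high}^{-2}E_0[\psi](0)$. The only term not yet of the desired form is the one weighted by $\mu+\epsilon^2$; this is exactly what Lemma~\ref{spacetimeellip} (with $k=1$) is designed to handle, producing
\[
(\mu+\epsilon^2)\Bigl[(\epsilon+A_{\rm high}^{-2})\sup_{t^*}\!\int\! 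E_1[\psi_\tau] + A_{\rm high}^{-2}E_0[\psi](0) + A_{\rm high}^{4}\mathscr{W}_\tau\Bigr].
\]

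The key step is then to pick $\mu=cA_{\rm high}^{-4}$ with $c$ sufficiently small, so that, in tandem with the smallness of $\epsilon$ (Convention~\ref{episalwayssmall}), one has $(\mu+\epsilon^2)A_{\rm high}^4\le 1/2$. This lets us absorb the $\mathscr{W}_\tau$ term on the right back into the left side. With this choice, $\mu^{-1}\sim A_{\rm high}^4$ produces the claimed $A_{\rm high}^4\int\chi^2_{10}E_0[\Psi_\tau]$ and $A_{\rm high}^4\epsilon^2 \sup\!\int E_{1,\{r_{\rm low}\le r\le R_{\rm high}\}}[\psi_\tau]$ terms, while $\mu^{-1}A_{\rm high}^{-2}\sim A_{\rm high}^2$ gives the $A_{\rm high}^2 E_0[\psi](0)$ coefficient; the remaining $(\mu+\epsilon^2)(\epsilon+A_{\rm high}^{-2})\sup\!\int E_1[\psi_\tau]$ term simplifies to at most $(\epsilon + A_{\rm high}^{-6})\sup\!\int E_1[\psi_\tau]$ since its worst contributions are $\epsilon^{2}\cdot\epsilon$, $\epsilon^{2}A_{\rm high}^{-2}$, $A_{\rm high}^{-4}\epsilon$ (all bounded by $\epsilon$), and $A_{\rm high}^{-6}$.

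I expect the main obstacle to be purely bookkeeping: matching the powers of $A_{\rm high}$ so that the absorbing step succeeds while the resulting coefficients are still compatible with what will eventually be required of this estimate when it is combined with the other frequency pieces $\mathcal{P}_{n,\rm high}\psi_\tau$ and $\mathcal{P}_{HT}\psi_\tau$. There is no new analytic content in this corollary — all the substantial work (the redshift estimate, elliptic estimate, and the integrated energy decay estimate of~\cite{waveKerrlargea}) is packaged into the three lemmas being invoked; the corollary is simply the clean statement one gets after optimally tuning the interpolation parameter $\mu$.
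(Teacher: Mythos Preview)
Your proposal is correct and follows essentially the same approach as the paper's proof: start from Lemma~\ref{ljjkij2oi22lkj34jk}, feed the first two terms into Lemma~\ref{applythatstufftotheboun}, control the resulting $(\mu+\epsilon^2)$-weighted spacetime term via~\eqref{onetwothree} with $k=1$, and then take $\mu$ proportional to $A_{\rm high}^{-4}$ to absorb the $\mathscr{W}_\tau$ on the right. Your coefficient bookkeeping also matches the paper's.
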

\begin{proof}

 We start by applying the estimate for $\mathscr{W}_{\tau}$ from Lemma~\ref{ljjkij2oi22lkj34jk}, then, for any $\mu > 0$, use Lemma~\ref{applythatstufftotheboun} to estimate the first two terms on the right hand side of~\eqref{jjiojio32oi12jnd}, and finally use~\eqref{onetwothree} with $k = 1$ to obtain that 
\begin{equation}\begin{split}
&\mathscr{W}_{\tau} \lesssim \mu^{-1}\int_{-\infty}^{\infty}\chi^2_{10}(s)E_0\left[\Psi_{\tau}\right](s)\, ds 
\\ &\qquad + \mu^{-1} \epsilon^2 \sup_{t^*} \int_{t^*-1/2}^{t^*+1/2} E_{1,\{r_{\rm low} \leq r \leq R_{\rm high}\}}\left[\psi_{\tau}\right]\left(s\right)\, ds+ \mu^{-1}A_{\rm high}^{-2}E_0\left[\psi\right](0)
\\ &\qquad +\left(\mu+\epsilon^2\right)\left[\left(\epsilon  + A_{\rm high}^{-2}\right) \sup_{t^*} \int_{t^*-1/2}^{t^*+1/2}E_{1}\left[\psi_{\tau}\right](s)\, ds + A_{\rm high}^{-2}E_{0}\left[\psi\right](0) +A_{\rm high}^4\mathscr{W}_{\tau}\right]
 \\ &\qquad + \epsilon^2\sup_{t^*} \int_{t^*-1/2}^{t^*+1/2} E_{1,\{r_{\rm low} \leq r \leq R_{\rm high}\}}\left[\psi_{\tau}\right]\left(s\right)\, ds  + A_{\rm high}^{-2}E_0\left[\psi\right](0). 
\end{split}
\end{equation}
We then take $\mu = \mu_0 A_{\rm high}^{-4}$ for sufficiently small $\mu_0 > 0$ to finish the proof.

\end{proof}

Now we give the proof of Lemma~\ref{enerellipboundedfreq}.
\begin{proof}We simply combine Lemma~\ref{applythatstufftotheboun} and Lemma~\ref{spacetimeellip} and use Corollary~\ref{3oij1ioj4oi2} to control the $\mathscr{W}_{\tau}$ that shows up on the right hand side of the estimates of Lemma~\ref{spacetimeellip}. 
\end{proof}

\subsection{(Generalized) large superradiant frequency estimates}\label{gensuper}

In this section we will estimate $\mathcal{P}_{n,\rm high}\psi_{\tau}$.  In what follows, we will repeatedly use, often without explicit comment, that for any $f \in \mathbb{H}_K$,
\[\int_{\mathbb{S}^2}\left|\mathcal{P}_{n,\rm high}f\right|^2\, d\mathbb{S}^2 \lesssim A_{\rm high}^{-2}\int_{\mathbb{S}^2}\left|\Phi\mathcal{P}_{n,\rm high}f\right|^2\, d\mathbb{S}^2,\]
which holds in view of the fact that on the support of $\mathcal{P}_{n,\rm high}$ we have that $|m| \gtrsim A_{\rm high}$.

Our main result of the section will be the following.
\begin{lemma}\label{finalthingfromthatblackthingy}Let $\psi$ be as in the statement of Theorem~\ref{theoc1pert}. Choose $\tau \gg 1$ and let $\psi_{\tau}$ be defined by Definition~\ref{defpsitau}. Let $1\leq n \leq N_s$, and set $\Psi_{\tau} \doteq \mathcal{P}_{n, \rm high}\psi_{\tau}$. Let $k \geq 1$ be an integer, and  $\delta$ be a small constant. Then we have that 
\begin{equation}\begin{split}
&\sup_t\int_{-\infty}^{\infty}\chi_t(s) E_{k, r \geq   R_{\rm high}}\left[\Psi_{\tau}\right](s)\, ds + \int_{-\infty}^{\infty}E_{k,r \in [r_{\rm low},R_{\rm high}]}\left[\Psi_{\tau}\right](s)\, ds 
\\ &\qquad + \int_{\mathcal{M}_K \cap \{r \in [r_{\rm low},R_{\rm high}]\}}\left|\Psi_{\tau}\right|^2 \lesssim 
\delta^{-1}E_k\left[\psi\right](0) +  \left( \delta + A_{\rm high}^{-2}\right)\sup_{t^*} \int_{t^*-1/2}^{t^*+1/2}E_k\left[\psi_{\tau}\right](s)\, ds.
\end{split}
\end{equation}
\end{lemma}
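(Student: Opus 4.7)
The plan is to regard $\Psi_\tau$ as a solution of a wave equation with respect to $\Box_{g_K}$ and then apply the black-box integrated local energy decay estimate Theorem~\ref{fromtheblackboxlargea} from~\cite{blackboxlargea}. The point is that by the construction of Definition~\ref{thisdefinedpnhigh} the Fourier support of $\mathcal{P}_{n,\rm high}$ lies in the range supporting $\overset{\musEighth}{\rho_n}$, which is disjoint from the trapped set; hence the ILED coming from~\cite{blackboxlargea} will produce exactly the three quantities on the left-hand side of the claim: an integrated-in-time energy on $[r_{\rm low},R_{\rm high}]$, a supremum-in-time energy flux on $\{r\geq R_{\rm high}\}$, and bulk $L^2$-control of $\Psi_\tau$ itself.

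Using that $\Box_\tau\psi_\tau=\tilde F$ from Corollary~\ref{corextend}, that $[\Box_{g_K},\mathcal{P}_{n,\rm high}]=0$ by the stationarity and axisymmetry of $g_K$, and that $\Box_{g_K}-\Box_\tau=-\eta_\tau\Delta_g$ with $\Delta_g\doteq\Box_g-\Box_{g_K}$, the equation for $\Psi_\tau$ reads
\[
\Box_{g_K}\Psi_\tau \;=\; \mathcal{P}_{n,\rm high}\tilde F \;+\; [\eta_\tau\Delta_g,\mathcal{P}_{n,\rm high}]\psi_\tau \;-\; \eta_\tau\Delta_g\,\Psi_\tau.
\]
By Definition~\ref{themetricclass}, $\Delta_g$ is a second-order operator with coefficients of size $O(\epsilon)$ supported in $r\in[r_{\rm low},R_{\rm high}]$. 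Feeding this equation into Theorem~\ref{fromtheblackboxlargea} reduces the claim to estimating the spacetime $L^2$-norm of the right-hand side above plus lower-order initial data terms.

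I would then bound the three source terms as follows. The piece $\eta_\tau\Delta_g\Psi_\tau$ contributes at most $\epsilon^2\int \tilde\chi^2 E_{1,\{r_{\rm low}\leq r\leq R_{\rm high}\}}[\Psi_\tau]\,ds$, which for $\epsilon$ small is absorbed into the integrated-energy left-hand side after a standard $\delta$--$\delta^{-1}$ Cauchy--Schwarz split. For $\mathcal{P}_{n,\rm high}\tilde F$, since $\tilde F$ is supported in $\{t^*<-3\}$ by Theorem~\ref{extendtoeverywhere} while the ILED is applied on $\{t^*\geq 0\}$, the pseudo-locality bound~\eqref{L1disj} of Theorem~\ref{offthesupportitsgood} yields arbitrary polynomial decay in $A_{\rm high}^{-1}$ against $\|\tilde F\|_{L^2}^2\lesssim E_0[\psi](0)$. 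The commutator $[\eta_\tau\Delta_g,\mathcal{P}_{n,\rm high}]\psi_\tau$ is the most delicate: by Lemma~\ref{commisminus1} the commutator of the $(t^*,\phi^*)$-pseudo-differential operator $\mathcal{P}_{n,\rm high}$ with the coefficient part of $\eta_\tau\Delta_g$ is of order $-1$ and so gains a factor $A_{\rm high}^{-2}$; together with the $\epsilon$-size of $\Delta_g$ this produces exactly the $(\delta+A_{\rm high}^{-2})$ pre-factor on the right.

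For $k\geq 2$, I would commute the whole scheme with $T^j\Phi^i$ for $i+j\leq k$: both $T$ and $\Phi$ commute with $\Box_{g_K}$ (they are Killing) and with $\mathcal{P}_{n,\rm high}$ (its symbol depends only on $\omega,m$), so new contributions only come from commuting with $\eta_\tau\Delta_g$, and are again controlled by the $\epsilon$-smallness combined with Lemma~\ref{commisminus1}. The elliptic estimate Lemma~\ref{ellipt} with $\mathcal Q=\Box_{g_K}$ then upgrades $T,\Phi$-derivative control into full $E_k$-control on $[r_{\rm low},R_{\rm high}]$, while on $\{r\geq R_{\rm high}\}$ a plain $T$-energy estimate (available because $g=g_K$ and $T$ is timelike there) converts the integrated bound into the claimed supremum bound. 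The main obstacle I anticipate is precisely the top-order handling of $[\eta_\tau\Delta_g,\mathcal{P}_{n,\rm high}]\psi_\tau$: naively one seems to lose a derivative on $\psi_\tau$, and closing the estimate with the claimed $(\delta+A_{\rm high}^{-2})$ pre-factor requires combining the order $-1$ gain from the pseudo-differential calculus with the $\epsilon$-factor from $\Delta_g$ and tracking the $t^*$-dependence introduced by $\eta_\tau$ via the $L^1_t$-pseudo-locality bound~\eqref{L1disj} rather than a plain $L^2$-composition.
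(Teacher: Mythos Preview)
Your overall strategy—view $\Psi_\tau$ as a solution of $\Box_{g_K}\Psi_\tau=\text{source}$ and invoke the non-trapped ILED ingredients of Theorem~\ref{fromtheblackboxlargea}—matches the paper's. But there is a genuine gap at top order, and it is not where you think it is.

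First, Theorem~\ref{fromtheblackboxlargea} does not hand you an estimate of the form ``$\text{LHS}\lesssim \|\Box_{g_K}\Psi_\tau\|_{L^2}^2$''. It supplies a current $(V_n,w_n)$ and an auxiliary operator $\mathscr{A}_n$; the paper then applies the divergence identity (Lemma~\ref{dividwiththatblackboxcurrent}) to $T\Psi_\tau$ and $\check\chi\Phi\Psi_\tau$, so the source enters \emph{paired with the multiplier} as $\int\mathrm{Re}\bigl((V_nT\Psi_\tau+w_nT\Psi_\tau)\,\overline{\Box_{g_K}T\Psi_\tau}\bigr)$, not as a bare $L^2$-norm. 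This distinction is essential for the next point.

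Second, after commuting with $T$ (or $\Phi$) the dangerous piece is $(\Box_{g_K}-\Box_\tau)T\Psi_\tau$, which schematically is $\epsilon\,\partial^2 T\Psi_\tau$—three derivatives of $\Psi_\tau$. If you Cauchy--Schwarz the pairing $\int (V_nT\Psi_\tau)\,\overline{(\Box_{g_K}-\Box_\tau)T\Psi_\tau}$ you are forced to control $\int E_2[\Psi_\tau]$, which is one derivative more than the $E_1$ left-hand side affords; your proposed absorption via ``$\epsilon^2\int E_1[\Psi_\tau]$'' does not survive the commutation step. The paper avoids this loss by exploiting the \emph{bilinear} structure of the multiplier term together with the exact-divergence identity of Lemma~\ref{thatscoolintbyparts}: since $V_n$ and $\Box_{g_K}-\Box_\tau$ have coordinate-vector-field form, $\mathrm{Re}\bigl((\partial_i\partial_j f)\,\partial_k f\bigr)$ is a total derivative, and the pairing collapses to $\epsilon\int E_{0,[r_{\rm low},R_{\rm high}]}[T\Psi_\tau]\sim\epsilon\int E_1[\Psi_\tau]$, which \emph{can} be absorbed. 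This integration-by-parts trick is the missing idea in your sketch.

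Third, you have the commutator term backwards. Since $g$ is stationary and axisymmetric, the coefficients of $\Delta_g=\Box_g-\Box_{g_K}$ commute \emph{exactly} with $\mathcal{P}_{n,\rm high}$; the only nontrivial commutator is $[\mathcal{P}_{n,\rm high},\eta_\tau]\Delta_g\psi_\tau$. Here $\eta_\tau$ is not compactly supported (so Lemma~\ref{commisminus1} does not apply), but $\eta_\tau'$ is, and the correct tool is Lemma~\ref{usefulinthatonepart}. This term is in fact benign and carries an $\epsilon$-factor with no derivative loss; the difficulty lies entirely in the non-commutator piece discussed above.
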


We will heavily rely on the following estimate which will be a consequence of a mild modification of a result of~\cite{blackboxlargea}.

\begin{theorem}\label{fromtheblackboxlargea}Let $n \in [1,N_s]$. Then there exists a vector field $V_n$, $w_n$, a constant $C_n$, and an operator $\mathscr{A}_n : \mathbb{H}_K\to \mathbb{R}_{\geq 0}$ which satisfy the following:
\begin{enumerate}
    \item Let $f \in \mathbb{H}_K$. Then
    \begin{enumerate}
        \item $r \leq  r_{\rm low} \Rightarrow \tilde{K}^{V_n,w_n}\left[f\right] \geq 0$.
        \item $r \geq R_{\rm high} \Rightarrow \tilde{K}^{V_n,w_n}\left[f\right] \geq 0$.
        \item\label{wheretildedeltawasfrom}  Let $\tilde{\delta} > 0$ be suitable small. We have

        \begin{enumerate}\item
        \begin{equation}\begin{split}
        &\mathscr{A}_n\left[ \mathcal{P}_{n,\rm high}f\right]+\int_{\mathcal{M}_K\cap \left\{r \in  [(1-\tilde{\delta})r_{\rm low},(1+\tilde{\delta})R_{\rm high}]\right\}}\tilde{K}^{V_n,w_n}\left[\mathcal{P}_{n,\rm high}f\right]  \gtrsim \\ &\qquad \int_{-\infty}^{\infty}E_{0,r \in  [(1-\tilde{\delta})r_{\rm low},(1+\tilde{\delta})R_{\rm high}]}\left[\mathcal{P}_{n,\rm high}f\right](s)\, ds
        \end{split}\end{equation}
 \item
 \begin{equation}\begin{split}
        &\mathscr{A}_n\left[ \mathcal{P}_{n,\rm high}f\right]+\int_{\mathcal{M}_K\cap \left\{r \in  [(1-\tilde{\delta})r_{\rm low},(1+\tilde{\delta})R_{\rm high}]\right\}}\tilde{K}^{V_n,w_n}\left[\mathcal{P}_{n,\rm high}Tf\right]
        \gtrsim \\ &\qquad \int_{-\infty}^{\infty}E_{0,r \in  [(1-\tilde{\delta})r_{\rm low},(1+\tilde{\delta})R_{\rm high}]}\left[\mathcal{P}_{n,\rm high}Tf\right](s)\, ds,
        \end{split}\end{equation}
    \item
         \begin{equation}\begin{split}
        &\mathscr{A}_n\left[ \mathcal{P}_{n,\rm high}f\right]+\int_{\mathcal{M}_K\cap \left\{r \in  [(1-\tilde{\delta})r_{\rm low},(1+\tilde{\delta})R_{\rm high}]\right\}} 
    \tilde{K}^{V_n,w_n}\left[\mathcal{P}_{n,\rm high}\Phi f\right]
        \gtrsim \\ &\qquad \int_{-\infty}^{\infty} E_{0,r \in  [(1-\tilde{\delta})r_{\rm low},(1+\tilde{\delta})R_{\rm high}]}\left[\mathcal{P}_{n,\rm high}\Phi f\right](s)\, ds
        \end{split}\end{equation}
        \end{enumerate}
          \item For the region $r \geq  R_{\rm high}$, we have
        \[\int_{\mathcal{M}_K \cap \{r \geq  R_{\rm high}\}}\tilde{K}^{V_n,w_n}\left[\mathcal{P}_{n,\rm high}f\right] \gtrsim \sum_{\left|\textbf{k}\right| = 1}\int_{\mathcal{M}_K \cap \{r \geq  R_{\rm high}\}}\left|\tilde{\mathfrak{D}}^{\textbf{k}}\mathcal{P}_{n,\rm high}f\right|^2r^{-2}\]
    \end{enumerate}
    \item The vector field $V_n$ and function $w_n$ satisfy the following:
    \begin{enumerate}
        \item $V_n$ is timelike when  $r \leq r_{\rm low}$  and $r \geq R_{\rm high}/10$. 
        \item $V_n$ commutes with $T$ and $\Phi$.
        \item For  $r \geq R_{\rm high}/10$, we have that
        \[V_n = a_t T + a_r \partial_r,\]
        for smooth functions $a_t$ and $a_r$, depending only on $r$ and which satisfy
        \[\left|a_t\right| + \left|a_r\right| + r^2\left[\left|\partial_ra_t\right| + \left|\partial_ra_r\right|\right] \lesssim 1.\]
        \item We have that $|w_n| \lesssim r^{-2}$. 
        
    \end{enumerate}
   \item\label{theanpart} The operator $\mathscr{A}_n$ satisfies the following: Let $\tilde{\delta}$ be as in~\ref{wheretildedeltawasfrom}, then we have
        \begin{equation}\label{mathscranbound}\begin{split}
        \mathscr{A}_n\left[{\mathcal{P}_{n,\rm high}}f\right]
       &\lesssim_{\tilde{\delta}}   \int_{-\infty}^{\infty}E_{0,r \in  [\left(1+\tilde{\delta}\right)r_{\rm low},\left(1-\tilde{\delta}\right)R_{\rm high}]}\left[{ \mathcal{P}_{n,\rm high}}f\right](t)\, dt 
       \\ & +
       \int_{\mathcal{M}_K \cap \left\{r \in  [\left(1+\tilde{\delta}\right)r_{\rm low},\left(1-\tilde{\delta}\right)R_{\rm high}]\right\}}\left[\left|{ \mathcal{P}_{n,\rm high}}f\right|^2 + \left|\Box_{g_K}{\mathcal{P}_{n,\rm high}}f\right|^2\right].
        \end{split}\end{equation}

\end{enumerate}
\end{theorem}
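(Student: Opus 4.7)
The plan is to take the construction in Section~4 of~\cite{blackboxlargea} essentially verbatim, after identifying the slightly modified frequency cutoffs. In that reference, for each superradiant frequency window $I_n = \mathrm{supp}(\overset{\musEighth}{\rho_n})$, a vector field $V_n$ and a function $w_n$ are built so that the twisted bulk current $\tilde{K}^{V_n, w_n}[f]$, when tested against a function whose $(t^*,\phi^*)$-Fourier support lies in $\{(\omega,m) : \omega/(am) \in I_n\} \cap \{\omega^2+m^2 \gtrsim A_{\rm high}^2\}$, is pointwise non-negative on $\{r \leq r_{\rm low}\} \cup \{r \geq R_{\rm high}\}$ and controls the full energy in the interior. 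Items~(1a), (1b), and~(2) of the theorem follow directly from these properties of $V_n$ and $w_n$; the shape of $V_n$ near infinity recorded in item~(2c) and the decay $|w_n| \lesssim r^{-2}$ are exactly what is obtained from the Morawetz-type construction there. The only real modification is that our projection $\mathcal{P}_{n,\rm high}$ multiplies by an extra high-frequency cutoff $1 - q((\omega^2+m^2)/A_{\rm high}^2)$; this cutoff commutes with every operator appearing in the construction of $V_n, w_n$, and so requires no change to the current itself.

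For item~(1c), the coercivity of $\mathscr{A}_n[\mathcal{P}_{n,\rm high}f] + \int_{\mathcal{M}_K \cap \{r \in [(1-\tilde\delta)r_{\rm low}, (1+\tilde\delta)R_{\rm high}]\}} \tilde{K}^{V_n,w_n}[\mathcal{P}_{n,\rm high}f]$ against the spacetime energy of $\mathcal{P}_{n,\rm high}f$ over the same region is precisely the main interior estimate of~\cite{blackboxlargea}; it exploits the fact that the trapped set for frequencies in $I_n$ is a single hypersurface $r = r_n^{\mathrm{trap}} \in (r_{\rm low}, R_{\rm high})$, where the construction of $V_n$ degenerates only at zeroth order and the degeneration can be recovered by elliptic estimates. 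The operator $\mathscr{A}_n$ is then defined to collect precisely these elliptic error terms: a suitable multiple of the spacetime $L^2$ norm of $\mathcal{P}_{n,\rm high}f$ together with the spacetime $L^2$ norm of $\Box_{g_K}\mathcal{P}_{n,\rm high}f$ in the slightly enlarged region $\{r \in [(1+\tilde\delta)r_{\rm low},(1-\tilde\delta)R_{\rm high}]\}$, so that the upper bound~\eqref{mathscranbound} is essentially tautological. The analogous estimates in items~(1c)(ii) and~(1c)(iii) for $Tf$ and $\Phi f$ follow because both $T$ and $\Phi$ commute with $\mathcal{P}_{n,\rm high}$ (as its symbol depends only on $(\omega,m)$), so one simply applies the basic estimate to $Tf$ and $\Phi f$ in place of $f$.

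The hardest technical point, and the bulk of what Appendix~B should contain, is verifying that all boundary contributions in the divergence-theorem manipulations are legitimate. The currents in~\cite{blackboxlargea} were originally applied within an integrated-energy-decay framework where the solution had compact support in $t^*$; here we work with functions $\mathcal{P}_{n,\rm high}f \in \mathbb{H}_K$ which only decay as $t^* \to \pm\infty$ in an $L^2$ sense. One thus has to apply the divergence theorem on strips $\{-T \leq t^* \leq T\}$, check via Remark~\ref{endecay} and the $L^2$-boundedness of $\mathcal{P}_{n,\rm high}$ on $\mathbb{H}$ (Corollary~\ref{OonPw}) that the spacelike boundary contributions tend to zero along a subsequence $T_k \to \infty$, and only then close the estimate to produce item~(1c). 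A secondary point is confirming that $\tilde{\mathbf{K}}^{V_n,w_n}$ coincides, in our sign conventions~\eqref{theKdeftwist}, with the current used in~\cite{blackboxlargea}, so that item~(1d) (the improvement at large $r$ to an $r^{-2}$-weighted control of $\tilde{\mathfrak{D}}^{\textbf{k}}$) is recovered from the large-$r$ Morawetz estimate of that work with no change beyond relabeling.
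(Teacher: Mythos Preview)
Your overall strategy is right---this theorem is indeed a repackaging of the superradiant currents from~\cite{blackboxlargea}, and the extra high-frequency cutoff in $\mathcal{P}_{n,\rm high}$ causes no trouble because the underlying estimates are proved frequency-by-frequency after Plancherel. But you have misjudged where the actual work lies, and one of your claims is not correct as stated.

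First, the boundary-term issue you flag as ``the hardest technical point'' is dispatched in one line in the paper: since $\psi, F \in \mathbb{H}_K$, the limits $\tau_0 \to -\infty$, $\tau_1 \to \infty$ in Theorem~3.3 of~\cite{blackboxlargea} are unproblematic. What actually occupies the paper's Appendix~B is tracking the $r$-support of the error terms. The last three terms on the right-hand side of Theorem~3.3 in~\cite{blackboxlargea} (arising from $\Delta r^{-5}|u|^2$ and $\Delta \iota_{\rm elliptic}(V_0+\omega^2)|u|^2$ in Propositions~A.4.3, A.4.5, A.4.7 there) are not a~priori localized to a compact $r$-interval; the paper argues that the first of these is absorbed because $m^2 \gg 1$ on the support of $\mathcal{P}_{n,\rm high}$, and the second is localized by the explicit support properties of $\iota_{\rm elliptic}$. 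Only after this does one know the errors live in $[(1+\tilde\delta)r_{\rm low},(1-\tilde\delta)R_{\rm high}]$.

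Second, your assertion that~\eqref{mathscranbound} is ``essentially tautological'' is not right. The operator $\mathscr{A}_n$ is \emph{not} defined as the right-hand side of~\eqref{mathscranbound}; it is defined as the sum of the three specific error terms from Theorem~3.3 of~\cite{blackboxlargea} (with $k=1$), applied to $\check{\chi}(r)\psi$ for a cutoff $\check{\chi}$ supported in $[(1+\tilde\delta)r_{\rm low},(1-\tilde\delta)R_{\rm high}]$. The bound~\eqref{mathscranbound} then requires invoking Proposition~3.3.4 of~\cite{blackboxlargea} applied to $\check{\chi}\psi$, together with the observation that $\Box_{g_K}(\check{\chi}\psi)$ differs from $\check{\chi}\Box_{g_K}\psi$ by first-order terms supported where $\check{\chi}' \neq 0$. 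This is not hard, but it is not tautological either. Relatedly, note that in items~(1c)(ii) and~(1c)(iii) the error term is $\mathscr{A}_n[\mathcal{P}_{n,\rm high}f]$, \emph{not} $\mathscr{A}_n[\mathcal{P}_{n,\rm high}Tf]$ or $\mathscr{A}_n[\mathcal{P}_{n,\rm high}\Phi f]$; this works because the paper takes the $k=1$ version of the error terms, which already dominates the commuted versions---your ``just apply the basic estimate to $Tf$, $\Phi f$'' would produce the wrong error term.
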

\begin{proof}See Appendix~\ref{provethatkeythm}.
\end{proof}

In the next lemma we combine Theorem~\ref{fromtheblackboxlargea} with an application of the divergence identity to produce a spacetime estimate for $\mathcal{P}_{n,\rm high}$.
\begin{lemma}\label{dividwiththatblackboxcurrent}Let $\psi$ be as in the statement of Theorem~\ref{theoc1pert}. Choose $\tau \gg 1$ and let $\psi_{\tau}$ be defined by Definition~\ref{defpsitau}. Let $n \in [1,N_s]$, and set $\Psi_{\tau} \doteq \mathcal{P}_{n, \rm high}\psi_{\tau}$. Then we have that
\begin{equation}\label{afterthefloodT}\begin{split}
&\int_{\mathcal{M}_K}\tilde{K}^{V_n,w_n}\left[T\Psi_{\tau}\right] \lesssim \int_{\mathcal{M}_K}{\rm Re}\left(\left(V_nT\Psi_{\tau} + w_nT\Psi_{\tau}\right)\overline{\Box_{g_K}T\Psi_{\tau}}\right)
\\ &\qquad +A_{\rm high}^{-2}\sup_{t^*} \int_{t^*-1/2}^{t^*+1/2}E_1\left[\psi_{\tau}\right](s)\, ds + E_1\left[\psi\right](0).
\end{split}\end{equation}

Let $\check{\chi}(r)$ be a function which is identically $1$ for $r \leq  R_{\rm high}/5$ and identically $0$ for $r \geq  R_{\rm high}/2$. Then we have that 
\begin{equation}\label{afterthefloodphi}
\int_{\mathcal{M}_K}\tilde{K}^{V_n,w_n}\left[\check{\chi}\Phi\Psi_{\tau}\right] \lesssim \int_{\mathcal{M}_K}{\rm Re}\left(\left(V_n\check{\chi}\Phi\Psi_{\tau} + w_n\check{\chi}\Phi\Psi_{\tau}\right)\overline{\Box_{g_K}\check{\chi}\Phi\Psi_{\tau}}\right).
\end{equation}
\end{lemma}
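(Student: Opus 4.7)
The plan is to prove both estimates by applying the twisted divergence identity~\eqref{divIdentitytwist} with $V = V_n$, $w = w_n$, on the entire region $\mathcal{M}_K$, but to the specific functions $T\Psi_{\tau}$ (for~\eqref{afterthefloodT}) and $\check{\chi}\Phi\Psi_{\tau}$ (for~\eqref{afterthefloodphi}). Integrating the identity $\nabla^{\mu}\tilde{\mathbf{J}}_{\mu} = \tilde{\mathbf{K}} + {\rm Re}(\tilde{\mathbf{H}}\overline{\Box_{g_K} f})$ yields
\[
\int_{\mathcal{M}_K}\tilde{\mathbf{K}}^{V_n, w_n}[f] + \int_{\mathcal{M}_K}{\rm Re}(\tilde{\mathbf{H}}^{V_n, w_n}[f]\overline{\Box_{g_K} f}) = \int_{\partial \mathcal{M}_K}\tilde{\mathbf{J}}^{V_n, w_n}_{\mu}[f] n^{\mu}.
\]
The boundary of $\mathcal{M}_K$ consists of the future horizon $\mathcal{H}^+$ and the asymptotic hypersurfaces $t^* \to \pm\infty$. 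By the properties of $V_n$ from Theorem~\ref{fromtheblackboxlargea}, $V_n$ is future-directed timelike on $\{r \leq r_{\rm low}\}$ (which includes the horizon), so the horizon flux $\int_{\mathcal{H}^+}\tilde{\mathbf{J}}^{V_n,w_n}_\mu[f] K^\mu$ has a favorable sign (the standard dominant-energy-condition type positivity, with the twisted $\mathcal{V}|f|^2$ term preserving the sign since $g(V_n, K) < 0$), and may be dropped after rearrangement. Finally, we convert the twisted $\tilde{\mathbf{H}}^{V_n,w_n}$ into the untwisted $V_n f + w_n f$ appearing in the statement by using $\tilde{\nabla}_\mu f = \partial_\mu f - f\, \partial_\mu \log\beta$; the resulting discrepancy is a lower-order term of the schematic form $(V_n\log\beta) f$, which, being zeroth-order in $f$ with a coefficient $\lesssim r^{-1}$, is absorbed into the error terms (or into a relabeling of $w_n$).

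For the second estimate~\eqref{afterthefloodphi}, the function $\check{\chi}\Phi\Psi_\tau$ is compactly supported in $r$, so the boundary contributions at $t^* \to \pm\infty$ vanish cleanly by Remark~\ref{endecay} applied to the compact $r$-region $\{r \leq R_{\rm high}/2\}$. Since $\Phi$ commutes with $\Box_{g_K}$ but $\check{\chi}$ does not, the $\Box_{g_K}$ term on the right-hand side automatically includes the $[\Box_{g_K}, \check{\chi}]$ commutator contributions, so no extra error terms arise; this is precisely why the second estimate is cleaner than the first.

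For the first estimate~\eqref{afterthefloodT}, the function $T\Psi_\tau$ is not compactly supported in $r$, so controlling the asymptotic boundary contributions is more delicate. Here the error terms $E_1[\psi](0)$ and $A_{\rm high}^{-2}\sup_{t^*}\int_{t^*-1/2}^{t^*+1/2}E_1[\psi_\tau](s)\,ds$ originate in two places. First, the past-boundary contributions $t^* \to -\infty$ pick up the effect of the extension operator $\mathscr{E}$ through the inhomogeneity $\tilde F$ of Corollary~\ref{corextend}; Theorem~\ref{extendtoeverywhere}(3c) bounds the $\tilde F$ contribution by $E_1[\psi](0)$. Second, the lower-order corrections produced when handling the boundary and the twisted/untwisted discrepancy involve terms of the form $\int w_n |\Psi_\tau|^2$ or $\int w_n {\rm Re}(\Psi_\tau \overline{\partial \Psi_\tau})$, which, because $\mathcal{P}_{n,\rm high}$ projects to frequencies with $\omega^2+m^2 \gtrsim A_{\rm high}^2$, may be estimated via Lemma~\ref{nothingbyT} (or its analogue for $\mathcal{P}_{n,\rm high}$, which enjoys the same high-frequency support property) by $A_{\rm high}^{-2}\sup_{t^*}\int E_1[\psi_\tau]$.

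The main obstacle is carrying out the boundary analysis rigorously at $t^* \to +\infty$ for the first estimate: the solution $\psi_\tau$ need not decay in time in the $\mathcal{P}_{n,\rm high}$ frequency range, so one must appeal to the construction of $\psi_\tau$ via the interpolating operator (which ensures that for $t^* \gtrsim \tau+1$ the equation is exact Kerr, allowing one to invoke the decay of $\Psi_\tau \in \mathbb{H}_K$ on compact $r$-slabs) together with the weight decay $|w_n| \lesssim r^{-2}$ and $|V_n| \lesssim 1$ to suppress the large-$r$ contribution to the flux. Once the boundary vanishing and horizon sign are established, the remainder of the argument is a direct rearrangement of the divergence identity.
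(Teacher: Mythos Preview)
Your overall framework matches the paper's: apply the twisted divergence identity~\eqref{divIdentitytwist} with $(V_n,w_n)$ over all of $\mathcal{M}_K$, drop the horizon flux by the timelike property of $V_n$ near $r=r_+$, and control the $t^*\to\pm\infty$ boundary fluxes. However, your account of \emph{where} the error terms in~\eqref{afterthefloodT} originate is off, and following it literally would send you down unnecessary detours.

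For the future boundary $t^*\to+\infty$: the paper does not invoke the interpolating structure at all. It simply uses that $\Psi_\tau=\mathcal{P}_{n,\rm high}\psi_\tau\in\mathbb{H}_K$, so by Remark~\ref{endecay} the flux over any compact $r$-interval vanishes as $t^*\to+\infty$; for large $r$, $V_n$ is timelike (Theorem~\ref{fromtheblackboxlargea}(2)(a)), which gives the flux a favorable sign there. No appeal to the exact-Kerr region $t^*\gtrsim\tau+1$ is needed.

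For the past boundary $t^*\to-\infty$: the error terms do \emph{not} come from the inhomogeneity $\tilde F$, nor from bulk lower-order corrections of the form $\int w_n|\Psi_\tau|^2$. The divergence identity leaves precisely the boundary flux, bounded by $\liminf_{t^*\to-\infty}E_1[\Psi_\tau](t^*)$. The paper controls this by passing to a time-average $\liminf_{t^*\to-\infty}\int_{t^*-1/2}^{t^*+1/2}E_1[\Psi_\tau](s)\,ds$ and then applying Lemma~\ref{commisminus1} (specifically~\eqref{blahbi2j}) to replace $\Psi_\tau=\mathcal{P}_{n,\rm high}\psi_\tau$ by $\psi_\tau$; this is what produces the $A_{\rm high}^{-2}\sup_{t^*}\int_{t^*-1/2}^{t^*+1/2}E_1[\psi_\tau]$ error. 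The remaining $\int_{t^*-1/2}^{t^*+1/2}E_1[\psi_\tau]$ as $t^*\to-\infty$ is controlled by $E_1[\psi](0)$ via the extension bound Theorem~\ref{extendtoeverywhere}(4)(a). The inhomogeneity $\tilde F$ plays no role in this lemma, since $\Box_{g_K}T\Psi_\tau$ sits unestimated on the right-hand side of~\eqref{afterthefloodT}. Likewise, Lemma~\ref{nothingbyT} (which concerns $\mathcal{P}_{HT}$, not $\mathcal{P}_{n,\rm high}$) is not used here.

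For~\eqref{afterthefloodphi} your reasoning is essentially correct: the compact $r$-support of $\check\chi\Phi\Psi_\tau$ together with $\Psi_\tau\in\mathbb{H}_K$ and Remark~\ref{endecay} makes \emph{both} the past and future boundary fluxes vanish directly, which is why no error terms appear.
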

\begin{proof}We start with the proof of~\eqref{afterthefloodT}. Since $\Psi_{\tau} \in \mathbb{H}_K$, Remark~\ref{endecay} implies that for any compact $K \subset [r_+,\infty)$ we have
\begin{equation}\label{vanishtofuture}
\lim_{t^*\to \infty}E_{1,K}\left[\Psi_{\tau}\right]\left(t^*\right) = 0.
\end{equation}

In view of~\eqref{vanishtofuture} and the fact that $V_n$ is causal for large $r$, the divergence identity yields that
\begin{equation}\label{afterthefloodTalmost}\begin{split}
&\int_{\mathcal{M}_K}\tilde{K}^{V_n,w_n}\left[T\Psi_{\tau}\right] \lesssim 
\\ &\qquad \int_{\mathcal{M}_K}{\rm Re}\left(\left(V_nT\Psi_{\tau} + w_nT\Psi_{\tau}\right)\overline{\Box_{g_K}T\Psi_{\tau}}\right) + \liminf_{t^*\to -\infty}E_1\left[\Psi_{\tau}\right](t^*).
\end{split}\end{equation}
To control the final term on the right hand side of, we use Lemma~\ref{commisminus1}, Theorem~\ref{extendtoeverywhere} and the fact that
\[\liminf_{t^*\to -\infty}E_1\left[\Psi_{\tau}\right](t^*) \lesssim \liminf_{t^*\to-\infty}\int_{t^*-1/2}^{t^*+1/2}E_1\left[\Psi_{\tau}\right](s)\, ds.\]

To prove~\eqref{afterthefloodphi} we argue similarly, except that in this case we may use that, in view of the fact that $\Psi_{\tau} \in \mathbb{H}_K$, Remark~\ref{endecay} directly yields
\[\liminf_{t^*\to -\infty}E_1\left[\check{\chi}\Psi_{\tau}\right](t^*) = 0.\]
\end{proof}

Now we will use Theorem~\ref{fromtheblackboxlargea} to estimate $\mathcal{P}_{n, \rm high}\psi_{\tau}$  on a bounded but large spatial region.

\begin{lemma}\label{almosttherewowsupersuperradiatn}Let $\psi$ be as in the statement of Theorem~\ref{theoc1pert}. Choose $\tau \gg 1$ and let $\psi_{\tau}$ be defined by Definition~\ref{defpsitau}. Let $n \in [1,N_s]$, and set $\Psi_{\tau} \doteq \mathcal{P}_{n, \rm high}\psi_{\tau}$. Let $k \geq 1$ be an integer  and pick $1\gg \delta \gg \epsilon$ be a small constant. Then we have that
\begin{equation}\begin{split}
&\int_{-\infty}^{\infty}E_{k,r \in [r_{\rm low},R_{\rm high}]}\left[\Psi_{\tau}\right](s)\, ds + \int_{\mathcal{M}_K \cap \{r \in [r_{\rm low},R_{\rm high}]\}}\left|\Psi_{\tau}\right|^2 \lesssim 
\\ &\qquad  \delta^{-1}E_k\left[\psi\right](0) +  \left({ \delta} + A_{\rm high}^{-2}\right)\sup_{t^*} \int_{t^*-1/2}^{t^*+1/2}E_k\left[\psi_{\tau}\right](s)\, ds.
\end{split}
\end{equation}
\end{lemma}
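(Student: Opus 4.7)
\medskip
\noindent\textbf{Proof proposal.} The basic idea is to apply Theorem~\ref{fromtheblackboxlargea}, items (1)(c)(ii) and (1)(c)(iii), to $f = \psi_\tau$, and then use Lemma~\ref{dividwiththatblackboxcurrent} to convert the bulk $\tilde{K}^{V_n,w_n}$-integrals on the right-hand side into integrals involving $\Box_{g_K}T\Psi_\tau$ and $\Box_{g_K}\check{\chi}\Phi\Psi_\tau$. Since $\Box_{g_K}$ commutes with $\mathcal{P}_{n,\rm high}$ (the latter being a Fourier multiplier in $(t^*,\phi^*)$, variables in which the coefficients of $\Box_{g_K}$ are constant in Kerr star coordinates), we have
\[\Box_{g_K}\Psi_\tau = \mathcal{P}_{n,\rm high}\bigl[(\Box_{g_K}-\Box_\tau)\psi_\tau\bigr] + \mathcal{P}_{n,\rm high}\tilde F,\]
where $(\Box_{g_K}-\Box_\tau)\psi_\tau = \eta_\tau(\Box_{g_K}-\Box_g)\psi_\tau$ is supported in $r \in [r_{\rm low},R_{\rm high}]$ and pointwise of size $O(\epsilon)$ times second derivatives of $\psi_\tau$. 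For higher $k$ we commute with $T^a\Phi^b$, $a+b \leq k-1$ (both of which are Killing for $g$ and $g_K$, so they genuinely commute with $\Box_g$ and $\Box_{g_K}$; the only nontrivial commutators involve $T^a(\eta_\tau)$), and then invoke the elliptic estimate of Lemma~\ref{ellipt} in the bounded radial region to turn $T$- and $\Phi$-derivatives into full $E_k$-derivatives. Thus it suffices to explain the argument at one derivative level; the induction is routine.

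The bulk integrals from Lemma~\ref{dividwiththatblackboxcurrent} take the form $\int_{\mathcal{M}_K}\mathrm{Re}\bigl((V_nD\Psi_\tau + w_nD\Psi_\tau)\overline{\Box_{g_K}D\Psi_\tau}\bigr)$ where $D \in \{T, \check{\chi}\Phi\}$. Using the identity above for $\Box_{g_K}\Psi_\tau$ (commuted through $D$, using $[D,\mathcal{P}_{n,\rm high}]=0$) together with Cauchy--Schwarz with parameter $\delta$, the $(\Box_{g_K}-\Box_\tau)\psi_\tau$ contribution is bounded by
\[\delta\int_{-\infty}^\infty E_{1,[r_{\rm low},R_{\rm high}]}[\Psi_\tau](s)\,ds + C\delta^{-1}\epsilon^2\sup_{t^*}\int_{t^*-1/2}^{t^*+1/2}E_1[\psi_\tau](s)\,ds.\]
The first of these is absorbed into the left-hand side (after one applies Theorem~\ref{fromtheblackboxlargea}, item (1)(c) and the elliptic estimate of Lemma~\ref{ellipt}), provided $\delta$ is small enough. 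Since $\delta \gg \epsilon$, the second contributes $O(\delta)\sup\int E_1[\psi_\tau]$ as required. The contribution from $\mathcal{P}_{n,\rm high}\tilde F$, using that $\tilde F$ is supported for $t^* < -3$ while $\Psi_\tau$ is being tested against quantities supported essentially on $\{t^* \geq 0\}$, is controlled by Theorem~\ref{offthesupportitsgood} (pseudo-locality gains a factor $A_{\rm high}^{-2}$), together with the $L^1_{t^*}L^2$ bound on $\tilde F$ from Corollary~\ref{corextend}, yielding $A_{\rm high}^{-2}E_k[\psi](0)$.

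The lower order term $\mathscr{A}_n[\mathcal{P}_{n,\rm high}\psi_\tau]$ which appears in Theorem~\ref{fromtheblackboxlargea}, item (1)(c) is estimated via item (\ref{theanpart}) of the same theorem; the resulting terms involve $\int E_{0,[(1+\tilde\delta)r_{\rm low},(1-\tilde\delta)R_{\rm high}]}[\mathcal{P}_{n,\rm high}\psi_\tau]$ and $\int|\Box_{g_K}\mathcal{P}_{n,\rm high}\psi_\tau|^2$, both at a slightly shrunken radial range. The former is bounded by $\sup\int E_0[\psi_\tau]$ using Lemma~\ref{totheoriginal}; the latter is handled exactly as in the previous paragraph using Cauchy--Schwarz and the decomposition $\Box_{g_K}\Psi_\tau = \mathcal{P}_{n,\rm high}(\Box_{g_K}-\Box_\tau)\psi_\tau + \mathcal{P}_{n,\rm high}\tilde F$. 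The spacetime $L^2$ of $\Psi_\tau$ on $\{r\in[r_{\rm low},R_{\rm high}]\}$ appears directly in the output of Theorem~\ref{fromtheblackboxlargea}(1)(c)(i) when combined with the observation that $|m|\gtrsim A_{\rm high}$ on the support of $\mathcal{P}_{n,\rm high}$ allows one to estimate $\Psi_\tau$ by $A_{\rm high}^{-1}\Phi\Psi_\tau$.

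The main obstacle is to track carefully the commutator contributions: the terms where derivatives land on $\eta_\tau$ (rather than on $\psi_\tau$) produce expressions of the form $(T\eta_\tau)(\Box_g - \Box_{g_K})\psi_\tau$, which are supported in the short time strips $\{t^*\in[-1,0]\cup[\tau,\tau+1]\}$ and again $O(\epsilon)$ in size, so they fit into the $\epsilon$-perturbation scheme. One must also be mindful that Theorem~\ref{fromtheblackboxlargea} item (1)(c) is stated for $\mathcal{P}_{n,\rm high}f$, $\mathcal{P}_{n,\rm high}Tf$, $\mathcal{P}_{n,\rm high}\Phi f$ with $f = \psi_\tau$, and that the commutation between $T$ or $\Phi$ and $\mathcal{P}_{n,\rm high}$ is trivial --- thus we freely read off estimates for $T^a\Phi^b\Psi_\tau$. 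Once everything is assembled, the choice $1 \gg \delta \gg \epsilon$ permits absorption of the $\delta\int E_{1,[r_{\rm low},R_{\rm high}]}[\Psi_\tau]$ term into the left-hand side, giving the stated bound.
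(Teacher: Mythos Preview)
Your overall scheme is right, and most of the ingredients you invoke are the correct ones. There is, however, a genuine gap at the top-order step which the paper handles differently and which your Cauchy--Schwarz argument cannot close.

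The problem is your claimed bound
\[
\int_{\mathcal{M}_K}\mathrm{Re}\bigl((V_nD\Psi_\tau + w_nD\Psi_\tau)\overline{\mathcal{P}_{n,\rm high}D[(\Box_{g_K}-\Box_\tau)\psi_\tau]}\bigr)
\ \leq\ \delta\!\int E_{1,[r_{\rm low},R_{\rm high}]}[\Psi_\tau] + C\delta^{-1}\epsilon^2\sup_{t^*}\!\int_{t^*-1/2}^{t^*+1/2}\!E_1[\psi_\tau].
\]
Cauchy--Schwarz does not give the second term. The source $(\Box_{g_K}-\Box_\tau)\psi_\tau=\eta_\tau(\Box_{g_K}-\Box_g)\psi_\tau$ is supported on the \emph{entire} strip $t^*\in[-1,\tau+1]$, so the $L^2$ factor you get is a full spacetime integral $\delta^{-1}\epsilon^2\int_{-1}^{\tau+1}E_{1,[r_{\rm low},R_{\rm high}]}[D\psi_\tau](s)\,ds$, which (i) grows with $\tau$ and cannot be converted to a $\sup$ by any pseudo-locality argument, and (ii) involves second derivatives of $D\psi_\tau$, i.e.\ one derivative more than the $E_k$ allowed on the right-hand side. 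This is precisely the ``top-order'' loss discussed in Section~1.5.

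The paper avoids both issues by writing the top-order piece as $(\Box_{g_K}-\Box_\tau)D\Psi_\tau$ (acting on $\Psi_\tau$, not $\psi_\tau$) and applying the integration-by-parts identity of Lemma~\ref{thatscoolintbyparts} to the pairing $\int \mathrm{Re}\bigl((V_nD\Psi_\tau)\overline{(\Box_{g_K}-\Box_\tau)D\Psi_\tau}\bigr)$. That lemma turns the integrand into exact derivatives and yields directly $\epsilon\int E_{0,[r_{\rm low},R_{\rm high}]}[D\Psi_\tau]\lesssim \epsilon\int E_{1,[r_{\rm low},R_{\rm high}]}[\Psi_\tau]$, which is absorbed into the left-hand side with no derivative loss and no $\tau$-dependence. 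The remaining pieces (the commutators $[\mathcal{P}_{n,\rm high},\eta_\tau]$, $[\mathcal{P}_{n,\rm high},T\eta_\tau]$ and the $\tilde F$ contribution) are the ones that legitimately produce the $\sup$-type terms, via Lemma~\ref{usefulinthatonepart} and Corollary~\ref{blahblah1231234}.

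A smaller but related issue: you say the $\int E_{0}[\Psi_\tau]$ part of the $\mathscr{A}_n$ bound is controlled by $\sup\int E_0[\psi_\tau]$ via Lemma~\ref{totheoriginal}. That lemma is time-localized (integrated against $\chi_t^2$) and does not bound a full spacetime integral. In the paper this term is not estimated in isolation; it is fed back into the coupled system of inequalities \eqref{okherewego1}--\eqref{okherewego4} together with the elliptic estimate \eqref{ellipnotjustfarr}, and is absorbed after adding small multiples of the $T$- and $\Phi$-commuted estimates.
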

\begin{proof}We first consider the case when $k = 1$.

We have
\begin{equation}\label{thefirstcommutatorthing}
\Box_{g_K}\Psi_{\tau} = \left(\Box_{g_K}-\Box_{\tau}\right)\Psi_{\tau} + \left[\mathcal{P}_{n,\rm high},\eta_{\tau}\right]\left(\Box_{g_K}-\Box_g\right)\psi_{\tau} + \mathcal{P}_{n,\rm high}\tilde{F}.
\end{equation}
Commuting with $T$ leads to 
\begin{equation}\label{psitauwithT}\begin{split}
&\Box_{g_K}T\Psi_{\tau} = \left(\Box_{g_K}-\Box_{\tau}\right)T\Psi_{\tau} + \left[\mathcal{P}_{n,\rm high},\eta_{\tau}\right]T\left(\Box_{g_K}-\Box_g\right)\psi_{\tau} 
\\ & \qquad + \mathcal{P}_{n,\rm high}T\tilde{F}+\left[\mathcal{P}_{n, \rm high},T\eta_{\tau}\right]\left(\Box_{g_K}-\Box_g\right)\psi_{\tau} - \left[T,\Box_{\tau}\right]\Psi_{\tau}.
\end{split}
\end{equation}
Let $\check{\chi}(r)$ be as in Lemma~\ref{dividwiththatblackboxcurrent}. We will also commute with $\check{\chi}(r) \Phi$. The relevant equation is 
\begin{equation}\label{psitauwithphi}\begin{split}
&\Box_{g_K}\left(\check{\chi}\Phi\Psi_{\tau}\right) = \left(\Box_{g_K}-\Box_{\tau}\right)\left(\check{\chi}\Phi\Psi_{\tau}\right) + \left[\mathcal{P}_{n,\rm high},\eta_{\tau}\right]\Phi\left(\Box_{g_K}-\Box_g\right)\left(\check{\chi}\psi_{\tau}\right) \\ &\qquad + \mathcal{P}_{n,\rm high}\left(\check{\chi}\Phi\tilde{F}\right)
- \left[\check{\chi},\Box_{\tau}\right]\Phi\Psi_{\tau} + \left[\mathcal{P}_{n,\rm high},\eta_{\tau}\right]\Phi\left[\Box_{g_K}-\Box_g,\check{\chi}\right]\psi_{\tau}.
\end{split}\end{equation}

We will need estimates for each of the three terms
\begin{equation}\label{thisisfirst}
I \doteq \int_{\mathcal{M}_K}\left|\Box_{g_K}\Psi_{\tau}\right|^2,
\end{equation}
\begin{equation}\label{thisissecond}
II \doteq \int_{\mathcal{M}_K}{\rm Re}\left(\left(V_nT\Psi_{\tau} + w_nT\Psi_{\tau}\right)\overline{\Box_{g_K}\left(T\Psi_{\tau}\right)}\right),
\end{equation}
\begin{equation}\label{thisisthird}
III \doteq \int_{\mathcal{M}_K}{\rm Re}\left(\left(V_n\check{\chi}\Phi\Psi_{\tau} + w_n\check{\chi}\Phi\Psi_{\tau}\right)\overline{\Box_{g_K}\left(\check{\chi}\Phi\Psi_{\tau}\right)}\right).
\end{equation}

 It will be important to keep in mind that when $r \geq R_{\rm high}$ we have that $\Box_{g_K}\Psi_{\tau}$ vanishes, and when $r \leq r_{\rm low}$ we have that $\Box_{g_K}\Psi_{\tau} = \mathcal{P}_{n,{\rm high}}\tilde{F}$. Moreover, $r \leq r_{\rm low}$ implies that the $t^*$ support $\tilde{F}$ is compact. Analogous statements holds for $\Box_{g_K}T\Psi_{\tau}$ and $\Box_{g_K}\left(\check{\chi}\Phi\Psi_{\tau}\right)$.

It now follows from Lemma~\ref{usefulinthatonepart},  Theorem~\ref{pseudoloc} and Theorem~\ref{extendtoeverywhere} that 
\begin{equation}\label{estforIyay}\begin{split}
&\left|I\right| \lesssim  \epsilon \int_{-\infty}^{\infty} E_{1,r\in [r_{\rm low},R_{\rm high}]}\left[\Psi_{\tau}\right](s)\, ds 
 +  \epsilon \sup_{t^*} \int_{t^*-1/2}^{t^*+1/2}E_1\left[\psi_{\tau}\right](s)\, ds + E_0\left[\psi\right](0),
\end{split}\end{equation}.

The bounds for $II$ and $III$ follow in a similar manner, with two notable exceptions. We use Lemma~\ref{thatscoolintbyparts} to estimate the first term on the RHS of \eqref{psitauwithT} and \eqref{psitauwithphi}. Moreover, we use Corollary~\ref{blahblah1231234} to estimate the third term on the RHS of \eqref{psitauwithT} and \eqref{psitauwithphi} for $r\leq r_{\rm low}$. We obtain for a fixed $\epsilon \ll\delta\ll 1$:

\begin{equation}\label{estforIIyay}\begin{split}
&\left|II\right| \lesssim  \epsilon \int_{-\infty}^{\infty} E_{1,r\in [r_{\rm low},R_{\rm high}]}\left[\Psi_{\tau}\right](s)\, ds 
+  \delta \sup_{t^*} \int_{t^*-1/2}^{t^*+1/2}E_1\left[\psi_{\tau}\right](s)\, ds+  \delta^{-1} E_1\left[\psi\right](0),
\end{split}\end{equation}
\begin{equation}\label{estforIIIyay}\begin{split}
&\left|III\right| \lesssim  \epsilon \int_{-\infty}^{\infty} E_{1,r\in [r_{\rm low},R_{\rm high}]}\left[\Psi_{\tau}\right](s)\, ds + \delta \sup_{t^*} \int_{t^*-1/2}^{t^*+1/2}E_1\left[\psi_{\tau}\right](s)\, ds \\ &\qquad + \int_{-\infty}^{\infty}E_{1,r \in [R_{\rm high}/5,R_{\rm high}/2]}\left[\Psi_{\tau}\right](s)\, ds+ \delta^{-1}E_1\left[\psi\right](0).
\end{split}\end{equation}

We may now apply Lemma~\ref{ellipt} combined with \eqref{thefirstcommutatorthing}, Theorem~\ref{pseudoloc} and Lemma~\ref{usefulinthatonepart} to establish
\begin{equation}\label{ellipfarr}\begin{split}
&\int_{-\infty}^{\infty}E_{1, r \in {[R_{\rm high}/5,R_{\rm high}/2]}}\left[\Psi_{\tau}\right](s)\, ds \lesssim \sum_{i=0}^2\int_{\mathcal{M}_K\cap \{r \in [r_{\rm low},R_{\rm high}]\}}\left[\left|T^i\Psi_{\tau}\right|^2 +|\Box_{g_K}\Psi_\tau|^2\right]
\\ &\lesssim \sum_{i=0}^2\int_{\mathcal{M}_K\cap \{r \in [r_{\rm low},R_{\rm high}]\}}\left|T^i\Psi_{\tau}\right|^2 + \epsilon \sup_{t^*} \int_{t^*-1/2}^{t^*+1/2}E_1\left[\psi_{\tau}\right](s)\, ds+ E_0\left[\psi\right](0),
\end{split}\end{equation}

\begin{equation}\label{ellipnotjustfarr}\begin{split}
&\int_{-\infty}^{\infty}E_{1,r \in [r_{\rm low},R_{\rm high}]}\left[\Psi_{\tau}\right](t)\, dt
 \lesssim \sum_{i+j=0}^2\int_{\mathcal{M}_K\cap \{r \in [\left(1-\tilde{\delta}\right)r_{\rm low},\left(1+\tilde{\delta}\right)R_{\rm high}]\}}\left[\left|T^i\Phi^j\Psi_{\tau}\right|^2 +|\Box_{g_K}\Psi_\tau|^2\right]
\\ &\lesssim \sum_{i+j=0}^2\int_{\mathcal{M}_K\cap \{r \in [(1-\tilde{\delta})r_{\rm low},(1+\tilde{\delta})R_{\rm high}]\}}\left|T^i\Phi^j\Psi_{\tau}\right|^2
 + \epsilon \sup_t \int_{t-1/2}^{t+1/2}E_1\left[\psi_{\tau}\right](s)\, ds+ E_0\left[\psi\right](0).
\end{split}\end{equation}

Now we are ready to complete the proof. Applying Theorem~\ref{fromtheblackboxlargea} to $f=\psi_{\tau}$ yields   
\[
\int_{-\infty}^{\infty}E_{0,r\in{[(1-\tilde{\delta})r_{\rm low},(1+\tilde{\delta})R_{\rm high}]}}\left[T\Psi_{\tau}\right](s)\, ds \lesssim 
\mathscr{A}_n\left[\Psi_\tau\right]+\int_{\mathcal{M}_K\cap \left\{r \in  [(1-\tilde{\delta})r_{\rm low},(1+\tilde{\delta})R_{\rm high}]\right\}}\tilde{K}^{V_n,w_n}\left[T\Psi_\tau\right]
\]
We now bound by \eqref{mathscranbound}
\[\begin{split}
& \mathscr{A}_n\left[\Psi_\tau\right] \lesssim \int_{-\infty}^{\infty}E_{0,r \in  [\left(1+\tilde{\delta}\right)r_{\rm low},\left(1-\tilde{\delta}\right)R_{\rm high}]}\left[\Psi_\tau\right](s)\, ds 
\\& + \int_{\mathcal{M}_K \cap \left\{r \in  [\left(1+\tilde{\delta}\right)r_{\rm low},\left(1-\tilde{\delta}\right)R_{\rm high}]\right\}}\left[\left|\Psi_\tau\right|^2 + \left|\Box_{g_K}\Psi_\tau\right|^2\right]
\end{split}\]

Moreover, Lemma~\ref{dividwiththatblackboxcurrent} yields
\[
\int_{\mathcal{M}_K\cap \left\{r \in  [(1-\tilde{\delta})r_{\rm low},(1+\tilde{\delta})R_{\rm high}]\right\}}\tilde{K}^{V_n,w_n}\left[T\Psi_\tau\right] \lesssim \left|II\right| + A_{\rm high}^{-2}\sup_{t^*} \int_{t^*-1/2}^{t^*+1/2}E_1\left[\psi_{\tau}\right](s)\, ds + E_1\left[\psi\right](0)
\]

We thus obtain by \eqref{estforIyay} and \eqref{estforIIyay} 
\begin{equation}\label{okherewego1}\begin{split}
& \int_{-\infty}^{\infty}E_{0,r\in{[(1-\tilde{\delta})r_{\rm low},(1+\tilde{\delta})R_{\rm high}]}}\left[T\Psi_{\tau}\right](s)\, ds \lesssim \epsilon \int_{-\infty}^{\infty} E_{1,r\in [r_{\rm low},R_{\rm high}]}\left[\Psi_{\tau}\right](t)\, dt 
 \\ &\qquad +  \left( \delta + A_{\rm high}^{-2}\right)\sup_{t^*} \int_{t^*-1/2}^{t^*+1/2}E_1\left[\psi_{\tau}\right](s)\, ds+  \delta^{-1}E_1\left[\psi\right](0).
\end{split}
\end{equation}
Now we add in a small multiple of the estimate~\eqref{ellipfarr} to \eqref{okherewego1}. We obtain
\begin{equation}\label{okherewego2}\begin{split}
&\int_{-\infty}^{\infty}E_{0,r\in[(1-\tilde{\delta})r_{\rm low},(1+\tilde{\delta})R_{\rm high}]}\left[T\Psi_{\tau}\right](s)\, ds 
\\ &\qquad + \int_{-\infty}^{\infty}E_{1,r \in [R_{\rm high}/5,R_{\rm high}/2]}\left[\Psi_{\tau}\right](s)\, ds \lesssim  \delta^{-1} E_1\left[\psi\right](0)
\\ &\qquad + \left(\delta + A_{\rm high}^{-2}\right)\int_{-\infty}^{\infty} E_{1,r\in [r_{\rm low},R_{\rm high}]}\left[\Psi_{\tau}\right](s)\, ds 
 + \left(\delta + A_{\rm high}^{-2}\right)\sup_{t^*} \int_{t^*-1/2}^{t^*+1/2}E_1\left[\psi_{\tau}\right](s)\, ds.
\end{split}
\end{equation}
Applying again Theorem~\ref{fromtheblackboxlargea}, Lemma~\ref{dividwiththatblackboxcurrent} and using~\eqref{estforIyay} and~\eqref{estforIIIyay} leads to   
\begin{equation}\label{okherewego3}\begin{split}
&\int_{-\infty}^{\infty}E_{0,r\in[(1-\tilde{\delta})r_{\rm low},(1+\tilde{\delta})R_{\rm high}]}\left[\Phi\Psi_{\tau}\right](t)\, dt \lesssim  \epsilon \int_{-\infty}^{\infty} E_{1,r\in [r_{\rm low},R_{\rm high}]}\left[\Psi_{\tau}\right](t)\, dt 
 \\ &\qquad + \int_{-\infty}^{\infty}E_{1,r \in [R_{\rm high}/5,R_{\rm high}/2]}\left[\Psi_{\tau}\right](t)\, dt
 \\ &\qquad +\left( \delta + A_{\rm high}^{-2}\right)\sup_{t^*} \int_{t^*-1/2}^{t^*+1/2}E_1\left[\psi_{\tau}\right](s)\, ds+ \delta^{-1}E_1\left[\psi\right](0).
\end{split}
\end{equation}
Now we add a small multiple of~\eqref{okherewego3} to~\eqref{okherewego2} to obtain   
\begin{equation}\label{okherewego4}\begin{split}
&\int_{-\infty}^{\infty}\left(E_{0,r\in[(1-\tilde{\delta})r_{\rm low},(1+\tilde{\delta})R_{\rm high}]}\left[\Phi\Psi_{\tau}\right]+E_{0,r\in[(1-\tilde{\delta})r_{\rm low},(1+\tilde{\delta})R_{\rm high}]}\left[T\Psi_{\tau}\right](t)\right)\, dt
\\ &\qquad + \int_{-\infty}^{\infty}E_{1,r \in [R_{\rm high}/5,R_{\rm high}/2]}\left[\Psi_{\tau}\right](t)\, dt \lesssim  
\\ &\qquad \delta^{-1} E_1\left[\psi\right](0) + \left( \delta+A_{\rm high}^{-2}\right) \sup_{t^*} \int_{t^*-1/2}^{t^*+1/2}E_1\left[\psi_{\tau}\right](s)\, ds
\\ &\qquad + \left(\epsilon  
 +A_{\rm high}^{-2}\right)\int_{-\infty}^{\infty} E_{1,r\in [r_{\rm low},R_{\rm high}]}\left[\Psi_{\tau}\right](s)\, ds.
\end{split}
\end{equation}
Finally, the proof in the case when $k = 1$ is concluded by combining this with~\eqref{ellipnotjustfarr}.

For $k > 1$, we note the following formulas, which hold for any non-negative integers $i$, $j$, and $l$:
\begin{equation}\label{psitauwithThigherk}\begin{split}
&\Box_{g_K}T^{i+1}\Psi_{\tau} = T^{i+1}\Big(\left(\Box_{g_K}-\Box_{\tau}\right)T\Psi_{\tau} + \left[\mathcal{P}_{n,\rm high},\eta_{\tau}\right]T\left(\Box_{g_K}-\Box_g\right)\psi_{\tau} 
\\ & \qquad + \mathcal{P}_{n,\rm high}T\tilde{F}+\left[\mathcal{P}_{n, \rm high},T\eta_{\tau}\right]\left(\Box_{g_K}-\Box_g\right)\psi_{\tau} - \left[T,\Box_{\tau}\right]\Psi_{\tau}\Big),
\end{split}
\end{equation}
\begin{equation}\label{psitauwithphihigherk}\begin{split}
&\Box_{g_K}\left(T^j\Phi^l\check{\chi}\Phi\Psi_{\tau}\right) = T^j\Phi^l\Big(\left(\Box_{g_K}-\Box_{\tau}\right)\left(\check{\chi}\Phi\Psi_{\tau}\right) + \left[\mathcal{P}_{n,\rm high},\eta_{\tau}\right]\Phi\left(\Box_{g_K}-\Box_g\right)\left(\check{\chi}\psi_{\tau}\right) \\ &\qquad + \mathcal{P}_{n,\rm high}\left(\check{\chi}\Phi\tilde{F}\right)
- \left[\check{\chi},\Box_{\tau}\right]\Phi\Psi_{\tau} + \left[\mathcal{P}_{n,\rm high},\eta_{\tau}\right]\Phi\left[\Box_{g_K}-\Box_g,\check{\chi}\right]\psi_{\tau}\Big).
\end{split}\end{equation}
We conclude the proof by simply repeating the estimates from above \emph{mutatis mutandis}.
\end{proof}

We close the section by adding in a suitable energy estimate in order to control a time-averaged higher order energy flux for $r\geq R_{\rm high}$  which will complete the proof of Lemma~\ref{finalthingfromthatblackthingy}.
\begin{proof}[Proof of Lemma~\ref{finalthingfromthatblackthingy}]

Let $\tilde{\chi}(r)$ be a smooth cut-off function which is identically $0$ for $r \leq  R_{\rm high}-M$ and identically $1$ for $r \geq  R_{\rm high}$. It is straightforward to see that, using Lemma~\ref{almosttherewowsupersuperradiatn}, we have, for every integer $i \in [0,k]$:
\[\Box_{\tau}\left(\tilde{\chi}T^i\Psi_{\tau}\right) = \mathcal{F}_i,\]
where
\[\int_{\mathcal{M}_K}\left|\mathcal{F}_i\right|^2 \lesssim \delta^{-1}E_k\left[\psi\right](0) +  \left(\delta + A_{\rm high}^{-2}\right)\sup_{t^*} \int_{t^*-1/2}^{t^*+1/2}E_k\left[\psi_{\tau}\right](s)\, ds,\]
and ${\rm supp}\left(\mathcal{F}_i\right) \subset [R_{\rm high}-M,R_{\rm high}]$.

In particular, carrying out an energy estimate with the $T$-vector field (which is timelike in the region $r \geq R_{\rm high} - M$) we obtain
\begin{equation}\begin{split}
\sup_{t^*} E_{0,r\geq R_{\rm high}}\left[T^i\Psi_{\tau}\right](t^*)&\lesssim \sup_{t^*} E_0\left[T^i\tilde{\chi}\Psi_{\tau}\right](t^*)
\\ &\lesssim \int_{\mathcal{M}_K}\left|\mathcal{F}_i\right|\left|T^{i+1}\tilde{\chi}\Psi_{\tau}\right| + \liminf_{t^*\to-\infty}E_0\left[T^i\tilde{\chi}\Psi_{\tau}\right](t^*)
\\ &\lesssim  \delta^{-1}E_k\left[\psi\right](0) +  \left(\delta + A_{\rm high}^{-2}\right)\sup_{t^*} \int_{t^*-1/2}^{t^*+1/2}E_k\left[\psi_{\tau}\right](s)\, ds.
\end{split}\end{equation}
In the passage to the final inequality we use Cauchy Schwarz and Lemma~\ref{almosttherewowsupersuperradiatn} to control the first term, and argue as in the proof of Lemma~\ref{dividwiththatblackboxcurrent} to control the term with the $\liminf$.

An application of Lemma~\ref{ellipt} now yields
\begin{equation}\label{far}
\sup_t\int_{-\infty}^{\infty}\chi_t(s) E_{k, r \geq   R_{\rm high}}\left[\Psi_{\tau}\right](s)\, ds \lesssim \delta^{-1}E_k\left[\psi\right](0) +  \left(\delta + A_{\rm high}^{-2}\right)\sup_{t^*} \int_{t^*-1/2}^{t^*+1/2}E_k\left[\psi_{\tau}\right](s)\, ds.
\end{equation}
The proof of Lemma~\ref{finalthingfromthatblackthingy} is then concluded by  Lemma~\ref{almosttherewowsupersuperradiatn} and \eqref{far}.
   
\end{proof}
\subsection{The large non-superradiant frequencies}
In this section we will use Theorem~\ref{actuallywehatweneedformain} to estimate $\mathcal{P}_{HT}\psi$.
\begin{lemma}\label{finalthingfromnonsupersuper}Let $\psi$ be as in the statement of Theorem~\ref{theoc1pert}. Choose $\tau \gg 1$ and let $\psi_{\tau}$ be defined by Definition~\ref{defpsitau}. Set $\Psi_{\tau} \doteq \mathcal{P}_{HT}\psi_{\tau}$. Let $k \geq 1$. Then there exists a constant $C_k$ so that we have that
\begin{equation}\begin{split}
&\sup_{t \in (10,\tau-10)}\int_{-\infty}^{\infty}\chi_{t}(s)E_k\left[\Psi_{\tau}\right](s)\, ds \lesssim A_{\rm high}^{-1}\sup_{t}\int_{-\infty}^{\infty}\chi_{t}(s)E_k\left[\psi_\tau\right](s)\, ds+
\\ &\qquad   
\epsilon \sup_{t}\int_{-\infty}^{\infty} \chi_{t}(s)E_k\left[\psi_{\tau}\right](s)\, ds + E_k\left[\psi\right](0).
\end{split}
\end{equation}

\end{lemma}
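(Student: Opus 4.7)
The plan is to apply Theorem~\ref{actuallywehatweneedformain} directly to $\psi_\tau$ with $F=0$, since $\Box_\tau \psi_\tau = 0$ on $J^+(\Sigma_0)$. The first key observation is that, because $g$ is stationary and axisymmetric, the coefficients of $\Box_g - \Box_{g_K}$ depend only on $(r,\theta)$, and hence this second-order operator commutes with $\mathcal{P}_{HT}$ (by Lemma~\ref{thisisusefultocommute}). Combined with $[\Box_{g_K}, \mathcal{P}_{HT}] = 0$, this reduces the error to
\[
\mathscr{F} = [\Box_\tau, \mathcal{P}_{HT}]\psi_\tau = [\eta_\tau, \mathcal{P}_{HT}]\bigl((\Box_g - \Box_{g_K})\psi_\tau\bigr),
\]
where the only non-commutativity comes from the time-cutoff $\eta_\tau$. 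Moreover, since $\mathcal{L}_T g = \mathcal{L}_\Phi g = 0$, we also have $[\Box_g - \Box_{g_K}, T] = [\Box_g - \Box_{g_K}, \Phi] = 0$, so commuting with $T^j$ or $T^j Y^l$ only produces extra factors of $(\partial_{t^*}^i \eta_\tau)(\Box_g - \Box_{g_K})$ of the same form, with $\eta_\tau$ replaced by one of its $t^*$-derivatives supported in $[-1,0] \cup [\tau,\tau+1]$.

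The decisive structural point is that for $t \in (10,\tau-10)$, the cutoff $\chi_t$ has support in $(t-2,t+2) \subset (8,\tau-8)$, on which $\eta_\tau \equiv 1$. Therefore, on $\mathrm{supp}(\chi_t)$,
\[
\chi_t\, \mathscr{F} = \chi_t \,\mathcal{P}_{HT}\bigl((1-\eta_\tau)(\Box_g - \Box_{g_K})\psi_\tau\bigr),
\]
and the function $(1-\eta_\tau)(\Box_g - \Box_{g_K})\psi_\tau$ is supported in $\{t^* \leq 0\}\cup \{t^* \geq \tau\}$, which is at distance $\gtrsim 8$ from $\mathrm{supp}(\chi_t)$. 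Thus Theorem~\ref{offthesupportitsgood} yields an $A_{\rm high}^{-\tilde N}$ gain for any $\tilde N$, and bounding the input in $L^2$ by $\epsilon \cdot E_1[\psi_\tau]$ locally (and similarly for higher Sobolev norms obtained by commuting derivatives that pass through $\mathcal{P}_{HT}$) controls every term of the form $\int \chi_t^2 |\mathscr{F}|^2$, $\int \chi_t^2 E_{k-2}[\mathscr{F}]$, and $\int \rho\sqrt{\Delta/\Pi}\chi_t^2\xi\,\mathrm{Re}(T^j\mathscr{F}\overline{T^j\Psi_\tau})$ by
\[
A_{\rm high}^{-\tilde N} \epsilon^2 \sup_{t^*}\int_{t^*-1}^{t^*+1} E_k[\psi_\tau](s)\,ds,
\]
which is comfortably absorbed.

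For the bulk integrals weighted by $\tilde\chi_t^2$, the support $[-2,t+2]$ extends over the left transition region $[-1,0]$ of $\eta_\tau$. I would split $\tilde\chi_t^2 = \tilde\chi_{t,5}^2 + (\tilde\chi_t^2 - \tilde\chi_{t,5}^2)$ using the monotonicity from~\eqref{tildechitauA}; the first piece is supported in $[3,t+2]$, where $\eta_\tau \equiv 1$, so the same pseudo-local argument as above applies after commuting $T^j$ (or $T^jY^l$) through $\mathcal{P}_{HT}$, producing the desired $A_{\rm high}^{-\tilde N}\epsilon^2$ factor times a lower-order energy. The second piece is supported in the bounded interval $[-2,7]$; on this set Lemma~\ref{commisminus1} and Lemma~\ref{usefulinthatonepart} (with $h = \eta_\tau$ having $\eta_\tau'$ of small support) control the relevant commutator term, and since the integration takes place on a fixed-length time interval, finite-in-time energy estimates convert the contribution into $E_k[\psi](0)$, possibly multiplied by a $k$-dependent constant.

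The last ingredient is the Cauchy--Schwarz step for the multiplier terms (B) and (C): the $K$-multiplier piece $\int \tilde\chi_t^2\,\mathrm{Re}(T^j \mathscr{F}\,\overline{K T^j \Psi_\tau})$ is bounded by $\|T^j \mathscr{F}\|_{L^2(\tilde\chi_t^2)}\cdot (\sup_{t^*}\int_{t^*-1/2}^{t^*+1/2} E_k[\Psi_\tau])^{1/2}$; Lemma~\ref{totheoriginal} then controls $E_k[\Psi_\tau]$ by $E_k[\psi_\tau]$ up to a lower-order $A_{\rm high}^{-2}$ correction, and the first factor has been shown above to be $O(A_{\rm high}^{-\tilde N}\epsilon^2 + E_k[\psi](0))^{1/2}$. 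The same scheme handles the $N$-multiplier terms using the fact that $N$ is smooth and bounded. The one part of the argument requiring the most care -- which is the main technical obstacle -- is the bookkeeping at the left transition region $[-1,0]$, where the derivation relies crucially on extracting, via finite-in-time energy estimates and a Hardy-type argument on $\Sigma_0$, that the localised contribution is dominated by $E_k[\psi](0)$ rather than by $E_k[\psi_\tau]$ itself, so as not to produce a circular bound. Assembling these pieces, plugging them into~\eqref{thebetterengthing}, and using Theorem~\ref{extendtoeverywhere} to replace $E_k[\psi_\tau](0)$ by $E_k[\psi](0)$ completes the proof.
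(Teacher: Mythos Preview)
Your overall strategy matches the paper's: apply Theorem~\ref{actuallywehatweneedformain} with $F=0$ and estimate the commutator $\mathscr{F} = [\Box_\tau, \mathcal{P}_{HT}]\psi_\tau$. Your computation of $\mathscr{F}$, the pseudo-locality argument for the $\chi_t$-weighted terms, and your splitting $\tilde\chi_t^2 = \tilde\chi_{t,5}^2 + (\tilde\chi_t^2-\tilde\chi_{t,5}^2)$ to isolate the left transition region are all sound.

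The gap is in the Cauchy--Schwarz step for the $K$- and $N$-multiplier bulk terms. You assert
\[
\Bigl|\int_{\mathcal{M}_K} \tilde\chi_t^2\,\mathrm{Re}\bigl(T^j\mathscr{F}\,\overline{K T^j\Psi_\tau}\bigr)\Bigr| \lesssim \|T^j\mathscr{F}\|_{L^2(\tilde\chi_t^2)}\,\Bigl(\sup_{t^*}\int_{t^*-1/2}^{t^*+1/2} E_k[\Psi_\tau]\Bigr)^{1/2},
\]
but Cauchy--Schwarz actually gives $\|K T^j\Psi_\tau\|_{L^2(\tilde\chi_t^2)}$ as the second factor, and since $\mathrm{supp}(\tilde\chi_t)$ has length $\sim t$ (possibly comparable to $\tau$), replacing it by the unit-interval supremum costs a factor $\sqrt{t}$. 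The $A_{\rm high}^{-\tilde N}$ gain on the first factor is independent of $t$ and cannot absorb this, so the bound you obtain is not uniform in $\tau$.

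The paper's fix (carried over verbatim from the treatment of $\tilde{\mathscr F}$ in the proof of Theorem~\ref{actuallywehatweneedformain}) is to use an $L^1_t$--$L^\infty_{t,\rm loc}$ pairing instead of $L^2$--$L^2$: partition $\mathrm{supp}(\tilde\chi_t)$ into unit intervals $I_j$, bound $\|T^k\mathscr{F}\|_{L^\infty(I_j;L^2(\Sigma))} \lesssim \int_{I_j}\bigl(\|T^k\mathscr{F}\| + \|T^{k+1}\mathscr{F}\|\bigr)\,ds$ by the fundamental theorem of calculus, sum over $j$ so that the $\Psi_\tau$ factor appears only as $\bigl(\sup_j\int_{I_j}E_k[\Psi_\tau]\bigr)^{1/2}$, and then control the resulting $L^1$-in-time norm of $\mathscr{F}$ via~\eqref{L1disj} rather than the $L^2$ estimate~\eqref{usethislater123123}. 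This yields $\int_I\bigl(\|T^k\mathscr{F}\|+\|T^{k+1}\mathscr{F}\|\bigr)\lesssim \epsilon\,(\sup E_1[\psi_\tau])^{1/2}$ uniformly in $t$, and the argument then closes as you describe.
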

\begin{proof}We apply Theorem~\ref{actuallywehatweneedformain}. We only need to estimate the right hand side of ~\eqref{thebetterengthing} where $\mathscr{F} = \left[\Box_{\tau},\mathcal{P}_{HT}\right]$. However, the arguments for bounding these commutator errors are essentially identical to many of the previous estimates we have done (cf~proofs of  Theorem~\ref{mainLinEstTheo}, Lemma~\ref{almosttherewowsupersuperradiatn} or~\ref{applythatstufftotheboun}). 

For example, following the steps from the proof of \eqref{Ftilde}, we estimate
\[\begin{split}
& \left|\int_{\mathcal{M}_K}\tilde{\chi}^2_{t}{\rm Re}\left( T^k \mathscr{F}\overline{K\left(T^k\Psi_{\tau}\right)}\right)\right| \\ & \lesssim \left(\int_I \|T^k\mathscr{F}(s)\|_{L^2(\Sigma_s)} + \|T^{k+1}\mathscr{F}(s)\|_{L^2(\Sigma_s)} ds\right) \left(\sup_{t^*>5}{\int_{t^*-1/2}^{t^*+1/2}}E_k[\Psi_{\tau}](s)ds\right)^{1/2}.
\end{split}
\]
Using the fact that, on the support of $\tilde{\chi}_t$, we have that $\Box_\tau=\Box_g$, and thus  $$\left[\Box_{\tau},\mathcal{P}_{HT}\right] = \mathcal{P}_{HT}\left(\Box_{\tau}-\Box_{g}\right),$$
we can easily estimate using \eqref{L1disj}:
\[\begin{split}
\int_I \|T^k\mathscr{F}(s)\|_{L^2(\Sigma_s)} + \|T^{k+1}\mathscr{F}(s)\|_{L^2(\Sigma_s)} ds \lesssim 
\sup_{t^*}{\int_{t^*-1/2}^{t^*+1/2}}\|(\Box_{\tau}-\Box_{g})\psi_\tau\|_{L^2(\Sigma_s)} ds \\ \lesssim \epsilon \sup_{t}\left(\int_{-\infty}^{\infty} \chi_{t}(s)E_1\left[\psi_{\tau}\right](s)\, ds\right)^{1/2}
\end{split}\]
We thus obtain
\[
\left|\int_{\mathcal{M}_K}\tilde{\chi}^2_{t}{\rm Re}\left( T^k \mathscr{F}\overline{K\left(T^k\Psi_{\tau}\right)}\right)\right| \lesssim \epsilon \sup_{t}\int_{-\infty}^{\infty} \chi_{t}(s)E_k\left[\psi_{\tau}\right](s)\, ds
\]

\end{proof}
\subsection{Concluding the proof}
We are ready to complete the proof of Theorem~\ref{theoc1pert}. In this first lemma we use the red-shift multiplier and commutator to control higher order non-degenerate energies of $\psi_{\tau}$.
\begin{lemma}\label{combineitallwoohoo}Let $\psi$ be as in the statement of Theorem~\ref{theoc1pert}. Choose $\tau \gg 1$ and let $\psi_{\tau}$ be defined by Definition~\ref{defpsitau}. Then we have a constant $C\left(A_{\rm high},k\right)$ so that 
\begin{equation}\label{withtheredaddedin}\begin{split}
&\sup_{t^* \in (10,\tau-10)}\int_{-\infty}^{\infty}\chi_{t^*}(s)E_k\left[\psi_{\tau}\right](s)\, ds \lesssim_k 
\\ &\qquad A_{\rm high}^{-1}\sup_{t^*} \int_{-\infty}^{\infty}\chi_{t^*}(s)E_k\left[\psi_{\tau}\right](s)\, ds +  A_{\rm high}^{2k-4}\sup_{t^*} \int_{-\infty}^{\infty}\chi_{t^*}(s)E_1\left[\psi_{\tau}\right](s)\, ds
\\ &\qquad  C\left(A_{\rm high},k\right)\left(E_k\left[\psi\right](0) + \int_{-2}^{12}E_k\left[\psi_{\tau}\right](s)\, ds\right).
\end{split}\end{equation}
\end{lemma}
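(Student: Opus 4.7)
The plan is to decompose $\psi_{\tau}$ using the partition of unity identity \eqref{theygivetheidentity} as
\[\psi_{\tau} = \mathcal{P}_{\rm bound}\psi_{\tau} + \sum_{n=1}^{N_s}\mathcal{P}_{n,\rm high}\psi_{\tau} + \mathcal{P}_{HT}\psi_{\tau},\]
and estimate each of the three pieces by the corresponding lemma established earlier in the section: Lemma~\ref{enerellipboundedfreq} for the bounded frequencies, Lemma~\ref{finalthingfromthatblackthingy} (summed in $n$) for the high superradiant frequencies, and Lemma~\ref{finalthingfromnonsupersuper} for the trapped high frequencies. By the triangle inequality together with Lemma~\ref{totheoriginal} (applied once per projector to comparable a time-averaged norm of each piece to that of $\psi_{\tau}$), a time-averaged $E_k$ bound for $\psi_\tau$ follows from the sum of the time-averaged $E_k$ bounds for each piece, modulo commutator remainders that are already of the form appearing on the right-hand side of \eqref{withtheredaddedin}.

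First I would carry out the combination for the degenerate flux $\int (\mathbf{J}^K + \text{elliptic tail})$ — equivalently for $E_{k,r\ge r_++\delta_0}$ — since both Lemma~\ref{enerellipboundedfreq} (in the form \eqref{onetwothreetwo}) and Lemma~\ref{finalthingfromthatblackthingy} naturally produce estimates in that region, and Lemma~\ref{finalthingfromnonsupersuper} follows from Theorem~\ref{actuallywehatweneedformain} which already gives a non-degenerate bound. Bookkeeping the powers of $A_{\rm high}$: the bounded-frequency contribution from \eqref{frompart3foruboundi120uo3j} carries $A_{\rm high}^{2k+2}$ multiplying the right-hand side (after inverting the $A_{\rm high}^{-2k-2}$ on the LHS), which explains the $A_{\rm high}^{2k-4}$ factor in \eqref{withtheredaddedin} since that term only ever sees $E_1$ of $\psi_\tau$ (never $E_k$) and an extra factor of $A_{\rm high}^{-6}$ is available from the bracketed term in Corollary~\ref{3oij1ioj4oi2}. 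The high superradiant and high trapped contributions both give only an $\epsilon$ or $A_{\rm high}^{-1}$ coefficient in front of $\sup_{t^*}\int\chi_{t^*}E_k[\psi_\tau]$, which is compatible with the target estimate.

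Once the degenerate-flux version of \eqref{withtheredaddedin} is in hand, the upgrade to the full non-degenerate $E_k[\psi_\tau]$ is the by-now standard red-shift argument of~\cite{redshiftSchw,claylecturenotes}, applied here in time-averaged form: the $N$-multiplier together with the $Y$-commutators (using property~\eqref{redshiftest} of $N$ in Section~\ref{Kerrspace} together with the smallness of $\epsilon$, so that the good bulk sign is preserved for $\Box_{\tau}$) control the near-horizon contribution $E_{k, r_+\le r\le r_++\delta_0}[\psi_\tau]$ in terms of $E_{k,r\ge r_++\delta_0}[\psi_\tau]$ modulo boundary terms at $t^*\in[-2,12]$ that are absorbed into the last term of \eqref{withtheredaddedin}. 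Finally, to translate the spacetime-integrated $L^1_t$ energy on $[-2,12]$ into data, one uses Definition~\ref{defpsitau}, finite-in-time energy estimates for $\Box_\tau$, and Corollary~\ref{corextend}, all of which produce bounds of the form $C(A_{\rm high},k)E_k[\psi](0)$.

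The main obstacle I expect is the careful accounting of the $A_{\rm high}$-powers: Lemma~\ref{enerellipboundedfreq} loses $A_{\rm high}^{2k+2}$ at top order because the elliptic estimate (Lemma~\ref{ellipt}) trades two derivatives for two factors of $A_{\rm high}$ through Corollary~\ref{blahblah1231234}, so one must check that the only term receiving this loss is a $E_1$-type quantity (not $E_k$). This is ensured because the only spacetime-$L^2$ bound one actually needs at top order on the bounded-frequency piece is the $r^{-4}$-weighted $L^2$ of $\Psi_\tau$ itself (of mode-$0$ order), plus its $E_0$ of $\psi_\tau$ at early times; all higher regularity on the bounded piece is purchased via elliptic regularity controlled by lower-order $L^2$ norms, and the only place where an $E_1[\psi_\tau]$ of $\psi_\tau$ enters with an $A_{\rm high}^4$ factor is the $\epsilon$-small $\Box_g-\Box_{g_K}$ commutator (combined with Corollary~\ref{3oij1ioj4oi2}), which ultimately produces the $A_{\rm high}^{2k-4}E_1[\psi_\tau]$ term. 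Provided this bookkeeping is done carefully, the remaining assembly of \eqref{withtheredaddedin} is routine.
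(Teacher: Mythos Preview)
Your proposal is correct and follows essentially the same route as the paper: combine Lemmas~\ref{enerellipboundedfreq}, \ref{finalthingfromthatblackthingy}, and \ref{finalthingfromnonsupersuper} via the decomposition \eqref{theygivetheidentity} to obtain the degenerate version of \eqref{withtheredaddedin} (with $E_{k,r\ge r_{\rm low}}$ on the left), then upgrade to the non-degenerate energy by the time-averaged red-shift argument. Your discussion of the $A_{\rm high}$ bookkeeping is accurate and in fact more explicit than the paper, which simply records the degenerate estimate \eqref{withtheredaddedin123123} and defers to \cite{claylecturenotes} for the red-shift step. One small quibble: the invocation of Lemma~\ref{totheoriginal} is unnecessary here, since the triangle inequality alone bounds $E_k[\psi_\tau]$ by the sum of the $E_k$ of the pieces, and the right-hand sides of the three input lemmas are already expressed in terms of $\psi_\tau$.
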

\begin{proof}Already from Lemmas~\ref{enerellipboundedfreq},~\ref{finalthingfromthatblackthingy}, and~\ref{finalthingfromnonsupersuper}, and picking $\epsilon$ and $\delta$ small enough, we have that 
\begin{equation}\label{withtheredaddedin123123}\begin{split}
&\sup_{t^* \in (10,\tau-10)}\int_{-\infty}^{\infty}\chi_{t^*}(s)E_{k,r \geq r_{low}}\left[\psi_{\tau}\right](s)\, ds \lesssim_{k} 
\\ &\qquad A_{\rm high}^{-1}\sup_{t^*} \int_{-\infty}^{\infty}\chi_{t^*}(s)E_k\left[\psi_{\tau}\right](s)\, ds +  A_{\rm high}^{2k-4}\sup_{t^*} \int_{-\infty}^{\infty}\chi_{t^*}(s)E_1\left[\psi_{\tau}\right](s)\, ds
\\ &\qquad + C\left(A_{\rm high},k\right)\left(E_k\left[\psi\right](0) + \int_{-2}^{12}E_k\left[\psi_{\tau}\right](s)\, ds\right).
\end{split}\end{equation}

The full statement then follows from a straightforward adaption of the now standard arguments for upgrading a degenerate boundedness statement to non-degenerate boundedness statement (see Section 3.3 of~\cite{claylecturenotes}) to the setting where we average against the bump function $\chi_{t^*}$.
\end{proof}

Finally we will use finite-in-time energy estimates to complete the proof of Theorem~\ref{theoc1pert}.
\begin{proof}Let $\psi$ be as in the statement of Theorem~\ref{theoc1pert}. 

It is immediate from finite-in-time energy estimates imply that for every integer $k \geq 1$, there exists a constant $C_k$ (independent of $\tau$) so that $-\infty < s < t^* < \infty$ imply
\begin{equation}\label{finiteintime}
E_k\left[\psi_{\tau}\right](t^*) \leq e^{C_k\left(t^*-s\right)}E_k\left[\psi_{\tau}\right](s).
\end{equation}
In particular, for every $t^*$, we may define $\tilde{t}$ so that
\[E_k\left[\psi_{\tau}\right]\left(\tilde{t}\right) = {\rm inf}_{s \in [t^*-1/2, t^*]}E_k\left[\psi_{\tau}\right]\left(s\right),\]
and then obtain that
\begin{equation}\label{fromtheaveragetotherealthing}
E_k\left[\psi_{\tau}\right]\left(t^*\right) \lesssim_k E_k\left[\psi_{\tau}\right]\left(\tilde{t}\right) \lesssim \int_{t^*-1/2}^{t^*} E_k\left[\psi_{\tau}\right]\left(s\right)\, ds.
\end{equation}
Another consequence of~\eqref{finiteintime} (and the extension procedure of Corollary~\ref{corextend}) is that 
\begin{equation}\label{oij32rji2r3i}
\sup_{t^* \in (-2,12)}E_k\left[\psi_{\tau}\right](t^*) \lesssim_k E_k\left[\psi\right](0),
\end{equation}
\begin{equation}\label{92u3jingiuo2}
\sup_{t^* \in ( \tau-10,\tau+1)}E_k\left[\psi_{\tau}\right](t^*) \lesssim_k E_k\left[\psi_{\tau}\right](\tau-10).
\end{equation}
In view of the definition of $\psi_{\tau}$, the fact that uniform boundedness of non-degenerate holds on the exact Kerr metric, and~\eqref{92u3jingiuo2}, we have that
\begin{equation}\label{3ij2onoi1324}
\sup_{t^* \geq  \tau-10}E_k\left[\psi_{\tau}\right](t^*) \lesssim_k E_k\left[\psi_{\tau}\right]\left(\tau-10\right).
\end{equation}
We first consider the case when $k = 1$. Taking $A_{\rm high}$ sufficiently high and combining~\eqref{fromtheaveragetotherealthing},~\eqref{oij32rji2r3i},~\eqref{3ij2onoi1324} and Lemma~\ref{combineitallwoohoo} yields
\[\sup_{t^* \in (0,\tau-10)}E_1\left[\psi_{\tau}\right](t^*) \lesssim_k E_1\left[\psi\right](0).\]
This concludes the proof, since for any given $t^* > 0$, if $\tau \gg t^*$, then $\psi_{\tau}(t^*) = \psi(t^*)$. 

Having now estimated $\sup_{t^*} E_1\left[\psi_{\tau}\right](t^*)$, the estimate of Lemma~\ref{combineitallwoohoo} becomes
\begin{equation}\label{withtheredaddedin123}\begin{split}
&\sup_{t^* \in (10,\tau-10)}\int_{-\infty}^{\infty}\chi_{t^*}(s)E_k\left[\psi_{\tau}\right](s)\, ds \lesssim_k  A_{\rm high}^{-1}\sup_{t^*} \int_{-\infty}^{\infty}\chi_{t^*}(s)E_k\left[\psi_{\tau}\right](s)\, ds 
\\ &\qquad  + C\left(A_{\rm high},k\right)\left(E_k\left[\psi\right](0) + \int_{-2}^{12}E_k\left[\psi_{\tau}\right](s)\, ds\right).
\end{split}\end{equation}

We may now consider the case $k > 1$. 
Taking $A_{\rm high}$ sufficiently high and combining~\eqref{fromtheaveragetotherealthing},~\eqref{oij32rji2r3i},~\eqref{3ij2onoi1324} and~\eqref{withtheredaddedin123} yields
\[\sup_{t^* \in (0,\tau-10)}E_k\left[\psi_{\tau}\right](t^*) \lesssim_k E_k\left[\psi\right](0).\]
This concludes the proof, since for any given $t^* > 0$, if $\tau \gg t^*$, then $\psi_{\tau}(t^*) = \psi(t^*)$.

\end{proof}
\appendix
\section{Metrics in $\mathscr{A}_{\epsilon,r_{\rm low},R_{\rm high}}$ with stable trapping}\label{stabletrapapp}
Choose $M > 0$, $a \in (-M,M)$, and set, as usual, $r_+ \doteq M + \sqrt{M^2-a^2}$. Motivated by the metric ansatz from~\cite{CarterSeparate} (with a mild modification to match with Boyer--Lindquist coordinates) for any smooth function $H(r) : (r_+,\infty) \to (0,\infty)$ we consider metrics of the form:
\begin{equation}\label{themetricforsep}
g_H \doteq  -\frac{H(r)\Delta }{\rho^2}\left(dt - a \sin^2\theta d\phi\right)^2 + \frac{\sin^2\theta}{\rho^2}\left(adt - (r^2+a^2) d\phi\right)^2 + \frac{\rho^2}{H(r) \Delta}dr^2 + \rho^2 d\theta^2,
\end{equation}
where $\left(t,r,\theta,\phi\right) \in \mathbb{R} \times (r_+,\infty) \times \mathbb{S}^2$ and $\Delta$ and $\rho^2$ take their usual definitions from the Kerr metric. A short calculation shows that, when $H = 1$,~\eqref{themetricforsep} is the same as the Kerr metric expressed in Boyer--Lindquist coordinates~\eqref{metric}. 

In this next lemma we check that, for suitable $H$, we have that $g_H \in \mathscr{A}_{\epsilon,,r_{\rm low},R_{\rm high}}$, where $\mathscr{A}_{\epsilon,,r_{\rm low},R_{\rm high}}$ is defined in Definition~\ref{themetricclass}.
\begin{lemma}\label{Hisgoodandall}Let $r_+ < r_{\rm low} < R_{\rm high} < \infty$ and $\epsilon > 0$. Suppose that $H$ satisfies that, for a suitable small constant c, we have
\begin{enumerate}
    \item $r \not\in (r_{\rm low},R_{\rm high}) \Rightarrow H = 1$.
    \item $\sup_{r \in [r_{\rm low},R_{\rm high]}}\left[\left|H(r)-1\right| + \left|H'(r)\right|\right]  \leq c\epsilon$.
\end{enumerate}
Then $g_H \in \mathscr{A}_{\epsilon,r_{\rm low},R_{\rm high}}$.
\end{lemma}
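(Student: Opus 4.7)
The plan is to verify the three defining properties of $\mathscr{A}_{\epsilon,r_{\rm low},R_{\rm high}}$ in Definition~\ref{themetricclass} directly from the ansatz \eqref{themetricforsep} and the hypotheses on $H$. First, stationarity and axisymmetry $\mathcal{L}_T g_H = \mathcal{L}_\Phi g_H = 0$ are automatic because every coefficient in \eqref{themetricforsep} depends only on $(r,\theta)$; since $T = \partial_t = \partial_{t^*}$ and $\Phi = \partial_\phi = \partial_{\phi^*}$ on the overlap of Boyer--Lindquist and Kerr star coordinates, this transfers to Kerr star coordinates.

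To verify that $g_H = g_K$ whenever $r \notin [r_{\rm low}, R_{\rm high}]$, I will compute directly that the Carter form \eqref{themetricforsep} with $H \equiv 1$ reduces to the Boyer--Lindquist form \eqref{metric}, using the identities $\Delta - (r^2+a^2) = -2Mr$ and $-\Delta + a^2\sin^2\theta = -\rho^2 + 2Mr$ to match the $dt^2$, $dtd\phi$, and $d\phi^2$ coefficients. By Convention~\ref{rlowrhighconv} we have $r_{\rm low} = r_+ + (3/2)\delta_0 > r_+$, so $g_H$ agrees with $g_K$ on a full neighborhood of $\{r = r_+\}$ where Boyer--Lindquist coordinates would otherwise fail; consequently $g_H$ extends smoothly to all of $\mathcal{M}_K$ in Kerr star coordinates simply as $g_K$ outside the perturbation region.

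For the $C^1_{g_{\rm ref}}$ bound I will compute the difference explicitly:
\begin{equation*}
g_H - g_K = -\frac{(H-1)\Delta}{\rho^2}\bigl(dt - a\sin^2\theta\, d\phi\bigr)^2 + \frac{1-H}{H}\cdot\frac{\rho^2}{\Delta}\, dr^2.
\end{equation*}
Since $H-1$ and $H'$ are supported in the compact slab $[r_{\rm low}, R_{\rm high}] \times \mathbb{S}^2$ with $\Delta$ bounded away from $0$, and since $H \geq 1 - c\epsilon \geq 1/2$ once $c\epsilon$ is small, the Kerr quantities $\Delta$, $\rho^2$, $1/\Delta$, $1/\rho^2$ together with their first derivatives are uniformly bounded on the support. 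A term-by-term differentiation then yields $\|g_H - g_K\|_{C^1_{BL}} \lesssim \|H-1\|_{C^0} + \|H'\|_{C^0} \leq 2c\epsilon$, with constant depending only on $M$, $a$, $r_{\rm low}$, $R_{\rm high}$.

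The final step is to transfer this estimate to the $C^1_{g_{\rm ref}}$ norm in Kerr star coordinates used in Definition~\ref{themetricclass}. This is routine because the coordinate change $(t,r,\theta,\phi) \mapsto (t - \bar t(r),r,\phi - \bar\phi(r),\theta)$ is smooth and independent of $H$, with Jacobian and its first derivatives bounded uniformly on the compact slab; hence $\|g_H - g_K\|_{C^1_{g_{\rm ref}}} \leq C c\epsilon$, and choosing $c$ small enough that $Cc \leq 1$ completes the proof. There is no serious obstacle in this argument; the only minor point of care is ensuring all background geometric factors are measured on the compact perturbation support so that the constants absorbed into $\lesssim$ are genuinely uniform.
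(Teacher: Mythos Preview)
Your computation of $g_H - g_K$ is correct and your argument is fine away from the axis, but you have overlooked the one point the paper actually singles out: the coordinates $(t,r,\theta,\phi)$ --- and equally their Kerr-star counterparts $(t^*,r,\theta,\phi^*)$ --- are \emph{singular} at $\theta = 0,\pi$. A bound on the Boyer--Lindquist components of $g_H - g_K$ and their $(t,r,\theta,\phi)$-derivatives does not by itself control $\|g_H - g_K\|_{C^1_{g_{\rm ref}}}$ near the poles of $\mathbb{S}^2$, since $g_{\rm ref}$ is a smooth Riemannian metric on all of $\mathcal{M}_K$. Your ``final step'' only changes $(t,\phi)$ to $(t^*,\phi^*)$, which leaves the polar singularity untouched.

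The paper's proof treats the region away from the axis as immediate (essentially your argument) and then explicitly passes to regular coordinates $x = \theta\cos\phi$, $y = \theta\sin\phi$ near $\theta = 0$ to check $C^1$-closeness there. With your explicit formula this check is easy: the only angular piece in $g_H - g_K$ enters through the one-form $dt - a\sin^2\theta\, d\phi$, and $\sin^2\theta\, d\phi$ extends smoothly across the pole (it is $x\,dy - y\,dx$ up to higher order). So the gap is readily closed, but it is precisely the issue the paper isolates, and your claim that ``there is no serious obstacle'' skips over it.
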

\begin{proof}If $\theta$ is not close to $0$ or $\pi$ then the $C^1$ closeness of $g_H$ to the Kerr metric is immediate. We thus only need to show that $g_H$ is $C^1$ close to $g$ for $\theta$ close to $0$ or $\pi$. The two cases are symmetric so we will only discuss $\theta = 0$.

To construct a regular coordinate system near $\theta = 0$, one simply introduces the new coordinate system $(t,r,x,y)$ where $x = \theta \cos\phi$ and $y = \theta \sin\phi$. A short computation shows that in these new coordinate system, $g_H$ is $C^1$ close to the Kerr metric.
\end{proof}
\begin{convention}From now on, unless said otherwise, we assume that $H$ satisfies the hypothesis of Lemma~\ref{Hisgoodandall}.
\end{convention}

We next record the formulas for $\det\left(g_H\right)$ and $g_H^{-1}$. 
\begin{lemma}\label{eqnsofgeodflow}We have
\[{\rm det}\left(g_H\right) = -\rho^4\sin^2\theta,\]
\begin{equation}\begin{split}
&g_H^{-1} = \left(\frac{a^2\sin^2\theta}{\rho^2} - \frac{(r^2+a^2)^2}{H\Delta \rho^2}\right)\partial_t\otimes\partial_t + \left(\frac{1}{\sin^2\theta\rho^2} - \frac{a^2}{H \Delta \rho^2}\right)\partial_{\phi}\otimes\partial_{\phi} 
\\ &\qquad \qquad + \left(\frac{a(r^2+a^2)}{H \Delta \rho^2}-\frac{a}{\rho^2} \right)\left(\partial_t\otimes\partial_{\phi}+ \partial_{\phi}\otimes\partial_t\right) + \frac{H\Delta}{\rho^2}\partial_r\otimes\partial_r + \rho^{-2}\partial_{\theta}\otimes\partial_{\theta}.
\end{split}\end{equation}
\end{lemma}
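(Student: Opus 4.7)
The plan is to exploit the block structure of $g_H$. First I would observe that~\eqref{themetricforsep} already presents the metric in the form
\[ g_H = -\tfrac{H\Delta}{\rho^2}\,\omega^0\otimes\omega^0 + \tfrac{\sin^2\theta}{\rho^2}\,\omega^3\otimes\omega^3 + \tfrac{\rho^2}{H\Delta}\,dr\otimes dr + \rho^2\,d\theta\otimes d\theta, \]
where $\omega^0 \doteq dt - a\sin^2\theta\,d\phi$ and $\omega^3 \doteq a\,dt - (r^2+a^2)\,d\phi$. Since $\omega^0, \omega^3, dr, d\theta$ span $T^*\mathcal{M}_K$, the tensor $g_H$ is diagonal in this coframe. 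The $(r,\theta)$ block is manifestly orthogonal with determinant $\rho^4/(H\Delta)$, so $\det(g_H) = \det(g_{H,(t,\phi)}) \cdot \rho^4/(H\Delta) \cdot (\det J)^{-2}$ where $J$ is the Jacobian of $(t,\phi) \mapsto (\omega^0,\omega^3)$; equivalently, the total determinant of the coefficient matrix of $g_H$ in $(dt,d\phi,dr,d\theta)$ factors.

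The key algebraic step is to compute the coefficient matrix of $g_H$ directly in the $(dt,d\phi)$ block by expanding the squares. This yields
\[ g_{tt} = \tfrac{a^2\sin^2\theta - H\Delta}{\rho^2}, \quad g_{\phi\phi} = \tfrac{\sin^2\theta\bigl((r^2+a^2)^2 - a^2 H\Delta \sin^2\theta\bigr)}{\rho^2}, \quad g_{t\phi} = \tfrac{a\sin^2\theta(H\Delta - (r^2+a^2))}{\rho^2}. \]
Direct calculation of $g_{tt}g_{\phi\phi} - g_{t\phi}^2$ collapses, after collecting terms, to $-\tfrac{H\Delta\sin^2\theta}{\rho^4}\bigl[(r^2+a^2) - a^2\sin^2\theta\bigr]^2$. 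The crucial observation — the heart of the proof — is the identity
\[ (r^2+a^2) - a^2\sin^2\theta \;=\; r^2 + a^2\cos^2\theta \;=\; \rho^2, \]
which reduces the bracket to $\rho^4$ and gives $\det(g_{H,(t,\phi)}) = -H\Delta \sin^2\theta$. Multiplying by the $(r,\theta)$-block determinant $\rho^4/(H\Delta)$ produces $\det(g_H) = -\rho^4 \sin^2\theta$.

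For the inverse, the cleanest route is to use the dual frame. Solving $\omega^i(e_j) = \delta^i_j$ for $i,j\in\{0,3\}$ (restricting to the $\partial_t, \partial_\phi$ span) gives, again using $\rho^2 = (r^2+a^2) - a^2\sin^2\theta$,
\[ e_0 = \tfrac{1}{\rho^2}\bigl((r^2+a^2)\partial_t + a\,\partial_\phi\bigr),\qquad e_3 = -\tfrac{1}{\rho^2}\bigl(a\sin^2\theta\,\partial_t + \partial_\phi\bigr). \]
Inverting the diagonal matrix of $g_H$ in the $\omega$-coframe gives
\[ g_H^{-1} = -\tfrac{\rho^2}{H\Delta}\,e_0\otimes e_0 + \tfrac{\rho^2}{\sin^2\theta}\,e_3\otimes e_3 + \tfrac{H\Delta}{\rho^2}\,\partial_r\otimes\partial_r + \rho^{-2}\,\partial_\theta\otimes\partial_\theta. \]
Expanding $e_0\otimes e_0$ and $e_3\otimes e_3$ in the $(\partial_t, \partial_\phi)$ basis and collecting the coefficients of $\partial_t\otimes\partial_t$, $\partial_\phi\otimes\partial_\phi$, and $\partial_t\otimes\partial_\phi + \partial_\phi\otimes\partial_t$ then reproduces the stated formula for $g_H^{-1}$.

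The computations are entirely routine; no single step should pose a serious obstacle. The only place one must be careful is the determinant collapse, which looks opaque until one recognizes the $\rho^2$ identity above — this is the ``hard'' part, though it is hard only in the sense of requiring the right regrouping. As a sanity check one may specialize $H = 1$ and verify consistency with the Kerr inverse metric~\eqref{inversemetric}.
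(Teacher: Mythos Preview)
Your argument is correct; the coframe/dual-frame decomposition is the natural way to handle a metric already presented in diagonal form in $(\omega^0,\omega^3,dr,d\theta)$, and the identity $(r^2+a^2)-a^2\sin^2\theta=\rho^2$ is indeed the only nontrivial algebraic input. The paper itself gives no proof of this lemma at all---it simply records the formulas as a routine computation---so there is nothing to compare your method against beyond noting that you have actually carried out what the paper leaves implicit. One minor remark: when you expand $e_0\otimes e_0$ and $e_3\otimes e_3$ at the end, double-check the sign of the $\partial_t\otimes\partial_\phi$ coefficient against the Kerr inverse~\eqref{inversemetric} with $H=1$; your sanity check will catch any sign convention discrepancy with the stated formula.
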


It follows from the results of~\cite{CarterSeparate} that the Hamilton--Jacobi equation associated to the geodesic flow of the metric $g_H$ is separable, and thus the geodesic flow is integrable. This leads to the following lemma.
\begin{lemma}Let $\left(t(s),\phi(s),r(s),\theta(s)\right)$ denote a null geodesic with respect to the metric $g_H$. Then there exist real valued constants $E$, $L$, and $Q$ so that 
\begin{equation}\label{Econs}
\left(\frac{H\Delta}{\rho^2} - \frac{a^2\sin^2\theta}{\rho^2}\right)\frac{dt}{ds} - a\left(\frac{H\Delta  \sin^2\theta}{\rho^2} - \frac{ \sin^2\theta \left(r^2+a^2\right)}{\rho^2}\right)\frac{d\phi}{ds} = -E,
\end{equation}
\begin{equation}\label{Lcons}
\left(\frac{(r^2+a^2)^2\sin^2\theta}{\rho^2} - \frac{a^2\sin^4\theta H \Delta}{\rho^2}\right)\frac{d\phi}{ds}+ a\left(\frac{H\Delta  \sin^2\theta}{\rho^2} - \frac{ \sin^2\theta \left(r^2+a^2\right)}{\rho^2}\right)\frac{dt}{ds} = L,
\end{equation}
\begin{equation}\label{thetacons}
\rho^4\left(\frac{d\theta}{ds}\right)^2 = Q - a^2E^2\sin^2\theta - \frac{L^2}{\sin^2\theta}+2aEL,
\end{equation}
\begin{equation}\label{thercons}
\rho^4\left(\frac{dr}{ds}\right)^2 = \left(\left(r^2+a^2\right)E - aL\right)^2  - Q \Delta H.
\end{equation}

In particular, we may consider the class of null geodesics where $\theta(s) = \pi/2$ for all $s$. For this class of geodesics, $E$ and $L$ may be arbitrary real numbers, we must have 
\begin{equation}\label{QEL} Q = \left(aE - L\right)^2,
\end{equation}
and the equation~\eqref{thercons} becomes 
\begin{equation}\label{equitorialplanenullgeod}
r^4\left(\frac{dr}{ds}\right)^2 = \left(\left(r^2+a^2\right)E - aL\right)^2  - \left(aE-L\right)^2 \Delta H
\end{equation}

\end{lemma}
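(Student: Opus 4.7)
The proof proceeds by exploiting two Killing vector fields and the separability of the Hamilton--Jacobi equation associated to $g_H$, following Carter's classical construction.

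First I would derive~\eqref{Econs} and~\eqref{Lcons} from the Killing fields $\partial_t$ and $\partial_\phi$ of $g_H$: setting $E \doteq -g_H(\partial_t,\dot\gamma)$ and $L \doteq g_H(\partial_\phi,\dot\gamma)$ gives two conserved quantities along any geodesic $\gamma$, and expanding these using the metric coefficients read off from~\eqref{themetricforsep} produces exactly the two linear relations among $dt/ds$ and $d\phi/ds$ asserted. For the Carter-type constant $Q$, I would apply the Hamilton--Jacobi method to null geodesics: seek a solution of $g_H^{\mu\nu}\partial_\mu S\, \partial_\nu S = 0$ of the separated form
\begin{equation*}
S = -Et + L\phi + S_r(r) + S_\theta(\theta).
\end{equation*}
Using the expression for $g_H^{-1}$ in Lemma~\ref{eqnsofgeodflow} and multiplying through by $\rho^2 = r^2 + a^2\cos^2\theta$, each term becomes a sum of a purely $r$-dependent piece and a purely $\theta$-dependent piece. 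Grouping yields
\begin{equation*}
\underbrace{(S_\theta')^2 + a^2\sin^2\theta\, E^2 + \frac{L^2}{\sin^2\theta} - 2aEL}_{\text{function of }\theta} \;=\; \underbrace{\frac{[(r^2+a^2)E - aL]^2}{H\Delta} - H\Delta (S_r')^2}_{\text{function of }r},
\end{equation*}
so both sides equal a common constant, which we define to be $Q$. The algebraic check here is the main technical step, but it is essentially the same reorganization used in standard Kerr (the substitution $\Delta \mapsto H\Delta$ in the radial kinetic term is the only change).

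Next I would pass from the separated HJ equation to the geodesic equations~\eqref{thetacons} and~\eqref{thercons}. Writing $p_\mu = \partial_\mu S$ and using $\dot\gamma^\mu = g_H^{\mu\nu}p_\nu$ together with $g_H^{\theta\theta} = \rho^{-2}$ and $g_H^{rr} = H\Delta/\rho^2$, one obtains $S_\theta'(\theta) = p_\theta = \rho^2\, d\theta/ds$ and $S_r'(r) = p_r = \rho^2 (H\Delta)^{-1}\, dr/ds$. Substituting these into the two separated ODEs and multiplying by $\rho^2$ produces~\eqref{thetacons} and~\eqref{thercons} exactly.

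Finally, for the equatorial case $\theta(s) \equiv \pi/2$, we have $d\theta/ds \equiv 0$, so the left-hand side of~\eqref{thetacons} vanishes, giving
\begin{equation*}
0 = Q - a^2 E^2 - L^2 + 2aEL = Q - (aE - L)^2,
\end{equation*}
which is~\eqref{QEL}. Substituting this value of $Q$ into~\eqref{thercons} yields~\eqref{equitorialplanenullgeod}, and the self-consistency of the equatorial ansatz (i.e.~$\ddot\theta = 0$ at $\theta = \pi/2$ whenever $\dot\theta = 0$ and $Q = (aE-L)^2$) follows by differentiating~\eqref{thetacons} in $s$ and noting that the right-hand side, regarded as a function of $\theta$, has a critical point at $\theta = \pi/2$. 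The main obstacle is just the bookkeeping in the HJ separation; once that reorganization is done, everything else is substitution.
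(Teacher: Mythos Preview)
Your proposal is correct and follows exactly the standard Carter separation-of-variables argument; the paper itself omits the proof entirely, noting only that ``the arguments are standard,'' so your outline is precisely the sort of verification the authors have in mind.
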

\begin{proof}We omit the proof as the arguments are standard.
\end{proof}

When $H = 1$, the equations of Lemma~\ref{eqnsofgeodflow} govern the null geodesic flow of the Kerr black hole. These null geodesics have been thoroughly studied in~\cite{chand}. In the following lemma we record some useful facts derived in~\cite{chand} in the course of analyzing of the geodesic flow of Kerr.
\begin{lemma}\label{fromchand}For any $(D,r) \in \mathbb{R} \times (r_+,\infty)$ we define
\[P_1\left(D,r\right) \doteq \left(\left(r^2+a^2\right)-aD\right)^2 - \left(a-D\right)^2\Delta.\]
Then there exists exactly two choices for $\left(D,r\right)\in \mathbb{R} \times (r_+,\infty)$  which we denote by $\{\left(D_i,r_i\right)\}_{i=1}^2$  so that 
\begin{equation}\label{trappedcond}
P_1\left(D_i,r_i\right) = 0,\qquad \partial_rP_1\left(D_i,r_i\right) = 0.
\end{equation}

We moreover have that 
\begin{equation}\label{aminusdinotzero}
D^2_i = 3r_i^2 + a^2,\qquad D_i > 0.
\end{equation}
\end{lemma}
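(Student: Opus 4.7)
The plan is to reduce the simultaneous system $P_1 = 0,\ \partial_r P_1 = 0$ to a single polynomial equation in $r$ by eliminating $D$. First, I would expand $P_1(D,r)$ directly; after combining like terms the cross products conspire to factor out an overall $r$:
\[
P_1(D, r) = r\bigl[r^3 + r(a^2 - D^2) + 2M(a - D)^2\bigr] \doteq r\,f(r, D).
\]
Since $r > r_+ > 0$ throughout the domain, and since $\partial_r P_1 = f + r\,\partial_r f$, the system is equivalent to $f(r, D) = 0$ together with $\partial_r f(r, D) = 3r^2 + a^2 - D^2 = 0$. The second equation immediately yields $D^2 = 3r^2 + a^2$, which is the first identity of~\eqref{aminusdinotzero}. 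Substituting this relation back into $f = 0$ collapses the algebra to the clean companion equation $(a - D)^2 = r^3/M$.

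Eliminating $D$ between these two relations (write $D = a \mp r^{3/2}/\sqrt{M}$ from the second, square, and compare with $D^2 = 3r^2 + a^2$) produces $(3M - r)\sqrt{r} = \pm 2a\sqrt{M}$, which upon squaring reduces to the single cubic
\[
g(r) \doteq r(3M - r)^2 - 4a^2 M = 0.
\]
Counting admissible roots of $g$ is now a one-variable calculus exercise. Computing $g'(r) = 3(r-M)(r-3M)$ shows that $g$ is increasing on $(-\infty, M)$, decreasing on $(M, 3M)$, and increasing on $(3M, \infty)$; together with the values $g(0) = -4a^2 M < 0$, $g(M) = 4M(M^2 - a^2) > 0$, $g(3M) = -4a^2 M < 0$, and $g(r) \to +\infty$ as $r \to \infty$, this produces exactly three real roots, one in each of $(0, M)$, $(M, 3M)$, and $(3M, \infty)$.

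The main obstacle is isolating which roots lie in $(r_+, \infty)$, which turns on the sign of $g(r_+)$. This is handled by the substitution $s \doteq \sqrt{M^2 - a^2}$, giving $r_+ = M + s$, $3M - r_+ = 2M - s$, and $4a^2 M = 4M(M-s)(M+s)$, so that
\[
g(r_+) = (M+s)\bigl[(2M-s)^2 - 4M(M-s)\bigr] = (M+s)\, s^2 = r_+(M^2 - a^2) > 0.
\]
Since $M < r_+$ and $g(r_+) > 0$, the small root in $(0, M)$ is excluded while the middle root lies strictly in $(r_+, 3M)$. Combined with the root in $(3M, \infty)$, this produces exactly the two admissible radii $r_1 \in (r_+, 3M)$ and $r_2 \in (3M, \infty)$. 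For each $r_i$, the relation $D_i^2 = 3r_i^2 + a^2$ determines $|D_i|$, while the specific value of $D_i$ is pinned down by the companion equation $(a - D_i)^2 = r_i^3/M$, completing the proof of~\eqref{aminusdinotzero}.
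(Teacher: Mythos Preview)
Your approach is quite different from the paper's: the paper simply defers to Chandrasekhar's analysis, whereas you give a self-contained direct computation. Your factorization $P_1 = r\,f(r,D)$ is correct, as is the reduction of the system $f=\partial_r f=0$ to the pair $D^2=3r^2+a^2$ and $(a-D)^2=r^3/M$, and hence to the cubic $g(r)=r(3M-r)^2-4a^2M=0$. The sign table for $g$ and the clean evaluation $g(r_+)=r_+(M^2-a^2)>0$ are both right, and together they do pin down exactly two admissible radii $r_1\in(r_+,3M)$ and $r_2\in(3M,\infty)$, each corresponding to a unique $D_i$ (the squaring step introduces no spurious pairs, since for $r<3M$ one must take $D=a+r^{3/2}/\sqrt{M}$ and for $r>3M$ one must take $D=a-r^{3/2}/\sqrt{M}$ in order to be consistent with both relations). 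So the count and the identity $D_i^2=3r_i^2+a^2$ are fully justified.

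There is, however, a genuine gap at the very end: you assert that the companion equation ``pins down'' $D_i$ and declare~\eqref{aminusdinotzero} proved, but you never check the sign claim $D_i>0$. In fact your own reduction shows that for the root $r_2>3M$ one is forced to take $D_2=a-r_2^{3/2}/\sqrt{M}$, and since $r_2>3M>a$ this gives $D_2<0$; only the prograde orbit at $r_1<3M$ has $D_1>0$. The equation label and the way the lemma is used downstream (to get $(a-D_i)^2\Delta\gtrsim 1$) suggest the intended content of the second half of~\eqref{aminusdinotzero} is really $D_i\neq a$ (equivalently $|D_i|>a$), which is an immediate consequence of $D_i^2=3r_i^2+a^2>a^2$. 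You should flag this: your argument proves everything the subsequent lemma actually needs, but not the literal inequality $D_i>0$ as stated.
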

\begin{proof}These facts are proved directly or are easy consequences of the analysis of Section 61 of Chapter 7 of~\cite{chand}.
\end{proof}
\begin{remark}
An immediate consequence of~\eqref{trappedcond} is that the null geodesics which have $D = D_i$ will satisfy $\frac{dr}{ds} = 0$ if $r\left(0\right) = r_i$; however, as is well known, this circular orbit is unstable to small perturbations of $r\left(0\right)$.
\end{remark}

In this next lemma we show that for (a wide class of) suitable functions $H$ there exist (non-axisymmetric) null geodesics which experience stable trapping. 
\begin{lemma}There exists a $1$-parameter family of functions $H_{\epsilon}(r)$ so that for each $\epsilon > 0$ sufficiently small the following holds true: There exist $\left(r_0,\theta_0,\phi_0\right) \in (r_+,\infty) \times \mathbb{S}^2$,  real numbers $E_0$, $L_0$, and $Q_0$, and $\tilde{\epsilon} > 0$ (where $\tilde{\epsilon}$ depends on $\epsilon$)  so that any null geodesic $\gamma(s) = \left(t(s),r(s),\theta(s),\phi(s)\right)$ which satisfies
\begin{equation}\label{closeto}
\left|r(0) - r_0\right| + \left|\theta(0)-\theta_0\right| + \left|\phi(0)-\phi_0\right| +  |E-E_0| + |Q-Q_0| + |L-L_0|   \leq \tilde{\epsilon}
\end{equation}
will also satisfy
\[\sup_{s \in \mathbb{R}}\left|r(s) - r_0\right| \leq \epsilon.\]
Furthermore, null geodesics which satisfy~\eqref{closeto} are not axisymmetric.
\end{lemma}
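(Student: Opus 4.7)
The plan is to take $H_\epsilon(r) = 1 + \epsilon^2\phi\bigl((r-r_1)/\epsilon\bigr)$, where $r_1$ is a Kerr photon-sphere radius from Lemma~\ref{fromchand} and $\phi$ is a fixed smooth compactly supported bump satisfying $\phi(0) = \phi'(0) = 0$ and $\phi''(0) = C$ for a constant $C>0$ selected below. The scaling yields $\|H_\epsilon - 1\|_{C^1} = O(\epsilon)$, so Lemma~\ref{Hisgoodandall} places $g_{H_\epsilon}$ in $\mathscr{A}_{O(\epsilon),r_{\rm low},R_{\rm high}}$ once $\epsilon$ is small enough that the bump support sits inside $(r_{\rm low},R_{\rm high})$.

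For equatorial null geodesics, \eqref{equitorialplanenullgeod} reads $r^4(dr/ds)^2 = E^2 P_H(D,r)$ with $D = L/E$ and
\[
P_H(D,r) := \bigl((r^2+a^2)-aD\bigr)^2 - (a-D)^2 \Delta(r) H(r).
\]
A circular orbit at $(D_*,r_*)$ is stably trapped when $P_H(D_*,r_*) = \partial_r P_H(D_*,r_*) = 0$ and $\partial_r^2 P_H(D_*,r_*) < 0$: then $P_H \leq 0$ on a neighborhood of $r_*$ with equality only at $r_*$, while outside the bump support $P_H \equiv P_1 > 0$ for $r \neq r_*$, so a small shift of $(E,L,Q)$ that slightly raises $P_H$ at $r_*$ opens a bounded ``island'' $\{P > 0\}$ around $r_*$ topologically separated from the asymptotic positive region, in which any orbit starting inside oscillates between two turning points. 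Because $\phi(0) = \phi'(0) = 0$, the Kerr pair $(D_1,r_1)$ continues to solve the first two conditions with $H = H_\epsilon$, and a direct computation using $\partial_r^2 P_1 = 12r^2 + 2a^2 - 2D^2$ combined with $D_1^2 = 3r_1^2 + a^2$ from~\eqref{aminusdinotzero} gives
\[
\partial_r^2 P_{H_\epsilon}(D_1,r_1) \;=\; 6 r_1^2 - (a-D_1)^2 \Delta(r_1)\,C.
\]
Choosing $C > 6 r_1^2/\bigl[(a-D_1)^2 \Delta(r_1)\bigr]$ — which is well-defined since $|D_1|>|a|$ and $r_1 > r_+$ — flips the sign and produces a stable equatorial circular orbit at $r_0 := r_1$ with conserved quantities $(E_0,L_0,Q_0) := (1,D_1,(a-D_1)^2)$ and $\theta_0 := \pi/2$.

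For the full statement, I would fix $\tilde\epsilon$ small enough (of order $\epsilon^2$, with constant depending on $\phi$) that for any initial data and conserved quantities within $\tilde\epsilon$ of $(r_0,\pi/2,\phi_0,E_0,L_0,Q_0)$, the radial potential $\tilde{\mathcal{R}}(E,L,Q,r) := ((r^2+a^2)E-aL)^2 - Q\Delta(r)H_\epsilon(r)$ appearing in~\eqref{thercons} admits the following dichotomy near $r_1$: by continuity from the unperturbed quadratic approximation, the connected component of $\{\tilde{\mathcal{R}} \geq 0\}$ near $r_1$ is either empty — in which case no admissible geodesic exists — or an interval of radius $O\!\bigl(\sqrt{\tilde\epsilon/|\partial_r^2 P_{H_\epsilon}(D_1,r_1)|}\bigr) \leq \epsilon$ around $r_1$ that is strictly separated from the outer positive region, forcing any admissible $r(0)$ within $\tilde\epsilon$ of $r_1$ to lie in it and the resulting orbit to remain $\epsilon$-close to $r_1$. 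The polar motion~\eqref{thetacons} is simultaneously confined near $\theta = \pi/2$ because $U(\theta) := a^2E^2\sin^2\theta + L^2/\sin^2\theta - 2aEL$ satisfies $U(\pi/2) = (aE-L)^2 \approx Q_0$ and $U''(\pi/2) = 2(L^2 - a^2E^2) > 0$ for $(E,L)$ near $(1,D_1)$, using $D_1^2 > a^2$ from~\eqref{aminusdinotzero}. The same inequality forces $L \neq 0$ along these orbits, so they are not axisymmetric.

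The main technical point is the bookkeeping for the island/barrier dichotomy across admissible perturbations of $(E,L,Q)$: one must check that the perturbed potential $\tilde{\mathcal{R}}$ inherits the local-maximum structure of $E_0^2 P_{H_\epsilon}(D_1,\cdot)$ uniformly, that the island does not percolate into the outer region $|r - r_1| \geq \xi_1 \epsilon$ where $\phi$ fails to dominate the quadratic $P_1$ term, and that $\theta$-motion stays in the hemisphere where $U''(\pi/2) > 0$. Once the sign $\partial_r^2 P_1(D_1,r_1) = 6r_1^2 > 0$ and the bound $D_1^2 > a^2$ from~\eqref{aminusdinotzero} are in hand, all of these reduce to elementary continuity arguments applied to the quadratic approximation of $\tilde{\mathcal{R}}$ near $r_1$.
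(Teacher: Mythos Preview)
Your construction is correct but takes a genuinely different route from the paper's. The paper chooses a bump $h$ with $h(0)=-A<0$ and $h(\pm 1)=A>0$ for a large constant $A$, so that $P_{H_\epsilon}(D_i,r_i)\gtrsim \epsilon^2 A>0$ while $P_{H_\epsilon}(D_i,r_i\pm\epsilon)\lesssim -\epsilon^2 A<0$: the potential well and its barriers at $r_i\pm\epsilon$ are thus produced \emph{directly}, already for the unperturbed conserved quantities, and the trapping conclusion follows from the two-line observation that $r(s)$ is continuous and can only live where the radial potential is nonnegative. Your choice $\phi(0)=\phi'(0)=0$, $\phi''(0)=C$ instead \emph{preserves} the Kerr circular orbit $(D_1,r_1)$ exactly and flips its stability type from $\partial_r^2 P_1=6r_1^2>0$ to $\partial_r^2 P_{H_\epsilon}<0$; the barriers then arise implicitly from the competition between the quadratic growth of $P_1$ and the bump, and the open trapping region only appears after perturbing $(E,L,Q)$. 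Your approach is arguably more geometric --- you exhibit a genuine stable circular photon orbit --- but it requires the second-order computation and the ``island versus moat'' bookkeeping you outline, whereas the paper's cruder choice of $h$ bypasses all of that by making the sign pattern of $P_{H_\epsilon}$ explicit at three points.
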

\begin{proof}
 Fix a choice of $\left(D_i,r_i\right)$ from Lemma~\ref{fromchand}. We pick $r_0=r_i$, $\theta_0 = \pi/2$,  $Q_0 = (aE_0-L_0)^2$, $L_0/E_0 = D_i$, and $\phi_0$ arbitrary.

It will be convenient to set $D \doteq L/E$. We introduce the function 
\[P_H(r,D) \doteq \left(\left(r^2+a^2\right)-aD\right)^2 - \left(a-D\right)^2\Delta H\]
Note that \eqref{QEL} and \eqref{closeto} imply
 that~\eqref{thercons} becomes  
\[\left(\frac{dr}{ds}\right)^2 =r^{-4}E^2P_H\left(D,r\right)  + C(\tilde\epsilon, r,Q,E,L),\]
 where we denote by $C(\tilde\epsilon, r,Q,E,L)$ a continuous function of $\tilde\epsilon$, $r,Q,E,L$  with $C(0, r_0,Q_0,E_0,L_0)=0$.

We now choose $A$ sufficiently large depending on $M$, $D_i$, and $r_i$. Let $h(x)$ be a smooth function so that ${\rm supp}\left(h\right) \subset (-2,2)$, $h\left(\pm 1\right) = A$, and $h\left(0\right) = -A$. Then, for all $\epsilon > 0$ sufficiently small (depending, among other things, on $A$), we set
\[H_{\epsilon}\left(r\right) \doteq 1 + \epsilon^2h\left(\frac{r-r_i}{\epsilon}\right).\]
Note that we have that $H_{\epsilon}$ satisfies the hypothesis of Lemma~\ref{Hisgoodandall}.

In view of~\eqref{aminusdinotzero}, we have that
\begin{equation}\label{io32ijo3}
{\rm inf}_{r \in [r_i-\epsilon,r_i+\epsilon]}\left(a-D\right)^2\Delta \gtrsim 1.
\end{equation}
Keeping~\eqref{io32ijo3} in mind, in view of~\eqref{trappedcond} and the definition of $h$, we have that
\[P_{H_{\epsilon}}\left(r_i\pm\epsilon\right) \lesssim -\epsilon^2 A,\qquad P_{H_{\epsilon}}\left(r_i\right) \gtrsim \epsilon^2 A.\]

The proof of the lemma then follows,  after picking $\tilde\epsilon\ll\epsilon$,  from the following two facts:

\begin{enumerate}
    \item Along any null geodesics, $r(s)$ is continuous in $s$ and can only take values where the right hand side of~\eqref{thercons} is positive.
    \item The right hand side of~\eqref{thercons} is uniformly continuous in $r$ with respect to the parameters $E$, $L$, and $Q$ (and hence also continuous with respect to changes in the initial conditions).
\end{enumerate}

\end{proof}

\section{Proof of Theorem~\ref{fromtheblackboxlargea}}\label{provethatkeythm}
In this section we briefly explain how Theorem~\ref{fromtheblackboxlargea} follows from~\cite{blackboxlargea}; essentially Theorem~\ref{fromtheblackboxlargea} is an immediate consequence of (the proofs of) Theorem 3.3 and Proposition 3.3.4 from~\cite{blackboxlargea}. 

We start by noting that the Fourier analysis in~\cite{blackboxlargea} is carried out in the context of functions which are Schwartz in $(t,\theta,\phi) \in \mathbb{R} \times \mathbb{S}^2$ for each fixed $r$, while the Fourier analysis in this paper is done with respect to functions lying in the space $\mathbb{H}_K$. However, since the operators defined in~\cite{blackboxlargea} immediately extend to $\mathbb{H}_K$ by standard pseudo-differential operator theory, this difference plays no role in the definition of the operators $P_n$ or in the proof of either Theorem 3.3 or Proposition 3.3.4 from~\cite{blackboxlargea}.

We next note that our operators $\mathcal{P}_{n,\rm high}$ (see Definition~\ref{thisdefinedpnhigh}) differ from the operator $P_n$ defined in~\cite{blackboxlargea} via the symbol being multiplied by an additional smooth cut-off in Fourier space. The result is the Fourier support of a function $\mathcal{P}_{n,\rm high}f$ is contained in the Fourier support of $P_nf$. Since the proofs of Theorem 3.3 and Proposition 3.3.4 in fact rely on applying Plancherel's theorem and then establishing inequalities for each fixed frequency $\left(\omega,m\right)$ lying in the Fourier support of $\mathcal{P}_n$, it is immediate that we may replace $P_n$ with the operator $\mathcal{P}_{n, \rm high}$ in Theorems 3.3 and Proposition 3.3.4 in~\cite{blackboxlargea}.

We next note that while Theorem 3.3 and Proposition 3.3.4 in~\cite{blackboxlargea} assume $k \geq 1$, it is immediate from the proof and the discussion of the previous paragraph that we can take $k = 0$ as long the last three terms of the right had side of the estimate from Theorem 3.3 are replaced with their $k = 1$ version.

We now observe that an inspection of the proof of Theorem 3.3 when applied to our operators $\mathcal{P}_{n,\rm high}$ replacing $P_n$ shows that integrals containing the integral of the final three terms on the right hand of the main estimate of Theorem 3.3 can be restricted to $r \in [\tilde{r},\tilde{R}]$ for $\tilde{r}$ sufficiently close to $r_+$ and $\tilde{R}$ sufficiently large.  specifically, we note that the source of the final three terms on the right hand side of Theorem 3.3 arises from the final two terms $\Delta r^{-5}|u|^2$ and $\Delta \iota_{\rm elliptic}\left(V_0+\omega^2\right)|u|^2$ which appear on the right hand side of the estimates of Propositions A.4.3, A.4.5, and A.4.7. On the one hand, in view of the facts that we have replaced $P_n$ with $\mathcal{P}_{n,\rm high}$ and that $m^2 \gg 1$ on the support of $\mathcal{P}_{n,\rm high}$, it is clear that the term proportional to $\Delta r^{-5}|u|^2$ is easily absorbed in the available positive terms. On the other hand, by definition, the indicator function $\iota_{\rm elliptic}$ is supported in a bounded range of $r$, and also, Propositions A.4.3, A.4.5, and A.4.7 all explicitly state that the term involving $\Delta \iota_{\rm elliptic}\left(V_0+\omega^2\right)|u|^2$ may be assumed to be supported in the region $r \geq r_{\rm close}$ for suitable $r_{\rm close} > r_+$.

With the previous two paragraphs in mind,  we pick $V_n$ and $w_n$ to be the vector fields and scalar functions from Theorem 3.3 in~\cite{blackboxlargea}, while the operator $\mathscr{A}_n\left[\psi\right]$ in Theorem~\ref{fromtheblackboxlargea} may be defined as to be the sum of the three terms on the second line of the right hand of Theorem 3.3 in~\cite{blackboxlargea} with $\tau_0 = -\infty$, $\tau_1 = \infty$, k=1, and $\psi$ replaced by $\check{\chi}(r)\psi$, where $\check{\chi}(r)$ is identically $0$ for $r \leq \left(1+\tilde{\delta}\right)r_{\rm low}$ and  $r \geq \left(1-\tilde{\delta}\right)R_{\rm high}$, and identically $1$ for $r \in [\left(1+2\tilde{\delta}\right)r_{\rm low}, \left(1-2\tilde{\delta}\right)R_{\rm high}]$ . Since $\psi$ and $F$ will all lie in $\mathbb{H}_K$ there is no problem taking the limits $\tau_0 \to -\infty$ and $\tau_1 \to \infty$. The above discussion then justifies every part of Theorem~\ref{fromtheblackboxlargea} except for the estimate~\eqref{mathscranbound}.   We remark that, even though the estimate in Theorem 3.3 of~\cite{blackboxlargea} is stated for $\mathscr{A}_n\left[\psi\right]$, it is actually proved for $\mathscr{A}_n\left[P_{n} \psi\right]$, see (A.114) in~\cite{blackboxlargea}.

We now claim that the estimate~\eqref{mathscranbound} for $\mathscr{A}_n$ is a consequence of Proposition 3.3.4 in~\cite{blackboxlargea} with $k = 0$ applied to $\check{\chi}\psi$. Indeed it suffices to observe that for any function $\tilde\psi$, $\Box_{g_K}\tilde\psi = F$ implies that $\Box_{g_K}\left(\check{\chi}\tilde\psi\right) = F + \check{F}$, where $\check{F}$ involves $\tilde\psi$ and first derivatives of $\tilde\psi$ and is supported in the region $r \in [(1+\tilde{\delta})r_{\rm low},(1+2\tilde{\delta})r_{\rm low}] \cup \left[\left(1-2\tilde{\delta}\right)R_{\rm high},\left(1-\tilde{\delta}\right)R_{\rm high}\right]$.

\bibliographystyle{plain}

\end{document}